\theoremstyle{theorem}
\newtheorem{theorem}{Theorem}[section]
\newtheorem{lemma}[theorem]{Lemma}
\newtheorem{proposition}[theorem]{Proposition}
\newtheorem{corollary}[theorem]{Corollary}
\theoremstyle{definition}
\newtheorem{definition}[theorem]{Definition}
\theoremstyle{remark}
\numberwithin{equation}{section}
\newcommand{\bin}{{\rm bin}}
\newcommand{\IN}{\mathbb{N}}
\newcommand{\IZ}{\mathbb{Z}}
\newcommand{\subf}{\mathrm{sf}}
\newcommand{\kxs}{\mathrm{K4\times S5}}
\newcommand{\sxs}{\mathrm{S4\times S5}}
\newcommand{\SfourSfive}{\mathrm{S4\times S5}}
\newcommand{\ssl}{\mathrm{SSL}}
\newcommand{\SSL}{\mathrm{SSL}}
\newcommand{\EXPSPACE}{\mathrm{EXPSPACE}}
\newcommand{\ESPACE}{\mathrm{ESPACE}}
\newcommand{\ALOGTIME}{\mathrm{ALOGTIME}}
\newcommand{\LOGSPACE}{\mathrm{LOGSPACE}}
\newcommand{\main}{\mathrm{main}}
\title
{$\EXPSPACE$-Completeness of the Logics $\kxs$ and $\sxs$ and the Logic of Subset Spaces, \\
Part 2: $\EXPSPACE$-Hardness}
\author{Peter Hertling and Gisela Krommes\\
Fakult\"at f\"ur Informatik \\
Universit\"at der Bundeswehr M\"unchen \\
85577 Neubiberg, Germany \\[2mm]
Email: peter.hertling@unibw.de, gisela.krommes@unibw.de}
\date{\today}
\begin{document}

\maketitle

\begin{abstract}
It is known that the satisfiability problems of the product logics $\kxs$ and $\sxs$ are $\mathrm{NEXPTIME}$-hard and that the satisfiability problem of the logic $\ssl$ of subset spaces is $\mathrm{PSPACE}$-hard. 
We improve these lower bounds for the complexity of these problems by showing that all three problems are $\EXPSPACE$-hard under logspace reduction.
In another paper we show that these problems are in $\ESPACE$. This shows that all three problems are $\EXPSPACE$-complete.
\end{abstract}

\bigskip

\noindent{\bf Keywords:}
bimodal product logics, subset space logic, satisfiability problem, complexity theory, $\EXPSPACE$-completeness

\maketitle

\section{Introduction}
In this article we are concerned with the complexity of the bimodal product logics $\kxs$ and $\sxs$
and with the subset space logic $\ssl$, a bimodal logic as well.
To the best of our knowledge, the complexity of $\kxs$, of $\sxs$, and of $\ssl$ were open problems.
The main results of this article can be summarized in the following theorem.

\begin{theorem}
	The logics	$\kxs$, $\sxs$, and $\ssl$ are $\EXPSPACE$-hard under logspace reduction.
\end{theorem}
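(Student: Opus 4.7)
The plan is to reduce from the acceptance problem for deterministic Turing machines running in space $2^n$, which is $\EXPSPACE$-complete. Given such a machine $M$ and input $w$ with $|w|=n$, I would construct in logspace a formula $\phi_{M,w}$ in each of the three logics so that $\phi_{M,w}$ is satisfiable iff $M$ accepts $w$. Any satisfying model will implicitly describe an accepting computation of $M$ whose configurations each live on a tape of $2^n$ cells.

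For $\kxs$ and $\sxs$ I would use the standard two-dimensional grid encoding from the product-logic literature. A model is viewed as a grid in which the $\Sfive$-direction enumerates the $2^n$ tape cells of one configuration and the $\Kfour$- or $\Sfour$-direction enumerates time steps. Introduce $n$ propositional variables $c_0,\ldots,c_{n-1}$ forming a binary counter that labels cells; force the counter to be constant along the time direction (via the first modality) and to take all $2^n$ values in every $\Sfive$-cluster (via $\Sfive$-formulas of polynomial size). Add further propositional variables for tape symbols, head position and state, use the counter to speak about individual cells and their neighbours inside one configuration, and use the product structure to transport cell content to the next time step and express the transition function of $M$. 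Acceptance is asserted as the existence, along the time direction, of a configuration carrying the accepting state. The commutativity of the product frame ensures that corresponding cells at consecutive times automatically ``see'' each other correctly once the counter is pinned down.

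For $\SSL$ the same skeleton has to be adapted because atomic propositions are $\Box$-persistent: their truth depends only on the underlying point and not on the current open subset. The plan is to let the effort modality $\Box$ play the role of time progression (a nested chain of subsets $U_0\supseteq U_1\supseteq\ldots$ corresponds to successive configurations) and to let the knowledge modality $K$ play the role of the $\Sfive$-direction enumerating cells in the current configuration, while the cell-position counter $c_0,\ldots,c_{n-1}$ is still realised by atomic propositions. Since cell \emph{contents} must depend on both the point (cell) and the current subset (time), they cannot be represented by plain atoms; instead they are encoded by modal formulas whose Boolean pattern can change as $U$ shrinks, using only a small stock of atoms combined with $\Box$ and $K$.

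The main obstacle I expect is the $\SSL$ reduction: the encoding must respect the cross axiom $K\Box\varphi\to\Box K\varphi$ and the persistence of atoms, which together severely restrict how cells can be synchronised across consecutive time steps and which formulas can be used to pick out ``the cell at position $k$ at time $t$'' uniquely. A secondary technical point common to all three reductions is the logspace bound: the counter, the transition disjunctions and the acceptance clause must be assembled from polynomially many indexed propositional variables by fixed-depth schematic constructions, so that $\phi_{M,w}$ has polynomial size and can be produced in logarithmic workspace from $(M,w)$.
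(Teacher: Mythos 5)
The plan fails at the time axis. A deterministic machine restricted to $2^n$ tape cells can still run for $2^{\Theta(2^n)}$ steps before halting, and since $\Kfour$ and $\Sfour$ are transitive, a single $\Diamond$-step may jump over arbitrarily many configurations. To express ``the very next time step'' you would therefore need a binary counter for time just as you have one for cells; but that counter must distinguish doubly-exponentially many values, hence requires $2^{\Theta(n)}$ propositional variables and an exponentially long increment formula, which is incompatible with a logspace, or even polynomial-time, reduction. So the phrase ``use the product structure to transport cell content to the next time step'' has no polynomial-size rendering: commutativity gives you a grid but no notion of immediate succession. The same problem recurs, aggravated by persistence of atoms, in your sketch for $\ssl$.

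The paper circumvents this by changing the computational model: it reduces from Alternating Turing Machines running in time $2^{p(n)}$, using $\mathrm{AEXPTIME}=\EXPSPACE$. Then the time of each computation node is bounded by $2^{p(n)}-1$ and fits into a $p(n)$-bit counter built from polynomially many shared variables, so the successor cloud is singled out by a polynomial-size increment formula, exactly in the style of your cell counter. A second essential difference is that the paper does not enumerate all $2^{p(n)}$ tape cells inside a cloud; each cloud stores only the current time, head position, state, and the symbols just read and just written, while the content of the cell under the head is recovered by the ``time after previous visit'' device, which exploits the cross property (respectively left commutativity) to reach back to the cloud where that cell was last written. This is what lets the construction certify $\EXPSPACE$-hardness rather than merely $\mathrm{NEXPTIME}$-hardness with polynomially many variables. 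Finally, instead of three separate machine simulations, the paper proves $\EXPSPACE$-hardness directly only for $\ssl$ and then chains logspace reductions $\ssl\to\sxs\to\kxs$.
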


Actually, we are considering the satisfiability problems of these three logics, and we are going to show that the satisfiability problems of these logics are $\EXPSPACE$-hard. Of course, this assertion is equivalent to the theorem above because $\EXPSPACE$ is closed under complements. This paper is a continuation of the paper \cite{HK2019-1}, in which we show that these problems are in $\ESPACE$. Both results together imply the following theorem.

\begin{theorem}
	The logics	$\kxs$, $\sxs$, and $\ssl$ are $\EXPSPACE$-complete under logspace reduction.
\end{theorem}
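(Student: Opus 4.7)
The plan for Theorem~1.2 is a trivial combination of the two ingredients it mentions: Theorem~1.1 above (the main technical content of the present paper), together with the upper bound from \cite{HK2019-1} that the satisfiability problems of $\kxs$, $\sxs$, and $\ssl$ all lie in $\ESPACE$. Since $\ESPACE \In \EXPSPACE$, each of these problems is both in $\EXPSPACE$ and $\EXPSPACE$-hard under logspace reduction, hence $\EXPSPACE$-complete under logspace reduction. The logspace reductions inherited from Theorem~1.1 are already of the required form, so no further reduction engineering is needed for Theorem~1.2.

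What actually requires work is Theorem~1.1. For that, the natural strategy is a logspace reduction from the word problem of a Turing machine running in space $2^{cn}$, the canonical $\EXPSPACE$-complete problem. The product structure of $\kxs$ and $\sxs$ offers two independent modal dimensions that one can use to simulate a $2^n \times 2^{2^{O(n)}}$ computation grid: the $\Sfive$ component, which behaves as a universal modality on its axis, enforces global constraints across the $2^n$ tape cells at each time step, while the $\Kfour$ (respectively $\Sfour$) component provides the transitive depth needed to encode a doubly exponential time axis. Using $n$ propositional variables one can address the $2^n$ tape cells via a binary counter, and the local transition constraints of the machine can be written as short formulas that are then propagated by the $\Sfive$ universal modality. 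For $\ssl$ one would adapt this construction to the subset-space semantics, exploiting the well-known close relationship between $\ssl$ and $\sxs$.

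The main obstacle will be designing, in polynomial formula size, a sentence that simultaneously (i) forces a full $2^n$-wide tape to exist at every time step, (ii) forces doubly exponentially many distinct time points to exist along the $\Kfour$ or $\Sfour$ axis, and (iii) correctly propagates the machine's transition function across time, without allowing a satisfying model to cheat by collapsing cells, identifying time points, or otherwise escaping the intended grid encoding. The hardest case is likely $\ssl$, whose subset-space semantics is strictly more restrictive than an arbitrary product model: the interaction (cross) axiom between the knowledge and effort modalities constrains how the two dimensions may combine, so the $\sxs$ reduction cannot simply be transferred verbatim. A plausible route is to settle $\kxs$ first, since it carries the fewest axiomatic constraints, verify that the same construction survives the reflexivity and transitivity present in $\sxs$, and then restructure it so that it also respects the subset-space axioms required for $\ssl$.
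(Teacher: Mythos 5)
Your first paragraph is exactly what the paper does: Theorem~1.2 is the conjunction of Theorem~1.1 (the content of this paper) with the $\ESPACE$ upper bound from the companion paper, and since $\ESPACE \In \EXPSPACE$, completeness follows with no further work. That part is correct and coincides with the paper's own (implicit) argument.

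Your sketch of how Theorem~1.1 would be proved, however, has a genuine gap. You propose to reduce from a deterministic Turing machine running in space $2^{cn}$ and to encode a $2^n \times 2^{2^{O(n)}}$ grid of tape cells and time points. Such a machine may run for doubly exponentially many steps, so the $\Kfour$/$\Sfour$ axis must carry $2^{2^{O(n)}}$ pairwise distinguishable time points; addressing these with a binary counter would require $2^{O(n)}$ propositional variables, exceeding any polynomial (let alone logspace-computable) bound on the formula size. You correctly name ``collapsing time points'' as the obstacle, but you offer no mechanism that enforces doubly exponentially many distinct points with a polynomial-size formula -- and indeed no such mechanism is used. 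The paper dissolves the obstacle rather than solving it head on: it works with Alternating Turing Machines and the identity $\mathrm{AEXPTIME}=\EXPSPACE$. An ATM for a language in $\EXPSPACE$ runs in time $2^{p(n)}-1$, so both the time counter and the head-position counter need only $p(n)$ and $p(n)+1$ bits respectively; an accepting computation tree (which may be doubly exponentially large) is represented, but each individual branch is single-exponential, and that is all the formula needs to address explicitly. A secondary divergence is the order of the three logics: you propose $\kxs$ first (fewest axioms), then $\sxs$, then $\ssl$, whereas the paper reduces ATM acceptance directly to $\ssl$-satisfiability, then gives a logspace reduction from $\ssl$ to $\sxs$ and from $\sxs$ to $\kxs$. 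The persistence of propositional variables in cross axiom models is exploited as a feature (it gives one direction of information transfer for free, with ``shared variables'' built to handle the other direction), so $\ssl$ turns out to be the natural starting point rather than the hardest endpoint.
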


Let us recap the history of the questions and results concerning the complexity of these problems. The following text is almost identical with a corresponding text in ~\cite{HK2019-1}.
In \cite[Question 5.3(i)]{marx1999complexity} Marx posed the question what the complexity of the bimodal logic $\sxs$ is. 
This question is restated and extended to the logic $\kxs$ in~\cite[Problem 6.67, Page 334]{Kurucz2003}.
There it is also stated that
``M. Marx conjectures that these logics are also EXPSPACE-complete''.
That it is desirable to know the complexity of $\ssl$ and similar logics is mentioned by Parikh, Moss, and Steinsvold in \cite[Page 30]{parikh2007topology}
and by Heinemann in \cite[Page 153]{heinemann2016augmenting} and in \cite[Page 513]{heinemann2016subset}. 

For the complexity of the satisfiability problems of the logics $\kxs$ and $\sxs$ the best upper bound known is $\mathrm{N2EXPTIME}$ \cite[Theorem 5.28]{Kurucz2003}, that is, they can be solved by a nondeterministic Turing machine working in doubly exponential time. The best lower bound known for the satisfiability problems of these two logics is $\mathrm{NEXPTIME}$-hardness \cite[Theorem 5.42]{Kurucz2003}; compare also \cite[Table 6.3, Page 340]{Kurucz2003}.
It is known as well that for any $\ssl$-satisfiable formula there exists a cross axiom model of at most doubly exponential size~\cite[Section 2.3]{Dabrowski1992}. This shows that the complexity of the satisfiability problem of $\ssl$ is in $\mathrm{N2EXPTIME}$ as well.  The best lower bound known for $\ssl$ is $\mathrm{PSPACE}$-hardness ~\cite{krommes2003,krommes2003new}.

In this paper we improve the lower bounds $\mathrm{NEXPTIME}$-hardness resp. $\mathrm{PSPACE}$-hardness for the satisfiability problems of these three logics to $\EXPSPACE$-hardness. In \cite{HK2019-1} we show a matching upper bound by showing that these problems are in $\ESPACE$. This shows that they are $\EXPSPACE$-complete. Thus, Marx's conjecture for $\kxs$ and $\sxs$ stated above is true.

The main part of the paper is the $\EXPSPACE$-hardness proof of the satisfiability problem of the logic $\ssl$ in Section~\ref{section:ATMs-SSL}. In order to show this, we shall use Alternating Turing Machines~\cite{Chandra:1981:ALT:322234.322243}.
In this respect, we follow the example of Lange and Lutz~\cite{lange20052} who used Alternating Turing Machines in order to establish a sharp lower bound for the complexity of a certain dynamic logic.
As any language in $\EXPSPACE$ is recognized by an Alternating Turing Machine (ATM) working in exponential time, it is sufficient to show that any language recognized by an Alternating Turing Machine working in exponential time can be reduced in logarithmic space to the satisfiability problem of $\ssl$. We will present such a reduction in Section~\ref{section:ATMs-SSL}. For this purpose we will construct an $\ssl$ formula that describes the computation of an exponential time bounded Alternating Turing Machine. 
In Section~\ref{section:preparations} we shall introduce Alternating Turing Machines. In that section we will also introduce certain formulas that we shall call `shared variables' that we use in order to overcome the problem that ordinary propositional variables are persistent (see Subsection~\ref{subsection:shared-variables}) in $\ssl$. Their usage is illustrated by an implementation of a binary counter in $\ssl$. The $\EXPSPACE$-hardness of the satisfiability problems of $\kxs$ and of $\sxs$ is then shown by reductions. In Section~\ref{section:SSL-S4xS5} we show that the satisfiability problem of $\ssl$ can be reduced in logarithmic space to the satisfiability problem of $\sxs$. And in Section~\ref{section:S4xS5-K4xS5} we show that the satisfiability problem of $\sxs$ can be reduced in logarithmic space to the satisfiability problem of $\kxs$. These reductions are much easier than the reduction of a language recognized by an Alternating Turing Machine working in exponential time to the satisfiability problem of $\ssl$ presented in Section~\ref{section:ATMs-SSL}.

As there may be some interest in a direct proof of the $\EXPSPACE$-hardness of the satisfiability problem of the logic $\sxs$, in an appendix we present such a proof by presenting a direct logspace reduction of any language recognized by an Alternating Turing Machine working in exponential time to the satisfiability problem of $\sxs$. Although the overall structure of this proof is rather similar to the structure of the reduction of ATMs to the satisfiability problem of $\ssl$, there are some important differences. For illustration purposes, we also present an implementation of a binary counter in $\sxs$ in the appendix.

\section{Notations and Preliminaries}
\label{section:preliminaries}

This paper is a continuation of \cite{HK2019-1}. We are going to use the same terminology as in that paper. In order not to repeat the definition of a lot of basic notions we would like to ask the reader to consult the first sections of \cite{HK2019-1} for the needed basic notions from complexity (see the end of the introduction of \cite{HK2019-1}), for the syntax of bimodal formulas (Subsection 2.1 in \cite{HK2019-1}), for $\kxs$- and $\sxs$-product models and -commutator models as well as for cross axiom models (Subsection 2.2 in \cite{HK2019-1}), for the notions of $X$-satisfiability of bimodal formulas, for $X\in\{\kxs,\sxs,\ssl\}$ (Subsection 2.3 in \cite{HK2019-1}), and for some basic notions and observations concerning transitive relations and equivalence relations, in particular for the definition of the relation $\stackrel{\Diamond}{\to}^{\stackrel{L}{\to}}$ induced by a relation $\stackrel{\Diamond}{\to}$ on the equivalence classes with respect to an equivalence relation $\stackrel{L}{\to}$ (Subsection 3.1 in \cite{HK2019-1}). 

As in \cite{HK2019-1} the $\stackrel{L}{\to}$-equivalence class of a point in an $X\times\mathrm{S5}$-commutator model, for $X\in\{\mathrm{K4},\mathrm{S4}\}$, or a cross axiom model will be called the \emph{cloud} of that point.
Finally, for reducing one language to another one we use the logarithmic space bounded reduction as in \cite{Papadimitriou1994}.

\section{Preparations for the Reduction of Alternating Turing Machines to $\ssl$}
\label{section:preparations}

A string $w$ is an element of the language $L(M)$ recognized by an Alternating Turing Machine $M$ iff there exists a so-called accepting tree of $M$ on input $w$. In such a tree each node represents a configuration of $M$. Our idea is to construct a formula depending on $w$ such that the models of this formula have the tree structure of an accepting tree of $M$ on input $w$, where now the nodes of the tree are clouds such that the formulas satisfied in some cloud describe a configuration of $M$. In this way the induced relation $\stackrel{\Diamond}{\to}^{\stackrel{L}{\to}}$ between the clouds serves as a simple temporal operator.

In the following subsection we describe how information is stored and transmitted in a model of such a formula. In particular we introduce certain formulas that we call {\em shared variables} that can transmit information in the $\stackrel{L}{\to}$-direction and by which we can overcome the problem in the logic $\SSL$ that all propositional variables are persistent.
As a first application of this, in Subsection~\ref{subsection:counter} we demonstrate how one can implement a binary counter in the logic $\SSL$. 
In Subsection~\ref{subsection:ATM} we recall the definition of Alternating Turing Machines.

\subsection{Shared Variables}
\label{subsection:shared-variables}

We have to make sure that various kinds of information are stored in a suitable way in any model of the fomula.
We also need to copy and transmit various bits of information, 
both in the $\stackrel{\Diamond}{\to}$-direction (we always depict this as the vertical direction)
as well as in the $\stackrel{L}{\to}$-direction (we always depict this as the horizontal direction).
This will be done by two kinds of formulas.
\begin{itemize}
\item
On the one hand, we need formulas that have the same truth value in the vertical ($\stackrel{\Diamond}{\to}$) direction but can change their truth values in the horizontal ($\stackrel{L}{\to}$) direction. In the case of the logic $\SSL$, for this purpose we can simply use propositional variables as they are persistent anyway. 
\item
On the other hand,  we need formulas that have the same truth value in the horizontal ($\stackrel{L}{\to}$) direction but can change their truth values in the vertical ($\stackrel{\Diamond}{\to}$) direction. Such formulas will be called {\em shared variables} and will be defined now.
\end{itemize}

\begin{definition}[Shared Variables]
\label{def: SV}
For $i\in\IN$ let $A_i$ be special propositional variables, and
let $B$ be another special propositional variable $B$, different from all $A_i$. 
The {\em shared variables $\alpha_i$} are defined as follows:
\[\alpha_i := L(A_i \wedge \Box L B) .\]
\end{definition}
Note that 
\[    \neg\alpha_i \equiv K(\neg A_i \vee \Diamond K \neg B) . \]
See Figure \ref{figure:SV model} for a model of a single shared variable $\alpha_i$ (in the figure we have omitted the index $i$) in $\ssl$ changing its value from $1$ to $0$ and back from $0$ to $1$. In this model the information is stored at the white points which we call \emph{information points}. The gray points are \emph{auxiliary points} that ensure that we obtain a model for the shared variables. Note that the information points differ from the auxiliary points in the value of the propositional variable $B$, which is true at all information points, independent of the value of $\alpha$ stored there, and false at all auxiliary points. Thus, the value of the propositional variable $B$ allows us to distinguish between the information points and the auxiliary points.
\begin{figure}[h]
   \begin{center}
	\includegraphics[width=0.7\linewidth]{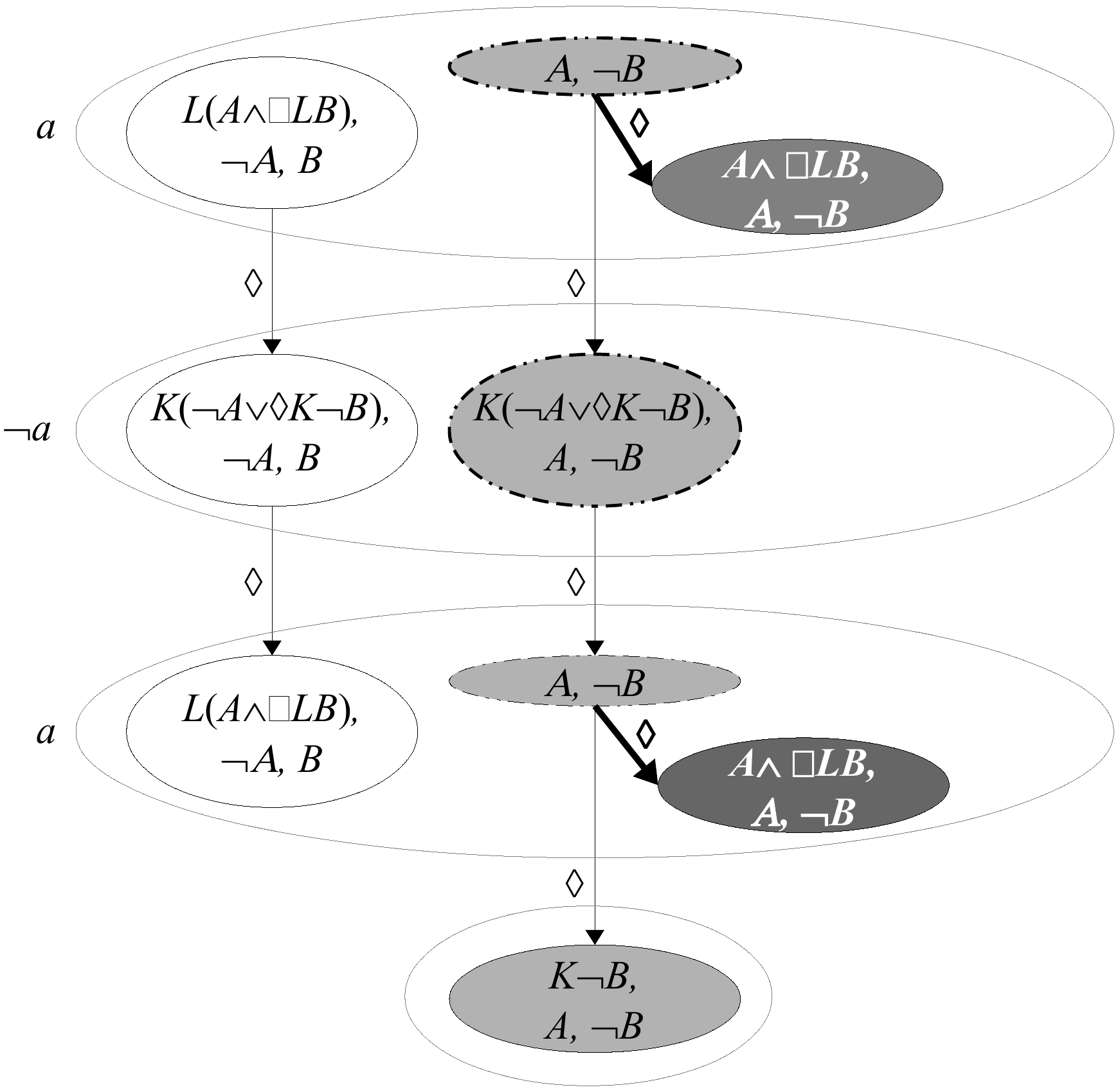}	
	\end{center}
   \caption{Cross axiom model of $\alpha\wedge\Diamond(\neg\alpha\wedge\Diamond\alpha)$.}
   \label{figure:SV model}
\end{figure}

Although the shared variables $\alpha_i$ are formulas we are going to use them as if they were variables. The propositional variables $A_i$ and the propositional variable $B$ that are used in their definition will not be used in any other way.
As a first example of the application of shared variables, in the following section we demonstrate how, using shared variables, one can implement $n$-bit binary counters in $\SSL$. Binary counters are going to play a key role in the simulation of Alternating Turing Machines in $\ssl$.

\subsection{Binary Counters in $\ssl$}
\label{subsection:counter}

Fix some natural number $n\geq 1$. We wish to implement in $\ssl$ a binary $n$-bit counter that counts from $0$ to $2^n-1$. That means, we wish to construct an $\ssl$-satisfiable formula with the property
that any model of it contains a sequence of pairwise distinct points
$p_0,\ldots,p_{2^n-1}$ such that, for each $i\in\{0,\ldots,2^n-1\}$,
at the point $p_i$ the number $i$ is stored in binary form in a certain way.
To describe the implementation of the counter we first introduce some notation.

\begin{itemize}
	\item 
	For a natural number $i$, we define the finite set $\mathrm{Ones}(i) \subseteq\IN$ by
	\[ \sum_{k \in \mathrm{Ones}(i)} 2^k = i, \]
	that is, $\mathrm{Ones}(i)$ is the set of the positions of ones in the binary representation of $i$ (where the positions are counted from the right starting with $0$).
	\item
	We will also need the bits $b_k(i)\in\{0,1\}$ of the binary representation of $i$, for $i,k\in\IN$. They are defined by
	\[ b_k(i) := \begin{cases}
	1 & \text{ if } k \in \mathrm{Ones}(i) , \\
	0 & \text{ if } k \not\in \mathrm{Ones}(i) .
	\end{cases}\]
	\item
	For natural numbers $i,n$ with $n > 0$ and $i\leq 2^n-1$ the
	{\em binary representation of length $n$ of $i$} is the string
	\[ \mathrm{bin}_n(i):= b_{n-1}(i),\ldots,b_0(i).  \]
\end{itemize}
Table \ref{table:abbrev1} lists expressions that we use as abbreviations of formulas.

\begin{table}
		\caption{Some (partially numerical) abbreviations for logical formulas,
			where $\underline{F}=(F_{l-1},\ldots,F_0)$ and $\underline{G}=(G_{l-1},\ldots,G_0)$
			are vectors of formulas.
			As usual, an empty conjunction like $\bigwedge_{h=0}^{-1} F_h$
			can be replaced by any propositional formula that is true always.}
		\label{table:abbrev1}
		\renewcommand{\arraystretch}{1.5}
\begin{center}
		\begin{tabular}{|l|l|l|}
			\hline
			For & the following &is an abbreviation\\[-3mm]
			&expression & of the following formula  \\[-1mm]
			\hline\hline
			$l\geq 1, k\geq -1$ & $(\underline{F}=\underline{G},>k)$ & $\bigwedge_{h=k+1}^{l-1} (F_h \leftrightarrow G_h)$ \\ \hline
			$l\geq 1$ & $(\underline{F}=\underline{G})$ & $(\underline{F}=\underline{G},>-1)$ \\ \hline
			$l\geq 1$, $0\leq i < 2^l$ 
			& $(\underline{F}=\mathrm{bin}_l(i))$ 
			& $\bigwedge_{k \in \mathrm{Ones}(i)}\! F_k \wedge$ \\
			&& $\bigwedge_{k \in \{0,\ldots,l-1\}\setminus\mathrm{Ones}(i)} \neg F_k$  \\ \hline
			$l\geq 1$, $0\leq k < l$ & $\mathrm{rightmost\_zero}(\underline{F},k)$ 
			& $\neg F_k \wedge \bigwedge_{h=0}^{k-1}  F_h$ \\  \hline
			$l\geq 1$, $0\leq k < l$ & $\mathrm{rightmost\_one}(\underline{F},k)$  
			& $F_k \wedge \bigwedge_{h=0}^{k-1} \neg F_h$ \\  \hline 			
		\end{tabular}
\end{center}
\end{table}

The idea of the construction is as follows. 
\begin{itemize}
	\item
	We store the counter values in a vector $\underline{\alpha}:=\alpha_{n-1},\ldots,\alpha_0$ of shared variables. To this end we embed the sequence $p_0,\ldots,p_{2^n-1}$ of points in a sequence of clouds $\mathit{C}_0,\ldots,\mathit{C}_{2^{n-1}}$ such that the cloud $\mathit{C}_i$ contains the point $p_i$ and such that the vector $\underline{\alpha}$ of shared variables satisfied at $p_i$ (and hence at all points in $\mathit{C}_i$) encodes the number $i$.
	\item
	Let $i\leq 2^n-1$ be the number encoded by $\underline{\alpha}$. If $\underline{\alpha}$ contains no $0$ then $i$ has reached its highest posible value, the number $2^n-1$. Otherwise let $k$ be the position of the rightmost $0$. We determine the position $k$ with the aid of the formula $\mathrm{rightmost\_zero}(\alpha,k)$. In order to increment the counter we have to keep all $\alpha_j$ at positions $j>k$ unchanged and to switch all $\alpha_j$ at positions $j\leq k$. We do this in two steps:
	\begin{enumerate}
		\item 
		First me make an $\stackrel{L}{\to}$-step from the point $p_i$ to a point $p'_i$ where we store the number $i+1$ in a vector $\underline{X}:=X_{n-1},\ldots,X_0$ of usual propositional variables by demanding that
		$$p'_i\models (\underline{X}=\underline{\alpha},>k) \wedge 
		\mathrm{rightmost\_one}(\underline{X},k).$$
		\item 
		Then we make a $\stackrel{\Diamond}{\to}$-step from the point $p'_i$ to a 
		point $p_{i+1}$ in the cloud $\mathit{C}_{i+1}$ and demand that 
		$$p_{i+1}\models (\underline{X}=\underline{\alpha}).$$
		Note that in $\ssl$ the value of $\underline{X}$ is copied from $p'_i$ to its $\stackrel{\Diamond}{\to}$-successor $p_{i+1}$ since in $\ssl$ propositional variables are persistent. 
	\end{enumerate}
 	Altogether we demand that for the number $k$
		$$p_i\models L\bigl((\underline{X}=\underline{\alpha},>k)
			\wedge \mathrm{rightmost\_one}(\underline{X},k) \wedge
			\Diamond(\underline{X}=\underline{\alpha}) \bigr).$$
	\item
	Additionally we need a formula to ensure that the starting value is $0$, that is we demand 
	$$p_0\models(\underline{\alpha}=\mathrm{bin}_n(0)).$$
\end{itemize}

We now define the counter formula, for $n>0$.
Remember that
$\underline{\alpha}$ is a vector $(\alpha_{n-1},\ldots,\alpha_0)$
of formulas $\alpha_i$ where $\alpha_i$ is defined by
$\alpha_i := L(A_i \wedge \Box L B)$; compare Definition~\ref{def: SV}.
\begin{eqnarray*}
	\mathrm{\mathrm{counter}}_{\ssl,n}  &:= & B \wedge \ (\underline{\alpha}=\mathrm{bin}_n(0)) 
	\wedge K \Box \Biggl( \bigwedge_{k=0}^{n-1} \Biggl( 
		(B\wedge \mathrm{rightmost\_zero}(\underline{\alpha},k))
		\rightarrow \\
	&& L \biggl(B \wedge (\underline{X}=\underline{\alpha},>k) 
		\wedge \mathrm{rightmost\_one}(\underline{X},k) \wedge \Diamond(\underline{X}=\underline{\alpha})  \biggr) \Biggr) \Biggr).
\end{eqnarray*}

\begin{proposition}
	\label{prop: counter}
	\begin{enumerate}
		\item
		For all $n\in\IN\setminus \{0\}$, the formula $\mathrm{counter}_{\ssl,n}$ is $\ssl$-satisfiable.
		\item 
		For all $n\in\IN\setminus \{0\}$, for every cross axiom model of $\mathrm{counter}_{\ssl,n}$ and for every point $p_0$ in this model with $p_0 \models \mathrm{counter}_{\ssl,n}$
		there exist a sequence of $2^{n}-1$ points $p_1,p_2,\ldots,p_{2^n-1}$
		and a sequence of $2^{n}-1$ points $p_0',p_1',\ldots,p'_{2^n-2}$
		such that 
		\begin{itemize}			
			\item
			for $0\leq i\leq 2^n-1$, \quad 
			$p_i\models (\underline{\alpha}=\mathrm{bin}_n(i))$,
			\item
			for $0\leq i\leq 2^n-2$, \quad $p_i\stackrel{L}{\to}p'_{i}$ and $p'_i\stackrel{\Diamond}{\to}p_{i+1}$ and 
		   $p'_{i}\models (\underline{X}=\mathrm{bin}_n(i+1))$.
		\end{itemize}
	\end{enumerate}
\end{proposition}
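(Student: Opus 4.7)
The plan is to handle the two parts together: the inductive analysis in Part 2 simultaneously serves as the blueprint for the model to be built in Part 1.

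For Part 2, I would induct on $i$, extracting $p_{i+1}$ and $p'_i$ jointly from $p_i$. The base case $i=0$ is immediate from the top-level conjunct $(\underline{\alpha} = \bin_n(0))$. For the inductive step, suppose $p_i \models (\underline{\alpha} = \bin_n(i))$ with $i < 2^n - 1$, and let $k$ be the position of the rightmost zero of $\bin_n(i)$. Then $p_i \models \mathrm{rightmost\_zero}(\underline{\alpha}, k)$. I also need $p_i \models B$: for $i=0$ this is the top-level conjunct $B$; for $i>0$ it follows from persistence of $B$ through the preceding $\stackrel{\Diamond}{\to}$-step. To apply the $k$-th implication of the $K \Box$-body at $p_i$, I argue that the formula $K \Box \varphi$ is preserved from $p_0$ to $p_i$, using S5 for $K$, S4 for $\Box$, and the cross axiom to re-order any mixed $\stackrel{L}{\to}\stackrel{\Diamond}{\to}$-path back to an $\stackrel{L}{\to}\stackrel{\Diamond}{\to}$-normal form. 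This yields
\[ p_i \models L\bigl(B \wedge (\underline{X} = \underline{\alpha}, {>}k) \wedge \mathrm{rightmost\_one}(\underline{X}, k) \wedge \Diamond(\underline{X} = \underline{\alpha})\bigr), \]
producing a witness $p'_i$ in the cloud of $p_i$. Because each $\alpha_j$ is itself an $L$-formula and hence $\stackrel{L}{\to}$-invariant within a cloud, the constraint $(\underline{X} = \underline{\alpha}, {>}k) \wedge \mathrm{rightmost\_one}(\underline{X}, k)$ pins down $\underline{X} = \bin_n(i+1)$ at $p'_i$. The remaining $\Diamond$-conjunct then provides $p_{i+1}$ with $p'_i \stackrel{\Diamond}{\to} p_{i+1}$ and $(\underline{X} = \underline{\alpha})$ there; since the $X_j$ are ordinary propositional variables, persistence of their truth values through the $\stackrel{\Diamond}{\to}$-step yields $\underline{\alpha} = \bin_n(i+1)$ at $p_{i+1}$, completing the inductive step. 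Pairwise distinctness of the $p_i$ is immediate from the distinctness of the $\bin_n(i)$.

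For Part 1, I would construct an explicit cross axiom model mirroring Figure~\ref{figure:SV model}: a horizontal chain of clouds $C_0, \ldots, C_{2^n-1}$ joined by bridges $p_i \stackrel{L}{\to} p'_i \stackrel{\Diamond}{\to} p_{i+1}$. Inside each $C_i$, I place an information point $p_i$ at which $B$ and the appropriate $X_k$'s are set, together with an auxiliary sub-structure (with $B$ false) that realises $L(A_j \wedge \Box L B)$ precisely for those bit positions $j$ with $b_j(i) = 1$. Verification of $p_0 \models \mathrm{counter}_{\ssl,n}$ then reduces to checking that the antecedent $B \wedge \mathrm{rightmost\_zero}(\underline{\alpha}, k)$ fires only at information points $p_i$ for which $\bin_n(i)$ has its rightmost zero at position $k$, and that at each such $p_i$ the constructed bridge $p'_i$ delivers the required $L$-witness.

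The delicate point is Part 1: because the propositional variables $A_j$ are persistent in $\stackrel{\Diamond}{\to}$, one must design the auxiliary sub-structures so that the clause $\Box L B$ is \emph{lost} at the appropriate later clouds, lest $\alpha_j$ accidentally light up at a point where $b_j(i) = 0$. I would handle this by attaching to each auxiliary $A_j$-point a $\stackrel{\Diamond}{\to}$-successor from which no $B$-point is $L$-reachable, neutralising the spurious $\alpha_j$-witness at every later cloud where it is not wanted. Once this is engineered correctly, the verification of the $K\Box$-body and of the $\stackrel{L}{\to}$-invariance of each $\alpha_j$ is essentially mechanical, and the resulting model shows the formula $\mathrm{counter}_{\ssl,n}$ to be $\ssl$-satisfiable.
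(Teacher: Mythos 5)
Your proposal follows essentially the same two-part strategy as the paper: an explicit staircase model for satisfiability, and an induction with cross-property bookkeeping for the forcing direction. For Part~2 your idea of re-normalising mixed $\stackrel{L}{\to}/\stackrel{\Diamond}{\to}$-paths via the cross property is exactly what the paper makes explicit by carrying along auxiliary points $t_i$ with $p_0 \stackrel{L}{\to} t_i \stackrel{\Diamond}{\to} p_i$ in the induction hypothesis; the rest of your step-by-step reasoning (including obtaining $p_i \models B$ by persistence and using $L$-invariance of the $\alpha_j$ within a cloud) matches the paper's proof.

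One imprecision in your Part~1 sketch is worth flagging. You write that you would attach ``to each auxiliary $A_j$-point a $\stackrel{\Diamond}{\to}$-successor from which no $B$-point is $L$-reachable.'' Taken literally, this kills $\Box L B$ at \emph{every} auxiliary $A_j$-point, so $\alpha_j$ would be false in every cloud and the top-level conjunct $(\underline{\alpha}=\mathrm{bin}_n(0))$ could still be satisfied (all $\alpha_j$ false), but the formula would never ``count'' past the first few steps since incrementing requires some $\alpha_j$ to become true. The correct design — as the paper's $U$/$S$ split makes explicit — needs two kinds of auxiliary $A_j$-points per cloud: a \emph{forwarding} chain $u_{i,j}$ (forced by the cross property to thread through all clouds, kept harmless by a $\stackrel{\Diamond}{\to}$-successor in a sink cloud with no $B$) and a \emph{terminal} witness $s_{i,j}$ at exactly those clouds where $b_j(i)=1$, whose only $\stackrel{\Diamond}{\to}$-successor is itself, so that $\Box L B$ holds there and produces the intended positive witness. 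Your intent (``auxiliary sub-structure that realises $L(A_j\wedge\Box L B)$ precisely for $b_j(i)=1$'') is right, but the single clause quoted above conflates the two roles; separating them is what makes the verification go through, and is the part of Part~1 that cannot be left vague.
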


\begin{proof}
	For the following let us fix some $n>0$.
	\begin{enumerate}
		\item 
		We construct a cross axiom model $M=(W,\stackrel{L}{\to},\stackrel{\Diamond}{\to},\sigma)$ with a point $p_{0,0}$ satisfying $M,p_{0,0} \models \mathrm{counter}_{\SSL,n}$ as follows; see Figure~\ref{figure:SSL}.
		\begin{figure}
			\begin{center}
				\includegraphics[width=0.9\linewidth]{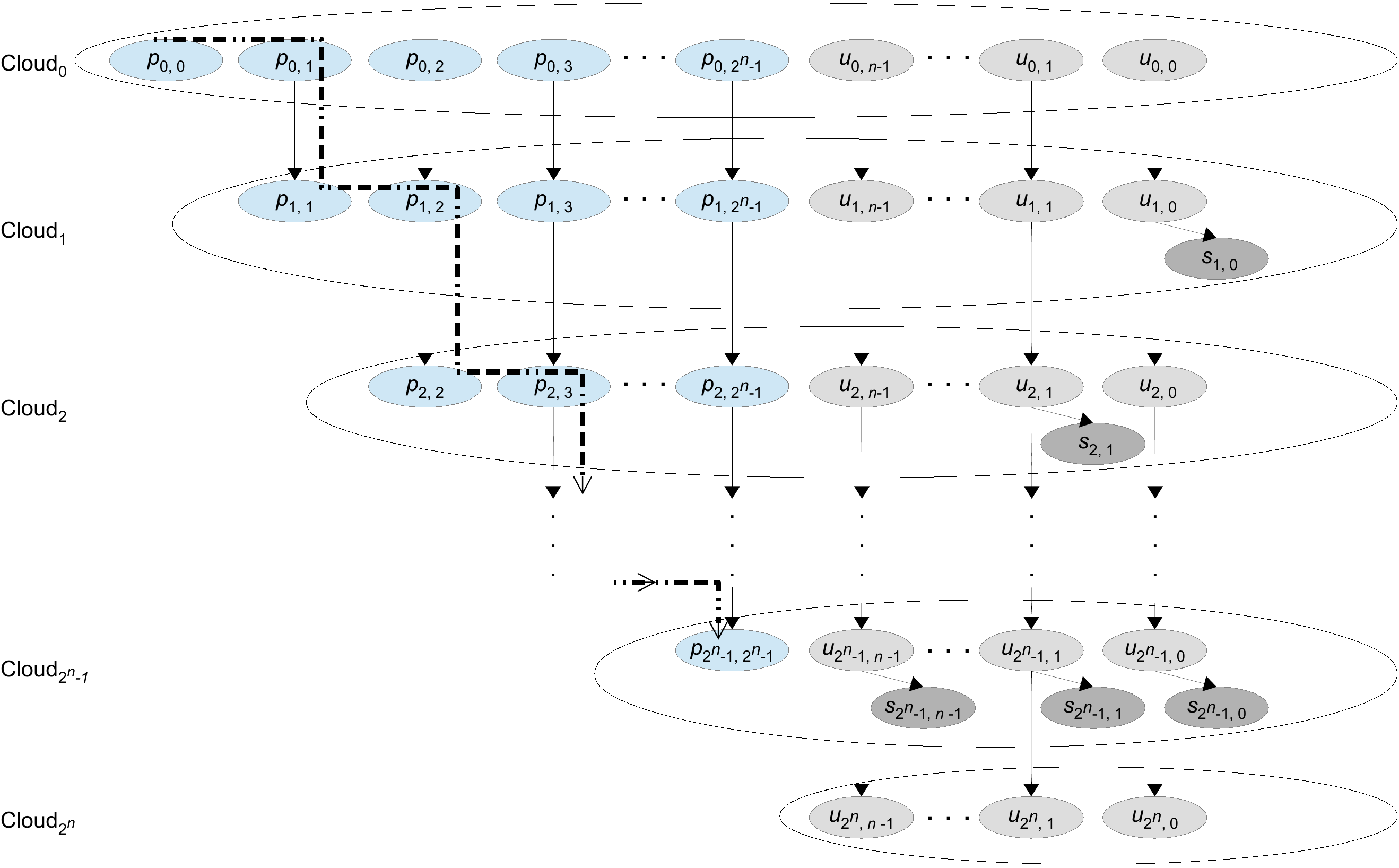}
				\caption{A cross axiom model of the formula $\mathrm{counter}_{\SSL,n}$.}
				\label{figure:SSL}
			\end{center}
		\end{figure}
		We define
		\[W := P \cup U \cup S \]
		where
		\begin{eqnarray*}
			P &:= & \{p_{i,j} ~:~ i,j \in \{0,\ldots,2^n-1\} \text{ and } 
			i \leq j\},  \\
			U &:= & \{u_{i,k} ~:~  i \in \{0,\ldots,2^n\} \text{ and } 
			k \in \{0,\ldots,n-1\} \}, \\ 
			S &:= & \{s_{i,k} ~:~ i \in \{0,\ldots,2^n-1\} \text{ and } 
			k \in \mathrm{Ones}(i)\} .
		\end{eqnarray*}	
		As the relation $\stackrel{L}{\to}$ is supposed to be an equivalence relation we can define it by defining 
		the $\stackrel{L}{\to}$-equivalence classes. These are the sets	
		\[	\mathit{C}_i := \{p_{i,j} ~:~ i \leq j < 2^n\} \cup \{u_{i,k} ~:~  k \in \{0,\ldots,n-1\} \} \cup \{s_{i,k} ~:~ k \in \mathrm{Ones}(i)\}, \]
		for all $i \in \{0,\ldots,2^n-1\}$, and 
		\[ \mathit{C}_{2^n} := \{u_{2^n,k} ~:~ k \in \{0,\ldots,n-1\} \} .  \]
		We define the relation $\stackrel{\Diamond}{\to}$ by:
		\begin{eqnarray*}
			\stackrel{\Diamond}{\to}  &:=& \{ (p_{i,j}, p_{i',j'}) \in P\times P ~:~ i \leq i' \text{ and }  j=j'\} \\
			&&\cup \{ (u_{i,k}, u_{i',k'}) \in U \times U ~:~ i \leq i' \text{ and } k=k'\} \\
			&&\cup \{ (u_{i,k}, s_{i',k'}) \in U \times S ~:~ i \leq i' \text{ and } k=k' \} \\
			&&\cup \{ (s_{i,k}, s_{i',k'}) \in S \times S ~:~ i = i' \text{ and } k=k' \} .  
		\end{eqnarray*}
		It is straightforward to check that $\stackrel{\Diamond}{\to}$ is reflexive and transitive.
		The cross property is satisfied as well.
		Thus, $(W,\stackrel{L}{\to},\stackrel{\Diamond}{\to})$ is a cross axiom frame.
		We define the valuation $\sigma$ by
		\begin{eqnarray*}
			\sigma(A_k) &:= & \{ u_{i,k} ~:~ i \in \{0,\ldots,2^n\}\} \cup\\
			&&\{ s_{i,k} ~:~ i \in \{0,\ldots,2^n-1\} \text{ and }
			k \in \mathrm{Ones}(i)\}, \\
			\sigma(B) &:= & P, \\
			\sigma(X_k) &:= & \{p_{i,j} ~:~ i,j \in \{0,\ldots,2^n-1\} \text{ and }
			k \in \mathrm{Ones}(j)\} ,
		\end{eqnarray*}
		for $k \in \{0,\ldots,n-1\}$.\\
		It is obvious that all of the propositional variables $A_0,\ldots,A_{n-1},B$ and $X_0,\ldots,X_{n-1}$ are persistent.
		Thus, $(W,\stackrel{L}{\to},\stackrel{\Diamond}{\to},\sigma)$ is a cross axiom model.
		We claim $p_{0,0} \models \mathrm{counter}_{\SSL,n}$.
		Before we show this we show the following claim, for all $i\in\{0,\ldots,2^n-1\}$ and for all $p\in\mathit{C}_i$,
		\begin{equation}
		\label{eq:SSL-counter-1}
		p \models (\underline{\alpha}=\mathrm{bin}_n(i)) . 
		\end{equation}
		In order to show this it is sufficient to show
		for all $i\in\{0,\ldots,2^n-1\}$, for all $p \in \mathit{C}_i$, and for all $k \in\{0,\ldots,n-1\}$
		\[ p \models \alpha_k \iff k \in \mathrm{Ones}(i) . \]
		Let us fix some $k \in\{0,\ldots,n-1\}$.
		Note that, for all $p' \in P$, we have $p' \models \neg A_k$,  hence
		\[(\forall p'\in P)\quad p' \models (\neg A_k \vee \Diamond K \neg B) .\]
		Furthermore, $u_{2^n,h} \models K \neg B$ for all $h\in\{0,\ldots,n-1\}$.
		Since for all $i\in\{0,\ldots,2^n\}$ and $h\in\{0,\ldots,n-1\}$
		we have $u_{i,h} \stackrel{\Diamond}{\to} u_{2^n,h}$ we obtain
		$u_{i,h} \models \Diamond K \neg B$ for all $i\in\{0,\ldots,2^n\}$ and $h\in\{0,\ldots,n-1\}$.	
		Hence,
		\[ (\forall u' \in U) \quad u' \models (\neg A_k \vee \Diamond K \neg B) . \]
		This shows that, for any $i\in\{0,\ldots,2^n-1\}$,
		the shared variable $\alpha_k=L(A_k \wedge \Box L B)$ is true in the cloud $\mathit{C}_i$ if, and only if, there exists some $s' \in \mathit{C}_i \cap V$ with $s' \models (A_k \wedge \Box L B)$.
		As $p' \models B$ for all $p' \in P$ and any $s' \in S$ is $\stackrel{L}{\to}$-equivalent to some $p' \in P$, we have
		$s' \models L B$, for all $s' \in S$.
		Actually, for $s' \in S$ we even have $s' \models \Box L B$ as $s'$ does not have any $\stackrel{\Diamond}{\to}$-successors besides itself.
		Thus, the shared variable $\alpha_k$ is true in the cloud $\mathit{C}_i$
		if, and only if, there exists some $s' \in \mathit{C}_i \cap S$ with $s' \models A_k$.
		As the only elements $s' \in \mathit{C}_i \cap S$ are the elements
		$s_{i,h}$ with $h \in \mathrm{Ones}(i)$
		and as $s_{i,h} \models A_k \iff h=k$, we obtain
		for $i\in\{0,\ldots,2^n-1\}$ and for $p \in \mathit{C}_i$,
		\[ p \models \alpha_k \iff k \in \mathrm{Ones}(i) . \]
		We have shown the claim~\eqref{eq:SSL-counter-1}.
		
		We claim that $p_{0,0} \models \mathrm{counter}_{\SSL,n}$.
		Indeed, it is obvious that 
		$$p_{0,0} \models B.$$
		Due to $p_{0,0} \in \mathit{C}_0$ and \eqref{eq:SSL-counter-1} we
		obtain 
		$$p_{0,0} \models (\underline{\alpha}=\mathrm{bin}_n(0)).$$
		Let us assume that for some $p \in W$ and some $k\in\{0,\ldots,n-1\}$ we have
		$p\models (B \wedge \mathrm{rightmost\_zero}(\underline{\alpha},k))$.
		It is sufficient to show that
		\[
		p \models L \bigl(B \wedge (\underline{X}=\underline{\alpha},>k) 
		\wedge \mathrm{rightmost\_one}(\underline{X},k) \wedge \Diamond(\underline{X}=\underline{\alpha})  \bigr) .
		\]
		From $p\models B$ we conclude 
		$p \in P$, hence, there exist
		$i,j\in\{0,\ldots,2^n-1\}$ with $i\leq j$ and with $p=p_{i,j}$.
		Thus, $p \in \mathit{C}_i$. Then we have
		$p \models (\underline{\alpha}=\mathrm{bin}_n(i))$.
		Now, $p \models \mathrm{rightmost\_zero}(\underline{\alpha},k)$ implies
		$\{0,\ldots,n-1\}\setminus \mathrm{Ones}(i) \neq \emptyset$ and
		$k = \min (\{0,\ldots,n-1\}\setminus \mathrm{Ones}(i))$.
		Note that this implies $i<2^n-1$.
		We have $p_{i,j} \stackrel{L}{\to} p_{i,i+1} \stackrel{\Diamond}{\to} p_{i+1,i+1}$,  and since $p_{i,i+1}\in P$ we have
		$$p_{i,i+1} \models B.$$ 
		It is sufficient to show
		$$p_{i,i+1}\models (\underline{X}=\underline{\alpha},>k) \wedge
		\mathrm{rightmost\_one}(\underline{X},k)$$
		and
		$$p_{i+1,i+1}\models (\underline{X}=\underline{\alpha}) .$$
		By definition of $\sigma$ we have for all $j\in\{0,\ldots,n-1\}$ $$p_{i,i+1}\models X_j \quad \Leftrightarrow\quad  j\in\mathrm{Ones}(i+1)$$
		and hence on the one hand 
		$p_{i,i+1}\models( \underline{X}=\mathrm{bin}_n(i+1))$. Due to \eqref{eq:SSL-counter-1}, we have on the other hand
		$p_{i,i+1} \models (\underline{\alpha} = \mathrm{bin}_n(i))$. This proves the first claim. For the second claim we 
		observe that by definition of $\sigma$ also 
		$p_{i+1,i+1}\models (\underline{X}=\mathrm{bin}_n(i+1))$. Since $p_{i+1,i+1}\in\mathit{C_{i+1}}$ we also have by \eqref{eq:SSL-counter-1} that $p_{i+1,i+1}\models (\underline{\alpha}=\mathrm{bin}_n(i+1))$ and hence $p_{i+1,i+1}\models (\underline{X} =\underline{\alpha})$.\\
		Thus, we have constructed a cross axiom model for $\mathrm{counter}_{\SSL,n}$.
		\item
		Suppose there is a cross axiom model $M$ of the formula $\mathrm{counter}_{\SSL,n}$ and some point $p_0\in M$ with 
		$M,p_0\models \mathrm{counter}_{\SSL,n}$.
		We show by induction that the claimed sequences of points 
		$p_1,\ldots p_{2^n-1}$ and $p'_0,\ldots p'_{2^n-2}$
		with the claimed properties and additionally with 
		$p_i \models B$, for $0\leq i \leq 2^n-1$, and with
		$p'_i \models B$, for $0\leq i \leq 2^n-2$, exist.
		In addition, we show that there exist points $t_i$ with 
		$p_0 \stackrel{L}{\to} t_i$ and $t_i \stackrel{\Diamond}{\to} p_i$, for $1\leq i \leq 2^n-1$.
		Note that the sequences $p_0,\ldots p_{2^n-1}$ and $p'_0,\ldots p'_{2^n-2}$ are supposed to form a ``staircase'' as in Figure~\ref{figure:staircase}.
		\begin{figure}
			\begin{center}
				\includegraphics[width=0.4\linewidth]{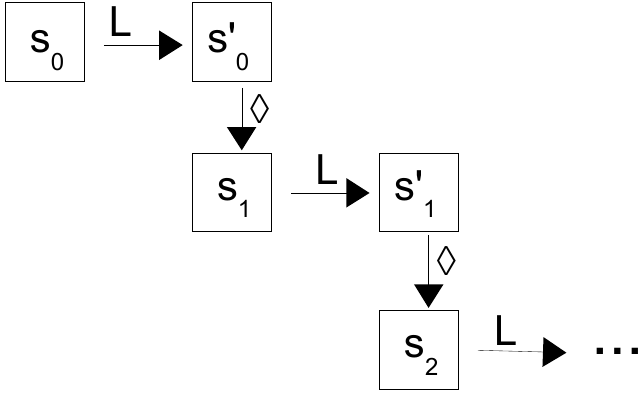}
			\end{center}
			\caption{Staircase of points.}
			\label{figure:staircase}	
		\end{figure}
		By definition,
		$p_0\models B \wedge (\underline{\alpha}=\mathrm{bin}_n(0))$.
		By induction hypothesis, let us assume that for some $m$ with $0 \leq m<2^n-1$
		there exist $p_1,\ldots,p_m$ and $p'_0,\ldots,p_{m-1}'$ with
		\[ p_0\, \stackrel{L}{\to}\, p'_0\, \stackrel{\Diamond}{\to}\, p_1\, \stackrel{L}{\to} \ldots \stackrel{L}{\to}\, p'_{m-1}\, \stackrel{\Diamond}{\to}\, p_m ,  \]
		with
		\[ p_i\models (B \wedge (\underline{\alpha}=\mathrm{bin}_n(i))) , \]
		for $0\leq i \leq m$, and with
		\[ p'_i\models (B \wedge (\underline{X}=\mathrm{bin}_n(i+1))) , \]
		for $0\leq i < m$, and 
		that there are $t_i$ with $p_0 \stackrel{L}{\to} t_i$ and $t_i \stackrel{\Diamond}{\to} p_i$, for $1\leq i \leq m$.
		Since $m<2^n-1$, the set $\{0,\ldots,n-1\}\setminus\mathrm{Ones}(m)$ is nonempty.
		With $k:= \min(\{0,\ldots,n-1\}\setminus\mathrm{Ones}(m))$ we have
		\[ p_m\models \mathrm{rightmost\_zero}(\underline{\alpha},k). \]
		Due to $p_0\models \mathrm{counter}_{\SSL,n}$ as well as $p_0 \stackrel{L}{\to} t_m \stackrel{\Diamond}{\to} p_m$
		this implies
		\[ p_m \models 
			L \bigl( B \wedge (\underline{X}=\underline{\alpha},>k) 
			\wedge \mathrm{rightmost\_one}(\underline{X},k) 
			\wedge \Diamond(\underline{X}=\underline{\alpha})  \bigr) .		
		\]
		Thus, there must exist points $p'_m$ and $p_{m+1}$ satisfying
		$p'_{m} \models B$ as well as
		$p_m\stackrel{L}{\to} p'_m \stackrel{\Diamond}{\to}p_{m+1}$,
		\[ p'_m \models(\underline{X}=\underline{\alpha},>k) 
		\wedge \mathrm{rightmost\_one}(\underline{X},k) \]		
		and
		\[ p_{m+1} \models  (\underline{X}=\underline{\alpha}) . \]
		We have to show
		\[ p'_m \models(\underline{X}=\mathrm{bin}_n(m+1)) \]
		and
		\[ p_{m+1} \models B\wedge(\underline{\alpha}=\mathrm{bin}_n(m+1)) . \]
		Due to the fact that $p'_m$ is an element of the same cloud as $p_m$, and $\underline{\alpha}$ has the same value in all points in a cloud we obtain
		\[ p'_m \models (\underline{\alpha}=\mathrm{bin}_n(m)) .\] Together with 
		\[ p'_m \models(\underline{X}=\underline{\alpha},>k) 
		\wedge \mathrm{rightmost\_one}(\underline{X},k) \]	
		this implies
		\[ p'_m \models(\underline{X}=\mathrm{bin}_n(m+1)) \]
		(the values of the leading bits $\alpha_{n-1},\ldots,\alpha_{k+1}$ of $\underline{\alpha}$ are copied to the leading bits $X_{n-1},\ldots,X_{k+1}$ and the other bits are defined explicitly by $p'_{m} \models  \mathrm{rightmost\_one}(\underline{X},k)$ so that the binary value of $\underline{X}$ is $m+1$).
		From $p'_m \stackrel{\Diamond}{\to} p_{m+1}$ and the fact that in $\SSL$ propositional variables are persistent we obtain
		$p_{m+1} \models (B \wedge (\underline{X}=\mathrm{bin}_n(m+1)))$.
		Using $p_{m+1} \models (\underline{X}=\underline{\alpha})$
		we obtain
		$p_{m+1} \models (\underline{\alpha}=\mathrm{bin}_n(m+1))$.
		Finally, the cross property applied to $t_m \stackrel{\Diamond}{\to} p_m$ and $p_m \stackrel{L}{\to} p'_m$
		implies that there exists a point $t_{m+1}$ with $t_m \stackrel{L}{\to} t_{m+1}$ and $t_{m+1} \stackrel{\Diamond}{\to} p'_m$.
		Using additionally $p_0 \stackrel{L}{\to} t_m$ and $p'_m \stackrel{\Diamond}{\to} p_{m+1}$ we obtain
		$p_0 \stackrel{L}{\to} t_{m+1}$ and $t_{m+1} \stackrel{\Diamond}{\to} p_{m+1}$.
		This ends the proof of the second assertion.		
	\qedhere
	\end{enumerate}
\end{proof}

\subsection{Alternating Turing Machines}
\label{subsection:ATM}

The concept of an \emph{Alternating Turing Machine} (ATM) was set forth by Chandra and Stockmeyer
\cite{chandra1976alternation} and independently by Kozen \cite{kozen1976parallelism} in 1976, with a joint journal publication in 1981 \cite{Chandra:1981:ALT:322234.322243}. 
We are going to use a variant of ATMs with a single tape as in \cite{lange20052}.
This is justified by the fact that one-tape ATMs can efficiently simulate multi-tape ATMs;
see~\cite[Proposition 3.4]{chandra1976alternation}. For an even more efficient
simulation of multi-tape ATMs by one-tape ATMs see \cite{paul1980alternation}.

An ATM is a nondeterministic Turing machine
where some configurations are ``or'' configurations that accept
if at least one of their successors does, while other configurations are
``and'' configurations that accept if all of their successors accept.
The mode of each configuration (``and'' vs. ``or'') is determined
by the state of the configuration.
There are two special states called $q_{accept}$ and $q_{reject}$.
All other states are either \emph{universal states} or \emph{existential states}.

For a relation $\delta\subseteq X \times Y$ and any $x\in X$ we write
$\delta(x):=\{y \in Y \mid (x,y) \in\delta\}$.
\begin{definition}
	An \emph{alternating Turing Machine $M$} is a quintuple 
	\[ M = (Q, \Sigma, \Gamma, q_0, \delta),\]
	where
	\begin{itemize}
		\item $Q$, the set of \emph{states} of $M$, is the disjoint union of the following
		four sets:
		\begin{itemize}
		\item $Q_\exists$, a finite set, its elements are called \emph{existential states},
		\item $Q_\forall$, a finite set, its elements are called \emph{universal states},
		\item $\{q_\mathrm{accept}\}$, a one-element set, its element is called \emph{accepting state},
		\item $\{q_\mathrm{reject}\}$, a one-element set, its element is called \emph{rejecting state},
		\end{itemize}
		\item $\Sigma$ is a nonempty finite set, called the \emph{input alphabet},
		\item $\Gamma \supseteq \Sigma$ is a finite set containing a \emph{blank} symbol 
				$\#\not\in\Sigma$, we call $\Gamma$ the {\em tape alphabet},
		\item $q_0 \in Q$ is the {\em initial state},  
		\item $\delta \subseteq 	
				(Q\times\Gamma)\times(Q\times\Gamma\times\{\mathit{left},\mathit{right}\})$ 		
		    	is the {\em transition relation},
	\end{itemize}
and where $\delta$ satisfies the condition
\[ \delta(q,a) \begin{cases}
    = \emptyset & \text{  for $q \in \{q_\mathrm{accept}, q_\mathrm{reject}\}$ and $a\in \Gamma$}, \\
    \neq \emptyset & \text{  for $q \in Q_\exists\cup Q_\forall$ and $a\in \Gamma$}.
    \end{cases} \]
\end{definition}

A {\em configuration} of an alternating Turing machine $M$ is an element 
$(q,z,\gamma)$ of
\[ C_M = Q \times \IZ \times \Gamma^\IZ \]
where $q\in Q$ is the current state of the finite control,
$z \in\IZ$ is the current position of the tape head (that is, the number of the cell on which the
tape head is positioned), and
the function $\gamma:\IZ\to\Gamma$ represents the current tape content
and satisfies the condition $\gamma(z)=\#$ for all but finitely many $z \in \IZ$.
A configuration represents an instantaneous description of $M$ at some point in a  computation. 
The \emph{initial configuration} of $M$ on input $w=w_1\ldots w_n\in\Sigma^*$ 
with $w_i\in\Gamma$ for $i=1\ldots, |w|$ is
\[ \sigma_M(w) := (q_0, 0, \gamma_w) \]
where $\gamma_w:\IZ\to\Gamma$ is defined by
\[ \gamma_w(z):=\begin{cases}
      \# & \text{ if } z\leq 0 \text{ or } z > |w|, \\
      w_z & \text{ if } z \in \{1,\ldots,|w|\}.
      \end{cases} \]
That means that at the start of the computation the tape head is positioned on cell no. $0$,
and the input string $w$ is contained in the cells with the numbers $1$ to $|w|$ while
all other cells contain the blank $\#$.

For two configurations $c$ and $c'$ we write $c\vdash c'$ and say \emph{$c'$ is a successor of $c$},
if, according to the transition relation $\delta$, the configuration $c'$
can be reached from the configuration $c$ in one step (this is defined in the usual way).
The reflexive-transitive closure of $\vdash$ is denoted $\vdash^*$.
A \emph{computation} or \emph{computation path} of $M$ on input $w$
is a sequence $c_0\vdash\ldots\vdash c_m$, where  $c_0 = \sigma_M(w)$. 	
In the following we will only consider ATMs $M$ such that there exists a function
$t:\IN\to\IN$ such that for any $n\in\IN$ and any possible input string $w \in\Sigma^n$,
any computation path of $M$ on input $w$ has length at most $t(n)$, that is,
if $c_0,\ldots,c_m$ is a computation path of $M$ on input $w$ then $m \leq t(n)$.
In this case we say that $M$ \emph{works  in time $t$}.
For such machines $M$ we can define the \emph{language $L(M)\subseteq \Sigma^*$
recognized by $M$} as follows: for $w \in\Sigma^*$
\[ w \in L(M) :\iff \text{there exists an ``accepting tree of $M$ on input $w$''}. \]
An \emph{accepting tree of $M$ on input $w$}
is a finite rooted and labeled tree each of whose nodes is labeled
with a configuration of $M$ such that the following five properties hold true:
\begin{enumerate}
\item[I.]
The root of the tree is labeled with the initial configuration $\sigma_M(w)$ of $M$ on input $w$.
\item[II.]
If $c$ is the label of an internal node of the tree then the labels of its successors 
are configurations $c'$ satisfying $c \vdash c'$
(note that this implies that the state $q$ of $c$ is an element of $Q_\exists\cup Q_\forall$).
\item[III]
If $c$ is the label of an internal node of the tree then the labels of its successors 
are pairwise different configurations.
\item[IV.]
If $c$ is the label of an internal node of the tree 
and the state $q$ of $c$ is an element of $Q_\forall$
then for every configuration $c'$ with $c\vdash c'$ there exists a successor node labeled
with $c'$.
\item[V.]
If $c$ is the label of a leaf of the tree then the state of $c$ is equal to $q_\mathrm{accept}$.
\end{enumerate}
Note that these conditions imply that, if $c$ is the label of a node of the tree 
and the state $q$ of $c$ is an element of $Q_\exists$,
then this node is an internal node, hence, it has a successor.
Let the height of a rooted tree be the length of the longest path in the tree.
It is clear that if there is an accepting tree of $M$ on input $w$ then its height
is at most $t(|w|)$.

The time complexity class $\mathrm{AEXPTIME}$ is the class of all languages $L$
such that there exist an alternating Turing machine $M$ with $L=L(M)$
and a polynomial $p$ such that $M$ works in time $2^{p(n)}$.
We will make use of the fundamental fact 
$\mathrm{AEXPTIME}=\mathrm{EXPSPACE}$~\cite[Corollary 3.6]{chandra1976alternation}.

For technical purposes we will also need the following notion.
A \emph{partial tree of $M$ on input $w$} is a finite rooted and labeled
tree each of whose nodes is labeled with a configuration of $M$
that satisfies the same four properties I, II, III, and IV as
an accepting tree of $M$ on input $w$ and instead of the property V the following 
weaker property:
\begin{enumerate}
\item[$\mathrm{V}^\prime$.]
If $c$ is the label of a leaf of the tree then the state of $c$
is different from $q_\mathrm{reject}$.
\end{enumerate}
It is clear that any accepting tree of $M$ on input $w$
is a partial tree of $M$ on input $w$.
Usually we will write a partial tree of $M$ on input $w$
as a triple $T=(V,E,c)$ where
$V$ is the set of nodes of the tree $T$, where $E\subseteq V\times V$ is the set
of edges of the tree $T$
(note that the root $\mathit{root}$ of the tree is uniquely determined by $V$ and $E$:
it is the only node that does not have an incoming edge),
and where $c:V\to Q \times \IZ \times \Gamma^\IZ$
is the labeling of the tree.
Let $E^*$ be the reflexive-transitive closure of $E$.
We will often need the following data associated with any computation node $v\in V$:
\begin{itemize}
\item
The time $\mathit{time}(v)$ of $v$.
This is the number of edges of the unique path in $T$ from $\mathit{root}$ to $v$.
Note that $\mathit{time}(\mathit{root})=0$.
\item
The configuration $c(v)$ of $v$. This is the configuration with which the node $v$ is labeled.
\item
The state $\mathit{state}(v)$ of the configuration $c(v)$.
\item
The position $\mathit{pos}(v)$ of the tape head in the configuration $c(v)$.
\item
The symbol $\mathit{read}(v)$ in the cell $\mathit{pos}(v)$ in the configuration $c(v)$.
\item
The predecessor $\mathit{pred}(v)$ of $v$ in the tree $T$, for $v\neq \mathit{root}$.
\item
The symbol $\mathit{written}(v)$ that has been written in the computation step that lead to this node $v$,
for $v\neq \mathit{root}$.
Note that this is the symbol now contained in the cell
$\mathit{pos}(\mathit{pred}(v))$
on which the tape head was positioned in the previous configuration.
\end{itemize}

\section[Reduction of ATMs Working in Exponential Time to $\ssl$]
{Reduction of Alternating Turing Machines Working in Exponential Time to $\ssl$}
\label{section:ATMs-SSL}
In this section we prove the following theorem.

\begin{theorem}
\label{theorem:SSL-EXPSPACE-hard}
The satisfiability problem of $\ssl$ is $\EXPSPACE$-hard under logarithmic space reduction.
\end{theorem}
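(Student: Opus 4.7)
The plan is to give a logspace reduction from any language $L\in\EXPSPACE$ to $\ssl$-satisfiability. Using the equality $\EXPSPACE=\mathrm{AEXPTIME}$, I would fix an alternating Turing machine $M=(Q,\Sigma,\Gamma,q_0,\delta)$ and a polynomial $p$ with $L=L(M)$ and such that $M$ runs in time $t(n)=2^{p(n)}$. For an input $w$ of length $n$, I would then construct, in logarithmic space, an $\ssl$-formula $\phi_{M,w}$ which is $\ssl$-satisfiable if and only if an accepting tree of $M$ on $w$ exists.

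The central idea is to force any cross axiom model of $\phi_{M,w}$ to encode a partial tree of $M$ on $w$ in the following way: the $\stackrel{L}{\to}$-clouds correspond to configurations of $M$, and the induced relation $\stackrel{\Diamond}{\to}^{\stackrel{L}{\to}}$ between clouds implements the edge relation of the tree. Each configuration-cloud must store the corresponding configuration: the state (via finitely many shared variables), the head position (via shared variables encoding a $p(n)$-bit number), and the full tape content of relevant length $2^{p(n)}$. The tape is exponentially long and so cannot be stored directly by polynomially many propositional variables; instead I would use a variant of the counter construction of Proposition~\ref{prop: counter} to arrange, for each configuration-cloud, a chain of ``tape-cell points'' indexed by a $p(n)$-bit counter, each carrying a constant-size block of propositional variables encoding one tape symbol.

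The formula $\phi_{M,w}$ is then the conjunction of sub-formulas forcing: (i) the root cloud encodes the initial configuration $\sigma_M(w)$; (ii) each cloud internally realises a full tape-cell chain correctly indexed by the tape counter; (iii) if a cloud carries a state in $Q_\exists$, it has at least one $\stackrel{\Diamond}{\to}^{\stackrel{L}{\to}}$-successor cloud that encodes a $\vdash$-successor configuration; (iv) if the state is in $Q_\forall$, then for every applicable transition $(q',a',d)\in\delta$ some $\stackrel{\Diamond}{\to}^{\stackrel{L}{\to}}$-successor cloud realises the corresponding $\vdash$-successor; (v) across any parent/child pair of clouds, tape cells away from the previous head position are copied unchanged and the cell under the head is updated according to the chosen transition; (vi) every cloud without outgoing $\stackrel{\Diamond}{\to}^{\stackrel{L}{\to}}$-edges carries the state $q_\mathrm{accept}$; and (vii) the depth of the tree of clouds is bounded by $2^{p(n)}$, controlled by an additional outer ``time'' counter in the spirit of Proposition~\ref{prop: counter}. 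The resulting formula is a schema parametrised by the numbers $n$, $p(n)$ and the finite description of $M$, and can be written by a logspace transducer.

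The hard part will be item (v): propagating the tape contents from a cloud to its $\stackrel{\Diamond}{\to}^{\stackrel{L}{\to}}$-successor clouds using only polynomially many formulas, when the tape has exponentially many cells. One must, for each tape index $j\in\{0,\ldots,2^{p(n)}-1\}$, compare the symbol at index $j$ in the parent with the symbol at index $j$ in the child, while the only way to address a particular $j$ is via a counter walk inside each cloud. I expect this to require a careful combined use of the inner tape-counter (shared variables) and the cross axiom in order to ``synchronise'' tape-cell points of equal index across parent and child clouds. The correctness proof will then split into soundness (given an accepting tree of $M$ on $w$, an explicit cross axiom model of $\phi_{M,w}$ is built as in Proposition~\ref{prop: counter}(1)) and completeness (from any cross axiom model one extracts an accepting tree by an induction of the same flavour as Proposition~\ref{prop: counter}(2), using the tree-depth counter to guarantee termination at accepting leaves).
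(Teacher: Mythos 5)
Your high-level plan matches the paper's: invoke $\EXPSPACE=\mathrm{AEXPTIME}$, fix an exponential-time ATM, let clouds model nodes of an accepting tree with $\stackrel{\Diamond}{\to}^{\stackrel{L}{\to}}$ serving as the tree's edge relation, and drive everything with binary counters for time and position. Up to that point you and the paper agree. But the central technical content of the paper is precisely the thing you defer to ``item (v)'', and the route you propose for it diverges from the paper in a way that I do not think can be made to work as stated.

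You propose to store the entire exponentially-long tape inside each configuration-cloud, as a chain of ``tape-cell points'' indexed by a $p(n)$-bit counter, and then to ``synchronise'' corresponding tape cells between a cloud and its successor clouds. The difficulty is not a missing detail but a structural conflict with the two mechanisms available in $\ssl$. A counter in the style of Proposition~\ref{prop: counter} distinguishes its values by \emph{shared variables}, which have a fixed truth value across an entire $\stackrel{L}{\to}$-equivalence class; so such a counter cannot index $2^{p(n)}$ distinct tape cells \emph{within} a single cloud, because the index would be the same at every point of the cloud. If instead you index cells by ordinary propositional variables, those are \emph{persistent} along $\stackrel{\Diamond}{\to}$, and so is the constant-size block encoding the symbol at the cell; persistence then prevents exactly the one thing a computation step must do, namely change the symbol under the head from parent cloud to child cloud along the $\stackrel{\Diamond}{\to}$ chain carrying that index. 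So the ``tape chain inside each cloud'' scheme runs aground on the very dichotomy (shared vs.\ persistent) that Subsection~\ref{subsection:shared-variables} is about, and the proposal never resolves this.

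The paper's reduction avoids the problem by not storing the tape at all. Each cloud carries only a constant amount of per-step information in shared variables: the time, the head position, the state, the symbol just written into the cell that was just left, and the symbol currently read. The missing datum---what is actually written at the head's position---is recovered by exploiting the cross axiom: the point enforced in the new cloud has $\stackrel{\Diamond}{\to}$-predecessors (after crossing) in all clouds of all earlier nodes on the computation path. A persistent vector $\underline{X}^{\text{time-apv}}$ (``time after previous visit'') is pinned, via the formula $\mathit{time\_after\_previous\_visit}$, to the unique earlier time at which the current cell was last touched, or to $0$ if it never was; the formula $\mathit{get\_the\_right\_symbol}$ then either copies the initial input symbol or copies $\underline{\alpha}^{\mathrm{written}}$ from the cloud at that time. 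This look-back mechanism is the genuinely new idea of the proof, and it is what your proposal is missing. Until you replace ``store and synchronise the whole tape'' with some mechanism of this flavour, the reduction does not go through.

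One smaller point: conditions (vi) and (vii) in your list are also not quite how the paper argues termination. The paper instead forbids $q_{\mathrm{reject}}$ everywhere ($\mathit{no\_reject}$) and does not explicitly bound the tree depth by a formula; termination of the extraction procedure (Lemma~\ref{lemma:inductionSSL}) follows from the a-priori bound $\widetilde{D}$ on the size of any partial tree, which comes from the ATM's time bound, not from the formula. This is a more elementary route than encoding a depth bound logically.
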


As explained before, we are going to show this by showing that any language $L$ recognized by an Alternating Turing Machine working in exponential time can be reduced in logarithmic space to the satisfiability problem of $\ssl$.

In the following subsection we will first describe the idea of the reduction
and then define the reduction function $f_\ssl$ in detail.
In Subsection~\ref{subsection:reduction-ATMs-SSL-complexity}
we will show that it can be computed in logarithmic space.
In the final two subsections we are going to show that it is corrrect.
First we show that in case $w\in L$ the formula
$f_\ssl(w)$ is $\ssl$-satisfiable by explicitly 
constructing a cross axiom model for $f_\ssl(w)$.
In the last section we show that if $f_\ssl(w)$ is 
$\ssl$-satisfiable then $w$ is an element of $L$.

\subsection{Construction of the Formula}
\label{subsection:definition-ATMs-SSL}

Let $L \in \EXPSPACE$ be an arbitrary language
over some alphabet $\Sigma$, that is, $L \subseteq \Sigma^*$.
We are going to show that there is a logspace computable
function $f_\ssl$ mapping strings to strings such that,
for any $w\in\Sigma^*$, 
\begin{itemize}
\item
$f_\ssl(w)$ is a bimodal formula and 
\item
$f_\ssl(w)$ is $\ssl$-satisfiable if, and only if, $w \in L$.
\end{itemize}
Once we have shown this, we have shown the result.
In order to define this desired reduction function $f_\ssl$,
we are going to make use of an Alternating Turing Machine for $L$.
Since $\EXPSPACE = \mathrm{AEXPTIME}$, there
exist an Alternating Turing Machine
$M = (Q, \Sigma, \Gamma, q_0, \delta)$ 
and a univariate polynomial $p$ such that
$M$ accepts $L$, that is, $L(M)=L$, and such that the time used by $M$
on arbitrary input of length $n$ is bounded by $2^{p(n)}-1$.
We can assume without loss of generality
$Q=\{0,\ldots,|Q|-1\}$, $\Gamma=\{0,\ldots,|\Gamma|-1\}$, that 
the coefficients of the polynomial $p$ are natural numbers and that, for all $n\in\IN$,
we have $p(n) \geq n$ and $p(n)\geq 1$. In the following, whenever we have fixed some $n\in\IN$, we set 
\[ N:=p(n) . \] 
Let us consider an input word $w \in\Sigma^n$ of length $n$, for some $n \in\IN$,
and let us sketch the main idea of the construction of the formula
$f_\ssl(w)$.
The formula $f_\ssl(w)$ will describe the possible computations
of $M$ on input $w$ in the following sense: any cross axiom model
of $f_\ssl(w)$ will essentially contain an accepting tree
of $M$ on input $w$, and if $w \in L(M)$ then there exists an accepting tree of $M$ on input $w$
and one can turn this into a cross axiom model of $f_\ssl(w)$.
In such a model, any node in an accepting tree 
of $M$ on input $w$ will be modeled by a cloud
(that is, by an $\stackrel{L}{\to}$-equivalence class) in which certain shared variables
(we use the notion ``shared variables'' in the same sense as in 
Subsection~\ref{subsection:shared-variables})
will have values that describe the data of the computation node
that are important in this computation step. Which data are these?
First of all, we need the time of the computation node.
We assume that the computation starts with the initial configuration
of $M$ on input $w$ at time $0$.
Since the  ATM $M$ needs at most $2^N-1$ time steps,
we can store the time of each computation node in a binary counter
counting from $0$ to $2^N-1$. 
Since during each time step at most one additional cell
either to the right or to the left of the previous cell can be visited,
we can describe  any configuration reachable during a computation of $M$
on input $w$ by the following data:
{\setlength{\leftmargini}{2.15em}\begin{itemize}
\item
the current content of the tape, which is a string in
$\Gamma^{2 \cdot (2^N-1)+1} = \Gamma^{2^{N+1}-1}$,
\item
the current tape head position,
which is a number in $\{0,\ldots,2^{N+1}-2\}$.
\end{itemize}}
We assume that in the initial configuration on input $w$ the
tape content is $\#^{2^N}w\#^{2^N-1-n}$ 
(remember that we use $\#$ for the blank symbol) and that
the tape head scans the blank $\#$ to the left of the first symbol of $w$,
that is, the position of the tape head is $2^{N}-1$.
If a cloud in a cross axiom model of $f_\ssl(w)$ 
describes a computation node of $M$ on input $w$
then in this cloud the following shared variables
will have the following values:
\begin{itemize}
\item
a vector $\underline{\alpha}^\mathrm{time}=
(\alpha^\mathrm{time}_{N-1},\ldots,\alpha^\mathrm{time}_0)$ giving in binary
the current time of the computation,
\item
a vector $\underline{\alpha}^\mathrm{pos}=
(\alpha^\mathrm{pos}_{N},\ldots,\alpha^\mathrm{pos}_0)$ giving in binary
the current position of the tape head,
\item
a vector $\underline{\alpha}^\mathrm{state}=
(\alpha^\mathrm{state}_0,\ldots,\alpha^\mathrm{state}_{|Q|-1})$ giving in unary
the current state of the computation
(here ``unary'' means: exactly one of the shared variables
$\alpha^\mathrm{state}_i$ will be true,
namely the one with $i$ being the current state),
\item
a vector $\underline{\alpha}^\mathrm{written}=
(\alpha^\mathrm{written}_{0},\ldots,\alpha^\mathrm{written}_ {|\Gamma|-1})$ giving in unary
the symbol that has just been written into the cell that has just been left,
unless the cloud corresponds to the first node in the computation tree --- in that
case the value of this vector is irrelevant
(here ``unary'' means: exactly one of the variables $\alpha^\mathrm{written}_i$ will be true,
namely the one with $i$ being the symbol that has just been written), 
\item
a vector $\underline{\alpha}^\mathrm{read}=
(\alpha^\mathrm{read}_0,\ldots,\alpha^\mathrm{read}_{|\Gamma|-1})$
giving in unary the symbol in the current cell
(here ``unary'' means: exactly one of the shared variables
$\alpha^\mathrm{read}_i$ will be true,
namely the one with $i$ being the symbol in the current cell). 
\end{itemize}
The formula $f_\ssl(w)$ has to ensure that for any possible
computation step in an accepting tree starting from such a computation node
there exists a cloud describing the corresponding successor node in the accepting tree.
In this new cloud, the value of the counter for the time 
$\underline{\alpha}^\mathrm{time}$ has to be incremented.
This can be done by the technique described in Subsection~\ref{subsection:counter}
for implementing a binary counter.
In parallel, we have to make sure that in this new cloud also the vectors 
$\underline{\alpha}^\mathrm{pos}$,
$\underline{\alpha}^\mathrm{state}$,
$\underline{\alpha}^\mathrm{written}$, and
$\underline{\alpha}^\mathrm{read}$
are set to the right values.
For the vectors $\underline{\alpha}^\mathrm{pos}$, 
$\underline{\alpha}^\mathrm{state}$, and
$\underline{\alpha}^\mathrm{written}$
these values can be computed using the corresponding
element of the transition relation $\delta$ of the ATM.
For example, $\underline{\alpha}^\mathrm{pos}$
has to be decremented by one if the tape head moves to the left,
and it has to be incremented by one if the tape head moves to the right.
Also the new state 
(to be stored in $\underline{\alpha}^\mathrm{state}$) and the symbol written into the cell that has
just been left (to be stored in $\underline{\alpha}^\mathrm{written}$)
are determined by the data of the previous computation node
and by the corresponding element of the transition relation $\delta$.\\
But the vector $\underline{\alpha}^\mathrm{read}$ is supposed
to describe the symbol in the current cell.
This symbol is not determined by the current computation step
but has either been written the last time when this cell has been visited during this computation or,
when this cell has never been visited before,
the symbol in this cell is still the one that was contained in this cell 
before the computation started.
How can one ensure that $\underline{\alpha}^\mathrm{read}$ is set to the right value?
If the current cell has never been visited before,
we have to make sure
that the value is set to the correct value describing the inital content of this cell.
Otherwise, we make use of the cross property.
The point in the new cloud whose existence is enforced by the formula
must have a cross point in any cloud corresponding to any previous computation node.
The idea is that one of these cross points picks up the right value
in the right cloud.  We are going to make sure that the cloud is identified that corresponds
to the configuration after the previous visit of the same cell during the computation.
Then in the cloud corresponding to this configuration
the value of $\underline{\alpha}^\mathrm{written}$
will tell us the symbol that has been written into the current cell
during the previous visit.
In order to identify the correct cloud of the step after the previous visit
of the current cell and to copy the value of the symbol,
the formula $f_\ssl(w)$ will ensure that any cloud
describing a computation node will contain a point
in which the following (persistent!) propositional variables
have the following values:
\begin{itemize}
\item
a vector $\underline{X}^\mathrm{time}=
(X^\mathrm{time}_{N-1},\ldots,X^\mathrm{time}_0)$ giving in binary
the current time of the computation,
\item
a vector $\underline{X}^\mathrm{pos}=
(X^\mathrm{pos}_{N},\ldots,X^\mathrm{pos}_0)$ giving in binary
the current position of the tape head, that is the position of the current cell,
\item
a vector $\underline{X}^\mathrm{read}=
(X^\mathrm{read}_0,\ldots,X^\mathrm{read}_{|\Gamma|-1})$ giving in unary
the symbol in the current cell, 
(here ``unary'' means: exactly one of the variables $X^\mathrm{read}_i$ will be true,
namely the one with $i$ being the symbol in the current cell). 
\item
a vector $\underline{X}^{\text{time-apv}}=
(X^{\text{time-apv}}_{N-1},\ldots,X^{\text{time-apv}}_0)$ giving in binary
the time one step after the previous visit of the cell,
if it has been visited before
(``$\text{time-apv}$'' stands for ``time after previous visit'');
otherwise this vector will have the binary value $0$.
\end{itemize}
Now we come to the formal definition of the formula $f_\ssl(w)$.
The formula $f_\ssl(w)$ will have the following structure:
\begin{eqnarray*}
  f_\ssl(w) 
    &:= & K\Box \mathit{uniqueness} \\
    && \wedge \mathit{start} \\
    && \wedge K\Box \mathit{time\_after\_previous\_visit} \\
    && \wedge K\Box \mathit{get\_the\_right\_symbol} \\
    && \wedge K\Box \mathit{computation} \\
    && \wedge K\Box \mathit{no\_reject} .
\end{eqnarray*}
The formula $f_\ssl(w)$ will contain the following propositional variables:
\begin{eqnarray*}
&& B, \\
&& A^\mathrm{time}_{N-1}, \ldots, A^\mathrm{time}_{0}, 
 A^\mathrm{pos}_{N}, \ldots, A^\mathrm{pos}_{0}, 
 A^\mathrm{state}_{0}, \ldots, A^\mathrm{state}_{|Q|-1}, 
 A^\mathrm{written}_{0},\ldots, A^\mathrm{written}_{|\Gamma|-1}, 
 A^\mathrm{read}_{0},\ldots, A^\mathrm{read}_{|\Gamma|-1},\\
&& X^\mathrm{time}_{N-1}, \ldots, X^\mathrm{time}_{0}, 
 X^{\text{time-apv}}_{N-1},\ldots, X^{\text{time-apv}}_{0}, 
 X^\mathrm{pos}_{N}, \ldots, X^\mathrm{pos}_{0}, 
 X^\mathrm{read}_{0},\ldots, X^\mathrm{read}_{|\Gamma|-1}.
\end{eqnarray*}
For $\mathit{string} \in\{\mathrm{time}, \mathrm{pos}, \mathrm{state},
\mathrm{written}, \mathrm{read}\}$ and natural numbers $k$ we define
\[ \alpha^\mathit{string}_k := L(A^\mathit{string}_k \wedge \Box L B). \]
These formulas $\alpha^\mathit{string}_k$ are the shared variables we talked about above.
We are now going to define the subformulas of $f_\ssl(w)$.
We will use the abbreviations introduced above,
in Table~\ref{table:abbrev1}, and in Table~\ref{table:abbrev2}.

\begin{table}
\caption{Some (partially numerical) abbreviations for logical formulas,
where $\underline{F}=(F_{l-1},\ldots,F_0)$ and $\underline{G}=(G_{l-1},\ldots,G_0)$ are vectors of formulas.
An empty conjunction like $\bigwedge_{h=0}^{-1} F_h$ can be replaced by any propositional formula that is true always.
An empty disjunction like $\bigvee_{h=0}^{-1} F_h$ can be replaced by any propositional formula that is false always.
}
\label{table:abbrev2}
\renewcommand{\arraystretch}{1.5}
\begin{center}
\begin{tabular}{|l|l|l|}
\hline
For & the following & is an abbreviation \\[-2mm]
& expression & of the following formula \\ \hline\hline
$l\geq 1$ & $\mathrm{unique}(\underline{F})$
  & $\bigvee_{k=0}^{l-1} F_k \wedge \bigwedge_{k=0}^{l-1}\bigwedge_{m=k+1}^{l-1} \neg (F_k \wedge F_m)$ \\ \hline
$l\geq 1$ & $(\underline{F} \neq \underline{G})$ 
  & $\neg (\underline{F} = \underline{G})$ \\ \hline
$l\geq 1$ & $(\underline{F} < \underline{G})$ 
  & $\bigvee_{k=0}^{l-1} \left( (\underline{F}=\underline{G},>k) \wedge \neg F_k \wedge G_k \right)$ \\ \hline
$l\geq 1$ & $(\underline{F} \leq \underline{G})$ 
  & $(\underline{F} < \underline{G}) \vee (\underline{F} = \underline{G})$ \\ \hline
$l\geq 1$ & $(\underline{F} = \underline{G} + 1)$ 
  & $\bigvee_{k=0}^{l-1} \bigl( (\underline{F}=\underline{G},>k)$ \\
&& \quad $\wedge \mathrm{rightmost\_one}(\underline{F},k)$ \\
&& \quad $\wedge \mathrm{rightmost\_zero}(\underline{G},k) \bigr)$ \\ \hline
$l\geq 1$ & $(\underline{F} \neq \underline{G} + 1)$ 
  & $\neg (\underline{F} = \underline{G} + 1)$ \\ \hline
$l\geq 1$, $0\leq i < 2^l$ & $(\underline{F}<\mathrm{bin}_l(i))$ 
      & $\bigvee_{k \in \mathrm{Ones}(i)} \bigl( \neg F_k
       \wedge \bigwedge_{h \in  \{k+1,\ldots,l-1\}\setminus\mathrm{Ones}(i)} \neg F_h\bigr)$ \\ \hline
$l\geq 1$, $0\leq i < 2^l$ & $(\underline{F}\leq\mathrm{bin}_l(i))$ 
      & $(\underline{F}<\mathrm{bin}_l(i)) \vee (\underline{F}=\mathrm{bin}_l(i))$  \\ \hline
$l\geq 1$, $0\leq i < 2^l$ & $(\underline{F}>\mathrm{bin}_l(i))$ 
      & $\neg (\underline{F}\leq\mathrm{bin}_l(i))$  \\ \hline
\end{tabular}
\end{center}
\end{table}

The models of the formula $f_\ssl(w)$ will contain certain ``information'' points that will
realize an accepting tree of $M$ on input $w$ if, and only if, $w\in L$.
Besides these information points there will also be other, ``auxiliary'', points (and an $\stackrel{L}{\to}$-equivalence class
not containing any information points) whose sole purpose is to make the mechanism of 
shared variables work. In several formulas we need to distinguish between the information points
and the other, auxiliary, points.
It turns out that this can be done simply by the truth value of the propositional variable $B$.

The following formula makes sure that in each of the vectors of shared variables that
describe in a unary way the current state respectively the written symbol respectively the current symbol, exactly one shared variable is true:
\begin{eqnarray*}
\mathit{uniqueness}&:=& B \rightarrow
 \bigl( \mathrm{unique}({\underline{\alpha}^\mathrm{state}}) 
 \wedge  \mathrm{unique}({\underline{\alpha}^\mathrm{written}}) 
 \wedge  \mathrm{unique}({\underline{\alpha}^\mathrm{read}}) \bigr).
\end{eqnarray*}
The vector $\underline{X}^\mathrm{read}$ will satisfy the same
uniqueness condition automatically.

The following formula ensures that the variables in the cloud corresponding to the first node in a computation tree
get the correct values. The computation starts at time $0$ with the tape head at position $2^N-1$ and
in the state $q_0$ and with the blank symbol $\#$ in the current cell.
\[ \mathit{start}
   :=  B \wedge (\underline{\alpha}^\mathrm{time}=\bin_N(0))
           \wedge (\underline{\alpha}^\mathrm{pos}=\bin_{N+1}(2^N-1))
           \wedge \alpha^\mathrm{state}_{q_0}
           \wedge \alpha^\mathrm{read}_{\#} .
\]

The following formula ensures that the vector $\underline{X}^{\text{time-apv}}$ stores
the time after the previous visit of the same cell, if it has been visited before.
If it has never been visited before, this vector gets the binary value $0$.
\begin{eqnarray*}
   \lefteqn{\mathit{time\_after\_previous\_visit}} && \\
   &:=& B \rightarrow \Biggl(
     \Bigl( \underline{X}^{\text{time-apv}} \leq \underline{X}^\mathrm{time} \Bigr) \\
   &&      \wedge \Bigl( ( \underline{\alpha}^\mathrm{time} < \underline{X}^\mathrm{time}
              \wedge \underline{\alpha}^\mathrm{pos} \neq \underline{X}^\mathrm{pos} )
              \rightarrow (\underline{X}^{\text{time-apv}} \neq \underline{\alpha}^\mathrm{time} + 1) \Bigr) \\
   &&      \wedge \Bigl( (  \underline{\alpha}^\mathrm{time} < \underline{X}^\mathrm{time}
              \wedge \underline{\alpha}^\mathrm{pos} = \underline{X}^\mathrm{pos}) 
              \rightarrow (\underline{\alpha}^\mathrm{time} < \underline{X}^{\text{time-apv}}) \Bigr) \Biggr) .
\end{eqnarray*} 
We explain this formula. The time $\underline{X}^{\text{time-apv}}$ after the previous visit of the current
cell $\underline{X}^\mathrm{pos}$ is certainly at most as large as the current time
$\underline{X}^\mathrm{time}$. When during the computation at an earlier time 
a cell has been visited that is different from the current one
then one plus the time of that visit is certainly not the time after the previous visit of the current cell.
When during the computation at an earlier time the current cell has been visited then 
the time of that visit is a strict lower bound for the time after the previous visit of the current
cell. Together these conditions ensure that $\underline{X}^{\text{time-apv}}$
gets the correct value.

The following formula ensures that the vector $\underline{X}^\mathrm{read}$ stores (in unary form)
the symbol in the current cell.

\begin{eqnarray*}
	\lefteqn{\mathit{get\_the\_right\_symbol}} && \\
	&:= & 
	\Biggl(\bigl(B \wedge (\underline{X}^{\text{time-apv}}=\mathrm{bin}_N(0))\bigr)
	\rightarrow \\
	&&\quad\Bigl( 
	\bigwedge_{i=1}^n ( (\underline{X}^\mathrm{pos}=\mathrm{bin}_{N+1}(2^{N}-1+i))
	\rightarrow X^\mathrm{read}_{w_i} ) \\
	&&\quad\;
 	\wedge ( (\underline{X}^\mathrm{pos} \leq \mathrm{bin}_{N+1}(2^{N}-1))
	\vee  (\underline{X}^\mathrm{pos} > \mathrm{bin}_{N+1}(2^{N}-1+n)))
	\rightarrow X^\mathrm{read}_{\#} \Bigr) \Biggr) \\
	&&\wedge \Biggl( \bigl(B \wedge (\underline{X}^{\text{time-apv}} > \mathrm{bin}_N(0))
	\wedge (\underline{\alpha}^\mathrm{time} = \underline{X}^{\text{time-apv}})\bigr)
	\rightarrow  (\underline{X}^\mathrm{read}=\underline{\alpha}^\mathrm{written})   \Biggr) . 
\end{eqnarray*}
We explain this formula. If the current cell has never before been visited
(this is the case iff the vector $\underline{X}^{\text{time-apv}}$ has the binary value $0$)
then the vector $\underline{X}^\mathrm{read}$ is forced to store in unary format the
initial symbol in the current cell. This is either a symbol $w_i$ of the input string or the blank $\#$.
If the current cell has been visited before
(this is the case iff the vector $\underline{X}^{\text{time-apv}}$ has a binary value strictly greater than $0$)
then in the cloud corresponding to the time stored in $\underline{X}^{\text{time-apv}}$
the vector $\underline{\alpha}^\mathrm{written}$ describes the symbol that has been written into the current
cell during the previous visit. Therefore, this value is copied into the vector $\underline{X}^\mathrm{read}$.

Next, we wish to define the formula $\mathit{computation}$ that describes the computation steps.
We have to distinguish between the two cases whether the tape head is going to move to the left or to the right.
If in a computation step the symbol $\theta\in\Gamma$ is written into the current cell,
if the tape head moves to the right, and if the new state after this step is the state $r \in Q$,
then the following formula guarantees the existence of a point
and its cloud with suitable values in the shared variables
and in the persistent propositional variables.
\begin{eqnarray*}
\lefteqn{\mathit{compstep}_{\mathrm{right}}(r,\theta)} && \\ 
&:=& \bigwedge_{k=0}^{N-1} \bigwedge_{l=0}^{N}
   \Biggl( \bigl( B \wedge \mathrm{rightmost\_zero}(\underline{\alpha}^\mathrm{time},k)
           \wedge \mathrm{rightmost\_zero}(\underline{\alpha}^\mathrm{pos},l) \bigr) \\
&&      \rightarrow L \biggl( B \wedge (\underline{X}^\mathrm{time}=\underline{\alpha}^\mathrm{time},>k)
                                            \wedge \mathrm{rightmost\_one}(\underline{X}^\mathrm{time},k) \\
&& \phantom{\rightarrow L \biggl(}  \wedge (\underline{X}^\mathrm{pos}=\underline{\alpha}^\mathrm{pos},>l)
                                            \wedge \mathrm{rightmost\_one}(\underline{X}^\mathrm{pos},l) \\
&& \phantom{\rightarrow L \biggl(}\wedge \Diamond \bigl(  
                        (\underline{\alpha}^\mathrm{time}=\underline{X}^\mathrm{time})
                        \wedge (\underline{\alpha}^\mathrm{pos}=\underline{X}^\mathrm{pos}) 
                        \wedge \alpha^\mathrm{state}_{r} \wedge \alpha^\mathrm{written}_\theta
                        \wedge (\underline{\alpha}^\mathrm{read}=\underline{X}^\mathrm{read})         \bigr) \bigg) \Biggr).
\end{eqnarray*}
We explain this formula.
The procedure is quite similar to the one of the formula $\mathrm{counter}_{SSL,n}$ in Subsection~\ref{subsection:counter}
for a binary counter.
The first three lines of the formula make sure that there is a point in the same cloud as the current point
such that in this new point the binary value of the persistent variable vector $\underline{X}^\mathrm{time}$
is larger by one than the binary value of the shared variable vector $\underline{\alpha}^\mathrm{time}$
and that in this new point the binary value of the persistent variable vector $\underline{X}^\mathrm{pos}$
is larger by one than the binary value of the shared variable vector $\underline{\alpha}^\mathrm{pos}$.
The last two lines ensure the existence of a $\stackrel{\Diamond}{\to}$-successor of this new point 
in which the shared variable vectors
$\underline{\alpha}^\mathrm{time}$,
$\underline{\alpha}^\mathrm{pos}$,
$\underline{\alpha}^\mathrm{state}$, 
$\underline{\alpha}^\mathrm{written}$, 
and $\underline{\alpha}^\mathrm{read}$
get the correct new values.

If in a computation step the symbol $\theta\in\Gamma$ is written into the current cell,
if the tape head moves to the left, and if the new state after this step is the state $r \in Q$,
then the following formula guarantees the existence of a point
and its cloud with suitable values in the shared variables
and in the persistent propositional variables.
\begin{eqnarray*}
\lefteqn{\mathit{compstep}_{\mathrm{left}}(r,\theta)} && \\ 
&:=& \bigwedge_{k=0}^{N-1} \bigwedge_{l=0}^{N}
   \Biggl( \bigl( B \wedge \mathrm{rightmost\_zero}(\underline{\alpha}^\mathrm{time},k)
           \wedge \mathrm{rightmost\_one}(\underline{\alpha}^\mathrm{pos},l) \bigr) \\
&&      \rightarrow L \biggl( B \wedge (\underline{X}^\mathrm{time}=\underline{\alpha}^\mathrm{time},>k)
                                            \wedge \mathrm{rightmost\_one}(\underline{X}^\mathrm{time},k) \\
&& \phantom{\rightarrow L \biggl(}  \wedge (\underline{X}^\mathrm{pos}=\underline{\alpha}^\mathrm{pos},>l)
                                            \wedge \mathrm{rightmost\_zero}(\underline{X}^\mathrm{pos},l) \\
&& \phantom{\rightarrow L \biggl(} \wedge \Diamond \bigl(  
                        (\underline{\alpha}^\mathrm{time}=\underline{X}^\mathrm{time})
                        \wedge (\underline{\alpha}^\mathrm{pos}=\underline{X}^\mathrm{pos}) 
                        \wedge \alpha^\mathrm{state}_{r} \wedge \alpha^\mathrm{written}_\theta
                        \wedge (\underline{\alpha}^\mathrm{read}=\underline{X}^\mathrm{read})         \bigr) \bigg) \Biggr).
\end{eqnarray*}
This formula is very similar to the previous one with the exception that here
the binary counter for the position of the tape head is decremented.

The computation is modeled by the following subformula.
Remember that $Q$ is the disjoint union of the sets
$\{q_{\mathrm{accept}}\}$, $\{q_{\mathrm{reject}}\}$, $Q_{\forall}$, $Q_{\exists}$.
\begin{eqnarray*}
\mathit{computation}
&:=& \bigwedge_{q \in Q_\forall} \bigwedge_{\eta \in\Gamma}
   \Biggl(   (\alpha^\mathrm{state}_q \wedge \alpha^\mathrm{read}_\eta) \rightarrow \\
&& \biggl( \bigwedge_{(r,\theta,\mathit{left}) \in \delta(q,\eta)}  \mathit{compstep}_{\mathrm{left}}(r,\theta)
   \wedge   \bigwedge_{(r,\theta,\mathit{right}) \in \delta(q,\eta)}  \mathit{compstep}_{\mathrm{right}}(r,\theta) \biggr) \Biggr) \\
&& \wedge \bigwedge_{q \in Q_\exists} \bigwedge_{\eta \in\Gamma}
   \Biggl(   (\alpha^\mathrm{state}_q \wedge \alpha^\mathrm{read}_\eta) \rightarrow \\
&& \biggl( \bigvee_{(r,\theta,\mathit{left}) \in \delta(q,\eta)}  \mathit{compstep}_{\mathrm{left}}(r,\theta)
   \vee   \bigvee_{(r,\theta,\mathit{right}) \in \delta(q,\eta)}  \mathit{compstep}_{\mathrm{right}}(r,\theta) \biggr) \Biggr) \Biggr) .
\end{eqnarray*}

Finally, the subformula $\mathit{no\_reject}$ is defined as follows.
\[ \mathit{no\_reject} := 
    \neg \alpha^\mathrm{state}_{q_\mathrm{reject}} . \]

We have completed the description of the formula $f_\ssl(w)$ for $w\in \Sigma^*$.
It is clear that $f_\ssl(w)$ is a bimodal formula, for any $w \in\Sigma^*$.
We still have to show two claims:
\begin{enumerate}
\item
The function $f_\ssl$ can be computed in logarithmic space.
\item
For any $w \in\Sigma^*$,
\[ w \in L \iff \text{ the bimodal formula $f_\ssl(w)$ is $\ssl$-satisfiable.} \]
\end{enumerate}

The first claim will be shown in the following section.
The two directions of the second claim will be shown separately in Subsections~\ref{subsection:leftright-ATMs-SSL}
and~\ref{subsection:rightleft-ATMs-SSL}.

\subsection{LOGSPACE Computability of the Reduction}
\label{subsection:reduction-ATMs-SSL-complexity}

For the first claim, we observe that there are three kinds of subformulas of $f_\ssl(w)$:
\begin{enumerate}
\item
subformulas that do not depend on the input string $w$ at all,
\item
subformulas that depend only on the length $n$ of the input string $w$ but not on its symbols $w_1,\ldots,w_n$,
\item
subformulas that depend on the particular symbols $w_1,\ldots,w_n$ of the input string $w$.
\end{enumerate}

The subformula $K\Box \mathit{uniqueness}$ is of the first type.
Therefore, it can be written using only a constant amount of workspace.
And there is only one subformula of the third type, the subformula
$K\Box \mathit{get\_the\_right\_symbol}$. All other subformulas are of the second type.
All of them contain vectors of propositional variables of length at most $N+1$
or conjunctions or disjunctions of length at most $N+1$, where $N=p(n)$.
And all of these vectors and lists of conjunctions or disjunctions have a very regular structure.
This applies also to the only subformula of the third type.
This regular structure makes it possible to write down these formulas using a fixed
(that means: independent of the input string $w$) number of counters that can count up to $N$.
But such counters can be implemented in binary using not more than $O(\log N) = O(\log n)$ space.
Hence, given a string $w$,
the whole formula $f_\ssl(w)$ can be computed using not more than logarithmic space.

\subsection{Construction of a Model}
\label{subsection:leftright-ATMs-SSL}

We come to the second claim. First, we show the direction from left to right.
Let us assume $w \in L$. 
We will construct a cross axiom model 
$(W,\stackrel{L}{\to},\stackrel{\Diamond}{\to},\sigma)$ with a point
$p_{\mathit{root},\mathit{root}} \in W$ such that $p_{\mathit{root},\mathit{root}} \models f_\ssl(w)$.
There exists an accepting tree $T=(V,E,c)$ of $M$ on input $w$,
where $V$ is the set of nodes of $T$, where $E \subseteq V\times V$ is the set of edges,
and where the function $c:V\to Q \times \{0,\ldots,2^{N+1}-2\}\times \Gamma^{2^{N+1}-1}$
labels each node with a configuration
(remember the discussion about the description of configurations at the beginning of Subsection~\ref{subsection:definition-ATMs-SSL}).
Let $\mathit{root} \in V$ be the root of $T$.
The set $W$ is defined to be the (disjoint) union of the following three sets $P$, $U$, and $S$. We define
\[ P:=\{p_{v,x} ~:~ v,x \in V \text{ and } v E^* x \} .\]
For the definition of $U$ we use the following set as an index set:
\begin{eqnarray*}
  I &:=& (\{\mathrm{time}\} \times \{0,\ldots,N-1\}) \\
  && \cup (\{\mathrm{pos}\} \times \{0,\ldots,N\}) \\
  && \cup (\{\mathrm{state}\} \times Q) \\
  && \cup (\{\mathrm{written}\} \times \Gamma) \\
  && \cup (\{\mathrm{read}\} \times \Gamma) . 
\end{eqnarray*}
We define
\[ U := \{u_{v,\mathit{string},z} ~:~ v\in V \cup\{\top\}, (\mathit{string},z) \in I\}  \]
where $\top$ is a special element not contained in $V$.
We extend the binary relation $E^*$ on $V$ to a binary relation
$\widetilde{E}$ on $V\cup\{\top\}$ by 
\[ \widetilde{E} := \{(u,v) \in (V\cup\{\top\}) \times (V\cup\{\top\}) ~:~
          \text{ either } (u,v \in V \text{ and } uE^*v) \text{ or } v=\top \}. \]
We define the set $S$ by
\begin{eqnarray*}
 S &:=& \{ s_{v,\mathrm{time},k} ~:~ v \in V, k \in \mathrm{Ones}(\mathit{time}(v)) \} \\
   && \cup \{ s_{v,\mathrm{pos},k} ~:~ v \in V, k \in \mathrm{Ones}(\mathit{pos}(v)) \} \\
   && \cup \{ s_{v,\mathrm{state},q} ~:~ v \in V, q = \mathit{state}(v) \} \\
   && \cup \{ s_{v,\mathrm{written},\gamma} ~:~ v \in V \setminus\{\mathit{root}\}, \gamma = \mathit{written}(v) \}
        \cup \{ s_{\mathrm{root},\mathit{written},\#}\} \\
   && \cup \{ s_{v,\mathrm{read},\gamma} ~:~ v \in V, \gamma = \mathit{read}(v) \} .
\end{eqnarray*}
As the relation $\stackrel{L}{\to}$ is supposed to be an equivalence relation
we can define it by defining 
the $\stackrel{L}{\to}$-equivalence classes. These are the sets
\[
\mathit{Cloud}_v := \{p_{v,x} \in P ~:~ x \in V 
\}
\cup \{u_{v,i} \in U ~:~ i \in I \}  
\cup \{s_{v,i} \in S ~:~ i \in I \} 
\]
for all $v \in V$, and the set
\[ \mathit{Cloud}_{\top} := \{u_{\top,i} \in U ~:~ i \in I \} .  \]
We define the relation $\stackrel{\Diamond}{\to}$ by:
\begin{eqnarray*}
 \stackrel{\Diamond}{\to} &:=& \{ (p_{v,x}, p_{v',x'}) \in P\times P
     ~:~ v,v',x,x' \in V \text{ and } v  E^* v' \text{ and }  x=x'\} \\
  &&\cup \{ (u_{v,i}, u_{v',i'}) \in U\times U
   ~:~ v,v' \in V\cup\{\top\}, i,i' \in I \text{ and } v \widetilde{E} v' \text{ and } i=i' \} \\
  &&\cup \{ (u_{v,i}, s_{v',i'}) \in U \times S
     ~:~ v,v' \in V, i,i' \in I \text{ and } v E^* v' \text{ and } i=i'\} \\
  &&\cup \{ (s_{v,i}, s_{v',i'}) \in S \times S
  ~:~ v,v' \in V, i,i' \in I \text{ and } v=v' \text{ and } i=i'\}   .
\end{eqnarray*}
It is straightforward to check that $\stackrel{\Diamond}{\to}$ is reflexive and transitive.
The cross property is satisfied as well.
Thus, $(W,\stackrel{L}{\to},\stackrel{\Diamond}{\to})$ is an cross axiom frame.
Finally, we define the valuation $\sigma$ as follows.
\begin{eqnarray*}
\sigma(B) &:=& P, 
\end{eqnarray*}
and
\begin{eqnarray*}
\sigma(A^\mathrm{time}_k) &:=& \{u_{v,\mathrm{time},k} \in U ~:~ v\in V \cup\{\top\}\}
                                              \cup \{s_{v,\mathrm{time},k} \in S ~:~ v \in V \}, \\
\sigma(X^\mathrm{time}_k) &:=& \{p_{v,x} \in P ~:~ v,x \in V \text{ and } k \in \mathrm{Ones}(\mathit{time}(x))\}, \\
\sigma(X^{\text{time-apv}}_k) &:=& \{p_{v,x} \in P ~:~ v,x \in V \text{ and } k \in \mathrm{Ones}(j) \} \\
&& \!\!\!\!\!\!\!\!\!\!\text{where } j := \begin{cases}
  0 & \text{if on the path from $\mathit{root}$ to $x$ the cell $\mathit{pos}(x)$ has}\\
     & \text{not been visited before the cell $x$ is reached}, \\
  1 + \mathit{time}(v') & \text{otherwise, where } v' \text{ is the last node on the path} \\
    & \text{from }\mathit{root} \text{ to } \mathit{pred}(x) \text{ with } \mathit{pos}(v')=\mathit{pos}(x) , 
  \end{cases}
\end{eqnarray*}
for $k \in \{0,\ldots,N-1\}$,
\begin{eqnarray*}
\sigma(A^\mathrm{pos}_k) &:=& \{u_{v,\mathrm{pos},k} \in U~:~ v\in V \cup\{\top\}\}
                                              \cup \{s_{v,\mathrm{pos},k} \in S ~:~ v \in V \}, \\
\sigma(X^\mathrm{pos}_k) &:=& \{p_{v,x} \in P ~:~ v,x \in V \text{ and } k \in \mathrm{Ones}(\mathit{pos}(x))\}, 
\end{eqnarray*}
for $k \in \{0,\ldots,N\}$,
\begin{eqnarray*}
\sigma(A^\mathrm{state}_q) &:=& \{u_{v,\mathrm{state},q} \in U ~:~ v\in V \cup\{\top\}\}
                                              \cup \{s_{v,\mathrm{state},q} \in S ~:~ v \in V \}, 
\end{eqnarray*}
for $q \in Q$,
\begin{eqnarray*}
\sigma(A^\mathrm{written}_\gamma) &:=& \{u_{v,\mathrm{written},\gamma} \in U~:~ v\in V \cup\{\top\}\}
                                              \cup \{s_{v,\mathrm{written},\gamma} \in S ~:~ v \in V \}, \\
\sigma(A^\mathrm{read}_\gamma) &:=& \{u_{v,\mathrm{read},\gamma} \in U ~:~ v\in V \cup\{\top\}\}
                                              \cup \{s_{v,\mathrm{read},\gamma} \in S ~:~ v \in V \}, \\
\sigma(X^\mathrm{read}_\gamma) &:=& \{p_{v,x} \in P ~:~ v,x \in V \text{ and } \gamma = \mathit{read}(x))\}, 
\end{eqnarray*}
for $\gamma \in \Gamma$.
It is obvious that all propositional variables are persistent.
Thus, we have defined a cross axiom model
$(W,\stackrel{L}{\to},\stackrel{\Diamond}{\to},\sigma)$.
We claim that $p_{\mathit{root},\mathit{root}} \models f_\ssl(w)$.
For an illustration of an important detail of the structure see Figure~\ref{figure:model-ssl}.
\begin{figure}
	\begin{center}
	\includegraphics[width=0.8\linewidth]{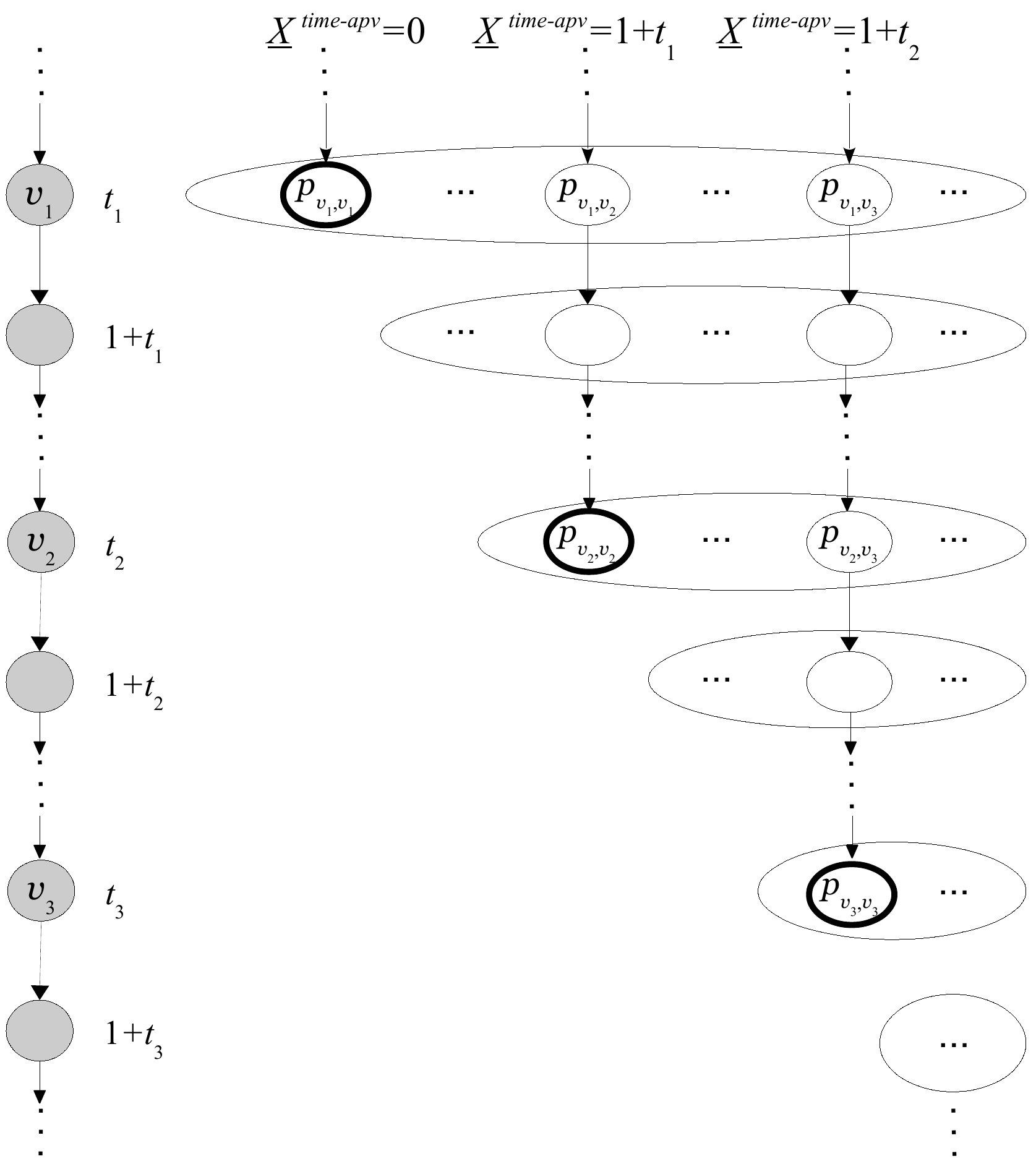} 
	\end{center}
	\caption{A possible detail of a cross axiom model of the formula $f_\ssl(w)$.
	Consider a certain cell and let us assume that $v_1, v_2, v_3$ are the first three computation nodes
	on some computation path in which this cell is visited. Let
	$t_i:=\mathit{time}(v_i)$. The diagram on the left shows a part of the computation path.
	The diagram on the right shows the corresponding part of the cross axiom model.}
	\label{figure:model-ssl}
\end{figure}

We start with some preliminary observations.
First, for any cloud, any shared variable has the same truth values in all points in the cloud.
Secondly, 
\[ y \models B \iff y \in P , \]
for all $y\in W$.
So, the points in $P$ are the ``information'' points.
On the other hand, as the cloud $\mathit{Cloud}_\top$ does not contain any elements from $P$,
for all points $y\in \mathit{Cloud}_\top$ we have $y \models K \neg B$,
hence,
\[(\forall y \in \mathit{Cloud}_\top) \  y \models \neg \alpha^\mathit{string}_k  ,\]
for all $\mathit{string}\in\{\mathrm{times},\mathrm{pos},\mathrm{state},\mathrm{written},\mathrm{read}\}$
and all $k$.
That means, the truth value of any shared variable in the cloud $\mathit{Cloud}_\top$ is false.
We claim that in the other clouds all shared variables have the values indicated by their names, namely,
\begin{eqnarray*}
  y &\models& (\underline{\alpha}^\mathrm{time} = \mathrm{bin}_N(\mathit{time}(v)), \\
  y &\models& (\underline{\alpha}^\mathrm{pos} = \mathrm{bin}_{N+1}(\mathit{pos}(v)), \\
  (y &\models& \alpha^\mathrm{state}_q) \iff q = \mathit{state}(v), \text{ for } q \in Q, \\
  (y &\models& \alpha^\mathrm{read}_\gamma) \iff \gamma = \mathit{read}(v), \text{ for } \gamma \in \Gamma, 
\end{eqnarray*}
for $v\in V$ and $y\in \mathit{Cloud}_v$,
\[
  (y \models \alpha^\mathrm{written}_\gamma) \iff \gamma = \mathit{written}(v), 
\]
for $\gamma \in \Gamma$, $v\in V\setminus\{\mathit{root}\}$ and $y\in \mathit{Cloud}_v$, and
\[
  (y \models \alpha^\mathrm{written}_\gamma) \iff \gamma = \#, 
\]
for $\gamma \in \Gamma$ and $y\in \mathit{Cloud}_\mathit{root}$.
This can be checked similarly as the corresponding claim~\eqref{eq:SSL-counter-1} in the proof of 
Proposition~\ref{prop: counter}.
We prove the assertions about $\alpha^\mathrm{written}_\gamma$ and leave the proofs
of the other assertions to the reader.
Let us fix some $\gamma \in\Gamma$.
Note that, for all $p' \in P$, we have $p' \models \neg A^\mathrm{written}_\gamma$,  hence
\[ (\forall p' \in P) \ p' \models (\neg A^\mathrm{written}_\gamma \vee \Diamond K \neg B) . \]
Furthermore, $u_{\top,i} \models K \neg B$ for all $i \in I$.
Since for all $v \in V\cup\{\top\}$ and $i \in I$
we have $u_{v,i} \stackrel{\Diamond}{\to} u_{\top,i}$ we obtain
$u_{v,i} \models \Diamond K \neg B$
for all $v \in V\cup\{\top\}$ and $i \in I$.
Hence,
\[ (\forall u' \in U) \ u' \models (\neg A^\mathrm{written}_\gamma \vee \Diamond K \neg B) . \]
This shows that, for any $v \in V$,
the shared variable
$\alpha^\mathrm{written}_\gamma=L(A^\mathrm{written}_\gamma \wedge \Box L B)$ 
is true in the cloud $\mathit{Cloud}_v$ 
if, and only if, there exists some $s' \in \mathit{Cloud}_v \cap S$
with $s' \models A^\mathrm{written}_\gamma \wedge \Box L B$.
As $p' \models B$ for all $p' \in P$ and any
$s' \in S$ is $\stackrel{L}{\to}$-equivalent to some $p' \in P$, we have
$s' \models L B$, for all $s' \in S$.
Actually, for $s' \in S$ we even have $s' \models \Box L B$ as $s'$ does not have
any $\stackrel{\Diamond}{\to}$-successor besides itself.
Thus, for any $v \in V$,
the shared variable
$\alpha^\mathrm{written}_\gamma=L(A^\mathrm{written}_\gamma \wedge \Box L B)$ 
is true in the cloud $\mathit{Cloud}_v$ if, and only if, there exists some $s' \in \mathit{Cloud}_v \cap S$
with $s' \models A^\mathrm{written}_\gamma$.
The elements $s' \in \mathit{Cloud}_v \cap S$ have the form $s' = s_{v,i}$ for some $i \in I$.
On the one hand, we observe that, for $v \in V$ and $i \in I$,
$s_{v,i} \models A^\mathrm{written}_\gamma \iff i = (\mathrm{written},\gamma)$.
On the other hand, for $v \in V$ we have
\[ s_{v,\mathrm{written},\gamma} \in \mathit{Cloud}_v \iff 
  \begin{cases} 
     \gamma = \mathit{written}(v) & \text{if } v \in V\setminus \{\mathit{root}\}, \\
     \gamma = \# & \text{if } v = \mathit{root} . 
   \end{cases}
\]
Thus, we have shown the desired claims:
\[
  (y \models \alpha^\mathrm{written}_\gamma) \iff \gamma = \mathit{written}(v), 
\]
for $\gamma \in \Gamma$, $v\in V\setminus\{\mathit{root}\}$ and $y\in \mathit{Cloud}_v$, and
\[
  (y \models \alpha^\mathrm{written}_\gamma) \iff \gamma = \#, 
\]
for $\gamma \in \Gamma$ and $y\in \mathit{Cloud}_\mathit{root}$.

It is clear from the definition of the valuation $\sigma$ that in the points in $P$ the variable vectors 
$\underline{X}^\mathrm{time}$,
$\underline{X}^\mathrm{pos}$,
$\underline{X}^\mathrm{read}$, and
$\underline{X}^{\text{time-apv}}$
have the values indicated by their names:
\begin{eqnarray*}
  p_{v,x} &\models& (\underline{X}^\mathrm{time} = \mathrm{bin}_N(\mathit{time}(x)), \\
  p_{v,x} &\models& (\underline{X}^\mathrm{pos} = \mathrm{bin}_{N+1}(\mathit{pos}(x)), \\
  (p_{v,x} &\models& X^\mathrm{read}_\gamma) \iff \gamma = \mathit{read}(x), \text{ for } \gamma \in \Gamma, \\
  p_{v,x} &\models& (\underline{X}^{\text{time-apv}} = \mathrm{bin}_N(j)), \\
&&\text{where } j := \begin{cases}
  0 & \text{if on the path from $\mathit{root}$ to $x$ the cell $\mathit{pos}(x)$ has} \\
     & \text{not been visited before the cell $x$ is reached},\\
  1 + \mathit{time}(v') & \text{otherwise, where } v' \text{ is the last node on the path} \\
    & \text{from }\mathit{root} \text{ to }\mathit{pred}(x) \text{ with} \mathit{pos}(v')=\mathit{pos}(x) , 
  \end{cases}
\end{eqnarray*}
for $v,x\in V$ satisfying $v E^* x$.

Now we are prepared to show $p_{\mathit{root},\mathit{root}} \models f_\ssl(w)$.
Our observations about the values of the shared variable vectors 
$\underline{\alpha}^\mathrm{state}$,
$\underline{\alpha}^\mathrm{read}$, and
$\underline{\alpha}^\mathrm{written}$ show
\[ p_{\mathit{root},\mathit{root}} \models K \Box \mathit{uniqueness}  \]
(remember that $B$ is false in all points of the cloud $\mathit{Cloud}_\top$).
Similarly, our observations about the values of the shared variable vectors
$\underline{\alpha}^\mathrm{time}$,
$\underline{\alpha}^\mathrm{pos}$,
$\underline{\alpha}^\mathrm{state}$, and
$\underline{\alpha}^\mathrm{read}$
imply
\[ p_{\mathit{root},\mathit{root}} \models \mathit{start} . \]
As the state of any node in the accepting tree $T$ of $M$ on input $w$ is different from $q_\mathrm{reject}$,
our observations about the value of the vector 
$\underline{\alpha}^\mathrm{state}$ 
(in any cloud $\mathit{Cloud}_v$ for $v\in V$ the shared variable $\alpha^\mathrm{state}_q$ is true
if, and only if, $q = \mathit{state}(v)$, and in the cloud $\mathit{Cloud}_\top$ all shared variables are false)
shows 
\[ p_{\mathit{root},\mathit{root}} \models K \Box \mathit{no\_reject} . \]

Next, we show 
$$p_{\mathit{root},\mathit{root}} \models K \Box \mathit{time\_after\_previous\_visit}.$$
As the variable $B$ is true only in the elements of $P$,
it is sufficient to show for all $v,x \in V$ with $v E^* x$
\begin{eqnarray*}
p_{v,x} \models
   && \Bigl( \underline{X}^{\text{time-apv}} \leq \underline{X}^\mathrm{time} \Bigr) \\
   &&      \wedge \Bigl( 
   			\bigl(( \underline{\alpha}^\mathrm{time} < \underline{X}^\mathrm{time})
              \wedge (\underline{\alpha}^\mathrm{pos} \neq \underline{X}^\mathrm{pos} )\bigr)
              \rightarrow (\underline{X}^{\text{time-apv}} \neq \underline{\alpha}^\mathrm{time} + 1 )\Bigr) \\
   &&      \wedge \Bigl( \bigl( 
   			( \underline{\alpha}^\mathrm{time} < \underline{X}^\mathrm{time})
              \wedge (\underline{\alpha}^\mathrm{pos} = \underline{X}^\mathrm{pos}) \bigr)
              \rightarrow (\underline{\alpha}^\mathrm{time} < \underline{X}^{\text{time-apv}}) \Bigr).
\end{eqnarray*}
We distinguish between the two cases whether the cell $\mathit{pos}(x)$ under the tape head in the configuration $c(x)$
has been visited on the path from $\mathit{root}$ to $x$ before $x$ is reached or not.

First let us assume that the cell $\mathit{pos}(x)$ has not been visited before.
Then $p_{v,x} \models (\underline{X}^{\text{time-apv}} = \mathrm{bin}_N(0))$, hence
$p_{v,x} \models (\underline{X}^{\text{time-apv}} \leq \underline{X}^\mathrm{time})$
and 
$p_{v,x} \models (\underline{X}^{\text{time-apv}} \neq \underline{\alpha}^\mathrm{time} + 1)$, that is,
the formulas in the first two lines are true in $p_{v,x}$.
And if we have $p_{v,x} \models (\underline{\alpha}^\mathrm{time} < \underline{X}^\mathrm{time})$
then, due to $v E^* x$, the point $v$ is a point on the path from $\mathit{root}$ to $\mathit{pred}(x)$.
Our assumption (that the cell $\mathit{pos}(x)$ has not been visited before $x$ is reached) 
implies $p_{v,x} \models \neg (\underline{\alpha}^\mathrm{pos} = \underline{X}^\mathrm{pos})$. Hence
the formula in the third line is true in $p_{v,x}$ as well. \\
Now, let us assume that the cell $\mathit{pos}(x)$ has been visited on the path from $\mathit{root}$
to $x$ before $x$ is reached. Let $x'$ be the last node on the path from $\mathit{root}$ 
to $\mathit{pred}(x)$ with $\mathit{pos}(x')=\mathit{pos}(x)$.
Let $i:= \mathit{time}(x')$ and $j:=1+i$. Then $i < \mathit{time}(x)$ and $j \leq \mathit{time}(x)$.
As $p_{v,x} \models (\underline{X}^{\text{time-apv}} = \mathrm{bin}_N(j))$, the formula in the first line
above is true, that is,
$p_{v,x} \models (\underline{X}^{\text{time-apv}} \leq \underline{X}^\mathrm{time})$.
For the formula in the second line, let us assume
$p_{v,x} \models  (\underline{\alpha}^\mathrm{time} < \underline{X}^\mathrm{time})
              \wedge (\underline{\alpha}^\mathrm{pos} \neq \underline{X}^\mathrm{pos}) $.
Then $v$ is a node on the path from $\mathit{root}$ to $\mathit{pred}(x)$ with $\mathit{pos}(v) \neq \mathit{pos}(x)$.
Hence, $v \neq x'$, hence, $\mathit{time}(v) \neq \mathit{time}(x')$, hence, $j \neq \mathit{time}(v) + 1$, hence,
$p_{v,x} \models (\underline{X}^{\text{time-apv}} \neq \underline{\alpha}^\mathrm{time} + 1)$.
For the formula in the third line, let us assume
$p_{v,x} \models ( \underline{\alpha}^\mathrm{time} < \underline{X}^\mathrm{time})
              \wedge (\underline{\alpha}^\mathrm{pos} = \underline{X}^\mathrm{pos} )$.
Then $v$ is a node on the path from $\mathit{root}$ to $\mathit{pred}(x)$ with $\mathit{pos}(v) = \mathit{pos}(x)$,
that is, with the property that in this node the same cell is visited as in the node $x$.
As $x'$ is the last node on the path from $\mathit{root}$ to $\mathit{pred}(x)$ with this property,
we have $\mathit{time}(v) \leq \mathit{time}(x')$, hence, $j=1+i=1+\mathit{time}(x') > \mathit{time}(v)$, hence
$p_{v,x} \models (\underline{\alpha}^\mathrm{time} < \underline{X}^{\text{time-apv}})$.

Next, we show 
$$p_{\mathit{root},\mathit{root}} \models K \Box \mathit{get\_the\_right\_symbol}.$$
It is sufficient to show for all $y\in W$, 
$y \models \mathit{get\_the\_right\_symbol}$. There are two cases to be considered.
In each of them, due to the presence of the variable $B$, we need to consider only
points $y\in P$. Let us consider elements $v,x\in V$ with $v E^* x$. It is sufficient to show
$p_{v,x} \models \mathit{get\_the\_right\_symbol}$.
We distinguish between the two cases considered in this formula.
First, let us assume
$p_{v,x} \models (\underline{X}^{\text{time-apv}} = \mathrm{bin}_N(0))$.
We have to show that in this case
\begin{eqnarray*}
 p_{v,x} &\models& 
	\bigwedge_{i=1}^n \Bigl( (\underline{X}^\mathrm{pos}=\mathrm{bin}_{N+1}(2^{N}-1+i))
            \rightarrow X^\mathrm{read}_{w_i} \Bigr) \\
   && 
      \wedge \Bigl(\bigl( (\underline{X}^\mathrm{pos} \leq \mathrm{bin}_{N+1}(2^{N}-1))
                \vee  (\underline{X}^\mathrm{pos} > \mathrm{bin}_{N+1}(2^{N}-1+n))\bigr)
                   \rightarrow X^\mathrm{read}_{\#} \Bigr) .
\end{eqnarray*}
According to our observations about the value of
$\underline{X}^{\text{time-apv}}$, the cell $\mathit{pos}(x)$
under the tape head in the configuration $c(x)$
has not been visited on the path from $\mathit{root}$ to $\mathit{pred}(x)$.
Thus, the symbol $\mathit{read}(x)$ is still the initial symbol in the cell $\mathit{pos}(x)$.
Let us call this symbol $\gamma$. 
Then $p_{v,x} \models X^\mathrm{read}_\gamma$,
and 
\[ \gamma= \begin{cases}
  w_i & \text{if }\mathit{pos}(x)=2^N-1+i, \text{ for some }i \in \{1,\ldots,n\}, \\
  \# & \text{if } \mathit{pos}(x)\leq 2^N-1 \text{ or } \mathit{pos}(x) > 2^N-1+n.
  \end{cases}
\]
On the other hand,
$p_{v,x} \models ( \underline{X}^\mathrm{pos} = \mathrm{bin}_{N+1}(\mathit{pos}(x)))$.
We have shown the assertion.\\
Now, let us assume 
\[ p_{v,x} \models 
((\underline{X}^{\text{time-apv}} > \mathrm{bin}_N(0))
        \wedge (\underline{\alpha}^\mathrm{time} = \underline{X}^{\text{time-apv}})) . \]
We have to show that in this case
\[ p_{v,x} \models (\underline{X}^\mathrm{read}=\underline{\alpha}^\mathrm{written}) . \]
The assumption
$p_{v,x} \models (\underline{X}^{\text{time-apv}} > \mathrm{bin}_N(0))$
implies that the cell $\mathit{pos}(x)$ has already been visited on the path from
$\mathit{root}$ to $\mathit{pred}(x)$ and that
$p_{v,x} \models (\underline{X}^{\text{time-apv}} = \mathrm{bin}_{N+1}(1+i))$ where
$i=\mathit{time}(x')$ and $x'$ is the last node on the path from $\mathit{root}$ to
$\mathit{pred}(x)$ with $\mathit{pos}(x') = \mathit{pos}(x)$.
The assumption 
$p_{v,x} \models (\underline{\alpha}^\mathrm{time} = \underline{X}^{\text{time-apv}})$
implies $x' = \mathit{pred}(v)$. But then in the point $v$ the vector
$\underline{\alpha}^\mathrm{written}$ encodes in unary the symbol that was written into
the cell $\mathit{pos}(x)$ in the computation step from $x'$ to $v$.
If we call this symbol $\gamma$, this means $p_{v,x} \models \alpha^\mathrm{written}_\gamma$.
This is still the symbol in the cell $\mathit{pos}(x)$ when $x$ is reached, hence,
$p_{v,x} \models X^\mathrm{read}_\gamma$.
So, we have indeed shown
\[ p_{v,x} \models (\underline{X}^\mathrm{read}=\underline{\alpha}^\mathrm{written}) . \]

Finally, we show
$$p_{\mathit{root},\mathit{root}} \models K \Box \mathit{computation}.$$
It is sufficient to show
\[ y \models \mathit{computation} , \]
for all $y \in W$.
We will separately treat the conjunctions over the set $(q,\eta) \in Q_\exists \times \Gamma$
and over the set $Q_\forall \times \Gamma$.
Let us fix a pair $(q,\eta) \in Q_\exists \times \Gamma$
and let us assume that $y\in W$ is a point with
$y \models (\alpha^\mathrm{state}_q \wedge \alpha^\mathrm{read}_\eta)$.
We have to show that there is an element
$(r,\theta,\mathit{left}) \in \delta(q,\eta)$ such that
$y \models \mathit{compstep}_{\mathrm{left}}(r,\theta)$
or that there is an element
$(r,\theta,\mathit{right}) \in \delta(q,\eta)$ such that
$y \models \mathit{compstep}_{\mathrm{right}}(r,\theta)$.
As in the cloud $\mathit{Cloud}_\top$ the truth value of any shared variable is false,
the assumption
$y \models (\alpha^\mathrm{state}_q \wedge \alpha^\mathrm{read}_\eta)$
implies that there exists some $v\in V$ with $y \in \mathit{Cloud}_v$.
Furthermore, $q=\mathit{state}(v)$ and $\eta = \mathit{read}(v)$.
As $T$ is an accepting tree and the state $q$ of $c(v)$ is an element of $Q_\exists$,
the node $v$ is an inner node of $T$, hence, it has a successor $v'$.
Let us assume that $((q,\eta),(r,\theta,\mathit{left})) \in \delta$
is the element of the transition relation $\delta$ that leads from $v$ to $v'$
(the case that this element is of the form
$((q,\eta),(r,\theta,\mathit{right}))$ is treated analogously).
We claim that then
\[ y \models \mathit{compstep}_{\mathrm{left}}(r,\theta) . \]
Let us check this.
Let us assume that, for some $k\in\{0,\ldots,N-1\}$
and for some $l \in \{0,\ldots,N\}$, 
\[ y \models \bigl( B \wedge \mathrm{rightmost\_zero}(\underline{\alpha}^\mathrm{time},k)
           \wedge \mathrm{rightmost\_one}(\underline{\alpha}^\mathrm{pos},l) \bigr).
\]
The number $i:=\mathit{time}(v)$ is an element of $\{0,\ldots,2^N-2\}$
because $v$ is an inner point of the tree $T$ and the length of any computation path is at most $2^N-1$,
and the number $j:= \mathit{pos}(v)$ is an element of $\{1,\ldots,2^{N+1}-3\}$
because the computation starts in cell $2^N-1$ and
because during each computation step the tape head can move at most one step to the left or to the right.
We obtain $k=\min(\{0,\ldots,N-1\} \setminus\mathrm{Ones}(i))$
and $l=\min\mathrm{Ones}(j)$.
We claim that the two points
$p_{v,v'}$ and $p_{v',v'}$ have the properties formulated in the formula
$\mathit{compstep}_{\mathrm{left}}(r,\theta)$.
Indeed, we observe $y \stackrel{L}{\to}p_{v,v'}$ and
$p_{v,v'} \stackrel{\Diamond}{\to} p_{v',v'}$
as well as $p_{v,v'} \models B$.
The facts 
\begin{align*}
& \mathit{time}(v)=i, && \mathit{pos}(v)=j, \\
& \mathit{time}(v')=i+1, && \mathit{pos}(v')=j-1,
\end{align*}
imply
\begin{eqnarray*}
p_{v,v'} &\models& (\underline{\alpha}^\mathrm{time}=\mathrm{bin}_N(i)) 
              \wedge (\underline{\alpha}^\mathrm{pos}=\mathrm{bin}_{N+1}(j)) \wedge \\
             &&  (\underline{X}^\mathrm{time}=\mathrm{bin}_N(i+1)) 
              \wedge (\underline{X}^\mathrm{pos}=\mathrm{bin}_{N+1}(j-1)) 
\quad \text{ and }  \\
p_{v',v'} &\models& (\underline{\alpha}^\mathrm{time}=\mathrm{bin}_N(i+1)) 
              \wedge (\underline{\alpha}^\mathrm{pos}=\mathrm{bin}_{N+1}(j-1)) \wedge \\
            &&  (\underline{X}^\mathrm{time}=\mathrm{bin}_N(i+1)) 
              \wedge (\underline{X}^\mathrm{pos}=\mathrm{bin}_{N+1}(j-1)) .
\end{eqnarray*}
We obtain
\[ p_{v',v'} \models (\underline{\alpha}^\mathrm{time}=\underline{X}^\mathrm{time})
                        \wedge (\underline{\alpha}^\mathrm{pos}=\underline{X}^\mathrm{pos}) , \]
and with
\[ k=\min(\{0,\ldots,N-1\} \setminus\mathrm{Ones}(i))
 \text{ and }  l=\min(\mathrm{Ones}(j))  \]
we obtain as well
\begin{eqnarray*} p_{v,v'} &\models&  (\underline{X}^\mathrm{time}=\underline{\alpha}^\mathrm{time},>k)
                       \wedge \mathrm{rightmost\_one}(\underline{X}^\mathrm{time},k) \\
                 && \wedge (\underline{X}^\mathrm{pos}=\underline{\alpha}^\mathrm{pos},>l)
                      \wedge \mathrm{rightmost\_zero}(\underline{X}^\mathrm{pos},l) .
\end{eqnarray*}
Finally, our observations about the values of the shared variable vectors
$\underline{\alpha}^\mathrm{state}$,
$\underline{\alpha}^\mathrm{written}$,
$\underline{\alpha}^\mathrm{read}$
and about the vector
$\underline{X}^\mathrm{read}$
imply that also
\[ p_{v',v'} \models \alpha^\mathrm{state}_{r} \wedge \alpha^\mathrm{written}_\theta
                        \wedge (\underline{\alpha}^\mathrm{read}=\underline{X}^\mathrm{read})    . \]
(remember that $((q,\eta),(r,\theta,\mathit{left})) \in \delta$
is the element of the transition relation $\delta$ that leads from the node $v$ to the node $v'$).
This ends the treatment of the conjunctions over the set $(q,\eta) \in Q_\exists \times \Gamma$
in the formula $\mathit{computation}$.
Let us now consider a pair $(q,\eta) \in Q_\forall \times \Gamma$.
Let us assume that $y \in W$ is a point such that
$y \models (\alpha^\mathrm{state}_q \wedge \alpha^\mathrm{read}_\eta)$.
We have to show that for all elements
$(r,\theta,\mathit{left}) \in \delta(q,\eta)$ we have
$y \models \mathit{compstep}_{\mathrm{left}}(r,\theta)$
and for all elements
$(r,\theta,\mathit{right}) \in \delta(q,\eta)$ we have
$y \models \mathit{compstep}_{\mathrm{right}}(r,\theta)$.
Let us consider an arbitrary element $(r,\theta,\mathit{left}) \in \delta(q,\eta)$
(the case of an element $(r,\theta,\mathit{right}) \in \delta(q,\eta)$
is treated analogously).
As in the cloud $\mathit{Cloud}_\top$ the truth value of any shared variable is false,
the assumption
$y \models (\alpha^\mathrm{state}_q \wedge \alpha^\mathrm{read}_\eta)$
implies that there exists some $v\in V$ with $y \in \mathit{Cloud}_v$.
Furthermore, $q=\mathit{state}(v)$ and $\eta = \mathit{read}(v)$.
As $q \in Q_\forall$ and $T$ is an accepting tree,
in $T$ there is a successor $v'$ of $v$ such that
the element $((q,\eta),(r,\theta,\mathit{left}))$ leads from $v$ to $v'$.
Above, we have already seen that this implies
\[ y \models \mathit{compstep}_{\mathrm{left}}(r,\theta) . \]
Thus, we have shown $y \models \mathit{computation}$
for all $y \in W$.
This ends the proof of the claim $p_{\mathit{root},\mathit{root}} \models f_\ssl(w)$.

\subsection{Existence of an Accepting Tree}
\label{subsection:rightleft-ATMs-SSL}

We come to the other direction. Let $w \in \Sigma^*$.
We wish to show that if $f_\ssl(w)$
is $\ssl$-satisfiable then $w\in L$.
We will show that any cross axiom model of $f_\ssl(w)$
essentially contains an accepting tree of the Alternating Turing Machine $M$
on input $w$. Of course, this implies $w\in L$.

Let us sketch the main idea.
We will consider a cross axiom model of $f_\ssl(w)$.
And we will consider partial trees of $M$ on input $w$
as considered in Subsection~\ref{subsection:ATM}, for any $w\in\Sigma^*$.
First, we will show that a certain very simple partial tree of $M$ on input $w$
``can be mapped to'' the model (later we will give a precise
meaning to ``can be mapped to'').
Then we will show that any partial tree of $M$ on input $w$
that can be mapped to the model 
and that is not an accepting tree of $M$ on input $w$
can be properly extended to a
strictly larger partial tree of $M$ on input $w$ that can be mapped to the model as well.
If there would not exist an accepting tree of $M$ on input $w$ then
we would obtain an infinite strictly increasing sequence
of partial trees of $M$ on input $w$.
But we show that this cannot happen by giving a 
finite upper bound on the size of partial trees of $M$ on input $w$.

Let $w \in \Sigma^*$ be a string such that the formula
$f_\ssl(w)$ is $\ssl$-satisfiable.
We set $n:=|w|$.
Let $(W, \stackrel{\Diamond}{\to},\stackrel{L}{\to},\sigma)$ be a cross axiom model and $r_0\in W$ a point such that
$r_0 \models f_\ssl(w)$. The quintuple
$$\mathit{Model}:=(W, \stackrel{\Diamond}{\to},\stackrel{L}{\to},\sigma,r_0)$$
will be important in the following.
Points in $W$ that cannot be reached from $r_0$ by finitely many 
$\stackrel{\Diamond}{\to}$ and $\stackrel{L}{\to}$-steps (in any order)
can be deleted from $W$ with no harm: the resulting smaller quintuple will still be a
model of $f_\ssl(w)$.
Hence, we will assume without loss of generality that
every point $x \in W$ can be reached from $r_0$ by finitely many
$\stackrel{\Diamond}{\to}$ and $\stackrel{L}{\to}$-steps
(in any order).
Note that now the cross property implies that for any $x\in W$ there exists
some $x' \in W$ with $r_0 \stackrel{L}{\to} x'$ and $x' \stackrel{\Diamond}{\to} x$.
Hence, if $\varphi$ is a formula with $r_0 \models K \Box \varphi$
then, for all $x \in W$, we have $x \models \varphi$.
For every $x\in W$, let
$\mathit{Cloud}_x$
be the $\stackrel{L}{\to}$-equivalence class of $x$.
Remember that for every $\stackrel{L}{\to}$-equivalence class
and every shared variable 
$\alpha^{\mathit{string}}_i$ for 
$\mathit{string} \in \{\mathrm{time}, \mathrm{pos},
       \mathrm{state}, \mathrm{written}, \mathrm{read}\}$
and a natural number $i$, the truth value of this shared variable 
is the same in all elements of the $\stackrel{L}{\to}$-equivalence class.

Partial trees of $M$ on input $w$ as introduced in 
Subsection~\ref{subsection:ATM} will play an important role
in the following.
We will write a partial tree of $M$ on input $w$
similarly as in Subsection~\ref{subsection:ATM}
as a triple $T=(V,E,c)$, but with the difference
that we will describe configurations as at the beginning of this section:
the labeling function $c$ will be a function of the form
$c:V\to Q \times \{0,\ldots,2^{N+1}-2\}\times \Gamma^{2^{N+1}-1}$.
If $T=(V,E,c)$ is a partial tree of $M$ on input $w$
with root $\mathit{root}$
then a function $\pi:V\to W$ is called a
\emph{morphism from $T$ to $\mathit{Model}$}
if it satisfies the following four conditions:
\begin{enumerate}
\item
$\pi(\mathit{root})=r_0$,
\item
$(\forall v,v' \in V) \, ( \text{ if } v E v' \text{ then }
   \mathit{Cloud}_{\pi(v)} \stackrel{\Diamond}{\to}^{\stackrel{L}{\to}}\mathit{Cloud}_{\pi(v')})$,
\item
$(\forall v \in V \setminus\{\mathit{root}\}) \,
\pi(v) \models \alpha^\mathrm{written}_{\mathit{written}(v)}$,
\item
$(\forall v \in V)\, 
\pi(v) \models 
\begin{array}[t]{l} \bigl(B
  \wedge (\underline{\alpha}^\mathrm{time} = \mathrm{bin}_N(\mathit{time}(v)))\\
  \wedge (\underline{\alpha}^\mathrm{pos} = \mathrm{bin}_{N+1}(\mathit{pos}(v))) 
  \wedge \alpha^\mathrm{state}_{\mathit{state}(v)}
  \wedge \alpha^\mathrm{read}_{\mathit{read}(v)} \bigr).
\end{array}$
\end{enumerate}

We say that $T$ \emph{can be mapped to} $\mathit{Model}$ if there exists a morphism
from $T$ to $\mathit{Model}$.
Below we shall prove the following lemma.

\begin{lemma}
\label{lemma:inductionSSL}
If a partial tree $T=(V,E,c)$ of $M$ on input $w$
can be mapped to $\mathit{Model}$
and is not an accepting tree of $M$ on input $w$
then there exists a partial tree $T=(\widetilde{V},\widetilde{E},\widetilde{c})$ of $M$ on input $w$
that can be mapped to $\mathit{Model}$ and that satisfies
$V \subsetneq \widetilde{V}$.
\end{lemma}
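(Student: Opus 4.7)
The plan. Since $T$ is a partial tree but not an accepting tree of $M$ on input $w$, property V must fail at some leaf $v\in V$, and property $\mathrm{V}'$ forbids $\mathit{state}(v)=q_\mathrm{reject}$, so $q:=\mathit{state}(v)\in Q_\exists\cup Q_\forall$. Write $\eta:=\mathit{read}(v)$. By morphism condition~4, $\pi(v)\models\alpha^\mathrm{state}_q\wedge\alpha^\mathrm{read}_\eta$, and from $r_0\models K\Box\mathit{computation}$ together with the reachability assumption on the model, $\pi(v)\models\mathit{computation}$. The $(q,\eta)$-clause of $\mathit{computation}$ therefore fires at $\pi(v)$: in the existential case I select a single triple $(r,\theta,d)\in\delta(q,\eta)$ whose $\mathit{compstep}_d(r,\theta)$ holds at $\pi(v)$, and in the universal case I select every triple in $\delta(q,\eta)$.

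For each selected triple $(r,\theta,d)$, I specialise $\mathit{compstep}_d(r,\theta)$ at $\pi(v)$ to $k:=\min(\{0,\ldots,N-1\}\setminus\mathrm{Ones}(\mathit{time}(v)))$ and to $l$ matching $\mathit{pos}(v)$ (its rightmost zero if $d=\mathrm{right}$, its rightmost one if $d=\mathrm{left}$). This yields an intermediate point $p\in\mathit{Cloud}_{\pi(v)}$ at which $\underline{X}^\mathrm{time}$ encodes $\mathit{time}(v)+1$ and $\underline{X}^\mathrm{pos}$ encodes $\mathit{pos}(v)\pm 1$, together with a $\stackrel{\Diamond}{\to}$-successor $\widetilde{p}$ of $p$ satisfying $B$, $\underline{\alpha}^\mathrm{time}=\underline{X}^\mathrm{time}$, $\underline{\alpha}^\mathrm{pos}=\underline{X}^\mathrm{pos}$, $\alpha^\mathrm{state}_r$, $\alpha^\mathrm{written}_\theta$, and $\underline{\alpha}^\mathrm{read}=\underline{X}^\mathrm{read}$. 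I attach a new child $v'$ of $v$ labelled with the unique configuration $c(v')$ obtained from $c(v)$ via $(r,\theta,d)$, forming $\widetilde{T}=(\widetilde{V},\widetilde{E},\widetilde{c})$, and extend $\pi$ by setting $\widetilde{\pi}(v'):=\widetilde{p}$. Properties I, II, and IV are immediate from the construction; property III holds because distinct triples yield distinct successor configurations; and $\mathrm{V}'$ follows because if $r$ equalled $q_\mathrm{reject}$ then $\widetilde{p}\models\alpha^\mathrm{state}_{q_\mathrm{reject}}$ would contradict $K\Box\mathit{no\_reject}$. Morphism conditions 1--3 and every conjunct of condition~4 at $v'$ except the $\alpha^\mathrm{read}_{\mathit{read}(v')}$-part are read off directly from the compstep witness.

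The main obstacle is condition~4 for $\alpha^\mathrm{read}$ at $v'$. From $\widetilde{p}\models(\underline{\alpha}^\mathrm{read}=\underline{X}^\mathrm{read})\wedge\mathrm{unique}(\underline{\alpha}^\mathrm{read})$, both vectors have exactly one true bit at $\widetilde{p}$, and by persistence of the $X^\mathrm{read}_\gamma$'s together with this uniqueness, the unique true $X^\mathrm{read}$-bit at $\widetilde{p}$ must already be the unique true $X^\mathrm{read}$-bit at $p$; hence it suffices to establish $p\models X^\mathrm{read}_{\mathit{read}(v')}$. I plan to pin this down through the interplay of $K\Box\mathit{time\_after\_previous\_visit}$ and $K\Box\mathit{get\_the\_right\_symbol}$. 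First, by evaluating $\mathit{time\_after\_previous\_visit}$ at suitable points in the ancestral clouds $\mathit{Cloud}_{\pi(v'')}$ of $v$ in $T$ (each of which carries, via the cross property, a witness with $\underline{X}^\mathrm{time}$ encoding $\mathit{time}(v')$ and $\underline{X}^\mathrm{pos}$ encoding $\mathit{pos}(v')$) I aim to rule out the value $\mathit{time}(v'')+1$ whenever $\mathit{pos}(v'')\neq\mathit{pos}(v')$ via the second clause and to force $\underline{X}^{\text{time-apv}}>\mathit{time}(v^*)$ via the third clause whenever $v^*$ is an ancestor with $\mathit{pos}(v^*)=\mathit{pos}(v')$; combined with the first clause's upper bound $\mathit{time}(v')$, this squeezes $\underline{X}^{\text{time-apv}}$ to $0$ if the cell has never been visited on the path from $\mathit{root}$ to $v$, and to $\mathit{time}(v^*)+1$ otherwise, where $v^*$ is the most recent such visit. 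Second, I feed the determined value of $\underline{X}^{\text{time-apv}}$ into $\mathit{get\_the\_right\_symbol}$: in the ``never visited'' case its first clause at $p$ directly outputs $X^\mathrm{read}_{\mathit{read}(v')}$ because the initial content of cell $\mathit{pos}(v')$ coincides with $\mathit{read}(v')$, and in the ``visited'' case its second clause, applied in $\mathit{Cloud}_{\pi(v^{**})}$ where $v^{**}$ is the child of $v^*$ on the path to $v$ (so $\mathit{time}(v^{**})=\mathit{time}(v^*)+1$ matches $\underline{X}^{\text{time-apv}}$ at $p$ and $\mathit{written}(v^{**})=\mathit{read}(v')$ by morphism condition~3 at $v^{**}$), transports the value $\alpha^\mathrm{written}_{\mathit{read}(v')}$ to $X^\mathrm{read}_{\mathit{read}(v')}$. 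The delicate step is coordinating $X^\mathrm{read}$-bits between $p$ and a point in $\mathit{Cloud}_{\pi(v^{**})}$ in the second case; I expect to close this by using the cross property to manufacture a $\stackrel{\Diamond}{\to}$-successor of $p$ lying in $\mathit{Cloud}_{\pi(v^{**})}$ and then invoking persistence together with uniqueness of $\underline{\alpha}^\mathrm{read}$ at $\widetilde{p}$, which I anticipate to be the technical heart of the argument.
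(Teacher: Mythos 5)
Your overall structure matches the paper's: pick a leaf $\widehat{v}$ with state $q\in Q_\exists\cup Q_\forall$, invoke $\mathit{computation}$ to obtain compstep witnesses $p\in\mathit{Cloud}_{\pi(\widehat{v})}$ and $\widetilde{p}$ with $p\stackrel{\Diamond}{\to}\widetilde{p}$, set $\widetilde{\pi}(v'):=\widetilde{p}$ for the new node(s), discharge the tree conditions and morphism conditions~1--3 from the witness, and close the $\alpha^\mathrm{read}$-part of condition~4 using $\mathit{time\_after\_previous\_visit}$, $\mathit{get\_the\_right\_symbol}$, and the cross property through the ancestral clouds. That is precisely the paper's route, and most of your discharge steps are correct.

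There is, however, a concrete directional error in the step you yourself flag as the ``technical heart.'' You propose to ``use the cross property to manufacture a $\stackrel{\Diamond}{\to}$-successor of $p$ lying in $\mathit{Cloud}_{\pi(v^{**})}$.'' That is backwards, and the cross property cannot produce such a point: $\mathit{Cloud}_{\pi(v^{**})}$ is an \emph{ancestral} cloud, and the cross property --- if $a\stackrel{\Diamond}{\to}b$ and $b\stackrel{L}{\to}c$ then there is $d$ with $a\stackrel{L}{\to}d$ and $d\stackrel{\Diamond}{\to}c$ --- only lets you push an $\stackrel{L}{\to}$-step \emph{upstream} across a $\stackrel{\Diamond}{\to}$-edge, yielding a point in the earlier cloud. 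What you actually need, and what the paper does, is to iterate the cross property starting from $\widetilde{p}$: setting $z_{i+1}:=\widetilde{p}$, each instance of morphism condition~2 provides $a\in\mathit{Cloud}_{\widetilde{\pi}(v_m)}$ and $b\in\mathit{Cloud}_{\widetilde{\pi}(v_{m+1})}$ with $a\stackrel{\Diamond}{\to}b$; since $b\stackrel{L}{\to}z_{m+1}$, the cross property gives $z_m\in\mathit{Cloud}_{\widetilde{\pi}(v_m)}$ with $z_m\stackrel{\Diamond}{\to}z_{m+1}$ and hence $z_m\stackrel{\Diamond}{\to}\widetilde{p}$. These $z_m$ are $\stackrel{\Diamond}{\to}$-\emph{predecessors} of $\widetilde{p}$ (one per ancestral cloud), and by persistence they carry the same $\underline{X}$-values as $p$ and $\widetilde{p}$; it is the predecessor $z_{m'+1}$ in the cloud of $v^{**}$ at which $\mathit{get\_the\_right\_symbol}$ fires. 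If you instead tried to produce a $\stackrel{\Diamond}{\to}$-successor of $p$ inside an ancestral cloud, nothing in the model or the cross axiom guarantees its existence and the construction would stall. (Two minor remarks: the uniqueness of $\underline{\alpha}^\mathrm{read}$ at $\widetilde{p}$ is not needed for the transfer; persistence of $X^\mathrm{read}$ alone suffices to carry the established bit from $z_{m'+1}$ to $\widetilde{p}$. And the $\mathit{time\_after\_previous\_visit}$-squeeze you describe must be evaluated at these $z_m$ --- where $\underline{\alpha}^\mathrm{time}$ is cloud-determined by morphism condition~4 --- not abstractly ``in the ancestral clouds''; that is exactly what the cross-point chain supplies.)
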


Before we prove this lemma, we deduce the desired assertion from it.
Let 
\[ D:= \max(\{ |\delta(q,\eta)| ~:~ q \in Q, \eta \in\Gamma\}) . \]
Then, due to Condition III in the definition of a
``partial tree of $M$ on input $w$'',
any node in any partial tree of $M$ on input $w$ has
at most $D$ successors.  
As any computation of $M$ on $w$ stops after
at most $2^N-1$ steps, any partial tree of $M$ on input $w$ has
at most 
\[ \widetilde{D}:=(D^{2^N}-1)/(D-1) \]
nodes. 
We claim that the rooted and labeled tree
\[ T_0:=(\{\mathit{root}\},\emptyset,c) \text{ where }
c(\mathit{root}):=(q_0,2^N-1,\#^{2^N}w\#^{2^N-1-n}) \]
is a partial tree of $M$ on input $w$ and can be mapped to $\mathit{Model}$.
Indeed, Condition I in the definition of a ``partial tree of $M$ on input $w$'' is
satisfied because the node $\mathit{root}$ is labeled with the initial configuration of $M$ on input $w$.
Conditions II, III, and IV are satisfied because $T_0$ does not have any inner nodes.
Condition $V^\prime$ is satisfied because $q_0 \neq q_\mathrm{reject}$, and this
follows from $r_0 \models \alpha^\mathrm{state}_{q_0}$
(this is a part of $r_0 \models \mathit{start}$)
and $r_0 \models \neg \alpha^\mathrm{state}_{q_\mathrm{reject}}$
(this follows from $r_0 \models K \Box \mathit{no\_reject}$).
Thus, $T_0$ is indeed a partial tree of $M$ on input $w$.
Now we show that $T_0$ can be mapped $\mathit{Model}$.
Of course, we define $\pi:\{\mathit{root}\}\to W$ by $\pi(\mathit{root}):=r_0$.
We claim that $\pi$ is a morphism from $T$ to $\mathit{Model}$.
We check the four conditions one by one.
\begin{enumerate}
\item
The condition $\pi(\mathit{root})=r_0$ is true by definition.
\item
The second condition is satisfied trivially because the tree $T_0$ does not have any edges.
\item
The third condition in the definition of a ``morphism from $T$ to $\mathit{Model}$''
is satisfied trivially because $T_0$ has only one node, its root.
\item
On the one hand, we have
$\mathit{time}(\mathit{root})=0$,
$\mathit{pos}(\mathit{root})=2^N-1$,\\
$\mathit{state}(\mathit{root})=q_0$, and
$\mathit{read}(\mathit{root})=\#$.\\
On the other hand, the condition $r_0 \models \mathit{start}$ implies 
\[ r_0 \models B \wedge (\underline{\alpha}^\mathrm{time}=\bin_N(0))
   \wedge (\underline{\alpha}^\mathrm{pos}=\bin_{N+1}(2^N-1))
   \wedge \alpha^\mathrm{state}_{q_0}
   \wedge \alpha^\mathrm{read}_{\#} . \]
\end{enumerate}
Thus, we have shown that $T_0$ is a partial tree of $M$ on input $w$
and that $T_0$ can be mapped to $\mathit{Model}$.

If there would not exist an accepting tree of $M$ on input $w$
then, starting with $T_0$ and using Lemma~\ref{lemma:inductionSSL} we could construct an infinite sequence of partial
trees $T_0,T_1,T_2,\ldots$ of $M$ on input $w$ that 
can be mapped to $\mathit{Model}$ 
such that the number of nodes in these trees is strictly increasing.
But we have seen that any partial tree of $M$ on input $w$ can have at most $\widetilde{D}$ nodes.
Thus, there exists an accepting tree of $M$ on input $w$.
We have shown $w\in L$.

In order to complete the proof of Theorem~\ref{theorem:SSL-EXPSPACE-hard}
it remains to prove Lemma~\ref{lemma:inductionSSL}.

\begin{proof}[Proof of Lemma~\ref{lemma:inductionSSL}]
Let $T=(V,E,c)$ be a partial tree of $M$ on input $w$
that is not an accepting tree of $M$ on input $w$
and that can be mapped to $\mathit{Model}$.
Let $\pi:V \to W$ be a morphism from $T$ to $\mathit{Model}$.
Then $T$ has a leaf $\widehat{v}$ such that the state
$q:=\mathit{state}(\widehat{v})$ is either an element of $Q_\exists$ or of $Q_\forall$.
First we treat the case that it is an element of $Q_\exists$,
then the case that it is an element of $Q_\forall$.

So, let us assume that $q \in Q_\exists$.
We define $\eta:=\mathit{read}(\widehat{v})$.
Then, because $\pi$ is a morphism from $T$ to $\mathit{Model}$, we have
\[ \pi(\widehat{v}) \models (\alpha^\mathrm{state}_q 
   \wedge \alpha^\mathrm{read}_\eta) , \]
hence, due to $\pi(\widehat{v}) \models \mathit{computation}$,
\[ \pi(\widehat{v}) \models 
 \bigvee_{(r,\theta,\mathit{left}) \in \delta(q,\eta)}  \mathit{compstep}_{\mathrm{left}}(r,\theta)
   \vee   \bigvee_{(r,\theta,\mathit{right}) \in \delta(q,\eta)}  \mathit{compstep}_{\mathrm{right}}(r,\theta) . \]
Let us assume that there is an element $(r,\theta,\mathit{left}) \in \delta(q,\eta)$
such that $\pi(\widehat{v}) \models  \mathit{compstep}_{\mathrm{left}}(r,\theta)$
(the other case, the case when there is an element $(r,\theta,\mathit{right}) \in \delta(q,\eta)$
such that $\pi(\widehat{v}) \models \mathit{compstep}_{\mathrm{right}}(r,\theta)$, is treated analogously).  
We claim that we can define the new tree $\widetilde{T}=(\widetilde{V},\widetilde{E},\widetilde{c})$
as follows:
\begin{itemize}
\item
$\widetilde{V} := V \cup\{\widetilde{v}\}$ where $\widetilde{v}$ is a new element (not in $V$),
\item
$\widetilde{E} := E \cup \{(\widehat{v},\widetilde{v})\}$,
\item
$\widetilde{c}(x) := \begin{cases}
          c(x) & \text{for all } x \in V, \\
          c' & \text{for } x=\widetilde{v}, \text{ where $c'$ is the configuration that is reached from $c(\widehat{v})$}  \\
             & \text{in the computation step given by } ((q,\eta),(r,\theta,\mathit{left})) \in\delta.
             \end{cases}$
\end{itemize}
Before we show that $\widetilde{T}$
is a partial tree of $M$ on input $w$,
we define a function $\widetilde{\pi}:\widetilde{V}\to W$
that we will show to be a morphism from $\widetilde{T}$ to $\mathit{Model}$.

As $\widehat{v}$ is an element of a partial tree of $M$ on input $w$
with $\mathit{state}(\widehat{v}) \in Q_\exists$, at least one more computation step
can be done. As any computation of $M$ on input $w$ stops after at most $2^N-1$ steps,
we observe that the number $i:=\mathit{time}(\widehat{v})$ 
satisfies $0 \leq i < 2^N-1$. 
Then $\{0,\ldots,N-1\} \setminus\mathrm{Ones}(i) \neq \emptyset$.
We set $k:= \min(\{0,\ldots,N-1\} \setminus\mathrm{Ones}(i))$.
The assumption that $\pi$ is a morphism from $T$ to $\mathit{Model}$
implies $\pi(\widehat{v}) \models (\underline{\alpha}^\mathrm{time} = \mathrm{bin}_N(i))$.
We conclude 
$\pi(\widehat{v}) \models \mathrm{rightmost\_zero}(\underline{\alpha}^\mathrm{time},k)$.
As during each computation step, the tape head can move at most one step to the left or to the right
and as the computation started in position $2^N-1$
the number $j:=\mathit{pos}(\widehat{v})$ satisfies
$0 < j \leq 2^{N+1}-3$.
Then $\mathrm{Ones}(j) \neq \emptyset$.
We set $l:= \min\mathrm{Ones}(j)$.
The assumption that $\pi$ is a morphism from $T$ to $\mathit{Model}$
implies $\pi(\widehat{v}) \models (\underline{\alpha}^\mathrm{pos} = \mathrm{bin}_{N+1}(j))$.
We conclude
$\pi(\widehat{v}) \models \mathrm{rightmost\_one}(\underline{\alpha}^\mathrm{pos},l)$.
Furthermore, as $\pi$ is a morphism from $T$ to $\mathit{Model}$ we have $\pi(\widehat{v}) \models B$.
Summarizing this, we have
\[
\pi(\widehat{v}) \models \bigl( B \wedge \mathrm{rightmost\_zero}(\underline{\alpha}^\mathrm{time},k)
    \wedge \mathrm{rightmost\_one}(\underline{\alpha}^\mathrm{pos},l) \bigr).
\]
Due to $\pi(\widehat{v}) \models  \mathit{compstep}_{\mathrm{left}}(r,\theta)$
there exist an element $x \in \mathit{Cloud}_{\pi(\widehat{v})}$ and an element $y \in W$ such that
$x \stackrel{\Diamond}{\to} y$
as well as
\begin{eqnarray*}
 x &\models&
     B \wedge (\underline{X}^\mathrm{time}=\underline{\alpha}^\mathrm{time},>k)
        \wedge \mathrm{rightmost\_one}(\underline{X}^\mathrm{time},k) \\
  && \wedge (\underline{X}^\mathrm{pos}=\underline{\alpha}^\mathrm{pos},>l)
                                            \wedge \mathrm{rightmost\_zero}(\underline{X}^\mathrm{pos},l) 
\end{eqnarray*}
and
\[ y \models
(\underline{\alpha}^\mathrm{time}=\underline{X}^\mathrm{time})
                        \wedge (\underline{\alpha}^\mathrm{pos}=\underline{X}^\mathrm{pos}) 
                        \wedge \alpha^\mathrm{state}_{r} \wedge \alpha^\mathrm{written}_\theta
                        \wedge (\underline{\alpha}^\mathrm{read}=\underline{X}^\mathrm{read}) .
\]
We claim that we can define the desired function $\widetilde{\pi}:\widetilde{V} \to W$ by
\[ \widetilde{\pi}(v) := \begin{cases}
       \pi(v) & \text{if } v \in V, \\
       y & \text{if } v=\widetilde{v}.
       \end{cases} 
\]

We have to show that $\widetilde{T}$ is a partial tree of $M$ on input $w$
and that $\widetilde{\pi}$ is a morphism from $\widetilde{T}$ to $\mathit{Model}$.
Condition I in the definition of a 
``partial tree of $M$ on input $w$''
is satisfied because $\widetilde{T}$ has the same root as $T$,
and the label of the root does not change.
A node in $\widetilde{T}$ is an internal node of $\widetilde{T}$ if, and only if, it is either an internal node of $T$
or equal to $\widehat{v}$. For internal nodes of $T$ Conditions II, III, and IV are satisfied by assumption
(and due to the fact that the labels of nodes in $V$ do not change).
The new internal node $\widehat{v}$ satisfies Condition II by our definition of $\widetilde{c}(\widetilde{v})$.
Condition III is satisfied for $\widehat{v}$ because $\widehat{v}$ has exactly one successor.
And Condition IV does not apply to $\widehat{v}$ because $\widehat{v}\in Q_\exists$.
A node in $\widetilde{T}$ is a leaf if, and only if, it is either equal to $\widetilde{v}$
or a leaf in $T$ different from $\widehat{v}$.
For the leaves in $T$ different from $\widehat{v}$ Condition $\mathrm{V}^\prime$ is satisfied by assumption
(and due to the fact that the labels of nodes in $V$ do not change).
For the new leaf $\widetilde{v}$ in $\widetilde{T}$ Condition $\mathrm{V}^\prime$
says $\mathit{state}(\widetilde{v}) \neq q_\mathrm{reject}$.
This is true because on the one hand $\mathit{state}(\widetilde{v}) = r$
and on the other hand
$y \models \alpha^\mathrm{state}_r$ and
$y \models \neg \alpha^\mathrm{state}_{q_\mathrm{reject}}$
(this follows from $r \models K \Box \mathit{no\_reject}$).
We have shown that $\widetilde{T}$ is a partial tree of $M$ on input $w$.

Now we show that $\widetilde{\pi}$ is a morphism from $\widetilde{T}$ to $\mathit{Model}$.
The first condition in the definition of a ``morphism from $\widetilde{T}$ to $\mathit{Model}$''
is satisfied because $\widetilde{\pi}(\mathit{root}) = \pi(\mathit{root})=r_0$.
Let us look at the second condition and let us assume that $v,v' \in \widetilde{V}$ satisfy
$v \widetilde{E} v'$.
We distinguish between two different cases for $v$ and $v'$.
\begin{enumerate}
\item
Case: $v' \in V$. Then our assumption $v \widetilde{E} v'$ implies $v \in V$ and $v E v'$. In this case
the facts $\widetilde{\pi}(v)=\pi(v)$ and $\widetilde{\pi}(v')=\pi(v')$
as well as the assumption that $\pi:V\to W$ is a morphism from $T$ to $\mathit{Model}$
imply the desired assertion:
\[ \mathit{Cloud}_{\widetilde{\pi}(v)}
  = \mathit{Cloud}_{\pi(v)} 
  \stackrel{\Diamond}{\to}^{\stackrel{L}{\to}}
  \mathit{Cloud}_{\pi(v')})
  =  \mathit{Cloud}_{\widetilde{\pi}(v')} . \]
\item
Case: $v'=\widetilde{v}$.
Then our assumption $v \widetilde{E} v'$ implies $v= \widehat{v}$.
On the one hand, we have
$x \in \mathit{Cloud}_{\pi(\widehat{v})} = \mathit{Cloud}_{\widetilde{\pi}(\widehat{v})}$,
on the other hand $y = \widetilde{\pi}(\widetilde{v})$, hence,
$y \in \mathit{Cloud}_{\widetilde{\pi}(\widetilde{v})}$.
With $x \stackrel{\Diamond}{\to} y$ we obtain the desired assertion
$\mathit{Cloud}_{\widetilde{\pi}(\widehat{v})} 
\stackrel{\Diamond}{\to}^{\stackrel{L}{\to}}
\mathit{Cloud}_{\widetilde{\pi}(\widetilde{v})}$.
\end{enumerate}
The third condition in the definition of a
``morphism from $\widetilde{T}$ to $\mathit{Model}$''
is satisfied for $v\in V\setminus\{\mathit{root}\}$ by assumption
(and by $\widetilde{\pi}(v)=\pi(v)$ and $\widetilde{c}(v) = c(v)$).
It is satisfied for $\widetilde{v}$ because
$\mathit{written}(\widetilde{v}) = \theta$,
because $\widetilde{\pi}(\widetilde{v})=y$,
and because $y \models \alpha^\mathrm{written}_\theta$.
We come to the fourth condition.
It is satisfied for $v\in V\setminus\{\mathit{root}\}$ by assumption
(and due to $\widetilde{\pi}(v)=\pi(v)$ and $\widetilde{c}(v) = c(v)$).
We still need to show that it is satisfied for $\widetilde{v}$.
Remember $\widetilde{\pi}(\widetilde{v})=y$.
We need to show
\begin{eqnarray*}
 y &\models& 
  \bigl(B
  \wedge (\underline{\alpha}^\mathrm{time} = \mathrm{bin}_N(\mathit{time}(\widetilde{v})))
  \wedge (\underline{\alpha}^\mathrm{pos} = \mathrm{bin}_{N+1}(\mathit{pos}(\widetilde{v}))) 
  \wedge \alpha^\mathrm{state}_{\mathit{state}(\widetilde{v})}
  \wedge \alpha^\mathrm{read}_{\mathit{read}(\widetilde{v})} \bigr).
\end{eqnarray*}
This assertion consists really of five assertions. We treat them one by one.
\begin{itemize}
\item
The condition $y \models B$ is satisfied because $x \models B$ and
$x \stackrel{\Diamond}{\to} y$ and because $B$ is persistent.
\item
In the trees $T$ and $\widetilde{T}$ we have $\mathit{time}(\widehat{v})=i$, 
and in the tree $\widetilde{T}$ we have $\mathit{time}(\widetilde{v})=i+1$.
We wish to show $y \models (\underline{\alpha}^\mathrm{time} = \mathrm{bin}_N(i+1))$.
We have already seen
$\pi(\widehat{v}) \models (\underline{\alpha}^\mathrm{time} = \mathrm{bin}_N(i))$
and $\pi(\widehat{v}) \models \mathrm{rightmost\_zero}(\underline{\alpha}^\mathrm{time},k)$.
As $x \in \mathit{Cloud}_{\pi(\widehat{v})}$, we obtain
\[ x \models ((\underline{\alpha}^\mathrm{time} = \mathrm{bin}_N(i))
   \wedge \mathrm{rightmost\_zero}(\underline{\alpha}^\mathrm{time},k)) . \]
The conditions $x \stackrel{\Diamond}{\to} y$ as well as
\begin{eqnarray*}
 x &\models&
 (\underline{X}^\mathrm{time}=\underline{\alpha}^\mathrm{time},>k)
        \wedge \mathrm{rightmost\_one}(\underline{X}^\mathrm{time},k) \text{ and } \\
y & \models& 
    (\underline{\alpha}^\mathrm{time}=\underline{X}^\mathrm{time})
\end{eqnarray*}
imply $y \models (\underline{\alpha}^\mathrm{time} = \mathrm{bin}_N(i+1))$.
\item
In the trees $T$ and $\widetilde{T}$ we have $\mathit{pos}(\widehat{v})=j$, 
and in the tree $\widetilde{T}$ we have $\mathit{pos}(\widetilde{v})=j-1$.
We wish to show $y \models (\underline{\alpha}^\mathrm{pos} = \mathrm{bin}_{N+1}(j-1))$.
We have already seen
$\pi(\widehat{v}) \models (\underline{\alpha}^\mathrm{pos} = \mathrm{bin}_{N+1}(j))$
and $\pi(\widehat{v}) \models \mathrm{rightmost\_one}(\underline{\alpha}^\mathrm{pos},l)$.
As $x \in \mathit{Cloud}_{\pi(\widehat{v})}$, we obtain
\[ x \models ((\underline{\alpha}^\mathrm{pos} = \mathrm{bin}_{N+1}(j))
   \wedge \mathrm{rightmost\_one}(\underline{\alpha}^\mathrm{pos},l)) . \]
The conditions $x \stackrel{\Diamond}{\to} y$ as well as
\begin{eqnarray*}
 x &\models&
  (\underline{X}^\mathrm{pos}=\underline{\alpha}^\mathrm{pos},>l)
                                            \wedge \mathrm{rightmost\_zero}(\underline{X}^\mathrm{pos},l) \text{ and } \\
y & \models& 
  (\underline{\alpha}^\mathrm{pos}=\underline{X}^\mathrm{pos})
\end{eqnarray*}
imply $y \models (\underline{\alpha}^\mathrm{pos} = \mathrm{bin}_{N+1}(j-1))$.
\item
We have $\mathit{state}(\widetilde{v}) = r$.
And we have $y \models \alpha^\mathit{state}_r$.
\item
Let $\gamma:=\mathit{read}(\widetilde{v})$ in $\widetilde{T}$.
We wish to show $y \models \alpha^\mathrm{read}_\gamma$.
As we know $y \models 
(\underline{\alpha}^\mathrm{read}=\underline{X}^\mathrm{read})$
it is sufficient to show
$y \models X^\mathrm{read}_\gamma$.
We have already seen
$y \models (\underline{\alpha}^\mathrm{time} = \mathrm{bin}_N(i+1))$
and 
$y \models (\underline{\alpha}^\mathrm{pos} = \mathrm{bin}_{N+1}(j-1))$.
As we know
$y \models 
(\underline{\alpha}^\mathrm{time}=\underline{X}^\mathrm{time})
                        \wedge (\underline{\alpha}^\mathrm{pos}=\underline{X}^\mathrm{pos})$
we conclude
$y \models ((\underline{X}^\mathrm{time}=\mathrm{bin}_N(i+1))
     \wedge (\underline{X}^\mathrm{pos}=\mathrm{bin}_{N+1}(j-1)))$.
We have seen $y \models B$ as well.
Furthermore, we have $y \models \mathit{time\_after\_previous\_visit}$.
The first line in the formula $\mathit{time\_after\_previous\_visit}$
implies 
$y \models (\underline{X}^{\text{time-apv}} \leq \underline{X}^\mathrm{time})$,
hence,
$y \models (\underline{X}^{\text{time-apv}} \leq \mathrm{bin}_N(i+1))$.
Let $v_m \in V$ for $m=0,\ldots,i+1$ be the uniquely determined node on the path from 
$\mathit{root}$ to $\widetilde{v}$ with $\mathit{time}(v_m)=m$ 
(hence $v_0=\mathit{root}$, $v_i = \widehat{v}$, and $v_{i+1} = \widetilde{v}$).
Then
\[ v_0 \widetilde{E} v_1 \widetilde{E} \ldots \widetilde{E} v_i \widetilde{E} v_{i+1} . \]
By the second condition in the definition of a
``morphism from $\widetilde{T}$ to $\mathit{Model}$''
\[ \mathit{Cloud}_{\widetilde{\pi}(v_0)} \stackrel{\Diamond}{\to}^{\stackrel{L}{\to}}
   \mathit{Cloud}_{\widetilde{\pi}(v_1)} \stackrel{\Diamond}{\to}^{\stackrel{L}{\to}}
   \ldots \stackrel{\Diamond}{\to}^{\stackrel{L}{\to}}
   \mathit{Cloud}_{\widetilde{\pi}(v_i)} \stackrel{\Diamond}{\to}^{\stackrel{L}{\to}}
   \mathit{Cloud}_{\widetilde{\pi}(v_{i+1})}.
\]
By repeated application of the cross property 
and by starting with $z_{i+1}:=y$ 
we conclude that for $m=i+1,i,\ldots,1,0$
there exists some $z_m \in \mathit{Cloud}_{\widetilde{\pi}(v_m)}$ with $z_m \stackrel{\Diamond}{\to} y$.
Let us consider $m \in \{0,1,\ldots,i+1\}$.
As $y \models B$ we also have $z_m \models B$
(remember that variables are persistent in $\ssl$).
Due to 
$\widetilde{\pi}(v_m) \models ((\underline{\alpha}^\mathrm{time} = \mathrm{bin}_N(m))
    \wedge (\underline{\alpha}^\mathrm{pos} = \mathrm{bin}_{N+1}( \mathit{pos}(v_m))))$
we obtain
$z_m \models    
      ((\underline{\alpha}^\mathrm{time} = \mathrm{bin}_N(m))
    \wedge (\underline{\alpha}^\mathrm{pos} = \mathrm{bin}_{N+1}( \mathit{pos}(v_m))))$
as well.
Due to
$y \models ((\underline{X}^\mathrm{time}=\mathrm{bin}_N(i+1))
     \wedge (\underline{X}^\mathrm{pos}=\mathrm{bin}_{N+1}(j-1)))$
and the persistence of variables we obtain
$z_m \models ((\underline{X}^\mathrm{time}=\mathrm{bin}_N(i+1))
     \wedge (\underline{X}^\mathrm{pos}=\mathrm{bin}_{N+1}(j-1)))$
as well.
Furthermore, we have
$z_m \models \mathit{time\_after\_previous\_visit}$.
We distinguish between the two cases whether the cell $j-1$ has been visited
on the path from $\mathit{root}$ to $\widehat{v}$ or not.

Let us first consider the case when the cell $j-1$ has not been visited
on the path from $\mathit{root}$ to $\widehat{v}$.
Then, on the one hand, the symbol $\gamma = \mathit{read}(\widetilde{v})$
is still the initial symbol in the cell $j-1$.
On the other hand,
for all  $m \in \{0,\ldots,i\}$ we have
$\mathit{pos}(v_m) \neq j-1$ and
\[ z_m \models ( \underline{\alpha}^\mathrm{time} < \underline{X}^\mathrm{time}
              \wedge \underline{\alpha}^\mathrm{pos} \neq \underline{X}^\mathrm{pos} )
              \rightarrow \underline{X}^{\text{time-apv}} \neq \underline{\alpha}^\mathrm{time} + 1) .
\]
Together with
$z_m \models ((\underline{X}^\mathrm{time}=\mathrm{bin}_N(i+1))
     \wedge (\underline{X}^\mathrm{pos}=\mathrm{bin}_{N+1}(j-1)))$
and
$z_m \models ((\underline{\alpha}^\mathrm{time} = \mathrm{bin}_N(m))
    \wedge (\underline{\alpha}^\mathrm{pos} = \mathrm{bin}_{N+1}( \mathit{pos}(v_m))))$
we conclude that 
$z_m \models (\underline{X}^{\text{time-apv}} \neq \mathrm{bin}_N(m + 1))$
for $m\in\{0,\ldots,i\}$.
The persistence of $\underline{X}^{\text{time-apv}}$ implies that
$y \models (\underline{X}^{\text{time-apv}} \neq \mathrm{bin}_N(m + 1))$
for $m\in\{0,\ldots,i\}$.
Together with
$y \models (\underline{X}^{\text{time-apv}} \leq \underline{X}^\mathrm{time})$
we conclude that the binary value of $\underline{X}^{\text{time-apv}}$ in $y$ must be $0$.
Now the fact
$y \models \mathit{get\_the\_right\_symbol}$
implies $y \models X^\mathrm{read}_\gamma$.

Let us consider the second case,
the case when the cell $j-1$ has been visited
on the path from $\mathit{root}$ to $\widehat{v}$.
Let $v_{m'}$ be the last node on this path with $\mathit{pos}(v_{m'})=j-1$.
Then $0 \leq m' \leq i$.
On the one  hand, then the symbol $\gamma = \mathit{read}(\widetilde{v})$
is the symbol that was written into the cell $j-1$ in the computation
step from node $v_{m'}$ to node $v_{m'+1}$,
and by the third condition in the definition of a
``morphism from $\widetilde{T}$ to $\mathit{Model}$'' we have
$\widetilde{\pi}(v_{m'+1}) \models \alpha^\mathrm{written}_\gamma$, hence,
$z_{m'+1} \models \alpha^\mathrm{written}_\gamma$.
On the other hand,
from $y \models \mathit{time\_after\_previous\_visit}$ we get
$y \models (\underline{X}^{\text{time-apv}} \leq \underline{X}^\mathrm{time})$, 
hence,
$y \models (\underline{X}^{\text{time-apv}} \leq \mathrm{bin}_N(i+1))$.
From $z_{m'} \models \mathit{time\_after\_previous\_visit}$ we get
\[
z_{m'} \models \bigl( (  \underline{\alpha}^\mathrm{time} < \underline{X}^\mathrm{time})
              \wedge (\underline{\alpha}^\mathrm{pos} = \underline{X}^\mathrm{pos})\bigr) 
              \rightarrow (\underline{\alpha}^\mathrm{time}<\underline{X}^{\text{time-apv}} ) .
\]
Together with $z_{m'} \models (\underline{\alpha}^\mathrm{time} = \mathrm{bin}_N(m'))$
we obtain
$z_{m'} \models ( \mathrm{bin}_N(m') < \underline{X}^{\text{time-apv}} )$.
And similarly as in the first case,
from 
\[ z_m \models \bigl( ( \underline{\alpha}^\mathrm{time} < \underline{X}^\mathrm{time})
              \wedge( \underline{\alpha}^\mathrm{pos} \neq \underline{X}^\mathrm{pos} )\bigr)
              \rightarrow( \underline{X}^{\text{time-apv}} \neq \underline{\alpha}^\mathrm{time} + 1) , \]
for $m=m'+1,\ldots,i$ we obtain
\[ z_m \models (\underline{X}^{\text{time-apv}} \neq \mathrm{bin}_N(m+1)) .\]
All this implies 
$y \models (\underline{X}^{\text{time-apv}} = \mathrm{bin}_N(1+m'))$
and
$z_{m'+1} \models (\underline{X}^{\text{time-apv}} = \mathrm{bin}_N(1+m'))$.
Then
$z_{m'+1} \models \mathit{get\_the\_right\_symbol}$
implies  $z_{m'+1} \models (\underline{X}^\mathrm{read} = \underline{\alpha}^\mathrm{written})$.
With
$z_{m'+1} \models \alpha^\mathrm{written}_\gamma$
we conclude 
$z_{m'+1} \models X^\mathrm{read}_\gamma$, hence,
$y \models X^\mathrm{read}_\gamma$.
That was to be shown.    
\end{itemize}
Thus, $\widetilde{T}$ is not only a partial tree of $M$ on input $w$ but can
also be mapped to $\mathit{Model}$.
This ends the treatment of the case $q \in Q_\exists$.

Now we consider the other case, the case $q \in Q_\forall$.
We define $\eta:=\mathit{read}(\widehat{v})$.
Let 
\[ (r_1,\theta_1,\mathit{dir}_1),\ldots,(r_d,\theta_d,\mathit{dir}_d) \]
be the elements of
$\delta(q,\eta)$ where $d\geq 1$ and $\mathit{dir}_m \in \{\mathit{left},\mathit{right}\}$, for $m=1,\ldots,d$.
We claim that we can define the new tree $\widetilde{T}=(\widetilde{V},\widetilde{E},\widetilde{c})$
as follows:
\begin{itemize}
\item
$\widetilde{V} := V \cup\{\widetilde{v}_1,\ldots,\widetilde{v}_d\}$ where $\widetilde{v}_1,\ldots,\widetilde{v}_d$ 
are new (not in $V$) pairwise different elements,
\item
$\widetilde{E} := E \cup \{(\widehat{v},\widetilde{v}_1),\ldots,(\widehat{v},\widetilde{v}_d)\}$,
\item
$\widetilde{c}(x) := \begin{cases}
          c(x) & \text{for all } x \in V, \\
          c'_m & \text{for } x=\widetilde{v}_m, \text{where $c'_m$ is the configuration that is reached from }\\
             & \text{$c(\widehat{v})$ in the computation step given by } ((q,\eta),(r_m,\theta_m,\mathit{dir}_m)) \in\delta.
             \end{cases}$
\end{itemize}
Before we show that $\widetilde{T}$
is a partial tree of $M$ on input $w$,
we define a function $\widetilde{\pi}:\widetilde{V}\to W$
that we will show to be a morphism from $\widetilde{T}$ to $\mathit{Model}$.
Since $\pi$ is a morphism from $T$ to $\mathit{Model}$), we have
\[ \pi(\widehat{v}) \models
B \wedge (\underline{\alpha}^\mathrm{time} = \mathrm{bin}_N(\mathit{time}(\widehat{v})))
  \wedge (\underline{\alpha}^\mathrm{pos} = \mathrm{bin}_{N+1}(\mathit{pos}(\widehat{v}))) 
  \wedge \alpha^\mathrm{state}_{\mathit{state}(\widehat{v})}
  \wedge \alpha^\mathrm{read}_{\mathit{read}(\widehat{v})} ,
\]
hence, due to $\pi(\widehat{v}) \models \mathit{computation}$,
\[ \pi(\widehat{v}) \models 
 \bigwedge_{(r,\theta,\mathit{left}) \in \delta(q,\eta)}  \mathit{compstep}_{\mathrm{left}}(r,\theta)
   \wedge  \bigwedge_{(r,\theta,\mathit{right}) \in \delta(q,\eta)}  \mathit{compstep}_{\mathrm{right}}(r,\theta) . \]
As in the case $q \in Q_\exists$ one shows that the numbers
$i:=\mathit{time}(\widehat{v})$
and
 $j:=\mathit{pos}(\widehat{v})$
satisfy $0 \leq i < 2^N-1$ 
and
$0 < j \leq 2^{N+1}-3$, and one defines
\begin{eqnarray*}
k &:=& \min(\{0,\ldots,N-1\} \setminus\mathrm{Ones}(i)), \\
l_\mathrm{left} &:=& \min(\mathrm{Ones}(j)), \text{ and} \\
l_\mathrm{right} &:=& \min(\{0,\ldots,N\} \setminus\mathrm{Ones}(j)).
\end{eqnarray*}
As in the case $q \in Q_\exists$ one obtains
\[
\pi(\widehat{v}) \models B \wedge \mathrm{rightmost\_zero}(\underline{\alpha}^\mathrm{time},k) 
     \wedge \mathrm{rightmost\_one}(\underline{\alpha}^\mathrm{pos},l_\mathrm{left})
    \wedge \mathrm{rightmost\_zero}(\underline{\alpha}^\mathrm{pos},l_\mathrm{right}).
\]
Let us now consider some $m\in\{1\ldots,d\}$.
If $\mathit{dir}_m=\mathit{left}$ then,
due to $\pi(\widehat{v}) \models  \mathit{compstep}_{\mathrm{left}}(r,\theta)$,
there exist an element $x_m \in \mathit{Cloud}_{\pi(\widehat{v})}$ and an element $y_m \in W$ such that
$x_m \stackrel{\Diamond}{\to} y_m$
as well as
\begin{eqnarray*}
 x_m &\models&
     B \wedge (\underline{X}^\mathrm{time}=\underline{\alpha}^\mathrm{time},>k)
        \wedge \mathrm{rightmost\_one}(\underline{X}^\mathrm{time},k) \\
  && \wedge (\underline{X}^\mathrm{pos}=\underline{\alpha}^\mathrm{pos},>l_\mathrm{left})
                                            \wedge \mathrm{rightmost\_zero}(\underline{X}^\mathrm{pos},l_\mathrm{left}) 
\end{eqnarray*}
and
\[ y_m \models
(\underline{\alpha}^\mathrm{time}=\underline{X}^\mathrm{time})
                        \wedge (\underline{\alpha}^\mathrm{pos}=\underline{X}^\mathrm{pos}) 
                        \wedge \alpha^\mathrm{state}_{r} \wedge \alpha^\mathrm{written}_\theta
                        \wedge (\underline{\alpha}^\mathrm{read}=\underline{X}^\mathrm{read}) .
\]
Similarly, if $\mathit{dir}_m=\mathit{right}$ then,
due to $\pi(\widehat{v}) \models  \mathit{compstep}_{\mathrm{right}}(r,\theta)$,
there exist an element $x_m \in \mathit{Cloud}_{\pi(\widehat{v})}$ and an element $y_m \in W$ such that
$x_m \stackrel{\Diamond}{\to} y_m$
as well as
\begin{eqnarray*}
 x_m &\models&
     B \wedge (\underline{X}^\mathrm{time}=\underline{\alpha}^\mathrm{time},>k)
        \wedge \mathrm{rightmost\_one}(\underline{X}^\mathrm{time},k) \\
  && \wedge (\underline{X}^\mathrm{pos}=\underline{\alpha}^\mathrm{pos},>l_\mathrm{right})
                                            \wedge \mathrm{rightmost\_one}(\underline{X}^\mathrm{pos},l_\mathrm{right}) 
\end{eqnarray*}
and
\[ y_m \models
(\underline{\alpha}^\mathrm{time}=\underline{X}^\mathrm{time})
                        \wedge (\underline{\alpha}^\mathrm{pos}=\underline{X}^\mathrm{pos}) 
                        \wedge \alpha^\mathrm{state}_{r} \wedge \alpha^\mathrm{written}_\theta
                        \wedge (\underline{\alpha}^\mathrm{read}=\underline{X}^\mathrm{read}) .
\]
We claim that we can define the desired function $\widetilde{\pi}:\widetilde{V} \to W$ by
\[ \widetilde{\pi}(v) := \begin{cases}
       \pi(v) & \text{if } v \in V, \\
       y_m & \text{if } v=\widetilde{v}_m, \text{ for some } m \in \{1,\ldots,d\}.
       \end{cases} 
\]
Similarly as in the case $q \in Q_\exists$ one shows
that $\widetilde{T}$ is a partial tree of $M$ on input $w$.
Note that also Condition IV is satisfied for $\widehat{v}$.
Finally, similarly as in the case $q \in Q_\exists$ one shows
that $\widetilde{\pi}$ is a morphism from $\widetilde{T}$ to $\mathit{Model}$.
This ends the treatment of the case $q \in Q_\forall$.
We have proved Lemma~\ref{lemma:inductionSSL}.
\end{proof}

\section{Reduction of $\ssl$ to $\sxs$ }
\label{section:SSL-S4xS5}

In this section we prove the $\EXPSPACE$-hardness of $\sxs$ by showing the following result. Remember that according to Theorem~\ref{theorem:SSL-EXPSPACE-hard} the satisfiability problem of $\ssl$ is $\EXPSPACE$-hard.

\begin{theorem}
	The satisfiability problem of the bimodal logic $\SSL$ can be reduced in logarithmic space to the satisfiability problem of the bimodal logic
	$\sxs$.
\end{theorem}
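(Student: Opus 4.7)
The plan is to give a simple syntactic reduction that adjoins persistence axioms. For an $\ssl$-formula $\varphi$, let $\mathrm{Var}(\varphi)$ denote the finite set of propositional variables occurring in $\varphi$, and define
\[
  f_{\mathrm{SSL}\to\sxs}(\varphi) \;:=\; \varphi \;\wedge\; K\Box \bigwedge_{p \in \mathrm{Var}(\varphi)} (p \to \Box p).
\]
Call the added conjunct $\mathit{pers}(\varphi)$. In any $\sxs$-model, a point satisfying $K\Box\psi$ forces $\psi$ throughout the submodel generated by that point, so $\mathit{pers}(\varphi)$ makes every $p \in \mathrm{Var}(\varphi)$ persistent along $\stackrel{\Diamond}{\to}$ in that submodel. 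That $f_{\mathrm{SSL}\to\sxs}$ is computable in logarithmic space is routine: a single pass over $\varphi$ enumerates $\mathrm{Var}(\varphi)$ using a logarithmic-size pointer into the input, and writing out the persistence conjunction uses only a further logarithmic counter.

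The correctness claim is that $\varphi$ is $\ssl$-satisfiable iff $f_{\mathrm{SSL}\to\sxs}(\varphi)$ is $\sxs$-satisfiable. The direction $(\Leftarrow)$ is essentially immediate: an $\sxs$-commutator frame, which satisfies the full commutation $\stackrel{L}{\to}\circ\stackrel{\Diamond}{\to} = \stackrel{\Diamond}{\to}\circ\stackrel{L}{\to}$, in particular satisfies the weaker cross property required of an $\ssl$ cross axiom frame. Combined with $\mathit{pers}(\varphi)$, which forces persistence of the relevant propositional variables, and after reinterpreting propositional variables outside $\mathrm{Var}(\varphi)$ as false everywhere (so as to be trivially persistent without affecting the truth of $\varphi$), one obtains an $\ssl$ cross axiom model of $\varphi$.

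For the direction $(\Rightarrow)$, I would start from an $\ssl$ cross axiom model $M = (W,\stackrel{L}{\to},\stackrel{\Diamond}{\to},\sigma)$ with $M, x_0 \models \varphi$ and construct an $\sxs$-commutator model $\widetilde{M}$ with $\widetilde{M}, \widetilde{x}_0 \models f_{\mathrm{SSL}\to\sxs}(\varphi)$. Because $M$ already satisfies persistence for every variable, $\mathit{pers}(\varphi)$ will hold in any extension that keeps the valuation persistent; the substantive issue is to upgrade the frame from the weaker cross-axiom inclusion $\stackrel{\Diamond}{\to}\circ\stackrel{L}{\to} \subseteq \stackrel{L}{\to}\circ\stackrel{\Diamond}{\to}$ to full commutativity. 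The intended construction is a commutator closure: iteratively adjoin, for every configuration $x\stackrel{L}{\to}y\stackrel{\Diamond}{\to}z$ in the current model lacking a witness for $x\stackrel{\Diamond}{\to}\cdot\stackrel{L}{\to}z$, a fresh point $w$ that is a copy of $z$ (inheriting its valuation on $\mathrm{Var}(\varphi)$), together with the necessary edges, while closing under reflexivity, transitivity of $\stackrel{\Diamond}{\to}$, the equivalence structure of $\stackrel{L}{\to}$, and both cross directions.

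The main obstacle will be verifying that this closure does not disturb the truth value of $\varphi$ at $x_0$. The planned argument is a bisimulation-style invariance proof: the inclusion $W \hookrightarrow \widetilde{W}$ is, for the language restricted to $\mathrm{Var}(\varphi)$, a bisimulation, because every added point $w$ copying some $z$ is $\mathrm{Var}(\varphi)$-indistinguishable from $z$, and any $\stackrel{\Diamond}{\to}$- or $\stackrel{L}{\to}$-successor available to an original point $x$ in $\widetilde{M}$ can be matched with a successor of $x$ already present in $M$ via the original cross property and persistence of variables. An induction on formula structure, using persistence to control $\Box$-subformulas and the matching just described to control $\Diamond$- and $L$-subformulas, then transfers satisfaction of $\varphi$ from $M$ to $\widetilde{M}$ at $x_0$, completing the reduction.
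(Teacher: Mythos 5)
Your reduction $f(\varphi) := \varphi \wedge K\Box\bigwedge_{p \in \mathrm{Var}(\varphi)}(p \to \Box p)$ is \emph{not} correct, and the gap is in the direction you flag as the substantive one. Before that, a smaller issue: $p \to \Box p$ only forbids a $\stackrel{\Diamond}{\to}$-transition from a $p$-point to a $\neg p$-point; it does not forbid the reverse. Cross axiom models require the valuation of every propositional variable to be \emph{constant} along $\stackrel{\Diamond}{\to}$, so you would at least need $K(\Box p \vee \Box\neg p)$. But even with that fix, the implication ``$\varphi$ $\ssl$-satisfiable $\Rightarrow f(\varphi)$ $\sxs$-satisfiable'' fails. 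Take
\[ \varphi := p \wedge L\bigl(\neg p \wedge \Diamond K\neg p\bigr). \]
This is $\ssl$-satisfiable: let $W=\{a,b,c\}$ with $\stackrel{L}{\to}$-classes $\{a,b\}$ and $\{c\}$, with $\stackrel{\Diamond}{\to}$ the reflexive closure of $\{(b,c)\}$, and with $\sigma(p)=\{a\}$. The cross property and persistence are straightforward to check, and $a\models\varphi$ since $a\models p$, $a\stackrel{L}{\to} b$, $b\models\neg p$, $b\stackrel{\Diamond}{\to} c$, and $c\models K\neg p$ because $c$ is alone in its cloud. Now suppose for contradiction that some $\sxs$-commutator model had a point $u\models f(\varphi)$. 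From $u\models p$ and $u\models p\to\Box p$ we get $u\models\Box p$. From $u\models L(\neg p\wedge\Diamond K\neg p)$ there are $v,v'$ with $u\stackrel{L'}{\to}v\stackrel{\Diamond'}{\to}v'$ and $v'\models K\neg p$. Right commutativity produces $u'$ with $u\stackrel{\Diamond'}{\to}u'\stackrel{L'}{\to}v'$; then $u\models\Box p$ forces $u'\models p$, while $v'\models K\neg p$ and $u'\stackrel{L'}{\to}v'$ force $u'\models\neg p$ --- a contradiction. So $f(\varphi)$ is unsatisfiable while $\varphi$ is satisfiable, and your reduction does not work.

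This also pins down why the commutator-closure-plus-bisimulation plan cannot be repaired for the plain reduction: the witnesses forced by right commutativity into the cloud of $v'$ must, by persistence, agree with $u$ on all of $\mathrm{Var}(\varphi)$, and the new witnesses are genuinely new $\stackrel{\Diamond}{\to}$-successors of $u$ --- so they both change the truth of $\Box$- and $K$-subformulas at the original points and are not bisimilar to any original point. The paper's solution is to make the new points \emph{syntactically invisible} rather than to match them semantically. Concretely, it introduces a fresh propositional variable $\main$ that marks the original points, recursively relativizes the modalities of $\varphi$ to $\main$ (translating $K\psi$ into $K(\main\to T(\psi))$ and $\Box\psi$ into $\Box(\main\to T(\psi))$), replaces the persistence conjunct by a $\main$-relativized version $\bigwedge_A K(\Box(\main\to A)\vee\Box(\main\to\neg A))$, and adds $\main\wedge K\Box(\neg\main\to\Box\neg\main)$ so that once one leaves the $\main$ region one never returns. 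Then the right-commutativity witnesses are created as $\neg\main$ points, they impose no constraint on $T(\varphi)$ or on the relativized persistence, and the backward direction recovers a cross axiom model by restricting to the reachable $\main$ points. The conceptual ingredient you are missing is exactly this relativization; without it the reduction is simply false.
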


To prove this theorem we proceed in three steps:
\begin{enumerate}
	\item We start with the definition of the reduction function. 
	\item We prove its correctness.
	\item We show that the reduction function can be computed using not more than logarithmic space.
\end{enumerate}

\subsection{The Reduction Function}
\label{subsection:DefinitionReduction}

We show that the satisfiability problem of $\SSL$ can be reduced
to the satisfiability problem of $\SfourSfive$.
To this end we define a translation $\widehat{T}$
of bimodal formulas in the language $\mathcal{L}$
to bimodal formulas in $\mathcal{L}$
such that for all $\varphi \in \mathcal{L}$
\[\varphi \text{ is $\SSL$-satisfiable } \iff \widehat{T}(\varphi) \text{ is $\SfourSfive$-satisfiable.}\]

For the reduction we face two main problems. 
\begin{enumerate}
\item
The first problem is that in cross axiom models literals are persistent.
To handle this we make sure that the translation $\widehat{T}(\varphi)$
contains a subformula postulating the persistence of literals. 
\item
The second problem is that in general cross axiom models do not satisfy right commutativity.
To handle this we add to each cloud in a cross-axiom model a special ``new point''
serving as successor point for all points in a predecessor cloud that fail to have in the original model a successor point in this cloud. 
In order to distinguish the new points from the original ones we use a special propositional variable $\emph{main}$ which is false exactly at the new points.
\end{enumerate} 

We define the desired function $\widehat{T}:\mathcal{L}\to\mathcal{L}$ in four steps:

\begin{definition}[Translation $\widehat{T}$]
\label{def: reduction SSL-S4S5}
\begin{enumerate}
\item
For every $\varphi\in\mathcal{L}$ let $\main \in AT$ by the alphabetically first
propositional variable that is not a subformula of $\varphi$.
\item
Recursively, we define a function $T:\mathcal{L}\to\mathcal{L}$ as follows:
	\[\begin{array}{lll}
		T(A)					&:=	& A	\smallskip\\
		T(\neg\psi)				&:=	& \neg T(\psi)\smallskip\\
		T((\psi_1\wedge\psi_2))	&:=	& (T(\psi_1)\wedge T(\psi_2))\smallskip\\
		T(K\psi)				&:=	& K\neg (\main \wedge \neg T(\psi))\smallskip\\
		T(\Box\psi)				&:=	& \Box\neg (\main \wedge \neg T(\psi))
	\end{array}\]
for all $A\in AT$ and for all $\psi,\psi_1,\psi_2\in\mathcal{L}$.
(Note that $K\neg (\main \wedge \neg T(\psi))$ is equivalent to $K(\main\to T(\psi))$
and that $\Box\neg (\main \wedge \neg T(\psi))$ is equivalent to $\Box(\main\to T(\psi))$.)
\item
For $\varphi\in\mathcal{L}$ we define
\[ persistent_{main} := 
		\bigwedge_{A\in AT\cap\subf(\varphi)}K(\Box(\main\to A)\vee\Box(\main\to\neg A)) . \]
\item
For $\varphi\in\mathcal{L}$ we define
a function $\widehat{T}:\mathcal{L}\to\mathcal{L}$ by
\[\widehat{T}(\varphi):= \main
	\wedge K \Box(\neg\main\to\Box\neg\main)  
	\wedge persistent_{main} 
	\wedge T(\varphi). \]
\end{enumerate}
\end{definition}

We claim that the function $\widehat{T}$ is indeed
a reduction function from the satisfiability problem
of $\SSL$ to the satisfiability problem of $\SfourSfive$.

\begin{proposition}
\label{prop:red-SSL-S4xS5}
The function $\widehat{T}:\mathcal{L} \to \mathcal{L}$
satisfies, for all $\varphi \in \mathcal{L}$,
\[\varphi \text{ is $\SSL$-satisfiable }
\iff \widehat{T}(\varphi) \text{ is $\SfourSfive$-satisfiable}. \]
\end{proposition}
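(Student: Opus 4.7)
The plan is to prove the biconditional in Proposition \ref{prop:red-SSL-S4xS5} by establishing the two directions separately. Both will rely on the same inductive \emph{translation lemma}: for every $\psi \in \subf(\varphi)$ and every point $p$ at which $\main$ holds, $p \models T(\psi)$ in the ambient $\SfourSfive$-commutator model if and only if $p \models \psi$ in the associated cross axiom model. The only non-trivial inductive cases are $\psi = K\chi$ and $\psi = \Box\chi$, which follow from the clauses $T(K\chi) = K(\main \to T(\chi))$ and $T(\Box\chi) = \Box(\main \to T(\chi))$ together with the fact that the two models have precisely the same $\main$-labelled successors. For the lemma to go through, persistence of the atomic variables appearing in $\varphi$ at main points is required; this is what $persistent_{main}$ enforces in one direction and what cross axiom models provide automatically in the other.

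For the forward direction I start from a cross axiom model $M=(W, \stackrel{L}{\to}, \stackrel{\Diamond}{\to}, \sigma)$ with $M, p_0 \models \varphi$ and build an $\SfourSfive$-commutator model $M'$ by adjoining to every cloud $C$ of $M$ a fresh auxiliary point $n_C$. The new $\stackrel{L}{\to}$-equivalence class at $C$ becomes $C\cup\{n_C\}$; the new $\stackrel{\Diamond}{\to}$ keeps all old edges and additionally sets $p \stackrel{\Diamond}{\to} n_{C'}$ whenever $p$'s cloud is related to $C'$ by the induced relation $\stackrel{\Diamond}{\to}^{\stackrel{L}{\to}}$ in $M$, with $n_C \stackrel{\Diamond}{\to} n_{C'}$ declared analogously. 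The valuation $\sigma'$ agrees with $\sigma$ on $W$, puts $\sigma'(\main)=W$, and on each $n_C$ assigns every atomic variable of $\varphi$ the common value it already has at the points of $C$ (this common value exists by persistence in $M$). Reflexivity and transitivity of $\stackrel{\Diamond}{\to}$, the equivalence property of $\stackrel{L}{\to}$, left commutativity (inherited from the cross property), and right commutativity (exactly the reason for adding the $n_C$'s) then hold in $M'$. The conjuncts of $\widehat{T}(\varphi)$ are verified directly: $\main$ by construction, $K\Box(\neg\main\to\Box\neg\main)$ because every $\stackrel{\Diamond}{\to}$-successor of an auxiliary point is again auxiliary, and $persistent_{main}$ by the chosen valuation; the translation lemma then yields $M', p_0 \models T(\varphi)$.

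For the backward direction, given an $\SfourSfive$-commutator model $M' = (W', \stackrel{L}{\to}', \stackrel{\Diamond}{\to}', \sigma')$ with $M', p_0 \models \widehat{T}(\varphi)$, I restrict to the main points $W := \{p \in W' : p \models \main\}$ and take the induced relations and valuation. Reflexivity and transitivity of $\stackrel{\Diamond}{\to}$, the equivalence property of $\stackrel{L}{\to}$, and persistence of the atomic variables of $\varphi$ on $W$ are all inherited, the last one from $persistent_{main}$. The delicate point is the cross property: given $x \stackrel{\Diamond}{\to} y$ and $y \stackrel{L}{\to} z$ with $x,y,z \in W$, left commutativity in $M'$ yields some $w \in W'$ with $x \stackrel{L}{\to}' w$ and $w \stackrel{\Diamond}{\to}' z$; since $z \models \main$ and $w \stackrel{\Diamond}{\to}' z$, the conjunct $K\Box(\neg\main\to\Box\neg\main)$ forces $w \models \main$, hence $w \in W$. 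Applying the translation lemma to $p_0$ then gives $M, p_0 \models \varphi$.

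The step I expect to be the main obstacle is the valuation design in the forward construction: several main points from many different predecessor clouds can share the same $n_C$ as a $\stackrel{\Diamond}{\to}$-successor, and $persistent_{main}$ must hold along every one of those edges simultaneously without disturbing the truth values on $W$. Persistence in the original cross axiom model exactly guarantees that all these requirements agree on a single common value at each $n_C$, so the choice is well defined; the inductive translation lemma and the frame-condition verifications are routine by comparison.
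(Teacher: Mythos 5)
Your construction is essentially the one used in the paper: add one fresh point per cloud, use it to restore right commutativity, mark the original points with $\main$, and prove an inductive translation lemma. Two points in your write-up nevertheless need to be fixed.

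First, in the forward direction, your valuation at the auxiliary points $n_C$ is ill-defined. You propose to give each atomic variable $A \in \subf(\varphi)$ ``the common value it already has at the points of $C$'' and justify well-definedness by ``persistence in $M$''. But persistence in a cross axiom model is a property along $\stackrel{\Diamond}{\to}$, not along $\stackrel{L}{\to}$: two points of the \emph{same} cloud may disagree on $A$, so no common value exists. This is harmless only because the value of $A$ at $n_C$ is actually irrelevant: $n_C \not\models \main$, so in $\mathit{persistent}_{\main}$ the disjuncts $\Box(\main\to A)$ and $\Box(\main\to\neg A)$ are vacuously satisfied at $n_C$, and in the translation lemma the translated box/$K$ clauses $\Box(\main\to T(\chi))$ and $K(\main\to T(\chi))$ impose nothing at $n_C$ either. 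So the ``main obstacle'' you anticipate does not in fact arise; one may assign arbitrary values at $n_C$ (the paper just keeps $\sigma'(A)=\sigma(A)$, making every $A$ false there), but your stated reason for well-definedness is incorrect and the choice you describe cannot be made.

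Second, in the backward direction you take $W := \{p \in W' : p \models \main\}$. This is too large: you then claim that persistence of the atomic variables of $\varphi$ on $W$ is ``inherited from $\mathit{persistent}_{\main}$'', but $\mathit{persistent}_{\main}$ holds only at the single evaluation point $p_0$, and its $K\Box$ prefix propagates it only to points reachable from $p_0$ via an $\stackrel{L'}{\to}$-step followed by a $\stackrel{\Diamond'}{\to}$-step. A general $\SfourSfive$-commutator model may contain $\main$-labelled points not reachable from $p_0$; for those you have no persistence guarantee, and your $(W,\stackrel{L}{\to},\stackrel{\Diamond}{\to},\sigma)$ need not be a cross axiom model. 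The cure is exactly what the paper does: take $W := \{v\in W' \mid v\models\main \text{ and } (\exists w')\,(p_0 \stackrel{L'}{\to} w' \text{ and } w' \stackrel{\Diamond'}{\to} v)\}$. With that restriction the persistence argument goes through, and the reachability is also what licenses applying $K\Box(\neg\main\to\Box\neg\main)$ at the point $w''$ produced by left commutativity in your cross-property check. The rest of the argument is correct and matches the paper's.
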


The next section is dedicated to the proof of Proposition \ref{prop:red-SSL-S4xS5}. We treat both directions of the claimed equivalence separately.

\subsection{Correctness}

\begin{lemma}
\label{lemma:SSL-sat}
Let $\varphi\in\mathcal{L}$.
Then
\[
\varphi \text{ is $\SSL$-satisfiable } \Rightarrow \widehat{T}(\varphi) \text{ is $\SfourSfive$-satisfiable.}
\]
\end{lemma}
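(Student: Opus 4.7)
The plan is to construct, from a cross axiom model $M=(W,\stackrel{L}{\to},\stackrel{\Diamond}{\to},\sigma)$ with a point $p \in W$ satisfying $\varphi$, an $\SfourSfive$-commutator model $M'=(W',\stackrel{L'}{\to},\stackrel{\Diamond'}{\to},\sigma')$ such that $p$ (viewed as a point of $W'$) satisfies $\widehat{T}(\varphi)$. The construction addresses the two problems listed in Subsection~\ref{subsection:DefinitionReduction} by (i) adding a dummy ``non-main'' point to every cloud so that right commutativity can be repaired, and (ii) reading off propositional content only at the original (main) points, where the desired persistence of atoms is already present in $M$.

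First, I set $W':=W \cup \{n_C : C \text{ a cloud of } M\}$ and extend $\stackrel{L}{\to}$ to an equivalence relation $\stackrel{L'}{\to}$ on $W'$ by placing each $n_C$ in the $\stackrel{L'}{\to}$-class of $C$. The relation $\stackrel{\Diamond'}{\to}$ is defined to contain all pairs in $\stackrel{\Diamond}{\to}$ together with the following additional edges: for every $y\in W'$ lying in the extended class of some cloud $C$ of $M$, and every cloud $C'$ reachable from $C$ under the reflexive-transitive closure of the induced cloud relation $\stackrel{\Diamond}{\to}^{\stackrel{L}{\to}}$, I add $y\stackrel{\Diamond'}{\to} n_{C'}$; in particular the new points $n_C$ receive no $\stackrel{\Diamond'}{\to}$-successors in $W$. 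The valuation is $\sigma'(\main):=W$, and $\sigma'(A):=\sigma(A)$ on $W$ for the remaining atoms (their truth at the new points is irrelevant).

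Second, I verify that $(W',\stackrel{L'}{\to},\stackrel{\Diamond'}{\to})$ is an $\SfourSfive$-commutator frame. Reflexivity and transitivity of $\stackrel{\Diamond'}{\to}$ are routine. The crux is to check both commutator directions. For $\Diamond L\to L\Diamond$ I use the following cloud lemma, proved by induction on the number of cloud-steps from the cross property of $M$ together with the transitivity of $\stackrel{\Diamond}{\to}$: whenever $C$ reaches $C'$ under the reflexive-transitive closure of $\stackrel{\Diamond}{\to}^{\stackrel{L}{\to}}$ and $z\in C'$, there is some $w\in C$ with $w\stackrel{\Diamond}{\to} z$. For the reverse direction $L\Diamond\to\Diamond L$ the dummy point $n_{C'}$ always supplies a uniform witness, so the new edges were designed precisely to make this direction hold.

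Third, I prove by induction on subformulas $\psi$ of $\varphi$ that $M,x\models\psi$ iff $M',x\models T(\psi)$ for every $x\in W$. Boolean cases are immediate. For $\psi=\Box\chi$ the guard $\main\to T(\chi)$ under $\Box$ restricts the quantification in $M'$ to points in $W$, and by construction the $\stackrel{\Diamond'}{\to}$-successors of $x\in W$ that lie in $W$ are exactly its $\stackrel{\Diamond}{\to}$-successors in $M$; the case $\psi=K\chi$ is analogous. Finally, $p\models\main$ trivially; $p\models K\Box(\neg\main\to\Box\neg\main)$ holds because any $\stackrel{\Diamond'}{\to}$-successor of a non-main $n_C$ is again of the form $n_{C'}$; and $p\models persistent_{main}$ holds because for any $y$ in the $\stackrel{L'}{\to}$-class of $p$ either $y\in W$, in which case the main $\stackrel{\Diamond'}{\to}$-successors of $y$ are its original $\stackrel{\Diamond}{\to}$-successors and agree with $y$ on each atom by persistence in $M$, or $y=n_{C(p)}$, in which case $y$ has no main $\stackrel{\Diamond'}{\to}$-successors and the disjunction is vacuously satisfied. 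The main obstacle is the simultaneous verification of both commutator directions for $\stackrel{\Diamond'}{\to}$: the $\Diamond L\to L\Diamond$ direction must be re-established via the cloud lemma, while the $L\Diamond\to\Diamond L$ direction (which fails in $M$ in general) is exactly what the dummy points are designed to repair.
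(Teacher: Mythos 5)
Your construction is essentially identical to the paper's: the same enlarged universe (one dummy point per cloud placed in that cloud's $\stackrel{L'}{\to}$-class), the same enlargement of $\stackrel{\Diamond}{\to}$ by edges into dummy points along the induced cloud relation (your reflexive-transitive closure is redundant since $\stackrel{\Diamond}{\to}^{\stackrel{L}{\to}}$ is already a preorder, as the paper notes via [HK2019-1, Corollary 4.4], so the two definitions coincide), the same valuation, the same verification of the commutator conditions, and the same structural induction $M,x\models\psi \iff M',x\models T(\psi)$. Your ``cloud lemma'' is just the left commutativity (cross property) of $M$ repackaged, which is exactly what the paper invokes, so there is no real difference in the argument.
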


\begin{proof}
Let $\varphi\in\mathcal{L}$ be $\SSL$--satisfiable.
Then there are a cross axiom model $M=(W,\stackrel{\Diamond}{\to}, \stackrel{L}{\to},\sigma)$
and a point $w\in W$ such that $M,w\models\varphi$.
Let $C_i$, for $i\in I$, where $I$ is a suitable index set, be the $\stackrel{L}{\to}$-equivalence classes in $W$.
We construct an $\SfourSfive$-commutator model 
$M'=(W',\stackrel{\Diamond'}{\to}, \stackrel{L'}{\to},\sigma')$ for $\widehat{T}(\varphi)$ as follows.
Let $newpoint_i$ for $i\in I$ be pairwise different ``new'' points that are not elements of $W$.
We define 
\[W' :=	W\cup\{newpoint_i \mid i\in I\}. \]
We define the equivalence relation $\stackrel{L'}{\to}$ on $W'$
by demanding that the following sets $C'_i$ for $i\in I$
are the $\stackrel{L'}{\to}$-equivalence classes of $W'$:
\[C'_i := C_i\cup\{newpoint_i\} , \]
for $i\in I$.
Let $\stackrel{\Diamond}{\to}^{\stackrel{L}{\to}}$ 
be the relation on the set of clouds in $M$ induced by $\stackrel{\Diamond}{\to}$;
compare \cite[Definition 4.1]{HK2019-1}.
We define
\[\stackrel{\Diamond'}{\to}\quad :=\quad	\stackrel{\Diamond}{\to} 
										\cup\{(p,newpoint_j)\mid j \in I \text{ and } \exists i \text{ with }
										p\in C'_i \text{ and } C_i \stackrel{\Diamond}{\to}^{\stackrel{L}{\to}} C_j\} .
\]
\begin{figure}[h]
   \begin{center}		
   \includegraphics[width=0.4\linewidth]{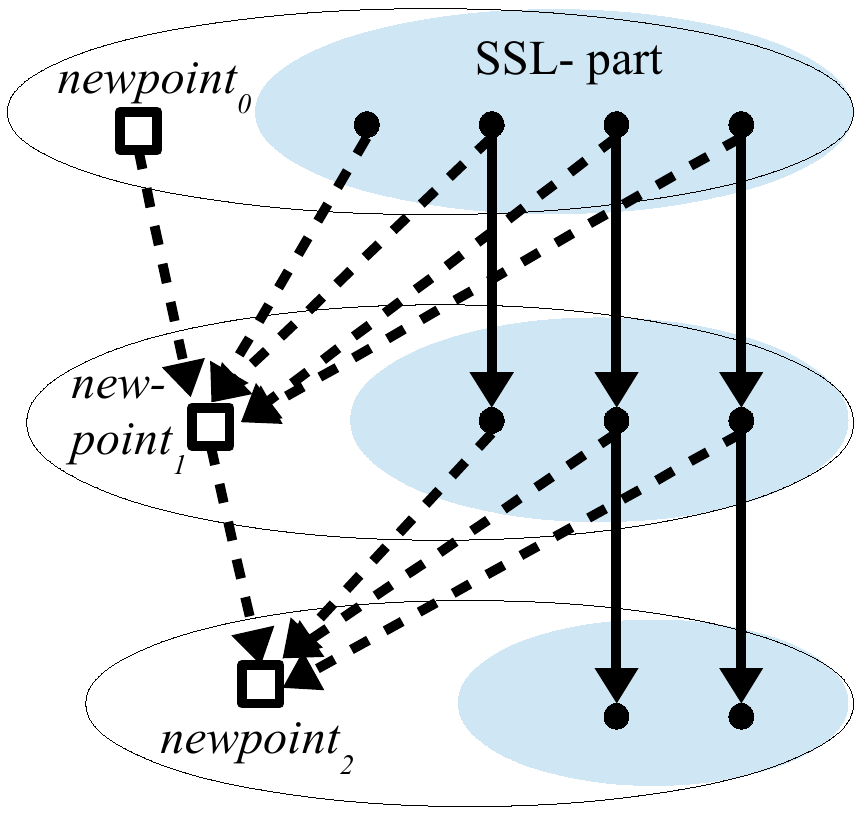}
	\end{center}
	\caption{Illustration of the $\sxs$-commutator model associated in the proof of Lemma~\ref{lemma:SSL-sat} with a cross axiom model.}
	\label{figure:newpoints}
\end{figure}

Finally we define
\begin{eqnarray*}
	\sigma'(\main)	&:=	&W, \\
	\sigma'(A)		&:=	& \sigma(A), \qquad \text{ for } A\in AT  \setminus\{\main\} . 
\end{eqnarray*}
We wish to show the following:
\begin{enumerate}
\item 
$M'$ is an $\SfourSfive$-commutator model,
\item
$M',w\models\widehat{T}(\varphi)$.
\end{enumerate}

We prove the first claim. 

Clearly, $\stackrel{L'}{\to}$ is an equivalence relation.

The relations $\stackrel{\Diamond}{\to}$ and $\stackrel{\Diamond}{\to}^{\stackrel{L}{\to}}$ are preorders 
(by assumption respectively by \cite[Corollary 4.4]{HK2019-1}),
hence, both of them are reflexive and transitive.
It is clear that this implies that the relation $\stackrel{\Diamond'}{\to}$ is reflexive as well.
For transitivity of $\stackrel{\Diamond'}{\to}$, assume that
$p \stackrel{\Diamond'}{\to} q$ and $q \stackrel{\Diamond'}{\to} r$.
If $r\in W$ then the definition of $\stackrel{\Diamond'}{\to}$ implies that also $q \in W$ and $p\in W$ and
$p \stackrel{\Diamond}{\to} q$ as well as $q \stackrel{\Diamond}{\to} r$, hence,
$p \stackrel{\Diamond}{\to} r$ by transitivity of $\stackrel{\Diamond}{\to}$.
This implies $p \stackrel{\Diamond'}{\to} r$.
If $r\not\in W$ then we let $i,j,k\in I$ be the indices with 
$p \in C'_i$, $q \in C'_j$, and $r \in C'_k$. Note that in this case $r=newpoint_k$.
We observe that
$p \stackrel{\Diamond'}{\to} q$ and $q \stackrel{\Diamond'}{\to} r$
imply $C_i \stackrel{\Diamond}{\to}^{\stackrel{L}{\to}} C_j$ and
$C_j \stackrel{\Diamond}{\to}^{\stackrel{L}{\to}} C_k$.
Transitivity of $\stackrel{\Diamond}{\to}^{\stackrel{L}{\to}}$ 
implies $C_i \stackrel{\Diamond}{\to}^{\stackrel{L}{\to}} C_k$, hence,
$p \stackrel{\Diamond'}{\to} newpoint_k =r$.
Thus, $\stackrel{\Diamond'}{\to}$ is transitive as well. 
We have shown that $\stackrel{\Diamond'}{\to}$ is a preorder.

Next, we wish to show left commutativity.
Let us consider some $p,q,r \in W'$ with $p \stackrel{\Diamond'}{\to} q$ and $q \stackrel{L'}{\to} r$. 
Let $i,j$ be the indices with $p \in C'_i$ and $q,r \in C'_j$.
It is sufficient to show that there exists some $s \in C'_i$ with $s \stackrel{\Diamond'}{\to} r$.
The condition $p \stackrel{\Diamond'}{\to} q$ implies
$C_i \stackrel{\Diamond}{\to}^{\stackrel{L}{\to}} C_j$.
If $r \in W$ then the left commutativity of $M$ gives us an $s\in C_i$
with $s \stackrel{\Diamond}{\to} r$, hence, with $s \stackrel{\Diamond'}{\to} r$.
If $r\not\in W$ then $r=newpoint_j$, and $s:=p$ does the job.

In order to prove right commutativity 
let us consider some $p,q,r \in W'$ with $p \stackrel{L'}{\to} q$ and $q \stackrel{\Diamond'}{\to} r$. 
Let $i,j$ be the indices with $p, q \in C'_i$ and $r \in C'_j$.
It is sufficient to show that there exists some $s \in C'_j$ with $p \stackrel{\Diamond'}{\to} s$.
The condition $q \stackrel{\Diamond'}{\to} r$ implies
$C_i \stackrel{\Diamond}{\to}^{\stackrel{L}{\to}} C_j$.
Hence, we obtain $p \stackrel{\Diamond'}{\to} newpoint_j$.
This proves the first claim, that $M'$ is an $\SfourSfive$-commutator model.\\
	
We come to the second claim, $M',w\models\widehat{T}(\varphi)$.\\
From the definition of $\sigma'(\main)$ we obtain
	\[M',w \models\main .\]
Exactly at the points in $\{newpoint_i \mid i \in I\}$ the propositional variable $\main$ is false.
From the fact that all $\stackrel{\Diamond'}{\to}$-successors of points in this set 
are elements of this set as well, we conclude
	\[M',w\models K\Box(\neg\main\to\Box\neg\main) .\]
Next, we observe that for all $v\in W$ and for all propositional variables $A\in\subf(\varphi)$
we have by definition of $\sigma'$
	\[M,v\models A \iff M',v \models A .\]
Since all literals are persistent in $M$ and $\main$ is true exactly at the points in $W \subseteq W'$
we obtain
	\[ M',w \models persistent_{main} . \]
Finally, we have to show $M',w\models T(\varphi)$.
By induction on the structure of $\psi$ we show the stronger assertion:	
	\[M,v\models\psi\iff M',v\models T(\psi) , \]
for all $v\in W$ and for all $\psi \in \subf(\varphi)$.
We distinguish the following cases:
\begin{itemize}
\item
Case $\psi \in AT$. We already mentioned that for all $v\in W$ and for all $A\in AT\cap\subf(\varphi)$ we have
$M,v\models A \iff M',v \models A$.
\item 
Case $\psi = \neg\chi$.
Then the following four assertions are equivalent, the second and the third by induction hypothesis:
(1) $M,v \models \psi$,
(2) $M,v \not\models \chi$,
(3) $M',v \not\models T(\chi)$,
(4) $M',v \models T(\psi)$.
\item
Case $\psi = (\chi_1\wedge\chi_2)$.
This case is treated in the same way.
\item 
Case $\psi = K\chi$.
We wish to show
\[ M,v\models K\chi\iff M',v\models K(\main\to T(\chi)) . \]
First, let us assume $M,v \models K\chi$.
Let us consider an arbitrary $v'\in W'$ such that $v\stackrel{L'}{\to}v'$.
It is sufficient to show that 
$$M',v' \models (\main\to T(\chi)).$$
If $v'\in W$ then we obtain $v \stackrel{L}{\to} v'$ and $M,v' \models \chi$.
By induction hypothesis we obtain $M',v'\models T(\chi)$, hence
$M',v' \models (\main\to T(\chi))$.
If $v'\not\in W$ then $M',v' \not\models \main$, hence in this case 
$M',v'\models (\main\to T(\chi))$ as well.

Now, for the other direction, let us assume $M',v\models K(\main\to T(\chi))$.
Consider some $u\in W$ with $v\stackrel{L}{\to} u$
(remember that this implies $v \stackrel{L'}{\to} u$).
It is sufficient to show that 
$$M,u \models \chi.$$
But for $u\in W$ we have $M',u \models \main$. The condition
$M',v\models K(\main \to T(\chi))$ implies $M',u\models (\main \to T(\chi))$.
We obtain $M',u \models T(\chi)$.
Finally, by induction hypothesis we obtain $M,u \models\chi$.
\item 
Case $\psi = \Box\chi$.
This case can be treated in the same way as the previous case.
In fact, it is sufficient to copy the argument for the case $\psi=K \chi$ and to replace
$K$ by $\Box$, $\stackrel{L}{\to}$ by $\stackrel{\Diamond}{\to}$, and
$\stackrel{L'}{\to}$ by $\stackrel{\Diamond'}{\to}$.
\qedhere
\end{itemize}	
\end{proof}

Now we turn to the other direction of Proposition~\ref{prop:red-SSL-S4xS5}.

\begin{lemma}
\label{lemma:S4xS5-sat}
Let $\varphi\in\mathcal{L}$. Then
\[\widehat{T}(\varphi) \text{ is $\SfourSfive$-satisfiable } \Rightarrow \varphi \text{ is $\SSL$-satisfiable.}\]
\end{lemma}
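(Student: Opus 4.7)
The plan is to invert the construction of Lemma~\ref{lemma:SSL-sat}: from an $\SfourSfive$-commutator model of $\widehat{T}(\varphi)$ I will carve out its $\main$-part and verify that it forms a cross axiom model of $\varphi$. Let $M'=(W',\stackrel{\Diamond'}{\to},\stackrel{L'}{\to},\sigma')$ be an $\SfourSfive$-commutator model with $w\in W'$ such that $M',w\models\widehat{T}(\varphi)$. Without loss of generality I assume that every element of $W'$ is reachable from $w$ by finitely many $\stackrel{L'}{\to}$- and $\stackrel{\Diamond'}{\to}$-steps. Using that $\stackrel{L'}{\to}$ is an equivalence relation, that $\stackrel{\Diamond'}{\to}$ is a preorder, and the two commutativity properties of a commutator model, every $v\in W'$ can be reached from $w$ by a single $\stackrel{L'}{\to}$-step followed by a single $\stackrel{\Diamond'}{\to}$-step. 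Consequently, any formula of the shape $K\Box\chi$ that holds at $w$ forces $\chi$ at every point of $W'$; this observation will be used repeatedly.

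Set $W:=\sigma'(\main)$ and take $\stackrel{L}{\to}$, $\stackrel{\Diamond}{\to}$, $\sigma$ to be the restrictions to $W$. The first task is to check that $M:=(W,\stackrel{\Diamond}{\to},\stackrel{L}{\to},\sigma)$ is a cross axiom model. The relation $\stackrel{L}{\to}$ is clearly an equivalence relation and $\stackrel{\Diamond}{\to}$ is reflexive and transitive. The delicate point is the cross property: given $p,q,r\in W$ with $p\stackrel{\Diamond}{\to}q\stackrel{L}{\to}r$, left commutativity of $M'$ produces some $s\in W'$ with $p\stackrel{L'}{\to}s\stackrel{\Diamond'}{\to}r$, and I must argue $s\in W$. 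This is exactly the role of the subformula $K\Box(\neg\main\to\Box\neg\main)$: by the propagation observation, $s\models\neg\main\to\Box\neg\main$, so if $s\not\models\main$ then $s\models\Box\neg\main$, which together with $s\stackrel{\Diamond'}{\to}r$ forces $r\models\neg\main$, contradicting $r\in W$. For persistence of each $A\in AT\cap\subf(\varphi)$ along $\stackrel{\Diamond}{\to}$: given $u,v\in W$ with $u\stackrel{\Diamond}{\to}v$, pick $u'\in W'$ with $w\stackrel{L'}{\to}u'\stackrel{\Diamond'}{\to}u$ via the normalization; the conjunct $persistent_{main}$ of $\widehat{T}(\varphi)$ gives $u'\models\Box(\main\to A)\vee\Box(\main\to\neg A)$, and since both $u$ and $v$ are $\stackrel{\Diamond'}{\to}$-successors of $u'$ and both satisfy $\main$, whichever disjunct holds forces the same truth value of $A$ at $u$ and at $v$.

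It will remain to show by induction on $\psi\in\subf(\varphi)$ that
\[ M,v\models\psi\iff M',v\models T(\psi) \]
for every $v\in W$. The atomic case follows from $\sigma(A)=\sigma'(A)\cap W$, and the Boolean cases are immediate. For $\psi=K\chi$ the translation yields $K(\main\to T(\chi))$: quantification over $\stackrel{L'}{\to}$-successors in $M'$ reduces to quantification over $\stackrel{L}{\to}$-successors in $M$ because any non-main successor satisfies $\main\to T(\chi)$ trivially, while for a main successor $u$ one has $u\in W$, $v\stackrel{L}{\to}u$, and the induction hypothesis applies. The case $\psi=\Box\chi$ is identical with $\stackrel{L}{\to}$ replaced by $\stackrel{\Diamond}{\to}$. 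From $M',w\models T(\varphi)$ one then concludes $M,w\models\varphi$, so $\varphi$ is $\SSL$-satisfiable. The main obstacle is the verification of the cross property; once one sees that $K\Box(\neg\main\to\Box\neg\main)$ makes the non-main region absorbing along $\stackrel{\Diamond'}{\to}$, the rest is essentially bookkeeping.
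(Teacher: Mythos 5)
Your proposal takes essentially the same route as the paper: restrict the commutator model to its $\main$-part, use the conjunct $K\Box(\neg\main\to\Box\neg\main)$ to secure the cross (left commutativity) property, use $persistent_{main}$ to get persistence of the atoms of $\varphi$, and close with a structural induction showing $M,v\models\psi\iff M',v\models T(\psi)$ for $v\in W$. Your upfront normalization to a generated submodel in which every point has the form $w\stackrel{L'}{\to}\cdot\stackrel{\Diamond'}{\to}\cdot$ is a clean way of packaging the reachability bookkeeping that the paper carries through explicitly in its definition of $W$ and in the left-commutativity and induction steps; that part is fine and even a little tidier than the original.

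There is one small gap. You define $\sigma$ to be the plain restriction of $\sigma'$ to $W$, but your persistence argument only covers $\main$ and the atoms in $AT\cap\subf(\varphi)$. Propositional variables outside $\subf(\varphi)\cup\{\main\}$ are not constrained by $persistent_{main}$, so nothing forces them to be persistent along $\stackrel{\Diamond}{\to}$ in the restricted structure, and then $M$ need not be a cross axiom model at all (cross axiom models require all atoms to be persistent). The paper avoids this by setting $\sigma(A):=\emptyset$ for every $A\notin\subf(\varphi)\cup\{\main\}$; you should make the same (or some other uniformly persistent) choice for those atoms before claiming $M$ is a cross axiom model. With that adjustment the proof goes through.
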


\begin{proof}
Let $\widehat{T}(\varphi)$ be $\SfourSfive$--satisfiable.
This implies that there exist an $\SfourSfive$--commutator model
$M'=(W',\stackrel{\Diamond'}{\to}, \stackrel{L'}{\to},\sigma')$
and a point $w\in W'$ such that $M',w\models\widehat{T}(\varphi)$, that is
\[ M',w\models \main
		\wedge K\Box(\neg\main\to\Box\neg\main)
		\wedge persistent_{main}
		\wedge T(\varphi). \]
We construct a cross axiom model 
$M:=(W,\stackrel{\Diamond}{\to}, \stackrel{L}{\to},\sigma)$ for $\varphi$ as follows. We construct $M$ as a rooted model, where all points are reachable from the point $w$:
\[W := \{v\in W' \mid M',v\models\main \text{ and }
           (\exists w')\,(w \stackrel{L'}{\to} w' \text{ and } w' \stackrel{\Diamond'}{\to} v)\} .\] 
We define the relations $\stackrel{L}{\to}$ and $\stackrel{\Diamond}{\to}$ on $W$ simply by
\begin{eqnarray*}
 \stackrel{L}{\to} &:= & \stackrel{L'}{\to} \cap \, (W\times W) , \\
 \stackrel{\Diamond}{\to} &:= & \stackrel{\Diamond'}{\to} \cap \, (W\times W) .
\end{eqnarray*} 
Finally,
\[\sigma(A) := \begin{cases}
		\sigma'(A) \cap W & \text{ if } A\in \subf(\varphi) \cup\{\main\}, \\
		\emptyset & \text{ if } A \not\in \subf(\varphi) \cup\{\main\},
		\end{cases}		\]
for all $A \in AT$.\\
First, we observe that $M',w\models \main$ 
and the reflexivity of $\stackrel{L'}{\to}$ and $\stackrel{\Diamond'}{\to}$ imply $w\in W$.
We wish to show the following:
\begin{enumerate}
\item
$M$ is a cross axiom model,
\item
$M,w\models\varphi$.
\end{enumerate}

We prove the first claim.

It is obvious that the relation $\stackrel{L}{\to}$ inherits 
reflexivity, symmetry and transitivity from $\stackrel{L'}{\to}$
and that the relation $\stackrel{\Diamond}{\to}$ inherits 
reflexivity and transitivity from $\stackrel{L'}{\to}$.
Thus, the relation $\stackrel{L}{\to}$
is an equivalence relation, and the relation
$\stackrel{\Diamond}{\to}$ is a preorder.

Next we wish to show that $M$ has the left commutativity property.
So let us consider points $p,q,r\in W$ with $p\stackrel{\Diamond}{\to}q$ and $q\stackrel{L}{\to}r$.
By definition of the relations $\stackrel{\Diamond}{\to}$ and $\stackrel{L}{\to}$
we obtain that also $p\stackrel{\Diamond'}{\to}q$ and $q\stackrel{L'}{\to}r$.
Since $M'$ has the left commutativity property there is some point $p'\in W'$
with $p\stackrel{L'}{\to}p'$ and $p'\stackrel{\Diamond'}{\to}r$.
The definition of $W$ and $p\in W$ imply that there exists some
$w' \in W'$ with $w \stackrel{L'}{\to}w'$ and $w' \stackrel{\Diamond'}{\to} p$.
The left commutativity of $M'$ and $w' \stackrel{\Diamond'}{\to} p$ as well as $p \stackrel{L'}{\to} p'$
imply that there exists some $w'' \in W'$ with
$w' \stackrel{L'}{\to} w''$ and $w'' \stackrel{\Diamond'}{\to} p'$.
So, we have $w \stackrel{L'}{\to} w''$ and $w'' \stackrel{\Diamond'}{\to} p'$.
In addition to that,
$M',w\models K \Box(\neg\main\to\Box\neg\main)$ implies
$M',w'' \models \Box(\neg\main\to\Box\neg\main)$, hence,
$M',p' \models \neg\main\to\Box\neg\main$. This, together with
$p' \stackrel{\Diamond'}{\to} r$ and $M',r \models \main$, implies
$M',p'\models \main$.
Hence $p'\in W$ and $p\stackrel{L}{\to}p'$ as well as $p'\stackrel{\Diamond}{\to}r$.
This shows that $M$ has the left commutativity property. 

Finally, we claim that propositional variables are persistent in $M$.
For all $A \in AT\setminus (\{\main\} \cup \subf(\varphi))$ we have
$\sigma(A)=\emptyset$. Hence, all
$A \in AT\setminus (\{\main\} \cup \subf(\varphi))$ are persistent in $M$.
Furthermore, we have $\sigma(\main)=W$. Hence, $\main$ is persistent in $M$ as well.
Let us consider $u,v \in W$ with $u \stackrel{\Diamond}{\to} v$.
Due to the definition of $W$ there exists some $w'\in W'$ with
$w \stackrel{L'}{\to} w'$ and $w' \stackrel{\Diamond'}{\to} u$.
Then, $M',w \models persistent_{main}$ implies,
for all $A \in AT\cap \subf(\varphi)$,
$M', w' \models \Box(\main\to A)\vee\Box(\main\to\neg A)$.
Note that $M',u \models \main$ and $M',v\models\main$.
So, $M',u \models A$ if, and only if, $M',v\models A$.
Due to $\sigma(A)=\sigma'(A)\cap W$ the same holds true with $M'$ replaced by $M$.
This shows that all $A \in AT\cap \subf(\varphi)$ are persistent in $M$.
We have shown that $M$ is a cross axiom model.

We come to the second claim, $M,w\models\varphi$.
Due to $M',w\models T(\varphi)$ it is sufficient to prove 
for all $v\in W$ and for all $\psi\in\subf(\varphi)$:
\[M',v\models T(\psi)\quad \iff \quad M,v\models\psi .\]
We show this by induction on the structure of $\psi$.
We distinguish the following cases:
\begin{itemize}
\item 
Case $\psi=A\in AT$.
In this case the claim is true due to the definition of $\sigma$.
\item 
Case $\psi=\neg\chi$ or $\psi = (\chi_1\wedge\chi_2)$. In both cases the claim follows directly from the induction hypothesis.
\item 
Case $\psi=K\chi$.
Let us first assume $M',v\models T(K\chi)$, that is $M',v\models K(\main\to T(\chi))$.
We wish to show $M,v \models K\chi$.
Consider an arbitrary $v'\in W$ with $v\stackrel{L}{\to}v'$.
It is sufficient to show $M,v' \models\chi$.
Note that the definition of $\stackrel{L}{\to}$ implies $v \stackrel{L'}{\to} v'$.
Hence, we have $M',v' \models (\main\to T(\chi))$.
Furthermore, due to $v' \in W$ we have $M',v'\models \main$.
We obtain $M',v' \models T(\chi)$.
By induction hypothesis we obtain $M,v' \models\chi$.
		
For the other direction let us assume that $M,v\models K\chi$.
We wish to show $M',v \models K(\main\to T(\chi))$.
Consider an arbitrary $v'\in W'$ with $v\stackrel{L'}{\to}v'$.
It is sufficient to show $M',v' \models (\main \to T(\chi))$.
If $v'\in W$ then $v\stackrel{L'}{\to}v'$ implies $v\stackrel{L}{\to}v'$,
and $M,v\models K\chi$ implies $M,v' \models \chi$.
By induction hypothesis we obtain $M',v' \models T(\chi)$, hence,
$M',v' \models (\main \to T(\chi))$.
If $v'\not\in W$ then we claim that $M',v' \not\models \main$, hence,
$M',v' \models (\main \to T(\chi))$.
Indeed, the assumption $v\in W$ implies that there exists some $w'\in W'$ with
$w \stackrel{L'}{\to} w'$ and $w' \stackrel{\Diamond'}{\to} v$. Together with $v\stackrel{L'}{\to}v'$
and left commutativity of $M'$ and transitivity of $\stackrel{L'}{\to}$ we can conclude that there exists some $w''\in W'$ with
$w \stackrel{L'}{\to} w''$ and $w'' \stackrel{\Diamond'}{\to} v'$.
Hence, $v'$ is reachable from $w$. By definition of $W$ the assumption $v' \not\in W$ indeed implies
$M',v' \not\models \main$.
\item 
Case $\psi=\Box\chi$.
This case can be treated similarly as the previous case.
The details are left to the reader.
%
\qedhere
\end{itemize}
\end{proof}

\subsection{LOGSPACE Computability of the Reduction Function}

In this subsection we show that the satisfiability problem of the logic $\SSL$ can be reduced in logarithmic space to the satisfiability problem of $\SfourSfive$.
We start with the following assertion.

\begin{lemma}
\label{lemma:L-ALOGTIME}
The language $\mathcal{L}$ of bimodal formulas is an element of $\ALOGTIME$.
\end{lemma}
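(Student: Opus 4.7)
The plan is to exhibit an alternating Turing machine $A$ that, on input $w$, decides in time $O(\log |w|)$ whether $w$ encodes a bimodal formula. I assume a standard encoding in which each propositional variable $A_i$ is written as the character \texttt{A} followed by the binary representation of $i$, and the symbols $\neg, K, \Box, \wedge, (, )$ are single characters; recognising that a substring is the encoding of a single atom is a straightforward $O(\log |w|)$ local check. Since the grammar of $\mathcal{L}$ reads
\[ \varphi ::= A_i \mid \neg\varphi \mid K\varphi \mid \Box\varphi \mid (\varphi_1 \wedge \varphi_2), \]
the outermost operator of a well-formed formula is determined by its first character, and the only nontrivial structural decision is where the ``main'' $\wedge$ of a conjunction lies, namely at the unique position carrying a $\wedge$ at nesting depth one inside the outer parentheses.

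First I would reduce the recognition problem to two arithmetic subtasks: (a) computing, for each position $i$, the parenthesis nesting depth $d(i) :=$ (number of \texttt{(} in $w[0..i-1]$) $-$ (number of \texttt{)} in $w[0..i-1]$); and (b) for each matched pair of parentheses $(l,r)$, locating the unique position $k \in (l,r)$ with $w[k] = \wedge$ and $d(k) = d(l) + 1$, and verifying that such a $k$ exists and is unique. Both tasks reduce to iterated addition, comparison, and counting, all of which lie in DLOGTIME-uniform $\mathrm{TC}^0 \subseteq \ALOGTIME$. With these primitives available, random access to the input, and alternation, recognition becomes a local verification: every character of $w$ must be assignable to a parse-tree role (atom body, modality or negation prefix, opening or closing of a paren pair, or main conjunction of a specific pair) in a way captured by a fixed finite table of local rules that also mention the depths $d(i)$.

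The main obstacle is that a naive recursive descent through the parse tree can have depth $\Theta(|w|)$, for instance on inputs of the form $\neg\neg\cdots\neg A_0$, which does not fit into a logarithmic time budget. The way I plan to circumvent this is to observe that $\mathcal{L}$ is a \emph{visibly pushdown language}: the stack behaviour of a natural pushdown automaton for $\mathcal{L}$ is entirely controlled by the parenthesis symbols, which are visible in the input and produce a well-nested bracket structure. Membership in visibly pushdown languages is known to lie in $\mathrm{NC}^1 = \ALOGTIME$ under DLOGTIME-uniformity. Alternatively, and more concretely, I will dispense with sequential recursion and let $A$ existentially guess the parse-tree role of each relevant position in parallel, then universally verify, in $O(\log |w|)$ time per branch, the local consistency at every chosen position and, for each matched pair, the existence and uniqueness of a top-level $\wedge$ using the arithmetic primitives from the previous paragraph. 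Each universal branch manipulates only $O(\log |w|)$-bit pointers and counts, so the overall alternating computation stays within the $\ALOGTIME$ bound.
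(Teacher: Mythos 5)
The paper omits the proof of this lemma, merely remarking that it follows ``by arguments similar to the arguments used by Buss'' for recognising the syntax of Boolean sentences, so there is nothing in the paper to compare against line by line. Your first route --- observing that $\mathcal{L}$ is a visibly pushdown language (all stack motion is governed by the visible brackets $($ and $)$; the content between consecutive structural symbols is regular; the finite control need only record whether the innermost open pair has already seen its top-level $\wedge$ and whether the automaton is currently in a unary prefix or the binary body of an atom) and invoking the known inclusion of input-driven languages in DLOGTIME-uniform $\mathrm{NC}^1 = \ALOGTIME$ --- is a correct and genuinely different proof, and it is arguably cleaner than a re-derivation of Buss's argument: it replaces the hand-crafted log-depth recursion by an off-the-shelf theorem, at the cost of having to verify the short VPA construction.

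Your ``more concrete'' alternative, as phrased, does not quite work: an $O(\log n)$-time alternating machine cannot ``existentially guess the parse-tree role of each relevant position in parallel'', because each branch of its computation tree terminates within $O(\log n)$ steps and so commits to only $O(\log n)$ bits of witness, and there is no mechanism for the different universal branches to share a $\Theta(n)$-bit existential guess consistently. The fix is already implicit in your own setup: no guessing is needed, because the role of a position is a function of the input, determined by the depth profile $d(\cdot)$ and the induced bracket-matching relation, both computable in $\mathrm{TC}^0 \subseteq \ALOGTIME$. The machine therefore universally branches over positions and over matched bracket pairs, and on each branch deterministically computes the relevant depths, the matching bracket, and the number of $\wedge$ at depth $d(l)+1$ strictly between a pair $(l,r)$, then verifies the local conditions (segments between structural symbols lie in the appropriate regular set, each pair contains exactly one top-level $\wedge$, no $\wedge$ occurs at depth $0$, brackets are balanced and never go negative, the input is nonempty). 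Reformulated this way, the second route is sound, and it is essentially the Buss-style argument the paper alludes to.
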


Here, $\ALOGTIME$ is the set of all languages that can be 
decided in logarithmic time by an alternating Turing machine with ``random access'' to the input;
compare Buss~\cite[Page 124]{Buss87}, Ibarra, Jiang, and Rivakumar~\cite{IJR1988},
Clote~\cite[Def. 2.3]{Clote1999}.
Note that Buss~\cite[Pages 124, 125]{Buss87} has shown that a certain language of Boolean formulas is in $\ALOGTIME$. As our syntax of formulas is slightly different from the one used by Buss we cannot directly use his result. But Lemma~\ref{lemma:L-ALOGTIME} can be proved by arguments similar to the arguments used by Buss. Therefore, we omit the proof of Lemma~\ref{lemma:L-ALOGTIME}. Actually, we need only the following corollary.

\begin{corollary}
\label{cor:bimodal-formulas-in-logspace}
The language $\mathcal{L}$ of bimodal formulas can be decided in logarithmic space.
\end{corollary}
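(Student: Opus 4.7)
The plan is to derive the corollary as an immediate consequence of Lemma~\ref{lemma:L-ALOGTIME} by invoking the standard inclusion $\ALOGTIME \subseteq \LOGSPACE$. Recall that a language in $\ALOGTIME$ is decided by a random-access alternating Turing machine running in time $O(\log n)$. Such a machine has a computation tree of polynomial size, and every configuration in it (consisting of the finite control state, $O(\log n)$ work-tape cells, and the $O(\log n)$-bit input-address register) can be encoded in $O(\log n)$ bits. A deterministic log-space machine can evaluate the acceptance predicate of this tree by a depth-first traversal, storing at any point only the current configuration together with the sequence of branch choices leading back to the root; each of these has logarithmic size, and the work tape is reused as the traversal ascends and descends the tree. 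This is the textbook simulation witnessing $\ALOGTIME \subseteq \LOGSPACE$, and applying it to Lemma~\ref{lemma:L-ALOGTIME} yields Corollary~\ref{cor:bimodal-formulas-in-logspace} at once.

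Since the corollary is a direct consequence of an established containment, there is no real obstacle in its proof; the only thing worth noting is that the ALOGTIME-to-LOGSPACE simulation itself is routine and does not depend on any specific property of $\mathcal{L}$. If one preferred a self-contained argument that avoided citing Lemma~\ref{lemma:L-ALOGTIME}, a direct deterministic log-space parser for bimodal formulas would suffice. The grammar is unambiguous, and the only nontrivial aspect is parenthesis matching, which is decidable in $\LOGSPACE$ by the standard depth-counting technique. Once correct bracketing has been verified, determining the outermost connective of a subformula specified by a pair of positions in the input reduces to another log-space bracket-depth computation, and recursive descent can be simulated by repeatedly recomputing positions from scratch; this is again an $O(\log n)$-space procedure. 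Either route, the cited inclusion or the direct parser, immediately delivers the stated corollary.
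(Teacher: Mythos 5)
Your proof is correct and takes essentially the same approach as the paper: the paper's entire proof is a one-line citation of the well-known inclusion $\ALOGTIME\subseteq\LOGSPACE$, which is exactly your first move. The additional depth-first-traversal sketch of that inclusion and the alternative direct log-space parser are both sound but unnecessary extras relative to what the paper does.
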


\begin{proof}
It is well-known that $\ALOGTIME$ is a subset of $\LOGSPACE$; see \cite[Page 601]{Clote1999}.
\end{proof}

Let $\Sigma:=\{(, ), \neg, \Box, K, \wedge, x , 0, 1\}$ be the alphabet over which
bimodal formulas are defined.
In order to prove the assertion we have to show that there is a 
logspace computable
function $\widetilde{T}:\Sigma^*\to \Sigma^*$
such that, for all $\varphi \in \Sigma^*$,
\[ (\varphi \in \mathcal{L} \text{ and }
   \varphi \text{ is $\SSL$-satisfiable})
   \iff 
   (\widetilde{T}(\varphi) \in \mathcal{L} \text{ and } 
   \widetilde{T}(\varphi) \text{ is $\SfourSfive$-satisfiable}) . \]
We define such a function $\widetilde{T}$ by formulating
an algorithm for computing it that works in logarithmic space.

So, let $\varphi \in \Sigma^*$ be the input string.
According to Corollary~\ref{cor:bimodal-formulas-in-logspace}
we can first check in logarithmic space whether $\varphi$
is a bimodal formula or not, that is, whether
$\varphi$ is an element of $\mathcal{L}$ or not.
If not then the algorithm outputs $\widetilde{T}(\varphi):=\wedge$
(which is certainly not a bimodal formula).
If, on the other hand, $\varphi$ is a bimodal formula
then we wish to compute and print $\widehat{T}(\varphi)$
as defined in Subsection~\ref{subsection:DefinitionReduction}.

First, we compute the smallest natural number $i$ such that
$x\bin(i)$ is not a subformula of $\varphi$.
We do this by starting with $j=0$, increasing $j$ step by step, and checking in 
each step whether $x\bin(j)$ is a subformula of $\varphi$. 
Note that a string $x\bin(j)$ for some $j\in\IN$
is a subformula of $\varphi$ if, and only if, 
there exists an occurrence of the string $x\bin(j)$ as a substring of $\varphi$
that is not followed by a $0$ or a $1$.
This algorithm works in logarithmic space because
$\varphi$ can contain only less than $|\varphi|$ many
subformulas of the form $x\bin(j)$.
Thus, we need to check whether $x\bin(j)$ is a subformula of $\varphi$
only for $j< |\varphi|$. For all these $j$ the binary
representation $\bin(j)$ can be stored in logarithmic space.
Finally, we can also store the string $\main:=x\bin(i)$
in logarithmic space.

Now we wish to compute and output $\widehat{T}(\varphi)$.
It is straightforward to print
$\main \wedge K \Box(\neg\main\to\Box\neg\main)$.
Next we wish to print $persistent_{main}$.
In order to do this we must identify all $j\in\IN$ such that
$x\bin(j)$ is a subformula of $\varphi$, and for each such $j$
we must print 
\[ K(\Box(\main\to x\bin(j))\vee\Box(\main\to\neg x\bin(j))) . \]
This can be done as follows. We read the string $\varphi$
from left to right. Whenever we read an $x$ we use two binary
counters in order to mark the beginning and the end of the subformula
$x\bin(j)$ that begins with this occurrence of $x$.
Then we check, again using binary counters, whether the same
subformula $x\bin(j)$ has appeared already further to the left in
$\varphi$. If it has then we just move on.
If it has not appeared before, then we print out
$K(\Box(\main\to x\bin(j))\vee\Box(\main\to\neg x\bin(j)))$,
and then we move on.
It is clear that all this can be done in logarithmic space.

Finally, we wish to output $T(\varphi)$.
In order to compute $T(\varphi)$ one has to replace
every occurrence of ``$K$'' by ``$K\neg (\main \wedge \neg$'',
and every occurrence of ``$\Box$'' by ``$\Box\neg (\main \wedge \neg$'',
and one has to add an additional closing bracket after each
subformula $K\psi$ and each subformula $\Box \psi$ of $\varphi$.
Besides that, all other symbols from $\varphi$ can simply be
copied. The only nontrivial part here is the addition of a closing
bracket after each occurrence of a subformula of the form
$K\psi$ or $\Box\psi$ of $\varphi$.
In order to do this, for each position in the string $\varphi$ one has
to count how many subformulas of the form $K\psi$ or
$\Box\psi$ of $\varphi$ end in this position
and then one has to print so many additional closing brackets.
So, how can one count how many subformulas of the form $K\psi$ or
$\Box\psi$ of $\varphi$ end in the current position
in the string $\varphi$?
If the symbol in this position is an element of
$\{(, \neg, \Box, K, \wedge, x \}$
then no subformula ends in this position.
The same is true if the symbol in this position is either $0$ or $1$
and this is followed by a bit $0$ or $1$ as well.
There are only the following possible cases for the last
symbol of a subformula.
\begin{itemize}
\item
If the symbol in the current position is a bit, so $0$ or $1$, and this is
not followed by a bit, then a subformula of the form 
$x\bin(j)$ for some $j$ ends in this position.
Then we move to the left of the corresponding occurrence of $x$
and count the number of occurrences of $K$ and $\Box$ until
we read a symbol not in $\{K, \Box,\neg\}$.
\item
If the symbol in the current position is a closing bracket $)$
then we go to the left step by step
until we have found the corresponding opening bracket $($.
This can be done by using a binary counter that is increased
by $1$ for each closing bracket and decreased by $1$ for each
opening bracket. One stops when this counter is back to its
initial value $0$.
Once we have found the corresponding opening bracket
we move to the left of it
and count the number of occurrences of $K$ and $\Box$ until
we read a symbol not in $\{K, \Box,\neg\}$.
\end{itemize}
By using binary counters all this can be done in logarithmic space.
This ends the description of the computation in logarithmic
space of the described reduction function $\widetilde{T}$.

\section{Reduction of $\sxs$ to $\kxs$ }
\label{section:S4xS5-K4xS5}

The $\EXPSPACE$-hardness of $\kxs$ follows from the $\EXPSPACE$-hardness of $\sxs$ and from the following result.

\begin{theorem}
\label{theorem:kxs-hard}
	The satisfiability problem of the bimodal logic $\sxs$ can be reduced in logarithmic space to the satisfiability problem of the bimodal logic $\kxs$.
\end{theorem}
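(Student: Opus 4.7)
The plan is to follow the same three-step template as Section~\ref{section:SSL-S4xS5}: define a translation $R:\mathcal{L}\to\mathcal{L}$, prove that $\varphi$ is $\sxs$-satisfiable iff $R(\varphi)$ is $\kxs$-satisfiable, and check that $R$ is logspace computable. The new obstacle compared to the previous section is that in a $\kxs$-commutator model the relation $\stackrel{\Diamond}{\to}$ need not be reflexive, so $\Box\chi$ evaluated at a point $x$ does not constrain $\chi$ at $x$ itself. The standard semantic fix of recursively replacing $\Box\chi$ by $\chi\wedge\Box\chi$ is correct but doubles the formula at every nested $\Box$, giving an exponential output and ruling out a logspace reduction. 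I would therefore use a Tseitin-style encoding: introduce a fresh propositional variable $q_\psi$ for every subformula $\psi$ of $\varphi$ (named, for instance, by the binary encoding of the starting position of $\psi$ in $\varphi$) and assert its defining axiom globally.

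The defining axiom $\mathrm{def}(\psi)$ depends on the outermost connective of $\psi$: the clauses are $q_A\leftrightarrow A$, $q_{\neg\chi}\leftrightarrow\neg q_\chi$, $q_{\chi_1\wedge\chi_2}\leftrightarrow q_{\chi_1}\wedge q_{\chi_2}$, $q_{K\chi}\leftrightarrow Kq_\chi$, and the key one $q_{\Box\chi}\leftrightarrow q_\chi\wedge\Box q_\chi$. Writing $\mathit{defs}$ for the conjunction of all these axioms, I take
\[ R(\varphi) \;:=\; q_\varphi \;\wedge\; K(\mathit{defs}) \;\wedge\; K\Box K(\mathit{defs}). \]
Because $\mathrm{def}(\Box\chi)$ uses $\Box q_\chi$ rather than $\Box q_{\Box\chi}$, every $q_\psi$ is uniquely determined by the atomic valuation and the frame, sidestepping the greatest-fixed-point ambiguity that the biconditional would otherwise permit. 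For the implication $\sxs$-satisfiable $\Rightarrow$ $\kxs$-satisfiable, take an $\sxs$-commutator model $M$ of $\varphi$ (which is automatically a $\kxs$-commutator model) and extend the valuation by setting $q_\psi$ to hold at $x$ iff $M,x\models\psi$; reflexivity of $\stackrel{\Diamond}{\to}$ then makes $q_\chi\wedge\Box q_\chi$ coincide with $\Box q_\chi$, so every def holds throughout. For the converse, given a $\kxs$-commutator model $M$ of $R(\varphi)$ at some $r_0$, form $M'$ by adding the identity to $\stackrel{\Diamond}{\to}$; this keeps transitivity, and each commutativity property extends to the new reflexive edges by a short case split. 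Using right commutativity together with transitivity of $\stackrel{\Diamond}{\to}$ to rewrite any reachability path into $\stackrel{L}{\to}\stackrel{\Diamond}{\to}\stackrel{L}{\to}$-form (or pure $\stackrel{L}{\to}$-form), one sees that $K(\mathit{defs})$ and $K\Box K(\mathit{defs})$ together force $\mathit{defs}$ at every world reachable from $r_0$. A straightforward induction on $\psi$ then yields $M,x\models q_\psi$ iff $M',x\models\psi$ at every such $x$, and specialising to $\psi=\varphi$ and $x=r_0$ gives $M',r_0\models\varphi$.

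Logspace computability is routine: the output has size $O(|\varphi|\log|\varphi|)$, each $q_\psi$ can be named by the binary encoding of the starting position of $\psi$ in $\varphi$, and a few binary counters (for the current position and the bracket depth) suffice to walk through $\varphi$ and emit each $\mathrm{def}(\psi)$. The main obstacle is the $\Box$-case of the correctness induction in the $\kxs\Rightarrow\sxs$ direction: one must check that replacing $\stackrel{\Diamond}{\to}$ by $\stackrel{\Diamond}{\to}\cup\mathrm{id}$ adds to the meaning of $\Box\chi$ at $x$ exactly the extra condition ``$\chi$ holds at $x$'', which is precisely the extra conjunct $q_\chi$ that $\mathrm{def}(\Box\chi)$ places on the right. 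A minor but essential bookkeeping point is that the defs must be asserted both via $K$ (to cover $r_0$ and its $\stackrel{L}{\to}$-cloud, which in a $\kxs$-commutator model need not be a $\stackrel{\Diamond}{\to}$-successor of anything) and via $K\Box K$ (to cover worlds that are only reachable via some $\stackrel{\Diamond}{\to}$-step).
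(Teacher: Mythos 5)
Your proposal is correct, but it takes a genuinely different (and more elaborate) route than the paper. The paper does \emph{not} recursively rewrite $\Box\chi$ into $\chi\wedge\Box\chi$ (the source of your blow-up worry); instead it simply conjoins $\varphi$ unchanged with one reflexivity instance per $\Box$-subformula,
\[
\widehat{T}(\varphi)\;:=\;\varphi\;\wedge\;\bigwedge_{\Box\psi\,\in\,\subf(\varphi)}
K\bigl((\Box\psi\to\psi)\wedge\Box(\Box\psi\to\psi)\bigr),
\]
whose length is only $O(|\varphi|^2)$ --- polynomial and trivially logspace-emittable. In the converse direction the paper then restricts to the reachable part, adds the identity to $\stackrel{\Diamond}{\to}$ exactly as you do, and shows $M',v\models\psi\iff M,v\models\psi$ by structural induction: the $\Box\chi$ case goes through because the conjoined instance $\Box\chi\to\chi$ at $v$ supplies precisely the extra fact needed after the reflexive edge is added. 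Your Tseitin encoding with fresh variables $q_\psi$ and $q_{\Box\chi}\leftrightarrow q_\chi\wedge\Box q_\chi$ achieves the same effect with a slightly smaller output ($O(|\varphi|\log|\varphi|)$), and your coverage argument via $K(\mathit{defs})\wedge K\Box K(\mathit{defs})$, the normal form for reachability under the two commutativities, and the closure of the added identity edges under left/right commutativity and transitivity are all sound (indeed $K\Box(\mathit{defs})$ already suffices in place of $K\Box K(\mathit{defs})$). So both arguments are correct; what your machinery buys is a marginal size improvement, at the cost of several auxiliary variables, defining axioms, and a two-model induction that the paper's direct conjunction avoids. The takeaway is that the exponential blow-up you set out to defuse does not arise in the first place for the quadratic ``axiom-instance'' translation, so the lighter construction is available.
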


We start with the definition of the reduction function $\widehat{T}$ translating 
bimodal formulas in the language $\mathcal{L}$ to bimodal formulas in $\mathcal{L}$
such that for all $\varphi \in \mathcal{L}$:
\[\varphi \text{ is $\sxs$-satisfiable } \iff \widehat{T}(\varphi) \text{ is $\kxs$-satisfiable.}\]

The problem that we face is that in general $\kxs$-models are not reflexive.
To handle this we add to the original formula $\varphi$ a formula that implies that all those instances of the reflexivity axiom scheme $\Box\psi\to\psi$ where $\Box\psi$ is a subformula of $\varphi$ must hold true in all reachable points.

\begin{definition}[Translation $\widehat{T}$]
	\label{def: reduction S4S5-K4S5}
	For $\varphi\in\mathcal{L}$ we define
	a function $\widehat{T}:\mathcal{L}\to\mathcal{L}$ by
	\[\widehat{T}(\varphi):= \varphi\wedge\bigwedge_{\Box\psi\;\in\,\subf(\varphi)}
	K((\Box\psi \to \psi) \wedge \Box(\Box\psi\to\psi))
	.\]
\end{definition}

We claim that the function $\widehat{T}$ is indeed
a reduction function from the satisfiability problem
of $\sxs$ to the satisfiability problem of $\kxs$.

\begin{proposition}
	\label{prop:red-S4xS5-K4xS5}
	The function $\widehat{T}:\mathcal{L} \to \mathcal{L}$
	satisfies, for all $\varphi \in \mathcal{L}$,
	\[\varphi \text{ is $\sxs$-satisfiable }
	\iff \widehat{T}(\varphi) \text{ is $\kxs$-satisfiable}. \]
\end{proposition}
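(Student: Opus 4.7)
The plan is to prove the two directions of the equivalence separately.

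For the direction from left to right, I would argue that every $\sxs$-commutator model is a $\kxs$-commutator model, because $\sxs$-preorders are a special case of $\kxs$-transitive relations. Moreover, in any $\sxs$-model the reflexivity axiom $\Box\psi\to\psi$ is valid at every point, so each conjunct $K((\Box\psi\to\psi)\wedge\Box(\Box\psi\to\psi))$ in $\widehat{T}(\varphi)$ is automatically true. Hence, if $M,w\models\varphi$ in an $\sxs$-commutator model, then the same model and point witness the $\kxs$-satisfiability of $\widehat{T}(\varphi)$.

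For the direction from right to left, I would start with a $\kxs$-commutator model $M'=(W',\stackrel{\Diamond'}{\to},\stackrel{L'}{\to},\sigma')$ and a point $w\in W'$ with $M',w\models\widehat{T}(\varphi)$, and construct an $\sxs$-commutator model $M=(W,\stackrel{\Diamond}{\to},\stackrel{L}{\to},\sigma)$ as follows. Let $W$ be the set of points reachable from $w$ by arbitrary sequences of $\stackrel{\Diamond'}{\to}$- and $\stackrel{L'}{\to}$-steps; take $\stackrel{L}{\to}$ to be the restriction of $\stackrel{L'}{\to}$ to $W$; take $\stackrel{\Diamond}{\to}$ to be the reflexive closure of the restriction of $\stackrel{\Diamond'}{\to}$ to $W$; and set $\sigma(A):=\sigma'(A)\cap W$. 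Checking that $M$ is an $\sxs$-commutator model is straightforward: the reflexive closure of a transitive relation is a preorder, and both left and right commutativity are preserved by the reflexive closure (whenever the newly added reflexive step is used, one can take the matching point to be the original one).

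The heart of the proof is an induction on the structure of $\psi\in\subf(\varphi)$ showing that, for all $v\in W$,
\[ M',v\models\psi \iff M,v\models\psi. \]
The cases for atoms, $\neg$, $\wedge$, and $K$ are routine since the propositional variables and the $L$-relation are unchanged on $W$. The one delicate case is $\psi=\Box\chi$. The direction $M,v\models\Box\chi\Rightarrow M',v\models\Box\chi$ is immediate because $\stackrel{\Diamond'}{\to}\;\subseteq\;\stackrel{\Diamond}{\to}$ and we apply the induction hypothesis. The converse direction needs the added conjunct: from $M',v\models\Box\chi$ we obtain $\chi$ at every $\stackrel{\Diamond'}{\to}$-successor of $v$ (inductively giving the same in $M$), and for the new reflexive loop at $v$ we need $M,v\models\chi$, which we obtain from $M',v\models\Box\chi\to\chi$ via the induction hypothesis. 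This is where the main obstacle lies, and it is handled by showing that $M',v\models\Box\chi\to\chi$ holds at every reachable point $v$. Using right commutativity of $M'$ repeatedly (and transitivity of both relations), any reachable point $v$ is either in the $L'$-cloud of $w$, in which case $M',w\models K(\Box\chi\to\chi)$ yields the desired implication, or there exists some $u$ with $w\stackrel{L'}{\to}u\stackrel{\Diamond'}{\to}v$, in which case $M',w\models K\Box(\Box\chi\to\chi)$ yields $M',u\models\Box(\Box\chi\to\chi)$ and hence $M',v\models\Box\chi\to\chi$. Applying this with $\chi$ such that $\Box\chi\in\subf(\varphi)$ closes the induction and gives $M,w\models\varphi$. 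Logspace computability of $\widehat{T}$ is then a routine syntactic exercise, enumerating the subformulas of $\varphi$ of the form $\Box\psi$ with standard bracket-matching counters, exactly in the style already used in Section~\ref{section:SSL-S4xS5}.
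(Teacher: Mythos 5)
Your argument matches the paper's own proof almost exactly: the same reflexive-closure construction of a rooted submodel $W$, the same structural induction on $\psi\in\subf(\varphi)$, and the same key observation that the added conjuncts force $M',v\models\Box\chi\to\chi$ at every $v\in W$, which is precisely what supplies $\chi$ at the new reflexive loop in the $\Box$-case. One small misnomer: to normalize an arbitrary $\stackrel{\Diamond'}{\to}/\stackrel{L'}{\to}$-chain from $w$ into the canonical form $w\stackrel{L'}{\to}u\stackrel{\Diamond'}{\to}v$ you need \emph{left} commutativity (which converts a $\stackrel{\Diamond'}{\to}\stackrel{L'}{\to}$ subsequence into $\stackrel{L'}{\to}\stackrel{\Diamond'}{\to}$), not right commutativity as you wrote; the paper sidesteps the normalization by defining $W$ directly in that two-step form, but the two definitions of $W$ coincide given left commutativity and transitivity, so your argument is otherwise sound.
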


\begin{proof}
	Let $\varphi$ be $\sxs$--satisfiable.
	There are an $\sxs$-commutator model $M=(W,\stackrel{\Diamond}{\to}, \stackrel{L}{\to},\sigma)$
	and a point $w\in W$ such that $M,w\models\varphi$. Since the relation $\stackrel{\Diamond}{\to}$ in $M$ is reflexive we have for all $w'\in W$ that $M,w'\models\bigwedge_{\Box\psi\;\in\,\subf(\varphi)} (\Box\psi\to\psi)$.
	Hence, 
	$M,w\models\bigwedge_{\Box\psi\;\in\,\subf(\varphi)}K((\Box\psi \to \psi) \wedge \Box(\Box\psi\to\psi))$,
	in other words, $M,w\models\widehat{T}(\varphi)$. Furthermore, $M$ satisfies the conditions for $\kxs$-commutator models because the relation $\stackrel{\Diamond}{\to}$ in $M$ is transitive as well. Hence, $\widehat{T}(\varphi)$ is $\kxs$-satisfiable.
	
	For the other direction of the equivalence let us assume that $\widehat{T}(\varphi)$ is $\kxs$-satisfiable.
This implies that there exist a $\kxs$--commutator model
$M'=(W',\stackrel{\Diamond'}{\to}, \stackrel{L'}{\to},\sigma')$
and a point $w\in W'$ such that $M',w\models\widehat{T}(\varphi)$, that is
\[ M',w\models \varphi\wedge\bigwedge_{\Box\psi\;\in\,\subf(\varphi)}K((\Box\psi \to \psi) \wedge \Box(\Box\psi\to\psi)). \]
We construct an $\sxs$--commutator model
$M:=(W,\stackrel{\Diamond}{\to}, \stackrel{L}{\to},\sigma)$ for $\varphi$ as follows.
We construct $M$ as a rooted model, where all points are reachable from the point $w$:
\[W := \{v\in W' \mid 
           w \stackrel{L'}{\to} v \text{ or } (\exists w')\,(w \stackrel{L'}{\to} w' \text{ and } w' \stackrel{\Diamond'}{\to} v)\} .\] 
We define the relations $\stackrel{L}{\to}$ and $\stackrel{\Diamond}{\to}$ on $W$ simply by
\begin{eqnarray*}
 \stackrel{L}{\to} &:= & \stackrel{L'}{\to} \cap \, (W\times W) , \\
 \stackrel{\Diamond}{\to} &:= & (\stackrel{\Diamond'}{\to} \cap \, (W\times W)) \cup\{(v,v) \mid v\in W\}.
\end{eqnarray*} 
Finally,
\[\sigma(A) := \sigma'(A) \cap W , \]
for all $A \in AT$.\\
It is clear that $w\in W$, and it is straightforward to see that $M$ is an $\sxs$-commutator model.
We claim that $M,w\models\varphi$. By induction on the structure of $\psi$ we show the stronger assertion:	
	\[M,v\models\psi\iff M',v\models \psi, \]
	for all $v\in W$ and for all $\psi \in \subf(\varphi)$.
	We distinguish the following cases:
	\begin{itemize}
	\item 
	Case $\psi=A\in AT$.
	In this case the claim follows directly from the definition of $\sigma$.
	\item 
	Case $\psi=\neg\chi$ or $\psi = (\chi_1\wedge\chi_2)$. In both cases the claim follows directly from the induction hypothesis.
	\item 
	Case $\psi=\Box\chi$. 
	Let us first assume $M',v\models \Box\chi$. We wish to show $M,v \models \Box\chi$.
	Consider an arbitrary $v'\in W$ with $v\stackrel{\Diamond}{\to}v'$. It is sufficient to show $M,v' \models\chi$.
	Note that the definition of $\stackrel{\Diamond}{\to}$ implies that $v \stackrel{\Diamond'}{\to} v'$ or $v=v'$.
	In the first case, $v \stackrel{\Diamond'}{\to} v'$, the assumption $M',v\models \Box\chi$ directly implies
	$M',v' \models \chi$. By induction hypothesis we obtain $M,v' \models\chi$.
	In the second case, $v=v'$, we use the fact that
   $M',w \models K ( (\Box \chi \to \chi) \wedge \Box ( \Box \chi \to \chi))$ and $v \in W$ imply
   $M',v \models (\Box \chi \to \chi)$. Together with $M',v\models \Box\chi$ this implies $M',v \models \chi$,
   hence, $M',v' \models\chi$.
	By induction hypothesis we obtain $M,v' \models\chi$ as well.
	
	For the other direction let us assume that $M,v\models \Box\chi$.
We wish to show $M',v \models \Box\chi$.
Consider an arbitrary $v'\in W'$ with $v\stackrel{\Diamond'}{\to}v'$.
It is sufficient to show $M',v' \models \chi$.
But from $v\in W$ and $v\stackrel{\Diamond'}{\to} v'$
we conclude $v' \in W$ and $v\stackrel{\Diamond}{\to}v'$.
Hence, $M,v\models \Box\chi$ implies $M,v' \models \chi$.
By induction hypothesis we obtain $M',v' \models \chi$.
	\item
	Case $\psi=K\chi$. 
	This case can be treated similarly. 
	For the direction ``$\Rightarrow$'' one needs to use left commutativity. The details are left to the reader.
%
\qedhere
	\end{itemize}
	\end{proof}

\begin{proof}[{Proof of Theorem~\ref{theorem:kxs-hard}}]
Let $\Sigma=:\{(, ), \neg, \Box, K, \wedge, x , 0, 1\}$ be the alphabet over which
bimodal formulas are defined.
In order to prove the assertion we have to show that there is a 
logspace computable
function $\widetilde{T}:\Sigma^*\to \Sigma^*$
such that, for all $\varphi \in \Sigma^*$,
\begin{eqnarray*}
 \lefteqn{(\varphi \in \mathcal{L} \text{ and }
   \varphi \text{ is $\sxs$-satisfiable})} && \\
   &\iff& 
   (\widetilde{T}(\varphi) \in \mathcal{L} \text{ and } 
   \widetilde{T}(\varphi) \text{ is $\kxs$-satisfiable}) . 
\end{eqnarray*}
We define such a function $\widetilde{T}$ by formulating
an algorithm for computing it that works in logarithmic space.

So, let $\varphi \in \Sigma^*$ be the input string.
According to Corollary~\ref{cor:bimodal-formulas-in-logspace}
we can first check in logarithmic space whether $\varphi$
is a bimodal formula or not, that is, whether
$\varphi$ is an element of $\mathcal{L}$ or not.
If not then the algorithm outputs $\widetilde{T}(\varphi):=\wedge$
(which is certainly not a bimodal formula).
If, on the other hand, $\varphi$ is a bimodal formula
then we wish to compute and print $\widetilde{T}(\varphi):=\widehat{T}(\varphi)$
as defined in Definition~\ref{def: reduction S4S5-K4S5}.
The algorithm that prints $\widehat{T}(\varphi)$ works as follows.
\begin{enumerate}
	\item It prints $\varphi$.
	\item For each $\Box\psi\in\subf(\varphi)$ it prints the string 
	$$\wedge K((\Box\psi\to\psi)\wedge\Box(\Box\psi\to\psi)).$$
\end{enumerate}
Of course, for the second part, for every occurrence of the symbol $\Box$ in $\varphi$ one has to determine
the uniquely determined formula $\psi$ beginning in $\varphi$ immediately to the right of this occurrence of $\Box$.
Then one has to print the string above.
All this can be done using several binary counters.
First one sets a binary counter called $\emph{StartOfPsi}$
to the value of the position to the right of the current occurrence of $\Box$.
In order to compute the correct value of a binary counter $\emph{EndOfPsi}$ that is supposed to be the position of the
rightmost symbol in $\psi$ one proceeds as follows. 
Starting from the position $\emph{StartOfPsi}$ one reads the given string from left to right.
As long as the read symbol is $\neg$ or $\Box$ or $K$ one continues reading.
At some stage one will either read an $x$ or an opening bracket $($.
If one reads an $x$ then $\emph{EndOfPsi}$ is set to the position of the rightmost bit, that is, the rightmost $0$ or $1$,
such that between this symbol and the just read occurrence of $x$ there are only bits.
If one reads an opening bracket then $\emph{EndOfPsi}$ is set
to the position of the first closing bracket to the right of this opening bracket
such that the string between these two brackets contains as many closing as opening brackets.
Note that all this can be done in logarithmic space.

Finally, using the two binary counters $\emph{StartOfPsi}$ and $\emph{EndOfPsi}$
containing the positions of the first and the last symbol of $\psi$ it is clear that one can print the string
``$\wedge K((\Box\psi\to\psi)\wedge\Box(\Box\psi\to\psi))$'', again using additional binary counters that use only 
logarithmic space.
This ends the description of the computation in logarithmic space of the described reduction function $\widetilde{T}$.
\end{proof}

\bibliographystyle{abbrv}
\bibliography{dis}

\newpage
\appendix

\section[Reduction of ATMs Working in Exponential Time to $\sxs$]
{Reduction of Alternating Turing Machines Working in Exponential Time to $\sxs$}
\label{section:ATMs-S4S5}
In Section~\ref{section:ATMs-SSL} we have shown that any language recognized
by an Alternating Turing Machine working in exponential time can be reduced in logarithmic space to the satisfiability problem of $\ssl$. This shows that the satisfiability problem of $\ssl$ is $\EXPSPACE$-hard under logspace reduction. In Section~\ref{section:SSL-S4xS5} we have shown that the satisfiability problem of $\ssl$ can be reduced in logarithmic space to the satisfiability problem of $\sxs$. This proves the following theorem.

\begin{theorem}
\label{theorem:S4S5-EXPSPACE-hard}
The satisfiability problem of $\sxs$ is $\EXPSPACE$-hard under logarithmic space reduction.
\end{theorem}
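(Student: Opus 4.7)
The plan is essentially a one-line composition argument. Theorem~\ref{theorem:SSL-EXPSPACE-hard} establishes that the satisfiability problem of $\ssl$ is $\EXPSPACE$-hard under logarithmic space reduction, and the main result of Section~\ref{section:SSL-S4xS5} establishes that the satisfiability problem of $\ssl$ reduces in logarithmic space to the satisfiability problem of $\sxs$. Since logarithmic space reductions are closed under composition, the composition yields a logarithmic space reduction of any language in $\EXPSPACE$ to the satisfiability problem of $\sxs$. Hence $\sxs$-satisfiability is $\EXPSPACE$-hard under logspace reduction. No further work is needed for the theorem itself.

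The motivation for the appendix, however, is to give a \emph{direct} logspace reduction from an arbitrary language in $\mathrm{AEXPTIME}$ to $\sxs$-satisfiability, parallel to the direct reduction to $\ssl$ in Section~\ref{section:ATMs-SSL}. I would mirror the construction of $f_\ssl(w)$: define a formula $f_\sxs(w)$ of the form $K\Box\,\mathit{uniqueness} \wedge \mathit{start} \wedge K\Box\,\mathit{time\_after\_previous\_visit} \wedge K\Box\,\mathit{get\_the\_right\_symbol} \wedge K\Box\,\mathit{computation} \wedge K\Box\,\mathit{no\_reject}$, with essentially the same vectors of shared variables $\underline{\alpha}^{\mathrm{time}}, \underline{\alpha}^{\mathrm{pos}}, \underline{\alpha}^{\mathrm{state}}, \underline{\alpha}^{\mathrm{written}}, \underline{\alpha}^{\mathrm{read}}$ and auxiliary vectors $\underline{X}^{\mathrm{time}}, \underline{X}^{\mathrm{pos}}, \underline{X}^{\mathrm{read}}, \underline{X}^{\text{time-apv}}$ encoding the data of each computation node in a cloud. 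A binary counter construction analogous to Subsection~\ref{subsection:counter} but tailored to $\sxs$-models would again be used.

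Two features of $\sxs$-commutator models distinguish the direct proof from the $\ssl$ reduction. On the easier side, $\sxs$-commutator models satisfy full right commutativity (not merely the cross property), which makes the construction of a commutator model for $w\in L$ more uniform, because witness points in successor clouds can be obtained without introducing $U$- and $S$-style auxiliary points in the same intricate way. On the harder side, propositional variables are no longer persistent in $\sxs$, so the vectors $\underline{X}^{\mathrm{time}}$, $\underline{X}^{\mathrm{pos}}$, $\underline{X}^{\mathrm{read}}$, $\underline{X}^{\text{time-apv}}$ whose values need to be retrieved via the cross property cannot be trusted to retain their values along $\stackrel{\Diamond}{\to}$-steps. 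The remedy, as in Section~\ref{section:SSL-S4xS5}, is to add an explicit persistence axiom: introduce a special propositional variable $\main$ that is true exactly at the information points and add a conjunct analogous to $persistent_{main}$ forcing each of the variables $X^{\mathrm{time}}_k, X^{\mathrm{pos}}_k, X^{\mathrm{read}}_\gamma, X^{\text{time-apv}}_k$ (and of course $B$) to have a constant truth value at $\main$-points inside every cloud reachable by a $\stackrel{L}{\to}$-step. The definition of shared variables $\alpha^{\mathit{string}}_k := L(A^{\mathit{string}}_k \wedge \Box L B)$ would need to be revisited because $\Box L B$ would no longer force $B$ through a reflexive step in the same way; one would guard the formula with $\main$ as well.

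The main obstacle is precisely the interplay between non-persistence of propositional variables and the information-retrieval step that underlies $\mathit{get\_the\_right\_symbol}$. In the $\ssl$ proof the value written into a cell during an earlier visit is recovered by following a chain of clouds via right commutativity applied repeatedly (Lemma~\ref{lemma:inductionSSL}), using persistence of $\underline{X}^{\text{time-apv}}$ to carry its value from a point in an earlier cloud to the current point $y$. Without persistence, one must instead derive the required equalities locally from the persistence axiom applied at a suitable $\main$-point, which requires some care in the morphism-induction step. Once this is handled, the correctness proof and logspace computability of $f_\sxs$ go through as in Subsections~\ref{subsection:reduction-ATMs-SSL-complexity}--\ref{subsection:rightleft-ATMs-SSL} with only cosmetic changes, yielding Theorem~\ref{theorem:S4S5-EXPSPACE-hard} directly.
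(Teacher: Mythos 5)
Your one-line composition argument is correct and is exactly what the paper does: the theorem follows immediately from Theorem~\ref{theorem:SSL-EXPSPACE-hard} and the logspace reduction of $\ssl$-satisfiability to $\sxs$-satisfiability from Section~\ref{section:SSL-S4xS5}, since logspace reductions compose. Nothing more is needed for the theorem itself.

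Your sketch of the \emph{direct} reduction, however, diverges from the paper's appendix in two substantive ways. First, you overcomplicate the shared-variable mechanism. Because $\sxs$ has no built-in persistence constraint, there is no need for the $U$- and $S$-style auxiliary points or for the $\Box L B$ guard at all; the paper simply defines $\alpha_i := L A_i$. Persistence of the $\underline{X}$-vectors is then imposed directly by a conjunct $\mathrm{persistent}(\underline{X}) := \bigwedge_h K(\Box X_h \vee \Box\neg X_h)$, without any $\main$-style marker or any need to distinguish information points from auxiliary points. Your proposed re-use of the $\main$ machinery from Section~\ref{section:SSL-S4xS5} and a $\main$-guarded $\Box L B$ would add complexity that does no work here. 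Second, and more importantly, you identify non-persistence as ``the main obstacle,'' but that is the easy part; the genuinely new difficulty in $\sxs$ is the full right commutativity of product models. In a product model, every cloud is literally $\{v\}\times W_2$ for the same second component $W_2$, so a cloud corresponding to one node on one computation path also contains points whose $\underline{X}^{\text{time-apv}}$ and $\underline{X}^{\mathrm{pos}}$ values come from entirely different branches of the accepting tree (possibly with the same time stamp). The paper's $\mathit{time\_after\_previous\_visit}$ / $\mathit{get\_the\_right\_symbol}$ mechanism from Section~\ref{section:ATMs-SSL} would therefore read off information from the wrong branch. To isolate the correct ``reading point,'' the paper introduces a fresh variable $B^{\mathrm{active}}$ that is neither shared nor persistent, together with the subformulas $\mathit{becoming\_inactive}$ and $\mathit{staying\_inactive}$, and uses the irreversibility of deactivation along $\stackrel{\Diamond}{\to}$-chains to pin down the time of the previous visit. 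This mechanism has no analogue in the $\ssl$ reduction and is the crux of the direct proof; your sketch does not address it.
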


In this appendix we wish to given an alternative, ``direct'' proof of this theorem by showing directly that any language $L$ recognized by an Alternating Turing Machine working in exponential time can be reduced in logarithmic space to the satisfiability problem of $\sxs$.
The proof is quite similar to the reduction of Alternating Turing Machines working in exponential time to the satisfiability problem of $\ssl$ presented in Section~\ref{section:ATMs-SSL}.
But, there are also some important differences between the two reductions. On the one hand, we are going to use certain ``shared variables'' as well, but the mechanism of shared variables in $\sxs$ is much easier than in $\ssl$. On the other hand, the fact that in $\sxs$ the left commutativity property and the right commutativity property hold true (while in $\ssl$ only the left commutativity property has to hold true) causes problems that were not present in our treatment of $\ssl$ in Section~\ref{section:ATMs-SSL}. Similarly as in that section, for $\sxs$ we will model nodes in an accepting tree of an Alternating Turing Machine by clouds in an $\sxs$-product model. But the commutativity properties have the consequence that for every point in every cloud there exists a copy in every other cloud. The result is that in a cloud modeling a certain node on some path in an accepting tree there are also points that come from other nodes on other computation paths, perhaps even with the same time stamp. This makes the isolation of the correct points carrying the required information (in particular the information about the symbol to be read at a certain time on some computation path) more difficult. For details see Subsection~\ref{subsection:definition-ATMs-S4XS5} where the reduction function is defined and explained.

In the first of the following five subsections we present an implementation of a binary counter in $\sxs$, similar to the implementation of a binary counter in $\ssl$ in Subsection~\ref{subsection:counter}.
In the second subsection we give an outline of the definition of the reduction function and then define the reduction function $f_\sxs$ formally.
In the third subsection we show that the reduction function $f_\sxs$ can be computed in logarithmic space.
The final two subsections are devoted to the correctness proof of the reduction.
First we show that in the case $w\in L$ the formula
$f_\sxs(w)$ is $\sxs$-satisfiable by explicitly 
constructing an $\sxs$-product model for $f_\sxs(w)$.
In the last section we show that if $f_\sxs(w)$ is 
$\sxs$-satisfiable then $w$ is an element of $L$.

\subsection{Binary Counters in $\sxs$}
\label{subsection:counter-S4XS5}

Fix some natural number $n\geq 1$. We wish to implement in $\sxs$
a binary $n$-bit counter that counts from $0$ to $2^n-1$.
That means, we wish to construct an $\sxs$-satisfiable formula with the property
that any model of it contains a sequence of pairwise distinct points
$p_0,\ldots,p_{2^n-1}$ such that, for each $i\in\{0,\ldots,2^n-1\}$,
at the point $p_i$ the number $i$ is stored in binary form in a certain way.

Additionally, we need to copy and transmit various bits of information, 
both in the $\stackrel{\Diamond}{\to}$-direction 
as well as in the $\stackrel{L}{\to}$-direction 
This will be done by two kinds of formulas.
\begin{itemize}
\item
On the one hand, we need formulas that have the same truth value in the vertical ($\stackrel{\Diamond}{\to}$) direction but can change their truth values in the horizontal ($\stackrel{L}{\to}$) direction. In the logic $\sxs$ we can force certain propositional variables to be persistent by a suitable formula.
\item
On the other hand,  we need formulas that have the same truth value in the horizontal ($\stackrel{L}{\to}$) direction but can change their truth values in the vertical ($\stackrel{\Diamond}{\to}$) direction. Such formulas will be called {\em shared variables}.
In the logic $\sxs$ they can be defined as follows. For $i\in\IN$ let $A_i$ be special propositional variables.
The {\em shared variables $\alpha_i$} are defined as
\[\alpha_i:=LA_i . \]
\end{itemize}
Note that $\neg\alpha_i \equiv K\neg A_i$.

In the following we use the abbreviations introduced above, in Table~\ref{table:abbrev1}, and in Table~\ref{table:abbrev3}.
\begin{table}
		\caption{Some abbreviations for logical formulas,
			where $\underline{F}=(F_{l-1},\ldots,F_0)$ and $\underline{G}=(G_{l-1},\ldots,G_0)$
			are vectors of formulas.
			As usual, an empty conjunction like $\bigwedge_{h=0}^{-1} F_h$
			can be replaced by any propositional formula that is true always.}
		\label{table:abbrev3}
		\renewcommand{\arraystretch}{1.5}
\begin{center}
		\begin{tabular}{|l|l|l|}
			\hline
			For & the following &is an abbreviation\\[-3mm]
			&expression & of the following formula  \\[-1mm]
			\hline\hline
			$l\geq 1, k\geq -1$ & $\mathrm{persistent}(\underline{F},>k)$ & $\bigwedge_{h=k+1}^{l-1} K(\Box F_h \vee \Box \neg F_h)$ \\ \hline			
			$l\geq 1$ & $\mathrm{persistent}(\underline{F})$ & $\mathrm{persistent}(\underline{F},>-1)$ \\ \hline
		\end{tabular}
\end{center}
\end{table}

The idea of the construction is the same as in Subsection~\ref{subsection:counter}.
\begin{itemize}
	\item
	We store the counter values in a vector $\underline{\alpha}:=\alpha_{n-1},\ldots,\alpha_0$ of shared variables. To this end we embed the sequence $p_0,\ldots,p_{2^n-1}$ of points in a sequence of clouds $\mathit{C}_0,\ldots,\mathit{C}_{2^{n-1}}$ such that the cloud $\mathit{C}_i$ contains the point $p_i$ and such that the vector $\underline{\alpha}$ of shared variables satisfied at $p_i$ (and hence at all points in $\mathit{C}_i$) encodes the number $i$.
	\item
	Let $i\leq 2^n-1$ be the number encoded by $\underline{\alpha}$. If $\underline{\alpha}$ contains no $0$ then $i$ has reached its highest posible value, the number $2^n-1$. Otherwise let $k$ be the position of the rightmost $0$. We determine the position $k$ with the aid of the formula $\mathrm{rightmost\_zero}(\alpha,k)$. In order to increment the counter we have to keep all $\alpha_j$ at positions $j>k$ unchanged and to switch all $\alpha_j$ at positions $j\leq k$. We do this in two steps:
	\begin{enumerate}
		\item 
		First me make an $\stackrel{L}{\to}$-step from the point $p_i$ to a point $p'_i$ where we store the number $i+1$ in a vector $\underline{X}:=X_{n-1},\ldots,X_0$ of usual propositional variables by demanding that
		$$p'_i\models (\underline{X}=\underline{\alpha},>k) \wedge 
		\mathrm{rightmost\_one}(\underline{X},k).$$
		\item 
		Then we make a $\stackrel{\Diamond}{\to}$-step from the point $p'_i$ to a 
		point $p_{i+1}$ in the cloud $\mathit{C}_{i+1}$ and demand that 
		$$p_{i+1}\models (\underline{X}=\underline{\alpha}).$$
		Note that the value of $\underline{X}$ is copied from $p'_i$ to its $\stackrel{\Diamond}{\to}$-successor $p_{i+1}$ because by the formula
		$$\mathrm{persistent}(\underline{X})$$
		we force the vector $\underline{X}$ of propositional variables to be persistent.
	\end{enumerate}
 	Altogether we demand that for the number $k$
		$$p_i\models L\bigl((\underline{X}=\underline{\alpha},>k)
			\wedge \mathrm{rightmost\_one}(\underline{X},k) \wedge
			\Diamond(\underline{X}=\underline{\alpha}) \bigr).$$
	\item
	Additionally we need a formula to ensure that the starting value is $0$, that is we demand 
	$$p_0\models(\underline{\alpha}=\mathrm{bin}_n(0)).$$
\end{itemize}

We now define the complete counter formula, for $n>0$.
\begin{eqnarray*}
	\mathrm{\mathrm{counter}}_{\sxs,n}  &:= &
	\mathrm{persistent}(\underline{X}) \wedge 
	(\underline{\alpha}=\mathrm{bin}_n(0)) 
	\wedge K \,\Box  \Biggl(
	\bigwedge_{k=0}^{n-1}\Biggl( \mathrm{rightmost\_zero}(\underline{\alpha},k) \rightarrow	\\
	&& L \biggl( (\underline{X}=\underline{\alpha},>k)\wedge
	\mathrm{rightmost\_one}(\underline{X},k) \wedge 	
	\Diamond(\underline{X}=\underline{\alpha}) \biggr) \Biggr)\Biggr).
\end{eqnarray*}

\begin{proposition}
	\label{prop: counter-S4XS5}
	\begin{enumerate}
		\item 
		\sloppy{
		For all $n\in\IN\setminus \{0\}$, the formula $\mathrm{counter}_{\sxs,n}$ is $\sxs$-satisfiable.
		}
		\item 
		For all $n\in\IN\setminus \{0\}$, for every $\sxs$-commutator model of $\mathrm{counter}_{\sxs,n}$ and for every point $p_0$ in this model with $p_0 \models \mathrm{counter}_{\ssl,n}$
		there exist a sequence of $2^{n}-1$ points $p_1,p_2,\ldots,p_{2^n-1}$
		and a sequence of $2^{n}-1$ points $p_0',p_1',\ldots,p'_{2^n-2}$
		such that 
		\begin{itemize}			
			\item
			for $0\leq i\leq 2^n-1$, \quad 
			$p_i\models (\underline{\alpha}=\mathrm{bin}_n(i))$,
			\item
			for $0\leq i\leq 2^n-2$, \quad $p_i\stackrel{L}{\to}p'_{i}$ and $p'_i\stackrel{\Diamond}{\to}p_{i+1}$ and $p'_{i}\models (\underline{X}=\mathrm{bin}_n(i+1))$.
		\end{itemize}
	\end{enumerate}
\end{proposition}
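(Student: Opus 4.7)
For part (1), I will construct an $\sxs$-product model $M=(W,\stackrel{\Diamond}{\to},\stackrel{L}{\to},\sigma)$ directly as a direct product $W := X \times Y$, where $X := \{0,1,\ldots,2^n-1\}$ is equipped with its natural linear order as $\stackrel{\Diamond}{\to}$ and $Y := \{m_j : 0\le j\le 2^n-1\}\cup\{a_k : 0\le k\le n-1\}$ is equipped with the total equivalence as $\stackrel{L}{\to}$, so that the cloud of $(i,y)$ is $\{i\}\times Y$. The valuation is $\sigma(A_k) := \{(i,a_k) : k\in\mathrm{Ones}(i)\}$ and $\sigma(X_k) := \{(i,m_j) : k\in\mathrm{Ones}(j)\}$. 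Since $X_k$ depends only on the second coordinate, it is persistent along $\stackrel{\Diamond}{\to}$. For each cloud $i$ the shared variable $\alpha_k = LA_k$ is true throughout the cloud iff some point in it satisfies $A_k$, which happens iff $k\in\mathrm{Ones}(i)$; hence $\underline{\alpha}$ encodes $i$ throughout cloud $i$. Verification that $(0,m_0)\models\mathrm{counter}_{\sxs,n}$ reduces to a routine case analysis: whenever $\mathrm{rightmost\_zero}(\underline{\alpha},k)$ holds in cloud $i$, the point $(i,m_{i+1})$ of the same cloud witnesses the $L$-part (its $\underline{X}$-bits encode $i+1$, whose rightmost one is at position $k$ and whose bits above $k$ agree with those of $i$), and its $\stackrel{\Diamond}{\to}$-successor $(i+1,m_{i+1})$ witnesses the inner $\Diamond$-part.

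For part (2), I plan to proceed by induction on $m$ exactly as in Proposition~\ref{prop: counter}, simultaneously producing auxiliary points $t_i$ with $p_0\stackrel{L}{\to}t_i$ and $t_i\stackrel{\Diamond}{\to}p_i$. The base case is immediate from $p_0\models(\underline{\alpha}=\mathrm{bin}_n(0))$. For the step, with $k:=\min(\{0,\ldots,n-1\}\setminus\mathrm{Ones}(m))$ one has $p_m\models\mathrm{rightmost\_zero}(\underline{\alpha},k)$, and evaluating $p_0\models K\Box(\ldots)$ along the chain $p_0\stackrel{L}{\to}t_m\stackrel{\Diamond}{\to}p_m$ yields the inner $L(\ldots)$-formula at $p_m$, delivering $p'_m$ with $p_m\stackrel{L}{\to}p'_m$ satisfying the outer conjunction and $p_{m+1}$ with $p'_m\stackrel{\Diamond}{\to}p_{m+1}$ satisfying $(\underline{X}=\underline{\alpha})$. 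Since shared variables are constant on clouds and $p'_m\stackrel{L}{\to}p_m$, one has $p'_m\models(\underline{\alpha}=\mathrm{bin}_n(m))$, and combined with $(\underline{X}=\underline{\alpha},>k)\wedge\mathrm{rightmost\_one}(\underline{X},k)$ this forces $p'_m\models(\underline{X}=\mathrm{bin}_n(m+1))$. Left commutativity applied to $t_m\stackrel{\Diamond}{\to}p_m$ and $p_m\stackrel{L}{\to}p'_m$ produces $t_{m+1}$ with $t_m\stackrel{L}{\to}t_{m+1}$ and $t_{m+1}\stackrel{\Diamond}{\to}p'_m$; then $p_0\stackrel{L}{\to}t_{m+1}$ and $t_{m+1}\stackrel{\Diamond}{\to}p_{m+1}$ follow by transitivity of both relations.

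The main obstacle, and the only substantive departure from the $\ssl$ argument, is showing that $\underline{X}$ takes the same value at $p'_m$ as at its $\stackrel{\Diamond}{\to}$-successor $p_{m+1}$, since propositional variables in $\sxs$ are not automatically persistent. I plan to extract this from the explicit conjunct $\mathrm{persistent}(\underline{X})=\bigwedge_{h=0}^{n-1}K(\Box X_h\vee\Box\neg X_h)$ of $\mathrm{counter}_{\sxs,n}$: because $p_0\stackrel{L}{\to}t_{m+1}$, one has $t_{m+1}\models\Box X_h\vee\Box\neg X_h$ for each $h$, and by transitivity both $p'_m$ and $p_{m+1}$ are $\stackrel{\Diamond}{\to}$-successors of $t_{m+1}$, so they must agree on every $X_h$. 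Combining this with $p_{m+1}\models(\underline{X}=\underline{\alpha})$ then yields $p_{m+1}\models(\underline{\alpha}=\mathrm{bin}_n(m+1))$ and closes the induction.
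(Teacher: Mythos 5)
Your proof is correct and follows essentially the same strategy as the paper's: a product model for part (1) and an induction mirroring Proposition~\ref{prop: counter} for part (2), with the only novelty being the explicit use of $\mathrm{persistent}(\underline{X})$ to replace the automatic persistence available in $\ssl$. Two small remarks. For part~(1) your second factor $Y=\{m_j\}\cup\{a_k\}$ with the $A_k$'s concentrated on the $a_k$-points is a harmless but unnecessary import of the $\ssl$-style ``information point / auxiliary point'' split; in $\sxs$ there is no $\Box L B$ mechanism to support, so one can (as the paper does) take $W_2=\{0,\ldots,2^n-1\}$ and let $A_k$ be true \emph{throughout} the cloud of $i$ whenever $k\in\mathrm{Ones}(i)$, which is a bit lighter. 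For part~(2) your ordering is actually cleaner than the paper's: you apply left commutativity to obtain $t_{m+1}$ with $t_m\stackrel{L}{\to}t_{m+1}\stackrel{\Diamond}{\to}p'_m$ \emph{before} invoking $K(\Box X_h\vee\Box\neg X_h)$, and then read persistence off the chain $p_0\stackrel{L}{\to}t_{m+1}\stackrel{\Diamond}{\to}p'_m\stackrel{\Diamond}{\to}p_{m+1}$. The paper invokes persistence via the chain ``$p_0\stackrel{L}{\to}t_m\stackrel{\Diamond}{\to}p'_m\stackrel{\Diamond}{\to}p_{m+1}$'' and only afterwards constructs $t_{m+1}$; but the induction hypothesis gives $t_m\stackrel{\Diamond}{\to}p_m$, not $t_m\stackrel{\Diamond}{\to}p'_m$, so your ordering is the one that makes the step watertight.
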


\begin{proof}
	For the following let us fix some $n>0$.
	\begin{enumerate}
		\item
		\sloppy{
		We construct an $\sxs$-product model $M$ with a point $(0,0)$ in $M$ such that 
		$M, (0,0) \models \mathrm{counter}_{\sxs,n}$ as follows; see Figure \ref{figure:S4S5}.
		We define an S4-frame $(W_1,R_\Diamond)$ by
		}
		\[ W_1 := \{0,\ldots,2^n-1\} \quad
		\text{ and, for $i,i' \in W_1$, } i R_\Diamond i' :\iff i \leq i' . \]
		We define an S5-frame $(W_2,R_L)$ by
		\[ W_2 := \{0,\ldots,2^n-1\} \quad
		\text{ and } \quad R_L := W_2 \times W_2 . \]
		Then the product frame $(W, \stackrel{\Diamond}{\to}, \stackrel{L}{\to})$
		with $W:=W_1\times W_2$ and 
		with $\stackrel{\Diamond}{\to}$ and $\stackrel{L}{\to}$ defined as in
		\cite[Definition 2.2]{HK2019-1} is an $\sxs$-frame.
		We define the valuation $\sigma$ by
		\[\begin{array}{lll}
		\sigma(A_k) &:= & \{(i,j) ~:~ i,j \in \{0,\ldots,2^n-1\} \text{ and } k 
		\in \mathrm{Ones}(i) \} , \\
		\sigma(X_k) &:= & \{(i,j) ~:~ i,j \in \{0,\ldots,2^n-1\} \text{ and } k 
		\in \mathrm{Ones}(j)\} ,
		\end{array}\]
		for $k\in\{0,\ldots,n-1\}$.
		This implies for all $i,j\in \{0,\ldots,2^n-1\}$
		$$ (i,j) \models (\underline{\alpha}=\mathrm{bin}_n(i))
		\quad \text{and} \quad (i,j) \models (\underline{X}=\mathrm{bin}_n(j)). $$
		\begin{figure}
			\begin{center}
				\includegraphics[width=0.9\linewidth]{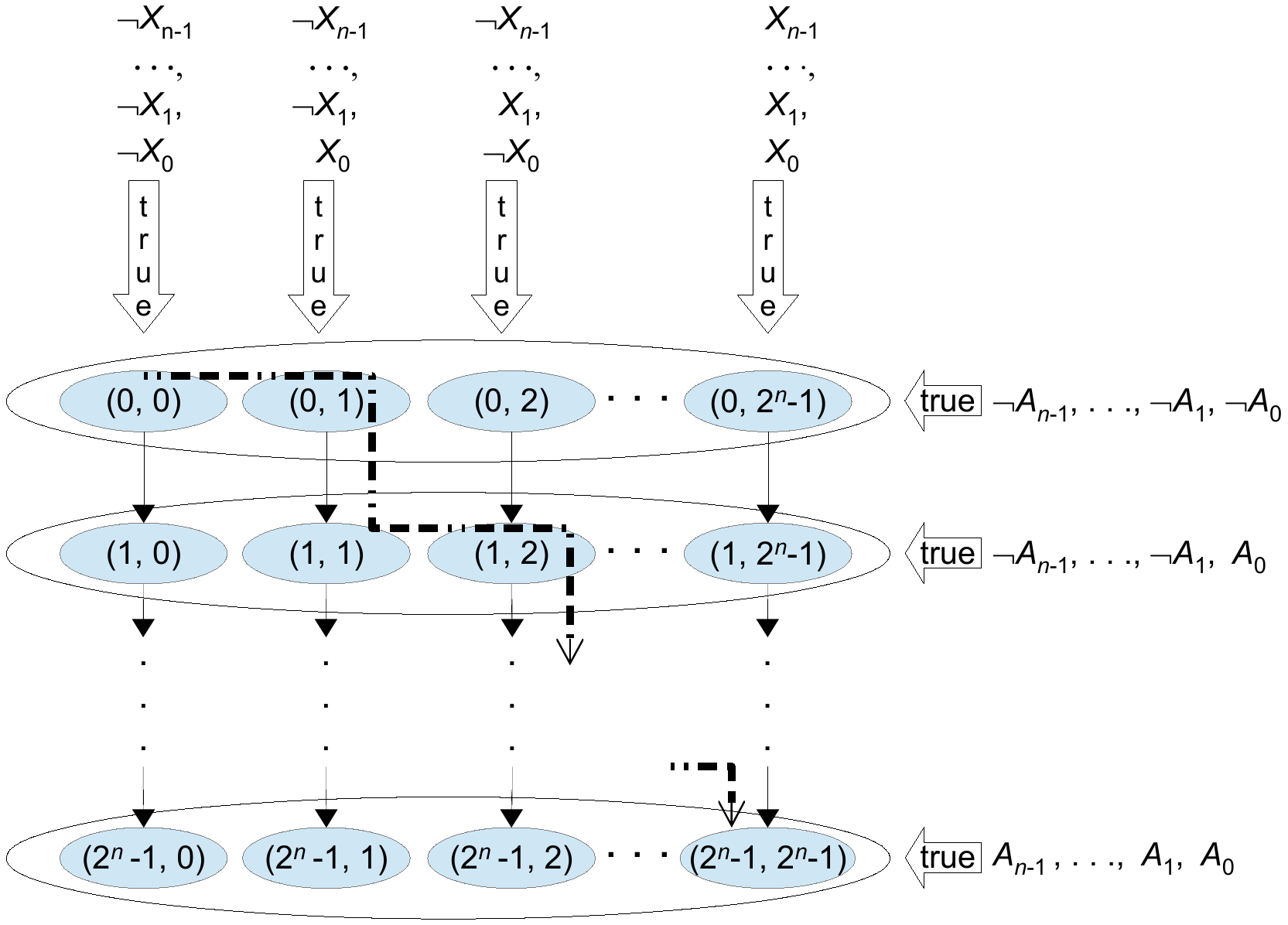}
			\end{center}	
			\caption{An $\sxs$-product model of the formula $\mathrm{counter}_{\sxs,n}$.}
		\label{figure:S4S5}
		\end{figure}	
		We claim
		$$(0,0) \models \mathrm{counter}_{\sxs,n}.$$ 
		Indeed, it is clear that the propositional variables $X_k$ are persistent, hence, we have
		$$(0,0) \models \mathrm{persistent}(\underline{X}).$$
		It is also clear that 
		$$(0,0) \models (\underline{\alpha}=\mathrm{bin}_n(0)).$$
		Let us assume that for some $(i,j) \in W$ and some $k\in\{0,\ldots,n-1\}$ we have
		\[ (i,j) \models \mathrm{rightmost\_zero}(\underline{\alpha},k) . \]
		It is sufficient to show that
		\[ (i,j) \models L \biggl( (\underline{X}=\underline{\alpha},>k)\wedge
		\mathrm{rightmost\_one}(\underline{X},k) \wedge 	
		\Diamond(\underline{X}=\underline{\alpha}) \biggr) . \]
		Indeed, $(i,j) \models \mathrm{rightmost\_zero}(\underline{\alpha},k)$ 
		implies $\{0,\ldots,n-1\}\setminus \mathrm{Ones}(i) \neq \emptyset$ and
		$k = \min (\{0,\ldots,n-1\}\setminus \mathrm{Ones}(i))$.
		Note that this implies $i<2^n-1$, hence,
		$(i,i+1)\in W$ and $(i+1,i+1)\in W$.
		In view of $(i,j)\stackrel{L}{\to}(i,i+1)\stackrel{\Diamond}{\to}(i+1,i+1)$ it is sufficient to show
	 	$$(i,i+1)\models(\underline{X}=\underline{\alpha},>k)\wedge
	 	\mathrm{rightmost\_one}(\underline{X},k)$$
	 	and 
	 	$$(i+1,i+1)\models (\underline{X}=\underline{\alpha}) .$$
	 	Both claims follow from the facts that $k=\min(\{0,\ldots,n-1\}\setminus\mathrm{Ones}(i))$ and that every point $(x,y)\in W$ satisfies the formulas $(\underline{\alpha}=\mathrm{bin}_n(x))$ and
	 	$(\underline{X}=\mathrm{bin}_n(y))$.
	\item
	The proof for $\sxs$-commutator models of the formula $\mathrm{counter}_{\sxs,n}$ is very similar to the 
	proof for cross axiom models of the formula $\mathrm{counter}_{\SSL,n}$ in Proposition~\ref{prop: counter}.
	For completeness sake we explicate it in detail.
	Suppose there are an $\sxs$-commutator model $M$ of the formula $\mathrm{counter}_{\sxs,n}$ and a point $p_0\in M$ with 
	$M,p_0\models \mathrm{counter}_{\sxs,n}$.
	We show by induction that the claimed sequences of points 
	$p_1,\ldots p_{2^n-1}$ and $p'_0,\ldots p'_{2^n-2}$
	with the claimed properties exist.
	In addition, we show that there exist points $t_i$ with 
	$p_0 \stackrel{L}{\to} t_i$ and $t_i \stackrel{\Diamond}{\to} p_i$, for $1\leq i \leq 2^n-1$.
	Note that the sequences $p_0,\ldots p_{2^n-1}$ and $p'_0,\ldots p'_{2^n-2}$ are supposed to form a ``staircase'' as in Figure~\ref{figure:staircase}.
	By definition,
	$p_0\models \mathrm{persistent}(\underline{X}) \wedge (\underline{\alpha}=\mathrm{bin}_n(0))$.
	By induction hypothesis, let us assume that for some $m$ with $0 \leq m<2^n-1$
	there exist $p_1,\ldots,p_m$ and $p'_0,\ldots,p_{m-1}'$ with
	\[ p_0\, \stackrel{L}{\to}\, p'_0\, \stackrel{\Diamond}{\to}\, p_1\, \stackrel{L}{\to} \ldots \stackrel{L}{\to}\, p'_{m-1}\, \stackrel{\Diamond}{\to}\, p_m ,  \]
	with
		\[ p_i\models (\underline{\alpha}=\mathrm{bin}_n(i)) , \]
		for $0\leq i \leq m$, and with
		\[ p'_i\models (\underline{X}=\mathrm{bin}_n(i+1)) , \]
		for $0\leq i < m$, and 
		that there are $t_i$ with $p_0 \stackrel{L}{\to} t_i$ and $t_i \stackrel{\Diamond}{\to} p_i$, for $1\leq i \leq m$.
		Since $m<2^n-1$, the set $\{0,\ldots,n-1\}\setminus\mathrm{Ones}(m)$ is nonempty.
		With $k:= \min(\{0,\ldots,n-1\}\setminus\mathrm{Ones}(m))$ we have
		\[ p_m\models \mathrm{rightmost\_zero}(\underline{\alpha},k). \]
		Due to $p_0\models \mathrm{counter}_{\sxs,n}$ as well as $p_0 \stackrel{L}{\to} t_m \stackrel{\Diamond}{\to} p_m$
		this implies
		\[	p_m \models 
			L \bigl( (\underline{X}=\underline{\alpha},>k) 
			\wedge \mathrm{rightmost\_one}(\underline{X},k) \\
			\wedge \Diamond(\underline{X}=\underline{\alpha})  \bigr) .		
		\]
		Thus, there must exist points $p'_m$ and $p_{m+1}$ satisfying
		$p_m\stackrel{L}{\to} p'_m \stackrel{\Diamond}{\to}p_{m+1}$ as well as 
		\[ p'_m \models(\underline{X}=\underline{\alpha},>k) 
		\wedge \mathrm{rightmost\_one}(\underline{X},k) \]		
		and
		\[ p_{m+1} \models  (\underline{X}=\underline{\alpha}) . \]
		We have to show
		\[ p'_m \models(\underline{X}=\mathrm{bin}_n(m+1)) \]
		and
		\[ p_{m+1} \models B\wedge(\underline{\alpha}=\mathrm{bin}_n(m+1)) . \]
		Due to the fact that $p'_m$ is an element of the same cloud as $p_m$, and $\underline{\alpha}$ has the same value in all points in a cloud we obtain
		\[ p'_m \models (\underline{\alpha}=\mathrm{bin}_n(m)) .\] Together with 
		\[ p'_m \models(\underline{X}=\underline{\alpha},>k) 
		\wedge \mathrm{rightmost\_one}(\underline{X},k) \]	
		this implies
		\[ p'_m \models(\underline{X}=\mathrm{bin}_n(m+1)) \]
		(the values of the leading bits $\alpha_{n-1},\ldots,\alpha_{k+1}$ of $\underline{\alpha}$ are copied to the leading bits $X_{n-1},\ldots,X_{k+1}$ and the other bits are defined explicitly by $p'_{m} \models  \mathrm{rightmost\_one}(\underline{X},k)$ so that the binary value of $\underline{X}$ is $m+1$).
		From $p_0 \models \mathrm{persistent}(\underline{X})$ as well as
		$p_0 \stackrel{L}{\to} t_m \stackrel{\Diamond}{\to} p'_m \stackrel{\Diamond}{\to} p_{m+1}$ we obtain
		$p_{m+1} \models (\underline{X}=\mathrm{bin}_n(m+1))$.
		Using $p_{m+1} \models (\underline{X}=\underline{\alpha})$
		we obtain
		$p_{m+1} \models (\underline{\alpha}=\mathrm{bin}_n(m+1))$.
		Finally, the left commutativity property applied to $t_m \stackrel{\Diamond}{\to} p_m$ and $p_m \stackrel{L}{\to} p'_m$
		implies that there exists a point $t_{m+1}$ with $t_m \stackrel{L}{\to} t_{m+1}$ and $t_{m+1} \stackrel{\Diamond}{\to} p'_m$.
		Using additionally $p_0 \stackrel{L}{\to} t_m$ and $p'_m \stackrel{\Diamond}{\to} p_{m+1}$ we obtain
		$p_0 \stackrel{L}{\to} t_{m+1}$ and $t_{m+1} \stackrel{\Diamond}{\to} p_{m+1}$.
		This ends the proof of the second assertion.	
	\qedhere
	\end{enumerate}
\end{proof}

\subsection{Construction of the Formula}
\label{subsection:definition-ATMs-S4XS5}

Let $L \in \EXPSPACE$ be an arbitrary language
over some alphabet $\Sigma$, that is, $L \subseteq \Sigma^*$.
We are going to show that there is a logspace computable
function $f_\sxs$ mapping strings to strings
such that, for any $w\in\Sigma^*$, 
\begin{itemize}
\item
$f_\sxs(w)$ is a bimodal formula and 
\item
$f_\sxs(w)$ is $\sxs$-satisfiable if, and only if, $w \in L$.
\end{itemize}
Once we have shown this, we have shown the result.

In order to define this desired reduction function $f_\sxs$,
we are going to make use of an Alternating Turing Machine for $L$.
Since $\EXPSPACE = \mathrm{AEXPTIME}$, there
exist an Alternating Turing Machine
$M = (Q, \Sigma, \Gamma, q_0, \delta)$ 
and a univariate polynomial $p$ such that
$M$ accepts $L$, that is, $L(M)=L$, and such that the time used by $M$
on arbitrary input of length $n$ is bounded by $2^{p(n)}-1$.
We can assume without loss of generality
$Q=\{0,\ldots,|Q|-1\}$, $\Gamma=\{0,\ldots,|\Gamma|-1\}$, that 
the coefficients of the polynomial $p$ are natural numbers and that, for all $n\in\IN$,
we have $p(n) \geq n$ and $p(n)\geq 1$.
In the following, whenever we have fixed some $n\in\IN$, we set
\[ N:=p(n) . \]
Let us consider an input string $w \in\Sigma^n$ of length $n$, for some $n \in\IN$,
and let us sketch the main idea of the construction of the formula
$f_\sxs(w)$.
The formula $f_\sxs(w)$ will describe the possible computations
of $M$ on input $w$ in the following sense: any $\sxs$-product model
of $f_\sxs(w)$ will essentially contain an accepting tree of $M$ on input $w$,
and if there exists an accepting tree of $M$ on input $w$
then one can turn this into an $\sxs$-product model of $f_\sxs(w)$.
In such a model, any node in an accepting tree 
of $M$ on input $w$ will be modeled by a cloud
(that is, by an $\stackrel{L}{\to}$-equivalence class) in which certain shared variables
(we use the notion ``shared variables'' in the same sense as in 
Subsection~\ref{subsection:counter-S4XS5})
will have values that describe the data of the computation node
that are important in this computation step. Which data are these?
First of all, we need the time of the computation node.
We assume that the computation starts with the initial configuration
of $M$ on input $w$ at time $0$.
Since the  ATM $M$ needs at most $2^N-1$ time steps,
we can store the time of each computation node in a binary counter
counting from $0$ to $2^N-1$. 
Since during each time step at most one additional cell
either to the right or to the left of the previous cell can be visited,
we can describe  any configuration reachable during a computation of $M$
on input $w$ by the following data:
{\setlength{\leftmargini}{2.15em}\begin{itemize}
\item
the state $q \in Q$ of the configuration,
\item
the current content of the tape, given by a string in
$\Gamma^{2 \cdot (2^N-1)+1} = \Gamma^{2^{N+1}-1}$,
\item
the current position of the tape head,
given by a number in $\{0,\ldots,2^{N+1}-2\}$.
\end{itemize}}
We assume that in the initial configuration on input $w$ the
tape content is $\#^{2^N}w\#^{2^N-1-n}$ 
(remember that we use $\#$ for the blank symbol) and that
the tape head scans the blank $\#$ to the left of the first symbol of $w$,
that is, the position of the tape head is $2^{N}-1$.
If a cloud in an $\sxs$-product model of $f_\sxs(w)$ 
describes a computation node of $M$ on input $w$
then in this cloud the following shared variables
will have the following values:
\begin{itemize}
\item
a vector $\underline{\alpha}^\mathrm{time}=
(\alpha^\mathrm{time}_{N-1},\ldots,\alpha^\mathrm{time}_0)$ giving in binary
the current time of the computation,
\item
a vector $\underline{\alpha}^\mathrm{pos}=
(\alpha^\mathrm{pos}_{N},\ldots,\alpha^\mathrm{pos}_0)$ giving in binary
the current position of the tape head,
\item
a vector $\underline{\alpha}^\mathrm{state}=
(\alpha^\mathrm{state}_0,\ldots,\alpha^\mathrm{state}_{|Q|-1})$ giving in unary
the current state of the computation
(here ``unary'' means: exactly one of the shared variables
$\alpha^\mathrm{state}_i$ will be true,
namely the one with $i$ being the current state),
\item
a vector $\underline{\alpha}^\mathrm{read}=
(\alpha^\mathrm{read}_0,\ldots,\alpha^\mathrm{read}_{|\Gamma|-1})$
giving in unary the symbol in the current cell
(here ``unary'' means: exactly one of the shared variables
$\alpha^\mathrm{read}_i$ will be true,
namely the one with $i$ being the symbol in the current cell),
\item
a vector $\underline{\alpha}^\mathrm{written}=
(\alpha^\mathrm{written}_{0},\ldots,\alpha^\mathrm{written}_{|\Gamma|-1})$ giving in unary
the symbol that has just been written into the cell that has just been left,
unless the cloud corresponds to the first node in the computation tree --- in that
case the value of this vector is irrelevant
(here ``unary'' means: exactly one of the variables $\alpha^\mathrm{written}_i$ will be true,
namely the one with $i$ being the symbol that has just been written),
\item
a vector $\underline{\alpha}^\mathrm{prevpos}=
(\alpha^\mathrm{prevpos}_{N},\ldots,\alpha^\mathrm{prevpos}_0)$ giving in binary
the previous position of the tape head, that is, the position of the cell that has just been left,
unless the cloud corresponds to the first node in the computation tree --- in that
case the value of this vector is irrelevant.
\end{itemize}
The formula $f_\sxs(w)$ has to ensure that for any possible
computation step starting from such a computation node in an accepting tree
there exists a cloud describing the corresponding successor node in the accepting tree.
In this new cloud, the value of the counter for the time 
$\underline{\alpha}^\mathrm{time}$ has to be incremented.
This can be done by the technique for implementing a binary counter in $\sxs$ that was
described in Subsection~\ref{subsection:counter-S4XS5}.
In parallel, we have to make sure that in this new cloud also the vectors 
$\underline{\alpha}^\mathrm{pos}$,
$\underline{\alpha}^\mathrm{state}$,
$\underline{\alpha}^\mathrm{read}$,
$\underline{\alpha}^\mathrm{written}$, and 
$\underline{\alpha}^\mathrm{prevpos}$
are set to the right values.
The vector $\underline{\alpha}^\mathrm{prevpos}$ is a copy of the vector
$\underline{\alpha}^\mathrm{pos}$ in the previous cloud (we will see below how one can copy it).
For the vectors $\underline{\alpha}^\mathrm{pos}$,
$\underline{\alpha}^\mathrm{state}$, 
and $\underline{\alpha}^\mathrm{written}$
these values can be computed using the corresponding
element of the transition relation $\delta$ of the ATM.
For example, $\underline{\alpha}^\mathrm{pos}$
has to be decremented by one if the tape head moves to the left,
and it has to be incremented by one if the tape head moves to the right.
Also the new state 
(to be stored in $\underline{\alpha}^\mathrm{state}$)
and the symbol written into the cell that has
just been left (to be stored in $\underline{\alpha}^\mathrm{written}$) are
determined by the data of the previous computation node
and by the corresponding element of the transition relation $\delta$.\\
But the vector $\underline{\alpha}^\mathrm{read}$ is supposed
to describe the symbol in the current cell.
This symbol is not determined by the current computation step
but has either been written the last time when this cell has been visited during this computation or,
when this cell has never been visited before,
the symbol in this cell is still the one that was contained in this cell 
before the computation started.
How can one ensure that $\underline{\alpha}^\mathrm{read}$ is set to the right value?
We will do this by using persistent variables and a variable $B^\mathrm{active}$
that is neither shared nor persistent.
We will ensure that any cloud corresponding to a node in the computation
tree in which the cell $i$ is being visited contains a point 
with a persistent position vector $\underline{X}^\mathrm{pos}=\mathrm{bin}_{N+1}(i)$
and with a persistent time vector $\underline{X}^\text{time-apv}$ that
will be forced to contain in binary form the time one step after the previous visit of the cell $i$ or,
if the cell $i$ has not been visited before, to contain the value $0$.
Furthermore, there will be a third persistent vector
$\underline{X}^\mathrm{read}$.
If the cell $i$ has not been visited before then we will make sure 
that the vector $\underline{X}^\mathrm{read}$ encodes the initial symbol contained in cell $i$.
If the cell $i$ has been visited before then we will make sure 
that the vector $\underline{X}^\mathrm{read}$ encodes the symbol that has been
written into the cell during the previous visit of the cell.
In order to do that we need to determine the time of the previous visit.
For that the Boolean variable $B^\mathrm{active}$ is used and the fact that due to the left commutativity
property the point has $\stackrel{\Diamond}{\to}$-predecessors in all clouds corresponding
to any node on the path in the accepting tree from the root to the current cloud.
Then, when one knows this time, one can 
look at the shared variable vector $\underline{\alpha}^\mathrm{written}$
in the cloud corresponding to the step after the previous visit of this cell
(this shared variable vector encodes the symbol that has just been written into the cell)
and can copy its value to the 
persistent variable vector $\underline{X}^\mathrm{read}$.
In order to implement all this we
use the following persistent variables:
\begin{itemize}
\item
a vector $\underline{X}^\text{time-apv}=
(X^\text{time-apv}_{N-1},\ldots,X^\text{time-apv}_0)$ containing in binary
the time one step after the previous visit of the current cell 
(the exponent of $X^\text{time-apv}_i$ stands for ``time after previous visit'') or, if the current
cell has not been visited before, containing in binary the number $0$,
\item
a vector $\underline{X}^\mathrm{pos}=
(X^\mathrm{pos}_{N},\ldots,X^\mathrm{pos}_0)$ containing in binary
the current position of the tape head, that is, the position of the current cell,
\item
a vector $\underline{X}^\mathrm{read}=
(X^\mathrm{read}_{0},\ldots,X^\mathrm{read}_{|\Gamma|-1})$ 
giving in unary the symbol in the cell described by $\underline{X}^\mathrm{pos}$
(here ``unary'' means: exactly one of the shared variables
$\alpha^\mathrm{read}_i$ will be true,
namely the one with $i$ being the symbol in the cell described by $\underline{X}^\mathrm{pos}$),
\end{itemize}
In fact, the current cell may have been visited several times already. Of course
only the $\stackrel{\Diamond}{\to}$-successor of the point corresponding
to the previous visit (that is, corresponding to the very last visit
of the cell before the current visit, not earlier visits) should be allowed to copy the value of its vector
$\underline{\alpha}^\mathrm{written}$ to $\underline{X}^\mathrm{read}$.
In order to determine the right point, we are going to use an additional Boolean
\begin{itemize}
\item
variable $B^\mathrm{active}$ (which is neither persistent nor shared)
saying whether the current point is active or not.
\end{itemize}
We shall explain later how all this works.
Finally, there are two more vectors of persistent variables that are
needed for changing the values of the shared variable vectors
$\underline{\alpha}^\mathrm{time}$ and
$\underline{\alpha}^\mathrm{pos}$:
\begin{itemize}
\item
a vector $\underline{X}^\mathrm{prevtime}=
(X^\mathrm{prevtime}_{N-1},\ldots,X^\mathrm{prevtime}_0)$ containing in binary
the current time of the computation minus one,
unless the cloud corresponds to the first node in the computation tree --- in that
case the value of this vector is irrelevant,
\item
a vector $\underline{X}^\mathrm{prevpos}=
(X^\mathrm{prevpos}_{N},\ldots,X^\mathrm{prevpos}_0)$ containing in binary
the previous position of the tape head, that is, the position of the cell that has just been left,
unless the cloud corresponds to the first node in the computation tree --- in that
case the value of this vector is irrelevant.
\end{itemize}

Now we come to the formal definition of the formula $f_\sxs(w)$.
The formula $f_\sxs(w)$ will have the following structure:
\begin{eqnarray*}
  f_\sxs(w) 
    &:= & \mathit{persistence} \\
    && \wedge K\Box \mathit{uniqueness} \\
    && \wedge \mathit{start} \\
    && \wedge K\Box \mathit{initial\_symbols} \\
    && \wedge K\Box \mathit{written\_symbols} \\
    && \wedge K\Box \mathit{read\_a\_symbol} \\
    && \wedge K\Box \mathit{computation} \\
    && \wedge K\Box \mathit{no\_reject} .
\end{eqnarray*}
The formula $f_\sxs(w)$ will contain the following propositional variables:
\begin{eqnarray*}
&& A^\mathrm{time}_{N-1}, \ldots, A^\mathrm{time}_{0}, 
 A^\mathrm{pos}_{N}, \ldots, A^\mathrm{pos}_{0}, 
 A^\mathrm{state}_{0}, \ldots, A^\mathrm{state}_{|Q|-1}, 
 A^\mathrm{written}_{0}, \ldots, A^\mathrm{written}_{|\Gamma|-1}, 
 A^\mathrm{read}_{0},\ldots, A^\mathrm{read}_{|\Gamma|-1}, \\
&& A^\mathrm{prevpos}_{N}, \ldots, A^\mathrm{prevpos}_{0}, \\
&& X^\mathrm{prevtime}_{N-1},\ldots, X^\mathrm{prevtime}_{0}, 
 X^\mathrm{prevpos}_{N},\ldots, X^\mathrm{prevpos}_{0}, 
 X^\mathrm{pos}_{N}, \ldots, X^\mathrm{pos}_{0}, 
 X^\text{time-apv}_{N-1}, \ldots, X^\text{time-apv}_{0}, \\
&& X^\mathrm{read}_{0},\ldots, X^\mathrm{read}_{|\Gamma|-1}, \\
&& B^\mathrm{active} .         
\end{eqnarray*}
For $\mathit{string} \in\{\mathrm{time}, \mathrm{pos}, \mathrm{state}, \mathrm{read},
       \mathrm{prevpos}, \mathrm{written}\}$
and natural numbers $i$ we use
$\alpha^\mathit{string}_i$ as an abbreviation for $L A^\mathit{string}_i$.
These formulas $\alpha^\mathit{string}_i$ are the shared variables we talked about above.

We are now going to define the subformulas of $f_\sxs(w)$.
We will use the abbreviations introduced above,
in Table~\ref{table:abbrev1}, in Table~\ref{table:abbrev2}, and in Table~\ref{table:abbrev3}.

The following formula makes sure that certain propositional variables are persistent:
\begin{eqnarray*}
\mathit{persistence}
&:=&   \mathrm{persistent}({\underline{X}^\mathrm{prevtime}})
 \wedge \mathrm{persistent}({\underline{X}^\mathrm{prevpos}})  \\
&& \wedge \mathrm{persistent}({\underline{X}^\mathrm{pos}}) 
 \wedge \mathrm{persistent}({\underline{X}^\text{time-apv}}) 
 \wedge \mathrm{persistent}({\underline{X}^\mathrm{read}}) .
\end{eqnarray*}

The following formula makes sure that in each of the vectors of shared or persistent variables that
describe in a unary way the current state respectively the written symbol 
exactly one variable is true:
\begin{eqnarray*}
\mathit{uniqueness}&:=&
 \mathrm{unique}({\underline{\alpha}^\mathrm{state}}) 
 \wedge  \mathrm{unique}({\underline{\alpha}^\mathrm{written}})
 \wedge  \mathrm{unique}({\underline{X}^\mathrm{read}}). 
\end{eqnarray*}
The vector $\underline{\alpha}^\mathrm{read}$ will satisfy the same
uniqueness condition automatically due to another formula
(due to the formula $\mathit{read\_a\_symbol}$).

The following formula ensures that the shared variables in the cloud
corresponding to the first node in a computation tree have the correct values.
The computation starts at time $0$ with the tape head at position $2^N-1$ and in the state $q_0$.
The vector $\underline{\alpha}^\mathrm{read}$ will automatically get the correct value $\#$
due to the formulas $\mathit{read\_a\_symbol}$ and $\mathit{initial\_symbols}$,
that will be introduced next.
For the vectors 
$\underline{\alpha}^\mathrm{prevpos}$ and
$\underline{\alpha}^\mathrm{written}$
we do not need to fix any values.
\begin{eqnarray*}
   \mathit{start}
   &:= & \phantom{\wedge} (\underline{\alpha}^\mathrm{time}=\bin_N(0))
            \wedge (\underline{\alpha}^\mathrm{pos}=\bin_{N+1}(2^N-1)) 
     \wedge \alpha^\mathrm{state}_{q_0} .
\end{eqnarray*}

The following formula ensures that whenever an initial symbol on the tape
is requested (by a point with persistent time $\underline{X}^\text{time-apv}$ equal to $0$)
it is stored in a persistent vector $\underline{X}^\mathrm{read}$
of this point.
\begin{eqnarray*}
   \lefteqn{\mathit{initial\_symbols}} && \\
   &:= & (\underline{X}^\text{time-apv}=\mathrm{bin}_N(0)) \\
   && \rightarrow \Biggl( 
            \bigwedge_{i=1}^n \bigl( (\underline{X}^\mathrm{pos}=\mathrm{bin}_{N+1}(2^{N}-1+i))
            \rightarrow X^\mathrm{read}_{w_i} \bigr) \\
   && \qquad
          \wedge \bigl(\bigl( (\underline{X}^\mathrm{pos} \leq \mathrm{bin}_{N+1}(2^{N}-1))
                \vee  (\underline{X}^\mathrm{pos} > \mathrm{bin}_{N+1}(2^{N}-1+n)) \bigr)
         \rightarrow X^\mathrm{read}_{\#} \bigr) \Biggr).
\end{eqnarray*}
We explain this formula. 
By another formula we will ensure that whenever a cell is visited for the first time
there will be an ``active'' point storing the position of this cell in a persistent vector
$\underline{X}^\mathrm{pos}$ and such that its persistent time vector
$\underline{X}^\text{time-apv}$ has the binary value $0$.
The formula above ensures that the persistent vector $\underline{X}^\mathrm{read}$
in this point stores the correct initial symbol in this cell.
This is either a symbol $w_i$ of the input string $w=w_1\ldots w_n$ or the blank $\#$.

The following formula ensures that whenever a symbol that has just been written
on the tape is requested (by an active point with the correct persistent time
$\underline{X}^\text{time-apv}$)
then it is copied into a persistent vector $\underline{X}^\mathrm{read}$
of this point.
\begin{eqnarray*} \lefteqn{\mathit{written\_symbols}} && \\
 &:=& \bigl((\underline{X}^\text{time-apv} > \mathrm{bin}_N(0)) 
                               \wedge (\underline{X}^\text{time-apv} = \underline{\alpha}^\mathrm{time})
                               \wedge B^\mathrm{active} \bigr) 
   \rightarrow (\underline{X}^\mathrm{read} = \underline{\alpha}^\mathrm{written}) . 
\end{eqnarray*}

The following formula ensures that the shared variable vector $\underline{\alpha}^\mathrm{read}$
describes the symbol in the current cell. 
\begin{eqnarray*}
   \mathit{read\_a\_symbol}
 &:=&  \mathit{existence\_of\_a\_reading\_point} \\
 && \wedge \mathit{time\_of\_previous\_visit} \\
 && \wedge \mathit{becoming\_inactive} \\
 && \wedge \mathit{staying\_inactive} \\
 && \wedge \mathit{storing\_the\_read\_symbol},
\end{eqnarray*}
where
\begin{eqnarray*}
\lefteqn{\mathit{existence\_of\_a\_reading\_point}} && \\
     &:=& L\bigl( ( \underline{X}^\mathrm{pos} = \underline{\alpha}^\mathrm{pos} )
                        \wedge (\underline{X}^\text{time-apv} \leq \underline{\alpha}^\mathrm{time} )
                        \wedge B^\mathrm{active}\bigr) , \\
\lefteqn{\mathit{time\_of\_previous\_visit}} && \\
   &:=& \Bigl( \bigl(( \underline{X}^\text{time-apv} > \mathrm{bin}_N(0) )
                               \wedge( \underline{X}^\text{time-apv} = \underline{\alpha}^\mathrm{time})
                               \wedge B^\mathrm{active} \bigr) 
   \rightarrow
       ( \underline{X}^\mathrm{pos} = \underline{\alpha}^\mathrm{prevpos}) \Bigr), \\ 
\lefteqn{\mathit{becoming\_inactive}} && \\
   &:=& \Bigl( \bigl(( \underline{X}^\mathrm{pos}=\underline{\alpha}^\mathrm{prevpos} )
                     \wedge (\underline{X}^\text{time-apv} < \underline{\alpha}^\mathrm{time}) \bigr) \rightarrow
                     \neg B^\mathrm{active} \Bigr), \\
\lefteqn{\mathit{staying\_inactive}} && \\                     
   &:=& \Bigl( \neg B^\mathrm{active} \rightarrow \Box \neg B^\mathrm{active} \Bigr), \\
\lefteqn{\mathit{storing\_the\_read\_symbol}} && \\                        
   &:=& \Bigl( \bigl( (\underline{X}^\mathrm{pos}=\underline{\alpha}^\mathrm{pos} )
                     \wedge (\underline{X}^\text{time-apv} \leq \underline{\alpha}^\mathrm{time} )
                     \wedge B^\mathrm{active} \bigr) \rightarrow
                     (\underline{\alpha}^\mathrm{read} = \underline{X}^\mathrm{read}) \Bigr) .
\end{eqnarray*}
We explain these formulas.
The formula $\mathit{existence\_of\_a\_reading\_point}$
enforces the existence of an ``active'' point in the current cloud that
contains in the persistent vector $\underline{X}^\mathrm{pos}$ the
number of the current cell and that we wish to force to contain in the persistent
vector $\underline{X}^\text{time-apv}$ the time one step after the previous
visit of this cell, if this cell has been visited before, or the value $0$, otherwise.
How can we enforce that? 
We make essential use of the left commutativity property.
The point whose existence is ensured by the formula
$\mathit{existence\_of\_a\_reading\_point}$
has $\stackrel{\Diamond}{\to}$-predecessors in all clouds corresponding
to the nodes on the path in the accepting tree from the root to the node corresponding
to the current cloud. Since we demand
$\underline{X}^\text{time-apv} \leq \underline{\alpha}^\mathrm{time}$
the time stored in binary in the persistent variable $\underline{X}^\text{time-apv}$
must be identical with the time of one of the clouds corresponding to a node on this path.
The formula $\mathit{time\_of\_previous\_visit}$ ensures that if it is positive
then it can be equal to the time $\underline{\alpha}^\mathrm{time}$
of a cloud corresponding to a node on this path only when 
$\underline{X}^\mathrm{pos} = \underline{\alpha}^\mathrm{prevpos}$,
that is, only when in the step leading to this node something has been written into the current cell.
So, if the current cell has not been visited before, the binary value
of $\underline{X}^\text{time-apv}$ must be $0$. Then 
the formula $\mathit{initial\_symbols}$ will ensure that the persistent variable
$\underline{X}^\mathrm{read}$ gets the correct value.
Otherwise, when the cell has been visited before, in fact, it may have been visited several times
already. In this case, we have to make sure that the number stored
in $\underline{X}^\text{time-apv}$ is the time after the very last visit to this cell
immediately before the current visit.
This is ensured by the formulas $\mathit{becoming\_inactive}$
and $\mathit{staying\_inactive}$.
If the number stored in $\underline{X}^\text{time-apv}$ were strictly
smaller than the the time after the very last visit to this cell
immediately before the current visit then, due to $\mathit{becoming\_inactive}$
the corresponding point would be set to ``inactive'', and, due to
$\mathit{staying\_inactive}$, also its $\stackrel{\Diamond}{\to}$-successor
would stay inactive. But this would apply also to the point in the current cloud
which is supposed to be active and to contain the correct value
in $\underline{X}^\text{time-apv}$ according to formula
$\mathit{existence\_of\_a\_reading\_point}$.
Thus, the first four subformulas of $\mathit{read\_a\_symbol}$
ensure that the current cloud contains an active point that
contains in the persistent vector $\underline{X}^\mathrm{pos}$ the
number of the current cell and that contains in the persistent
vector $\underline{X}^\text{time-apv}$ the time one step after the previous
visit of this cell, if this cell has been visited before, or the value $0$, otherwise.
Finally, the formula $\mathit{storing\_the\_read\_symbol}$
makes sure that the value of 
$\underline{X}^\mathrm{read}$ (which contains the symbol written into
the current cell during the previous visit of this cell, or, if the current cell has not
been visited before, the initial symbol in this cell) 
is copied into the shared variable vector
$\underline{\alpha}^\mathrm{read}$.

Next, we wish to define the formula $\mathit{computation}$ that describes the computation steps.
We have to distinguish between the two cases whether the tape head is going to move to the left or to the right.
If in a computation step the symbol $\theta\in\Gamma$ is written into the current cell,
if the tape head moves to the right, and if the new state after this step is the state $r \in Q$,
then the following formula $\mathit{compstep}_{\mathrm{right}}(r,\theta)$ guarantees the existence of a point
and its cloud with suitable values in the shared variable vectors
$\underline{\alpha}^\mathrm{time}, \underline{\alpha}^\mathrm{pos},
\underline{\alpha}^\mathrm{prevpos}, \underline{\alpha}^\mathrm{state},
\underline{\alpha}^\mathrm{written}$. 

We explain this formula.
The first four lines of this formula take care that the two binary counters
$\underline{\alpha}^\mathrm{time}$ and
$\underline{\alpha}^\mathrm{pos}$
for the current time and for the current position of the tape head
are incremented at the same time. This is similar to the formula
$\mathrm{counter}_{\sxs,n}$ in Subsection~\ref{subsection:counter-S4XS5}.
Furthermore, also the persistent vectors
$\underline{X}^\mathrm{prevtime}$ (that we will not need otherwise) and
$\underline{X}^\mathrm{prevpos}$ (that we use again in the fifth line of this formula)
get the correct values.
The fifth line ensures that the shared variable vectors
$\underline{\alpha}^\mathrm{prevpos}$, 
$\underline{\alpha}^\mathrm{state}$ and 
$\underline{\alpha}^\mathrm{written}$
of the current point get the correct values.
The shared variable vector
$\underline{\alpha}^\mathrm{read}$ will get the correct value
due to the formulas $\mathit{initial\_symbols}$,
$\mathit{written\_symbols}$, and $\mathit{read\_a\_symbol}$.

\begin{eqnarray*}
	\mathit{compstep}_{\mathrm{right}}(r,\theta)
	&:=& \bigwedge_{k=0}^{N-1} \bigwedge_{l=0}^{N}
	\Biggl( \bigl( \mathrm{rightmost\_zero}(\underline{\alpha}^\mathrm{time},k)
	\wedge \mathrm{rightmost\_zero}(\underline{\alpha}^\mathrm{pos},l) \bigr) \\
	&&      \rightarrow L \biggl(  (\underline{X}^\mathrm{prevtime}=\underline{\alpha}^\mathrm{time})
	\wedge  (\underline{X}^\mathrm{prevpos}=\underline{\alpha}^\mathrm{pos}) \\
	&& \phantom{\rightarrow L \biggl(}  \wedge \Diamond \Bigl(
	(\underline{\alpha}^\mathrm{time}=\underline{X}^\mathrm{prevtime},>k) 
	\wedge  \mathrm{rightmost\_one}(\underline{\alpha}^\mathrm{time},k) \\
	&& \phantom{\rightarrow L \biggl(  \wedge \Diamond \Bigl(}  
	\wedge (\underline{\alpha}^\mathrm{pos}=\underline{X}^\mathrm{prevpos}>l) 
	\wedge \mathrm{rightmost\_one}(\underline{\alpha}^\mathrm{pos},l) \\
	&& \phantom{\rightarrow L \biggl(  \wedge \Diamond \Bigl(}  
	\wedge (\underline{\alpha}^\mathrm{prevpos}=\underline{X}^\mathrm{prevpos}) 
	\wedge \alpha^\mathrm{state}_{r} 
	\wedge \alpha^\mathrm{written}_\theta
	\Bigr) \biggr) \Biggr).
\end{eqnarray*}

If in a computation step the symbol $\theta\in\Gamma$ is written into the current cell,
if the tape head moves to the left, and if the new state after this step is the state $r \in Q$,
then the following formula guarantees the existence of a point
and its cloud with suitable values in the shared variables.
\begin{eqnarray*}
\mathit{compstep}_{\mathrm{left}}(r,\theta)
&:=& \bigwedge_{k=0}^{N-1} \bigwedge_{l=0}^{N}
   \Biggl( \bigl( \mathrm{rightmost\_zero}(\underline{\alpha}^\mathrm{time},k)
           \wedge \mathrm{rightmost\_one}(\underline{\alpha}^\mathrm{pos},l) \bigr) \\
&&      \rightarrow L \biggl(  (\underline{X}^\mathrm{prevtime}=\underline{\alpha}^\mathrm{time})
                             \wedge  (\underline{X}^\mathrm{prevpos}=\underline{\alpha}^\mathrm{pos}) \\
&& \phantom{\rightarrow L \biggl(}  \wedge \Diamond \Bigl(
       (\underline{\alpha}^\mathrm{time}=\underline{X}^\mathrm{prevtime},>k) 
                        \wedge  \mathrm{rightmost\_one}(\underline{\alpha}^\mathrm{time},k) \\
&& \phantom{\rightarrow L \biggl(  \wedge \Diamond \Bigl(}  
       \wedge (\underline{\alpha}^\mathrm{pos}=\underline{X}^\mathrm{prevpos},>l) 
                        \wedge \mathrm{rightmost\_zero}(\underline{\alpha}^\mathrm{pos},l) \\
&& \phantom{\rightarrow L \biggl(  \wedge \Diamond \Bigl(}  
                        \wedge (\underline{\alpha}^\mathrm{prevpos}=\underline{X}^\mathrm{prevpos}) 
                        \wedge \alpha^\mathrm{state}_{r} 
                        \wedge \alpha^\mathrm{written}_\theta 
\Bigr) \biggr) \Biggr).
\end{eqnarray*}
This formula is very similar to the previous one with the exception that here
the binary counter for the position of the tape head is decremented.

The computation is modeled by the following subformula.
Remember that $Q$ is the disjoint union of the sets
$\{q_{\mathrm{accept}}\}$, $\{q_{\mathrm{reject}}\}$, $Q_{\forall}$, $Q_{\exists}$.
\begin{eqnarray*}
\mathit{computation}
&:=& \bigwedge_{q \in Q_\forall} \bigwedge_{\eta \in\Gamma}
   \Biggl(   (\alpha^\mathrm{state}_q \wedge \alpha^\mathrm{read}_\eta) \rightarrow \\
&& \biggl( \bigwedge_{(r,\theta,\mathit{left}) \in \delta(q,\eta)}  \mathit{compstep}_{\mathrm{left}}(r,\theta)
   \wedge   \bigwedge_{(r,\theta,\mathit{right}) \in \delta(q,\eta)}  \mathit{compstep}_{\mathrm{right}}(r,\theta) \biggr) \Biggr) \\
&& \wedge \bigwedge_{q \in Q_\exists} \bigwedge_{\eta \in\Gamma}
   \Biggl(   (\alpha^\mathrm{state}_q \wedge \alpha^\mathrm{read}_\eta) \rightarrow \\
&& \biggl( \bigvee_{(r,\theta,\mathit{left}) \in \delta(q,\eta)}  \mathit{compstep}_{\mathrm{left}}(r,\theta)
   \vee   \bigvee_{(r,\theta,\mathit{right}) \in \delta(q,\eta)}  \mathit{compstep}_{\mathrm{right}}(r,\theta) \biggr) \Biggr) \Biggr) .
\end{eqnarray*}

Finally, the subformula $\mathit{no\_reject}$ is defined as follows.
\[ \mathit{no\_reject} := 
    \neg \alpha^\mathrm{state}_{q_{\mathrm{reject}}} . \]

We have completed the description of the formula $f_\sxs(w)$ for $w\in \Sigma^*$.
It is clear that $f_\sxs(w)$ is a bimodal formula, for any $w \in\Sigma^*$.
We still have to show two claims:
\begin{enumerate}
\item
The function $f_\sxs$ can be computed in logarithmic space.
\item
For any $w \in\Sigma^*$,
\[ w \in L \iff \text{ the bimodal formula $f_\sxs(w)$
is $\sxs$-satisfiable.} \]
\end{enumerate}
The first claim is shown in the following section.
The two directions of the equivalence in the second claim are shown afterwards
in separate sections.

\subsection{LOGSPACE Computability of the Reduction}
\label{subsection:reduction-ATMs-S4XS5-complexity}

We wish to show that the function $f_\sxs$ can be computed in logarithmic space.
This is shown by the same argument as the corresponding claim for the function
$f_\ssl$ in Subsection~\ref{subsection:reduction-ATMs-SSL-complexity}.

\subsection{Construction of a Model}
\label{subsection:model-construction}

In this section we show for any $w \in\Sigma^*$,
if $w \in L$ then the bimodal formula $f_\sxs(w)$
is $\sxs$-satisfiable.
Let us assume $w \in L$. 
We are going to explicitly define
an $\sxs$-product model of $f_\sxs(w)$.

There exists an accepting tree $T=(V,E,c)$ of $M$ on input $w$,
where $V$ is the set of nodes of $T$, where $E \subseteq V\times V$ is the set of edges,
and where the function $c:V\to Q \times \{0,\ldots,2^{N+1}-2\}\times \Gamma^{2^{N+1}-1}$
labels each node with a configuration
(remember the discussion about the description of configurations at the beginning of Subsection~\ref{subsection:definition-ATMs-S4XS5}).
Let $\mathit{root} \in V$ be the root of $T$.
We will now construct an $\sxs$-product model of $f_\sxs(w)$.
We define an $S4$-frame $(W_1,R_\Diamond)$ by
\[ W_1 := V \quad \text{ and } \quad R_\Diamond:= \text{ the reflexive-transitive closure of } E . \]
That means, for any $v,v' \in W_1$ we have $v R_\Diamond v'$ iff in the tree $T$ there is a path from $v$ to $v'$.
We define an $S5$-frame $(W_2,R_L)$ by
\[ W_2 := V  \quad \text{ and } \quad R_L := W_2 \times W_2 . \]
Then the product frame $(W, \stackrel{\Diamond}{\to}, \stackrel{L}{\to})$
with $W:=W_1\times W_2$ and 
with $\stackrel{\Diamond}{\to}$ and $\stackrel{L}{\to}$ defined as in
\cite[Definition 2.3]{HK2019-1} is an $\sxs$-product frame.
We still need to define a suitable valuation $\sigma$.
We define the valuation $\sigma$ as follows.
\begin{eqnarray*}
\sigma(A^\mathrm{time}_k) &:=& \{(v,x) ~:~ v,x \in V
       \text{ and } k \in \mathrm{Ones}(\mathit{time}(v)) \}, \\
\sigma(X^\mathrm{prevtime}_k) &:=& \{(v,x) ~:~ v\in V, x \in V\setminus\{\mathit{root}\}
           \text{ and } k \in \mathrm{Ones}(\mathit{time}(x)-1) \}, \\
\sigma(X^\text{time-apv}_k) &:=& \{(v,x) ~:~ v\in V,x \in V \setminus\{\mathit{root}\}
   \text{ and there exists a node } y \neq \\
&& \mathit{root}  \text{ on the path from } \mathit{root} \text{ to } x \text{ such that } 
    \mathit{pos}(\mathit{pred}(y)) = \\
&&\mathit{pos}(x) \text{ and } k \in \mathrm{Ones}(\mathit{time}(y)) \text{ and, for all nodes $z$ on the} \\
&& \text{path from $y$ to $x$ with $z\neq y$}, 
   \mathit{pos}(\mathit{pred}(z)) \neq \mathit{pos}(x) \} , 
\end{eqnarray*}
for $k \in \{0,\ldots,N-1\}$,
\begin{eqnarray*}
\sigma(A^\mathrm{pos}_k) &:=& \{(v,x) ~:~ v \in V, x\in V \text{ and } k \in \mathrm{Ones}(\mathit{pos}(v)) \}, \\
\sigma(X^\mathrm{pos}_k) &:=& \{(v,x) ~:~ v \in V, x\in V \text{ and } k \in \mathrm{Ones}(\mathit{pos}(x)) \}, \\
\sigma(A^\mathrm{prevpos}_k) &:=& \{(v,x) ~:~  v\in V\setminus\{\mathit{root}\}, x \in V 
         \text{ and } k \in \mathrm{Ones}(\mathit{pos}(\mathit{pred}(v))) \} \\
\sigma(X^\mathrm{prevpos}_k) &:=& \{(v,x) ~:~  v\in V, x\in V\setminus\{\mathit{root}\}
        \text{ and } k \in \mathrm{Ones}(\mathit{pos}(\mathit{pred}(x))) \}, 
\end{eqnarray*}
for $k \in \{0,\ldots,N\}$,
\[ \sigma(A^\mathrm{state}_q) := \{(v,x) ~:~ v\in V, x \in V \text{ and } q = \mathit{state}(v)) \}
\]
for $q\in Q$,
\begin{eqnarray*}
\sigma(A^\mathrm{read}_\eta) &:=& \{(v,x) ~:~ v,x \in V \text{ and } \eta = \mathit{read}(v) \}, \\
\sigma(A^\mathrm{written}_\eta) &:=& \{(v,x) ~:~ (v \in V \setminus\{\mathit{root}\},
            x \in V \text{ and } \eta = \mathit{written}(v) ) \\
   && \phantom{(\{(v,x) ~:~}  \text{or } (v=\mathit{root} \text{ and } x \in V \text{ and } \eta=\#)\}, \\
\sigma(X^\mathrm{read}_\eta)
   &:=& \{(v,x) ~:~ v,x \in V \text{ and } \eta = \mathit{read}(x)) \} , 
\end{eqnarray*}
for $\eta \in\Gamma$,
\begin{eqnarray*}
\sigma(B^\mathrm{active})
  &:= &\{(v,x) ~:~ v, x \in V 
  		\text{ and $v$ is an element of the path }
  	\text{from $\mathit{root}$        to $x$} \}.
\end{eqnarray*}
We have defined an $\sxs$-product model
$(W_1\times W_2,\stackrel{\Diamond}{\to},\stackrel{L}{\to})$.
We claim that in this model $(\mathit{root},\mathit{root})\models f_\sxs(w)$.
For an illustration of an important detail of the model see Figure~\ref{figure:model}.
\begin{figure}
\begin{center}
\includegraphics[width=0.9\linewidth]{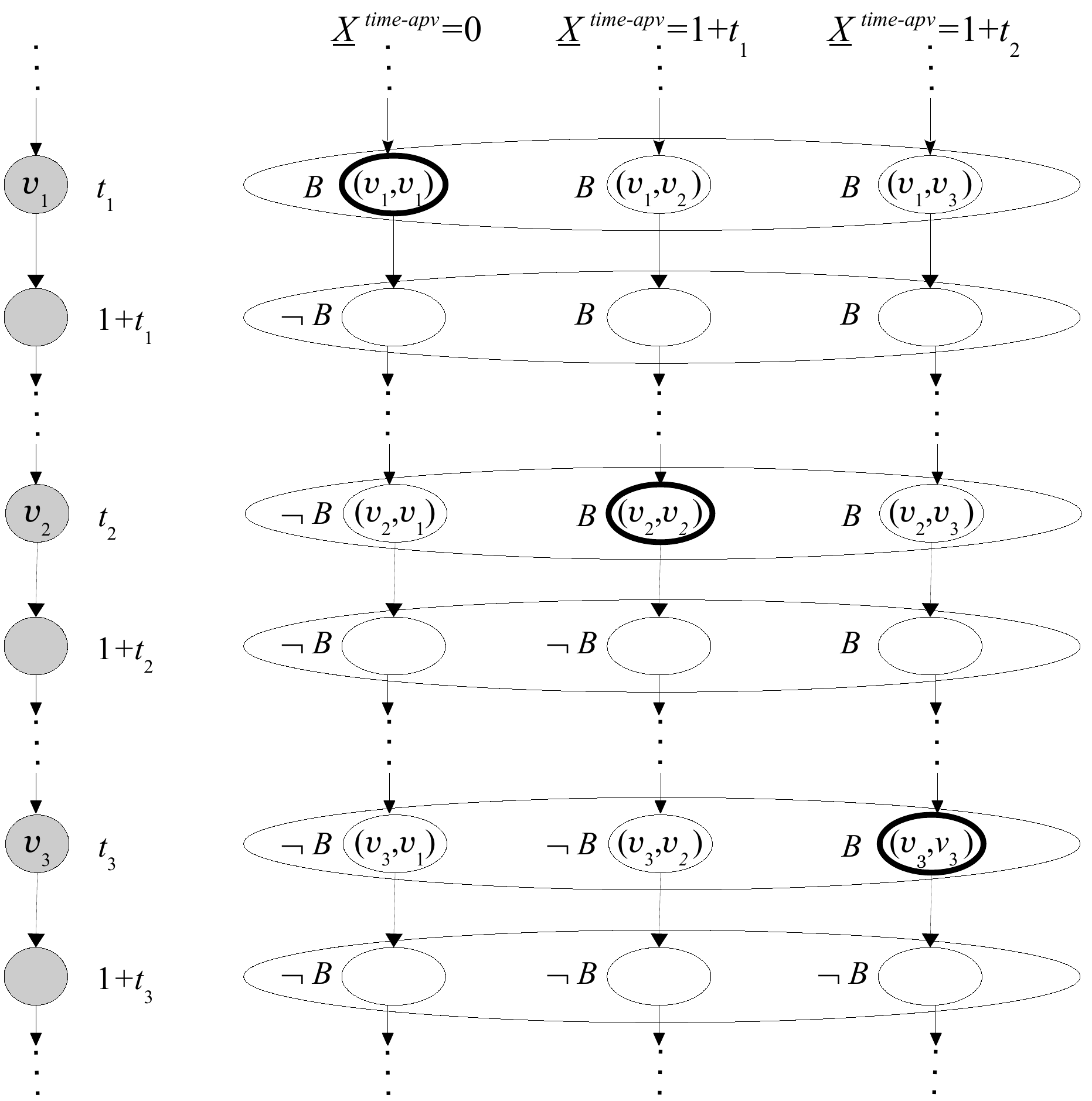}
\end{center}
\caption{A possible detail of an $\sxs$-product model of the formula $f_\sxs(w)$.
Consider a certain cell and let us assume that $v_1, v_2, v_3$ are the first three computation nodes
on some computation path in which this cell is visited. Let
$t_i:=\mathit{time}(v_i)$. The diagram on the left shows a part of the computation path.
The diagram on the right shows the corresponding part of the $\sxs$-product model.
Here $B$ stands for $B^\mathrm{active}$.
}
\label{figure:model}

\end{figure}

First, we observe that for
$\mathit{string} \in \{\text{prevtime},\text{prevpos}, \text{pos}, \text{time-apv}, \text{written}\}$
and natural numbers $i$ as well as for $(v,x) \in V \times V$ the truth value of 
the variable $X^\mathit{string}_i$ in the point $(v,x)$ does not depend on $v$.
Hence, all these variables are persistent. So,
\[ (\mathit{root},\mathit{root})\models \mathit{persistence} .\]
Similarly, for $\mathit{string} \in \{\mathrm{time}, \mathrm{pos},
       \mathrm{state}, \mathrm{read}, \mathrm{prevpos}, \mathrm{written}\}$
and natural numbers $i$ as well as for $(v,x) \in V \times V$ the truth value of the variable
$A^\mathit{string}_i$ in the point $(v,x)$ does not depend on $x$. 
Note that for $v\in V$ the set
\[ \mathrm{Cloud}(v) := \{(v,x) ~:~ v \in V, x \in V\} \]
is the $\stackrel{L}{\to}$-equivalence class of $(v,v)$.
All of the vectors of shared variables have the expected values:
for $v\in V$ and $s \in \mathrm{Cloud}(v)$. We see
\begin{eqnarray*}
s &\models& (\underline{\alpha}^\mathrm{time} = \mathrm{bin}_N( \mathit{time}(v))) , \\
s &\models& (\underline{\alpha}^\mathrm{pos} = \mathrm{bin}_{N+1}( \mathit{pos}(v))) , \\
( s &\models& \alpha^\mathrm{state}_q) \iff q = \mathit{state}(v), \text{ for } q \in Q, \\
( s &\models& \alpha^\mathrm{read}_\eta) \iff \eta = \mathit{read}(v), \text{ for } \eta \in \Gamma,
\end{eqnarray*}
and for $v\in V\setminus\{\mathit{root}\}$ and $s \in \mathrm{Cloud}(v)$ we see
\begin{eqnarray*}
s &\models& (\underline{\alpha}^\mathrm{prevpos} = \mathrm{bin}_{N+1}( \mathit{pos}(\mathit{pred}(v)))) , \\
( s &\models& \alpha^\mathrm{written}_\eta) \iff  \eta = \mathit{written}(v)  , \text{ for } \eta \in \Gamma.
\end{eqnarray*}
Similarly, for the vectors of persistent variables we see,
for $v,x\in V$:
\begin{eqnarray*}\textbf{}
(v,x) &\models& (\underline{X}^\mathrm{prevtime} = \mathrm{bin}_N(t)) ,
\text{ where } \\
&& t = \begin{cases} 
   0, &  \text{ if } x=\mathit{root} \\
   \mathit{time}(\mathit{pred}(x)), & \text{ otherwise},
\end{cases} \\  
(v,x) &\models& (\underline{X}^\mathrm{prevpos} = \mathrm{bin}_{N+1}(p)) ,
\text{ where } \\
&& p = \begin{cases} 
   0, &  \text{ if } x=\mathit{root} \\
    \mathit{pos}(\mathit{pred}(x)), & \text{ otherwise},
\end{cases} \\  
(v,x) &\models& (\underline{X}^\mathrm{pos} = \mathrm{bin}_{N+1}(\mathit{pos}(x))) , \\
(v,x) &\models& (\underline{X}^\text{time-apv} = \mathrm{bin}_N(t)) , \text{ where } \\
&& \!\!\!\!\!\! t = \begin{cases} 
   0, \quad  \text{if the cell $\mathit{pos}(x)$ has not been visited before on the computation} \\
   \qquad \text{path from $\mathit{root}$ to $x$ (this is in particular true in the case $x=\mathit{root}$)}, \\
   1 + \text{the time of the previous visit of the cell $\mathit{pos}(x)$,} \\
   \qquad \text{otherwise}.
\end{cases} \\   
( (v,x) &\models& X^\mathrm{read}_\eta) \iff \eta = \mathit{read}(x), \text{ for } \eta \in\Gamma,  
\end{eqnarray*}
It is straightforward to check that for all $(v,x) \in W$
\[ (v,x) \models  \mathit{uniqueness}  \]
(in fact, in order to achieve $(\mathit{root},x) \models \mathrm{unique}(\underline{\alpha}^\mathrm{written})$
we made a somewhat arbitrary choice for the value of $A^\mathrm{written}_\eta$ in $(\mathit{root},x)$,
for $x\in V$ and $\eta\in\Gamma$).
Hence, we have
\[ (\mathit{root},\mathit{root}) \models K \Box \mathit{uniqueness} . \]
It is clear as well that
\[ (\mathit{root},\mathit{root}) \models \mathit{start} . \]
As none of the nodes in the accepting tree $T$ is labeled with a configuration with the state $q_\mathrm{reject}$
we have $(v,x) \models \mathit{no\_reject}$, for all $(v,x) \in W$, hence
\[ (\mathit{root},\mathit{root}) \models K \Box \mathit{no\_reject} . \]
We still need to show that
\begin{eqnarray*}
 (\mathit{root},\mathit{root}) &\models&
   K \Box \mathit{initial\_symbols}
  \wedge  K \Box \mathit{written\_symbols} \\
  && \wedge K \Box \mathit{read\_a\_symbol} 
  \wedge K \Box \mathit{computation} . 
\end{eqnarray*}
It is sufficient to show that, for all $(v,x) \in W$,
\[ (v,x) \models \mathit{initial\_symbols} \wedge  \mathit{written\_symbols} 
           \wedge \mathit{read\_a\_symbol} \wedge \mathit{computation} . \]

First, let us consider the formula $\mathit{initial\_symbols}$.
We wish to show that
$$(v,x) \models \mathit{initial\_symbols},$$ 
for all $(v,x) \in W$.
Nothing needs to be shown if $(v,x) \models (\underline{X}^\text{time-apv} = \mathrm{bin}_N(0))$ is not true.
So, let us assume that $(v,x) \models (\underline{X}^\text{time-apv} = \mathrm{bin}_N(0))$.
We noted above that
the assumption
$(v,x) \models (\underline{X}^\text{time-apv} = \mathrm{bin}_N(0))$
implies that the cell $\mathit{pos}(x)$ has not been visited before
on the computation path from $\mathit{root}$ to $x$.
Hence, 
$\mathit{read}(x)=\eta$, where
$\eta$ is the initial symbol in the cell $x$.
We obtain
$(v,x) \models X^\mathrm{read}_{\eta}$. 
So, if the number $i:=\mathit{pos}(x)-(2^N-1)$ satisfies
$1\leq i \leq n$ then $(v,x) \models X^\mathrm{read}_{w_i}$,
otherwise $(v,x) \models X^\mathrm{read}_{\#}$.
Remember that
$(v,x) \models (\underline{X}^\mathrm{pos} = \mathrm{bin}_{N+1}(\mathit{pos}(x))$.
We have shown
$(v,x) \models \mathit{initial\_symbols}$.

Next, we consider the formula $\mathit{written\_symbols}$ and show that
$$(v,x) \models \mathit{written\_symbols}$$
for all $(v,x)\in W$.
Let us assume
\[ (v,x) \models ( (\underline{X}^\text{time-apv} > \mathrm{bin}_N(0) )
                               \wedge (\underline{X}^\text{time-apv} = \underline{\alpha}^\mathrm{time})
                               \wedge B^\mathrm{active} ) 
\]
(otherwise, nothing needs to be shown).
The condition $(v,x) \models ( \underline{X}^\text{time-apv} > \mathrm{bin}_N(0))$
implies $x \in V \setminus\{\mathit{root}\}$
and that the binary value $t$ of the vector $\underline{X}^\text{time-apv}$
in the point $(v,x)$ is equal to $1+$ the time of the previous visit of the cell $\mathit{pos}(x)$
on the computation path from $\mathit{root}$ to $x$.
Note that $t\leq \mathit{time}(x)$.
The condition
$(v,x) \models (\underline{X}^\text{time-apv} = \underline{\alpha}^\mathrm{time})$
means $t=\mathit{time}(v)$.
Now the condition $(v,x) \models B^\mathrm{active}$ implies that 
$v$ is an element of the path from $\mathit{root}$ to $x$.
Actually, due to $t = \mathit{time}(v)$
the node $v$ is exactly the computation node on the computation path from $\mathit{root}$
to $x$ after the previous visit of the cell $\mathit{pos}(x)$.
Thus, the symbol written during this visit and described by the value of
$\underline{\alpha}^\mathrm{written}$ at the point $(v,x)$
is just the symbol still contained in the same cell when the computation node $x$ is reached.
Hence, we have $(v,x) \models (\underline{X}^\mathrm{read} = \underline{\alpha}^\mathrm{written})$.
For later purposes we note that for the same reason we have
$(v,x) \models (\underline{X}^\mathrm{pos} = \underline{\alpha}^\mathrm{prevpos})$ as well.
We have shown $(v,x) \models \mathit{written\_symbols}$.

Next, we consider the formula $\mathit{read\_a\_symbol}$.
We wish to show that
$$(v,x) \models \mathit{read\_a\_symbol},$$ 
for all $(v,x) \in W$.
We show this separately for the five subformulas of $\mathit{read\_a\_symbol}$.
First, we show
$$(v,x) \models \mathit{existence\_of\_a\_reading\_point},$$
that is,
\[ 
 (v,x) \models L \bigl(( \underline{X}^\mathrm{pos} = \underline{\alpha}^\mathrm{pos}) 
                        \wedge (\underline{X}^\text{time-apv} \leq \underline{\alpha}^\mathrm{time}) 
                        \wedge B^\mathrm{active}\bigr) .
\]
Indeed, it is clear that
$(v,x) \stackrel{L}{\to} (v,v)$ and that
$(v,v) \models ( \underline{X}^\mathrm{pos} = \underline{\alpha}^\mathrm{pos})$ and
$(v,v) \models B^\mathrm{active}$.
Furthermore, the binary value of $\underline{X}^\text{time-apv}$ in the point 
$(v,v)$ is either $1+$ the time of the previous visit of the cell $\mathit{pos}(v)$
on the path from $\mathit{root}$ to $v$,
if this cell has been visited before $v$ on this path, or $0$, otherwise.
In any case we obtain
$(v,v) \models (\underline{X}^\text{time-apv} \leq \underline{\alpha}^\mathrm{time})$.
Next, we show
$$(v,x) \models \mathit{time\_of\_previous\_visit},$$ 
that is,
\[ (v,x) \models \Bigl( \bigl(( \underline{X}^\text{time-apv} > \mathrm{bin}_N(0) )
\wedge (\underline{X}^\text{time-apv} = \underline{\alpha}^\mathrm{time})
\wedge B^\mathrm{active} \bigr) \rightarrow
( \underline{X}^\mathrm{pos} = \underline{\alpha}^\mathrm{prevpos}) \Bigr) . 
\]
Actually, we have seen this already in the proof of
$(v,x) \models \mathit{written\_symbols}$ above.
Next, we show
$$(v,x) \models \mathit{becoming\_inactive},$$ 
that is,
\[ (v,x) \models \Bigl( \bigl( (\underline{X}^\mathrm{pos}=\underline{\alpha}^\mathrm{prevpos}) 
                     \wedge (\underline{X}^\text{time-apv} < \underline{\alpha}^\mathrm{time}) \bigr) \rightarrow
                     \neg B^\mathrm{active} \Bigr) 
\]
For the sake of a contradiction, let us assume
$(v,x) \models ( (\underline{X}^\mathrm{pos}=\underline{\alpha}^\mathrm{prevpos}) 
                     \wedge (\underline{X}^\text{time-apv} < \underline{\alpha}^\mathrm{time}))$
and $(v,x) \models B^\mathrm{active}$.
Then $v$ is a node on the path from $\mathit{root}$ to $x$.
Due to $(v,x) \models ( \underline{X}^\text{time-apv} < \underline{\alpha}^\mathrm{time})$
we have $\mathit{time}(v)>0$. And due to
$(v,x) \models ( \underline{X}^\mathrm{pos}=\underline{\alpha}^\mathrm{prevpos})$
the node $v$ must have the property
$\mathit{pos}(\mathit{pred}(v))=\mathit{pos}(x)$.
That means that the cell $\mathit{pos}(x)$ has been visited before $x$ on the path from $\mathit{root}$
to $x$. But under these circumstances, the binary value $t$ of
$\underline{X}^\text{time-apv}$ at the point $(v,x)$ is equal to $\mathit{time}(u)$
where $u$ is the last node on the path from $\mathit{root}$ to $x$ with the
property $\mathit{pos}(\mathit{pred}(u))=\mathit{pos}(x)$.
Note that $v$ is a node on the path from $\mathit{root}$ to $x$ with the
property $\mathit{pos}(\mathit{pred}(u))=\mathit{pos}(x)$.
On the other hand, the condition
$(v,x) \models ( \underline{X}^\text{time-apv} < \underline{\alpha}^\mathrm{time})$
implies $t < \mathit{time}(v)$.
That is a contradiction.
We have shown $(v,x) \models \mathit{becoming\_inactive}$.
Next,we show 
$$(v,x) \models \mathit{staying\_inactive},$$
that is,
\[ (v,x) \models \Bigl( \neg B^\mathrm{active} \rightarrow \Box \neg B^\mathrm{active} \Bigr) . 
\]
This is clear from the definition of $\sigma(B^\mathrm{active})$.

We come to the last subformula of $\mathit{read\_a\_symbol}$ and show
$$(v,x) \models \mathit{storing\_the\_read\_symbol},$$
that is,
\[ (v,x) \models           
   \Bigl( \bigl( (\underline{X}^\mathrm{pos}=\underline{\alpha}^\mathrm{pos} )
                     \wedge (\underline{X}^\text{time-apv} \leq \underline{\alpha}^\mathrm{time} )
                     \wedge B^\mathrm{active} \bigr) \rightarrow
                     (\underline{\alpha}^\mathrm{read} = \underline{X}^\mathrm{read}) \Bigr) .
\]
The condition
$(v,x) \models B^\mathrm{active}$ implies that $v$ is a node on the path from $\mathit{root}$ to $x$.
The condition $(v,x) \models ( \underline{X}^\mathrm{pos}=\underline{\alpha}^\mathrm{pos}  )$
says that $\mathit{pos}(x) = \mathit{pos}(v)$.
We claim that $v=x$.
Once we have shown this, of course, we obtain
$(v,x) \models (\underline{\alpha}^\mathrm{read} = \underline{X}^\mathrm{read})$.
For the sake of a contradiction, let us assume $v\neq x$.
Then the cell $\mathit{pos}(x)$
has been visited before $x$ on the path from $\mathit{root}$ to $x$.
Let $u$ be the last node before $x$ on the path from $\mathit{root}$ to $x$ with 
$\mathit{pos}(u)=\mathit{pos}(x)$.
We obtain $\mathit{time}(v) \leq \mathit{time}(u)$.
For the binary value $t$ of $\underline{X}^\text{time-apv}$ in $(v,x)$
we obtain $t= 1 + \mathit{time}(u)$.
Finally, the condition $(v,x) \models (\underline{X}^\text{time-apv} \leq \underline{\alpha}^\mathrm{time})$
implies $t \leq \mathit{time}(v)$.
By putting all this together we arrive at the following contradiction:
\[ \mathit{time}(v) \leq \mathit{time}(u) < 1 +  \mathit{time}(u) = t \leq \mathit{time}(v). \]
We have shown $(v,x) \models \mathit{storing\_the\_read\_symbol}$.

Finally, we have to show that 
\[ (v,x) \models \mathit{computation} , \]
for all $(v,x) \in W$.
We will separately treat the conjunctions over the set $(q,\eta) \in Q_\exists \times \Gamma$
and over the set $Q_\forall \times \Gamma$.
Let us first fix a pair $(q,\eta) \in Q_\exists \times \Gamma$
and let us assume that $(v,x)\in W$ is a point with
$(v,x) \models (\alpha^\mathrm{state}_q \wedge \alpha^\mathrm{read}_\eta)$.
We have to show that there is an element
$(r,\theta,\mathit{left}) \in \delta(q,\eta)$ such that
$(v,x) \models \mathit{compstep}_{\mathrm{left}}(r,\theta)$
or that there is an element
$(r,\theta,\mathit{right}) \in \delta(q,\eta)$ such that
$(v,x) \models \mathit{compstep}_{\mathrm{right}}(r,\theta)$.
As $T$ is an accepting tree and the state $q$ of $c(v)$ is an element of $Q_\exists$,
the node $v$ is an inner node of $T$, hence, it has a successor $v'$.
Let us assume that $((q,\eta),(r,\theta,\mathit{left})) \in \delta$
is the element of the transition relation $\delta$ that leads from $v$ to $v'$
(the case that this element is of the form
$((q,\eta),(r,\theta,\mathit{right}))$ is treated analogously).
We claim that then
\[
(v,x) \models \mathit{compstep}_{\mathrm{left}}(r,\theta) .
\]
In fact, we observe $(v,x) \stackrel{L}{\to} (v,v')$ and
$(v,v') \stackrel{\Diamond}{\to} (v',v')$.
We claim that the two points
$(v,v')$ and $(v',v')$ have the properties formulated in the formula
$\mathit{compstep}_{\mathrm{left}}(r,\theta)$.
Let us check this.
Let us assume that, for some $k\in\{0,\ldots,N-1\}$
and for some $l \in \{0,\ldots,N\}$, 
\[
(v,x) \models 
\bigl( \mathrm{rightmost\_zero}(\underline{\alpha}^\mathrm{time},k)
           \wedge \mathrm{rightmost\_one}(\underline{\alpha}^\mathrm{pos},l) \bigr) .
\]
The number $i:=\mathit{time}(v)$ is an element of $\{0,\ldots,2^N-2\}$
because $v$ is an inner point of the tree $T$ and the length of any computation path is at most $2^N-1$.
The number $j:= \mathit{pos}(v)$ is an element of $\{1,\ldots,2^{N+1}-3\}$
because the computation starts in cell $2^N-1$ and
during each computation step the tape head can move at most one step to the left or to the right.
We obtain $k=\min(\{0,\ldots,N-1\} \setminus\mathrm{Ones}(i))$
and $l=\min\mathrm{Ones}(j)$.
As $v= \mathit{pred}(v')$ we obtain
\[ (v,v') \models (\underline{X}^\mathrm{prevtime}=\underline{\alpha}^\mathrm{time}) \]
and
\[ (v,v') \models (\underline{X}^\mathrm{prevpos}=\underline{\alpha}^\mathrm{pos}) . \]
And as $\mathit{time}(v')=\mathit{time}(v)+1=i+1$ and
$\mathit{pos}(v')=\mathit{pos}(v)-1=j-1$ we conclude 
that
\begin{eqnarray*}
(v',v') & \models & \Bigl(
       (\underline{\alpha}^\mathrm{time}=\underline{X}^\mathrm{prevtime},>k) 
                        \wedge  \mathrm{rightmost\_one}(\underline{\alpha}^\mathrm{time},k) \\
&& \phantom{\Bigl(}  
       \wedge (\underline{\alpha}^\mathrm{pos}=\underline{X}^\mathrm{prevpos},>l) 
                        \wedge \mathrm{rightmost\_zero}(\underline{\alpha}^\mathrm{pos},l) \Bigr) . 
\end{eqnarray*}
Finally, the condition
\[ (v',v') \models (\underline{\alpha}^\mathrm{prevpos}=\underline{X}^\mathrm{prevpos}) \]
is obviously satisfied
and the condition
\[ (v',v') \models \alpha^\mathrm{state}_r 
     \wedge \alpha^\mathrm{written}_\theta  \]
follows from the fact that $((q,\eta),(r,\theta,\mathit{left})) \in \delta$
is the element of the transition relation $\delta$ that leads from $v$ to $v'$.
This ends the treatment of the conjunctions over the set $(q,\eta) \in Q_\exists \times \Gamma$
in the formula $\mathit{computation}$.
Let us now consider a pair $(q,\eta) \in Q_\forall \times \Gamma$.
Let us assume that $(v,x) \in W$ is a point such that
$(v,x) \models (\alpha^\mathrm{state}_q \wedge \alpha^\mathrm{read}_\eta)$.
We have to show that for all elements
$(r,\theta,\mathit{left}) \in \delta(q,\eta)$ we have
$(v,x) \models \mathit{compstep}_{\mathrm{left}}(r,\theta)$
and for all elements
$(r,\theta,\mathit{right}) \in \delta(q,\eta)$ we have
$(v,x) \models \mathit{compstep}_{\mathrm{right}}(r,\theta)$.
Let us consider an arbitrary element $(r,\theta,\mathit{left}) \in \delta(q,\eta)$
(the case of an element $(r,\theta,\mathit{right}) \in \delta(q,\eta)$
is treated analogously).
As $q \in Q_\forall$ and $T$ is an accepting tree,
in $T$ there is a successor $v'$ of $v$ such that
the element $((q,\eta),(r,\theta,\mathit{left}))$ leads from $v$ to $v'$.
Above, we have already seen that this implies
\[ (v,x) \models \mathit{compstep}_{\mathrm{left}}(r,\theta) . \]
Thus, we have shown $(v,x) \models \mathit{computation}$
for all $(v,x) \in W$.
This ends the proof of the claim that in the 
$\sxs$-product model $(W_1\times W_2,\stackrel{\Diamond}{\to},\stackrel{L}{\to},\sigma)$
that we constructed for $w\in L$ 
we have $(\mathit{root},\mathit{root}) \models f_\sxs(w)$.

\subsection{Existence of an Accepting Tree}
\label{subsection:existence-of-an-acceptance-tree}

We come to the other direction. Let $w \in \Sigma^*$.
We wish to show that if $f_\sxs(w)$
is $\sxs$-satisfiable then $w\in L$.
We will show that any $\sxs$-product model
essentially contains an accepting tree of the Alternating Turing Machine $M$
on input $w$. Of course, this implies $w\in L$.

Let us sketch the main idea.
We will consider an $\sxs$-product model of $f_\sxs(w)$.
And we will consider partial trees of $M$ on input $w$
as considered in Subsection~\ref{subsection:ATM}, for any $w\in\Sigma^*$.
First, we will show that a certain very simple partial tree of $M$ on input $w$
``can be mapped to'' the model (later we will give a precise
meaning to ``can be mapped to'').
Then we will show that any partial tree of $M$ on input $w$
that can be mapped to the model 
and that is not an accepting tree of $M$ on input $w$
can be properly extended to a
strictly larger partial tree of $M$ on input $w$ that can be mapped to the model as well.
If there would not exist an accepting tree of $M$ on input $w$ then
we would obtain an infinite strictly increasing sequence
of partial trees of $M$ on input $w$.
But we show that this cannot happen by giving a 
finite upper bound on the size of partial trees of $M$ on input $w$.

Let $w \in \Sigma^*$ be a string such that the formula
$f_\sxs(w)$ is $\sxs$-satisfiable.
We set $n:=|w|$.
Let $(W_1,R_\Diamond)$ be an $S4$-frame, let
$(W_2,R_L)$ be an $S5$-frame, let $\sigma:AT\to \mathcal{P}(W_1 \times W_2)$
be a function such that the quadruple $(W_1\times W_2, \stackrel{\Diamond}{\to},\stackrel{L}{\to},\sigma)$
where $\stackrel{\Diamond}{\to}$ and $\stackrel{L}{\to}$ are defined as in \cite[Definition 2.3]{HK2019-1} is an $\sxs$-product model, and let $(r_1,r_2) \in W_1\times W_2$ be a point with
$(r_1,r_2) \models f_\sxs(w)$.
The quintuple
$$\mathit{Model}:=(W_1\times W_2, \stackrel{\Diamond}{\to},\stackrel{L}{\to},\sigma,(r_1,r_2))$$
will be important in the following.
We claim that we can assume without loss of generality that
$r_1 \stackrel{\Diamond}{\to} x$ for all $x\in W_1$ and
$R_L = W_2\times W_2$. Otherwise, instead of $W_1$ we could consider
the set $W_1^\prime:=\{v\in W_1 \mid r_1 \stackrel{\Diamond}{\to} v\}$ and
instead of $W_2$
we could consider the set $W_2^\prime:=$ the $R_L$-equivalence
class of $r_2$ and the restrictions $\stackrel{L}{\to}^\prime$ resp. $\stackrel{\Diamond}{\to}^\prime$
of $\stackrel{L}{\to}$ resp. $\stackrel{\Diamond}{\to}$ to $W_1^\prime \times W_2^\prime$.
By structural induction one shows that
for any bimodal formula $\varphi$ and for any $(v,x) \in W_1^\prime \times W_2^\prime$, 
\[ (W_1\times W_2, \stackrel{\Diamond}{\to},\stackrel{L}{\to},\sigma),(v,x) \models \varphi
  \iff (W_1^\prime\times W_2^\prime, \stackrel{\Diamond}{\to}^\prime,\stackrel{L}{\to}^\prime,\sigma),(v,x) \models \varphi . \]
So, we shall assume that
$r_1 \stackrel{\Diamond}{\to} x$, for all $v\in W_1$, and $R_L = W_2\times W_2$.
Note that this implies that
if $\varphi$ is a formula with $(r_1,r_2) \models K \Box \varphi$
then, for all $(v,x) \in W_1 \times W_2$, we have $(v,x) \models \varphi$.

For every $v\in W_1$, the set
\[ \mathrm{Cloud}(v) := \{ (v,x) ~:~ v \in W_1, x \in W_2\} \]
is the $\stackrel{L}{\to}$-equivalence class (short: \emph{cloud})
of any element $y \in W_1 \times W_2$ whose first component is $v$. 
Remember that for every $\stackrel{L}{\to}$-equivalence class
and every shared variable 
$\alpha^{\mathit{string}}_i$ for 
$\mathit{string} \in \{\mathrm{time}, \mathrm{pos},
       \mathrm{state}, \mathrm{read}, \mathrm{prevpos}, \mathrm{written}\}$
and natural numbers $i$, the truth value of this shared variable 
is the same in all elements of the $\stackrel{L}{\to}$-equivalence class.

Partial trees of $M$ on input $w$ as introduced in 
Subsection~\ref{subsection:ATM} will play an important role
in the following.
We will write a partial tree of $M$ on input $w$
similarly as in Subsection~\ref{subsection:ATM}
as a triple $T=(V,E,c)$, but with the difference
that we will describe configurations as at the beginning of Subsection~\ref{subsection:definition-ATMs-S4XS5}:
the labeling function $c$ will be a function of the form
$c:V\to Q \times \{0,\ldots,2^{N+1}-2\}\times \Gamma^{2^{N+1}-1}$.
If $T=(V,E,c)$ is a partial tree of $M$ on input $w$
with root $\mathit{root}$
then a function $\pi:V\to W_1$ is called a
\emph{morphism from $T$ to $\mathit{Model}$}
if it satisfies the following four conditions:
\begin{enumerate}
\item
$\pi(\mathit{root}) = r_1$.
\item
$\{(\pi(v),\pi(v')) ~:~ v,v' \in V \text{ and } vEv'\} \subseteq R_\Diamond$.
\item
$\begin{array}[t]{l}
(\forall v \in V\setminus\{\mathit{root}\}) \, (\exists x \in W_2) \\
(\pi(v),x) \models 
\bigl(  (\underline{\alpha}^\mathrm{prevpos} = \mathrm{bin}_{N+1}(\mathit{pos}(\mathit{pred}(v))))
\wedge \alpha^\mathrm{written}_{\mathit{written}(v)} \bigr)
\end{array}$.
\item
$\begin{array}[t]{l}
(\forall v \in V) \, (\exists x \in W_2) \\ 
(\pi(v),x) \models 
 \bigl( (\underline{\alpha}^\mathrm{time} = \mathrm{bin}_N(\mathit{time}(v)))
  \wedge (\underline{\alpha}^\mathrm{pos} = \mathrm{bin}_{N+1}(\mathit{pos}(v)))\\ 
  \qquad\qquad\qquad\wedge \alpha^\mathrm{state}_{\mathit{state}(v)}
  \wedge \alpha^\mathrm{read}_{\mathit{read}(v)} \bigr).
\end{array}$
\end{enumerate}

We say that $T$ \emph{can be mapped to} $\mathit{Model}$ if there exists a morphism
from $T$ to $\mathit{Model}$.
Below we shall prove the following lemma.

\begin{lemma}
\label{lemma:inductionS4S5}
If a partial tree $T=(V,E,c)$ of $M$ on input $w$
can be mapped to $\mathit{Model}$
and is not an accepting tree of $M$ on input $w$
then there exists a partial tree $T=(\widetilde{V},\widetilde{E},\widetilde{c})$ of $M$ on input $w$
that can be mapped to $\mathit{Model}$ and that satisfies
$V \subsetneq \widetilde{V}$.
\end{lemma}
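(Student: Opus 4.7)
The plan is to mirror the structure of the proof of Lemma~\ref{lemma:inductionSSL}, but with the $\ssl$-specific use of persistent propositional variables (as surrogate for shared variables in the $\stackrel{L}{\to}$-direction) replaced by the shared-variable/$B^\mathrm{active}$ mechanism of Subsection~\ref{subsection:definition-ATMs-S4XS5}. First, since $T$ is a partial but not accepting tree, it has a leaf $\widehat{v}$ whose state $q:=\mathit{state}(\widehat{v})$ lies in $Q_\exists\cup Q_\forall$; condition $\mathrm{V}^\prime$ together with $K\Box\mathit{no\_reject}$ excludes $q_\mathrm{reject}$. By clause 4 of the morphism definition there is some $\widehat{x}\in W_2$ with $(\pi(\widehat{v}),\widehat{x})\models\alpha^\mathrm{state}_q\wedge\alpha^\mathrm{read}_\eta$, where $\eta:=\mathit{read}(\widehat{v})$, and since shared variables are constant throughout a cloud, every point of $\mathrm{Cloud}(\pi(\widehat{v}))$ satisfies this conjunction. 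Applying $K\Box\mathit{computation}$ there then yields either some single $\mathit{compstep}_{\mathrm{dir}}(r,\theta)$ for $(r,\theta,\mathrm{dir})\in\delta(q,\eta)$ in the case $q\in Q_\exists$, or all such formulas in the case $q\in Q_\forall$.

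For each transition $(r,\theta,\mathrm{dir})$ that must be realised I read off from the relevant $\mathit{compstep}$ formula a point $x'\in\mathrm{Cloud}(\pi(\widehat{v}))$ carrying the appropriate persistent values of $\underline{X}^\mathrm{prevtime}$ and $\underline{X}^\mathrm{prevpos}$, together with a $\stackrel{\Diamond}{\to}$-successor $y$ of $x'$ whose first coordinate I denote $v'$; the cloud $\mathrm{Cloud}(v')$ then carries the updated counters $\underline{\alpha}^\mathrm{time}$, $\underline{\alpha}^\mathrm{pos}$ and the shared variables $\underline{\alpha}^\mathrm{prevpos}$, $\alpha^\mathrm{state}_r$, $\alpha^\mathrm{written}_\theta$ required by the successor configuration. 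I form $\widetilde{T}$ by adjoining one new leaf $\widetilde{v}$ (respectively one per transition in $\delta(q,\eta)$ for the $Q_\forall$-case) as child of $\widehat{v}$ with the associated successor configuration, and extend $\pi$ to $\widetilde{\pi}$ by mapping each new leaf to the corresponding $v'$. Conditions I--IV for $\widetilde{T}$ are immediate by construction; $\mathrm{V}^\prime$ for the new leaves follows from $K\Box\mathit{no\_reject}$ at $y$. Morphism clauses 1 and 2 are routine (the latter uses $\pi(\widehat{v})\,R_\Diamond\,v'$, inherited from $x'\stackrel{\Diamond}{\to}y$ and the product structure), and clause 3 for $\widetilde{v}$ follows from the conjuncts $\underline{\alpha}^\mathrm{prevpos}=\underline{X}^\mathrm{prevpos}$ and $\alpha^\mathrm{written}_\theta$ at $y$ after matching $\underline{X}^\mathrm{prevpos}$ at $x'$ with $\mathrm{bin}_{N+1}(\mathit{pos}(\widehat{v}))=\mathrm{bin}_{N+1}(\mathit{pos}(\mathit{pred}(\widetilde{v})))$. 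Of clause 4 for $\widetilde{v}$, the time, position and state assertions are immediate from the $\mathit{compstep}$ conclusion.

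The main obstacle, and the step where the $\sxs$-argument genuinely diverges from the $\ssl$-version, is verifying $\alpha^\mathrm{read}_{\mathit{read}(\widetilde{v})}$ in $\mathrm{Cloud}(v')$. I invoke $\mathit{read\_a\_symbol}$ at any point of this cloud: $\mathit{existence\_of\_a\_reading\_point}$ supplies an active point $z\in\mathrm{Cloud}(v')$ with $\underline{X}^\mathrm{pos}=\underline{\alpha}^\mathrm{pos}$ and $\underline{X}^\text{time-apv}\leq\underline{\alpha}^\mathrm{time}$. Let $\mathit{root}=v_0,\ldots,v_i=\widehat{v},v_{i+1}=\widetilde{v}$ be the branch in $\widetilde{T}$. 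Iterated application of left commutativity along this chain produces $\stackrel{\Diamond}{\to}$-predecessors $z_m\in\mathrm{Cloud}(\widetilde{\pi}(v_m))$ of $z$ for each $m$; by persistence of $\underline{X}^\text{time-apv}$, $\underline{X}^\mathrm{pos}$ and $\underline{X}^\mathrm{read}$, all $z_m$ carry the same values of these vectors as $z$, while at $z_m$ the shared vectors $\underline{\alpha}^\mathrm{time}$, $\underline{\alpha}^\mathrm{pos}$ encode $\mathit{time}(v_m)$, $\mathit{pos}(v_m)$ and, for $m\geq 1$, $\underline{\alpha}^\mathrm{prevpos}$ encodes $\mathit{pos}(\mathit{pred}(v_m))=\mathit{pos}(v_{m-1})$. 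By $\mathit{staying\_inactive}$ and the activity of $z$, every $z_m$ is active; $\mathit{becoming\_inactive}$ then forces the binary value $t$ of $\underline{X}^\text{time-apv}$ to equal $0$ if no $v_m$ with $1\leq m\leq i+1$ satisfies $\mathit{pos}(\mathit{pred}(v_m))=\mathit{pos}(\widetilde{v})$, and otherwise to equal $1+\mathit{time}(v_{m^\ast})$ where $v_{m^\ast}$ is the last such index on the branch below $\widetilde{v}$ --- exactly the case split ``the cell under the head of $\widetilde{v}$ has/has not been visited before''. In the first case $\mathit{initial\_symbols}$ pins $\underline{X}^\mathrm{read}$ at $z$ to the correct initial tape symbol, which is $\mathit{read}(\widetilde{v})$; in the second case $\mathit{written\_symbols}$ at $z_{m^\ast}$ (where by construction $\underline{X}^\text{time-apv}=\underline{\alpha}^\mathrm{time}$ and $B^\mathrm{active}$ hold) combined with clause 3 of the induction hypothesis forces $\underline{X}^\mathrm{read}=\mathrm{bin}(\mathit{written}(v_{m^\ast}))=\mathrm{bin}(\mathit{read}(\widetilde{v}))$. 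Finally $\mathit{storing\_the\_read\_symbol}$ at $z$ transfers this to the shared $\underline{\alpha}^\mathrm{read}$ in $\mathrm{Cloud}(v')$. The delicate point is the interplay between persistence of $\underline{X}^\text{time-apv}$ along $\stackrel{\Diamond}{\to}$, the one-directional propagation of $\neg B^\mathrm{active}$, and the fact that right commutativity populates every cloud with points inherited from unrelated branches --- this is exactly the additional complication absent in $\ssl$.
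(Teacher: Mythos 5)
Your construction of $\widetilde{T}$ and $\widetilde{\pi}$, and your verification of the tree conditions and morphism clauses 1--3 and the time/position/state parts of clause 4, follow the paper's proof closely and correctly. The overall plan is the same as the paper's, including the final reduction of the read-symbol verification to the subformulas of $\mathit{read\_a\_symbol}$ via an ``active'' point $z$ in $\mathrm{Cloud}(\widetilde{\pi}(\widetilde{v}))$ and its $\stackrel{\Diamond}{\to}$-predecessors $z_m$ (which in a product model you can form directly with the same second coordinate; no appeal to left commutativity is needed).

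However, your handling of the read symbol has a real gap. You write that $\mathit{becoming\_inactive}$ ``forces the binary value $t$ of $\underline{X}^{\text{time-apv}}$ to equal $0$ [when the cell $\mathit{pos}(\widetilde{v})$ was never visited] and otherwise to equal'' the value corresponding to the last visit. That is more than $\mathit{becoming\_inactive}$ can deliver. Activity of $z_m$ and $\mathit{becoming\_inactive}$ only give a \emph{lower} bound: for every $m$ with $\mathit{pos}(\mathit{pred}(v_m))=\mathit{pos}(\widetilde{v})$ one obtains $t\geq m$. If the cell was never visited, $\mathit{becoming\_inactive}$ is vacuous and gives no constraint on $t$ at all; and if it was visited, nothing yet excludes $t$ strictly greater than the last such $m$. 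In either of these unexcluded cases $\mathit{written\_symbols}$ at $z_t$ would copy $\underline{\alpha}^{\mathrm{written}}$ of some node $v_t$ whose predecessor's cell is unrelated to $\mathit{pos}(\widetilde{v})$, giving the wrong symbol. The missing ingredient is the subformula $\mathit{time\_of\_previous\_visit}$, which you never invoke: evaluated at $z_t$ with $t>0$, it forces $\underline{X}^{\mathrm{pos}}=\underline{\alpha}^{\mathrm{prevpos}}$, i.e.\ $\mathit{pos}(\mathit{pred}(v_t))=\mathit{pos}(\widetilde{v})$, which together with the $\mathit{becoming\_inactive}$ lower bound pins down $t$ as exactly the step after the last visit (and forces $t=0$ when there was no visit). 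Without this step, the case split and the subsequent applications of $\mathit{initial\_symbols}$ and $\mathit{written\_symbols}$ are not justified. (There is also an off-by-one slip in the phrase ``$t=1+\mathit{time}(v_{m^\ast})$'' given your definition of $m^\ast$, but that is merely a writeup inconsistency; the substantive omission is $\mathit{time\_of\_previous\_visit}$.)
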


Before we prove this lemma, we deduce the desired assertion from it.
Let 
\[ D:= \max(\{ |\delta(q,\eta)| ~:~ q \in Q, \eta \in\Gamma\}) . \]
Then, due to Condition III in the definition of a
``partial tree of $M$ on input $w$'',
any node in any partial tree of $M$ on input $w$ has
at most $D$ successors.  
As any computation of $M$ on input $w$ stops after
at most $2^N-1$ steps, any partial tree of $M$ on input $w$ has
at most 
\[ \widetilde{D}:=(D^{2^N}-1)/(D-1) \]
nodes.  

We claim that the rooted and labeled tree
\[ T_0:=(\{\mathit{root}\},\emptyset,c) \text{ where }
c(\mathit{root}):=(q_0,2^N-1,\#^{2^N}w\#^{2^N-1-n}) \]
is a partial tree of $M$ on input $w$ and 
that it can be mapped to $\mathit{Model}$.
Indeed, Condition I in the definition of a ``partial tree of $M$ on input $w$'' is
satisfied because the node $\mathit{root}$ is labeled with the initial configuration of $M$ on input $w$.
Conditions II, III, and IV are satisfied because $T_0$ does not have any inner nodes.
Condition $\mathrm{V}^\prime$ is satisfied because $\mathit{state}(\mathit{root}) = q_0$
and, due to $(r_1,r_2) \models \alpha^\mathrm{state}_{q_0}$
(this is a part of $(r_1,r_2) \models \mathit{start}$)
and $(r_1,r_2) \models \neg \alpha^\mathrm{state}_{q_\mathrm{reject}}$
(this follows from $(r_1,r_2) \models K \Box \mathit{no\_reject}$)
we obtain $q_0 \neq q_\mathrm{reject}$.
Thus, $T_0$ is indeed a partial tree of $M$ on input $w$.
Now we show that $T_0$ can be mapped to $\mathit{Model}$.
Of course, we define the function
$\pi:\{\mathit{root}\}\to W_1$ by $\pi(\mathit{root}):=r_1$.
\begin{enumerate}
\item
The condition $\pi(\mathit{root})=r_1$ is true by definition.
\item
The tree $T_0$ does not have any edges, that is, its set $E$ of edges is empty. 
So, the second condition is satisfied.
\item
The third condition does not apply to the tree $T_0$
because $T_0$ has only one node, its root.
\item
On the one hand, we have
$$\mathit{time}(\mathit{root})=0, \qquad \mathit{pos}(\mathit{root})=2^N-1,$$
$$\mathit{state}(\mathit{root})=q_0, \qquad \text{and }\mathit{read}(\mathit{root})=\#.$$
On the other hand, the condition $(r_1,r_2) \models \mathit{start}$ says
\[ (r_1,r_2) \models \bigl( (\underline{\alpha}^\mathrm{time}=\bin_N(0))
   \wedge (\underline{\alpha}^\mathrm{pos}=\bin_{N+1}(2^N-1))
   \wedge \alpha^\mathrm{state}_{q_0} . \]
We still wish to show $(r_1,r_2) \models \alpha^\mathrm{read}_{\#}$.   
We have $(r_1,r_2) \models \mathit{read\_a\_symbol}$.
This implies
$(r_1,r_2) \models \mathit{existence\_of\_a\_reading\_point}$, and this implies
that there exists some $y\in W_2$ with 
\[ (r_1,y) \models \bigl((\underline{X}^\mathrm{pos} = \underline{\alpha}^\mathrm{pos} )
                        \wedge (\underline{X}^\text{time-apv} \leq \underline{\alpha}^\mathrm{time} )
                        \wedge B^\mathrm{active}\bigr) , \]
hence, with
\[ (r_1,y) \models \bigl((\underline{X}^\mathrm{pos} = \mathrm{bin}_{N+1}(2^N-1)) 
                        \wedge (\underline{X}^\text{time-apv} = \mathrm{bin}(0) )
                        \wedge B^\mathrm{active}\bigr) . \]
Now $(r_1,y) \models \mathit{initial\_symbols}$ implies
$(r_1,y) \models X^\mathrm{read}_\#$.
Finally, the condition
\[ (r_1,y) \models \mathit{storing\_the\_read\_symbol} \]
implies $(r_1,y) \models \alpha^\mathrm{read}_\#$,
and this implies $(r_1,r_2) \models \alpha^\mathrm{read}_\#$.
\end{enumerate}

If there would not exist an accepting tree of $M$ on input $w$
then, starting with $T_0$ and using Lemma~\ref{lemma:inductionS4S5},
we could construct an infinite sequence of partial
trees $T_0,T_1,T_2,\ldots$ of $M$ on input $w$ that 
can be mapped to $\mathit{Model}$ 
such that the number of nodes in these trees is strictly increasing.
But we have seen that any partial tree of $M$ input $w$ can have at most $\widetilde{D}$ nodes.
Thus, there exists an accepting tree of $M$ on input $w$.
We have shown $w\in L$.

In order to complete the proof of Theorem~\ref{theorem:S4S5-EXPSPACE-hard}
it remains to prove Lemma~\ref{lemma:inductionS4S5}.

\begin{proof}[Proof of Lemma~\ref{lemma:inductionS4S5}]
Let $T=(V,E,c)$ be a partial tree of $M$ on input $w$
that is not an accepting tree of $M$ on input $w$
and that can be mapped to $\mathit{Model}$.
Then $T$ has a leaf $\widehat{v}$ such that the state
$q:=\mathit{state}(\widehat{v})$ is either an element of $Q_\exists$ or of $Q_\forall$.
First we treat the case that it is an element of $Q_\exists$,
then the case that it is an element of $Q_\forall$.
Let $\pi:V \to W_1$ be a morphism from $T$ to $\mathit{Model}$.

So, let us assume that $q \in Q_\exists$.
We define $\eta:=\mathit{read}(\widehat{v})$.
As $\pi:V \to W_1$ is a morphism from $T$ to $\mathit{Model}$
there exists an $x' \in W_2$ with
\[ (\pi(\widehat{v}),x') \models 
 (\underline{\alpha}^\mathrm{time} = \mathrm{bin}_N(\mathit{time}(\widehat{v})))
  \wedge (\underline{\alpha}^\mathrm{pos} = \mathrm{bin}_{N+1}(\mathit{pos}(\widehat{v}))) 
  \wedge \alpha^\mathrm{state}_q 
   \wedge \alpha^\mathrm{read}_\eta . \]
Hence, due to $(\pi(\widehat{v}),x') \models \mathit{computation}$,
\[ (\pi(\widehat{v}),x') \models 
 \bigvee_{(r,\theta,\mathit{left}) \in \delta(q,\eta)}  \mathit{compstep}_{\mathrm{left}}(r,\theta)
   \vee   \bigvee_{(r,\theta,\mathit{right}) \in \delta(q,\eta)}  \mathit{compstep}_{\mathrm{right}}(r,\theta) . \]
Let us assume there is an element $(r,\theta,\mathit{left}) \in \delta(q,\eta)$
such that $(\pi(\widehat{v}),x') \models  \mathit{compstep}_{\mathrm{left}}(r,\theta)$
(the other case, the case when there is an element $(r,\theta,\mathit{right}) \in \delta(q,\eta)$
such that $(\pi(\widehat{v}),x') \models \mathit{compstep}_{\mathrm{right}}(r,\theta)$, is treated analogously).  
We claim that we can define the new tree $\widetilde{T}=(\widetilde{V},\widetilde{E},\widetilde{c})$
as follows:
{\setlength{\leftmargini}{2.1em}\begin{itemize}
\item
$\widetilde{V} := V \cup\{\widetilde{v}\}$ where $\widetilde{v}$ is a new element (not in $V$),
\item
$\widetilde{E} := E \cup \{(\widehat{v},\widetilde{v})\}$,
\item
$\widetilde{c}(x) := \begin{cases}
          c(x) & \text{for all } x \in V, \\
          c' & \text{for } x=\widetilde{v}, \text{ where $c'$ is the configuration that is reached from $c(\widehat{v})$} \\
             & \text{in the computation step given by } ((q,\eta),(r,\theta,\mathit{left})) \in\delta.
             \end{cases}$
\end{itemize}}
We have to show that $\widetilde{T}$ is a partial tree of $M$ on input $w$.
Condition I in the definition of ``a partial tree of $M$ on input $w$''
is satisfied because $\widetilde{T}$ has the same root as $T$,
and the label of the root does not change.
A node in $\widetilde{T}$ is an internal node of $\widetilde{T}$ if, and only if, it is either an internal node of $T$
or equal to $\widehat{v}$.
For internal nodes of $T$ Conditions II, III, and IV are satisfied by assumption
(and due to the fact that the labels of nodes in $V$ do not change
when moving from $T$ to $\widetilde{T}$).
The new internal node $\widehat{v}$ satisfies Condition II by our definition of $\widetilde{c}(\widetilde{v})$.
Condition III is satisfied for $\widehat{v}$ because $\widehat{v}$ has exactly one successor.
And Condition IV does not apply to $\widehat{v}$ because $\widehat{v}\in Q_\exists$.
A node in $\widetilde{T}$ is a leaf if, and only if, it is either equal to $\widetilde{v}$
or a leaf in $T$ different from $\widehat{v}$.
For the leaves in $T$ different from $\widehat{v}$ Condition $\mathrm{V}^\prime$ is satisfied by assumption
(and due to the fact that the labels of nodes in $V$ do not change).
Finally, we have to show that Condition $\mathrm{V}^\prime$ 
is satisfied for the new leaf $\widetilde{v}$ in $\widetilde{T}$ as well.
We postpone this until after the definition of a morphism from $\widetilde{T}$ to $\mathit{Model}$.

We also have to show that $\widetilde{T}$ can be mapped to $\mathit{Model}$.
Let us define a function $\widetilde{\pi}:\widetilde{V}\to W_1$
that we will show to be a morphism from $\widetilde{T}$ to $\mathit{Model}$.
As $\widehat{v}$ is an element of a partial tree of $M$ on input $w$
with $\mathit{state}(\widehat{v}) \in Q_\exists$, at least one more computation step
can be done. As any computation of $M$ on input $w$ stops after at most $2^N-1$ steps,
we observe that the number $i:=\mathit{time}(\widehat{v})$ 
satisfies $0 \leq i < 2^N-1$. 
Then $\{0,\ldots,N-1\} \setminus\mathrm{Ones}(i) \neq \emptyset$.
We set $k:= \min(\{0,\ldots,N-1\} \setminus\mathrm{Ones}(i))$.
Together with $(\pi(\widehat{v}),x') \models (\underline{\alpha}^\mathrm{time} = \mathrm{bin}_N(i))$
we conclude 
$(\pi(\widehat{v}),x') \models \mathrm{rightmost\_zero}(\underline{\alpha}^\mathrm{time},k)$.
As during each computation step, the tape head can move at most one step to the left or to the right
and as the computation started in position $2^N-1$
the number $j:=\mathit{pos}(\widehat{v})$ satisfies
$0 < j \leq 2^{N+1}-3$.
Then $\mathrm{Ones}(j) \neq \emptyset$.
We set $l:= \min\mathrm{Ones}(j)$.
Together with
$(\pi(\widehat{v}),x') \models (\underline{\alpha}^\mathrm{pos} = \mathrm{bin}_{N+1}(j))$
we conclude
$(\pi(\widehat{v}),x') \models \mathrm{rightmost\_one}(\underline{\alpha}^\mathrm{pos},l)$.
Thus, we have
\[
(\pi(\widehat{v}),x') \models \bigl( \mathrm{rightmost\_zero}(\underline{\alpha}^\mathrm{time},k)
    \wedge \mathrm{rightmost\_one}(\underline{\alpha}^\mathrm{pos},l) \bigr).
\]
Due to $(\pi(\widehat{v}),x') \models  \mathit{compstep}_{\mathrm{left}}(r,\theta)$
there exist an element $y \in W_2$ and an element $x \in W_1$ such that
$\pi(\widehat{v}) R_\Diamond x$
as well as
\[ (\pi(\widehat{v}),y) \models
     \bigl(  (\underline{X}^\mathrm{prevtime}=\underline{\alpha}^\mathrm{time})
                             \wedge  (\underline{X}^\mathrm{prevpos}=\underline{\alpha}^\mathrm{pos})
                        \bigr) \]
and
\begin{eqnarray*}
(x,y) \models
&&  \Bigl(
       (\underline{\alpha}^\mathrm{time}=\underline{X}^\mathrm{prevtime},>k) 
                        \wedge  \mathrm{rightmost\_one}(\underline{\alpha}^\mathrm{time},k) \\
&&  \wedge (\underline{\alpha}^\mathrm{pos}=\underline{X}^\mathrm{prevpos},>l) 
                        \wedge \mathrm{rightmost\_zero}(\underline{\alpha}^\mathrm{pos},l) \\
&&  \wedge ( \underline{\alpha}^\mathrm{prevpos}=\underline{X}^\mathrm{prevpos}) 
                        \wedge \alpha^\mathrm{state}_{r} 
                        \wedge \alpha^\mathrm{written}_\theta 
\Bigr) 
\end{eqnarray*}
We claim that we can define the desired function $\widetilde{\pi}:\widetilde{V} \to W_1$ by
\[ \widetilde{\pi}(v) := \begin{cases}
       \pi(v) & \text{if } v \in V, \\
       x & \text{if } v=\widetilde{v}.
       \end{cases} 
\]
Before we show that $\widetilde{\pi}$ is a morphism from $\widetilde{T}$ to $\mathit{Model}$,
let us complete the proof that $\widetilde{T}$ is a partial tree of $M$ on input $w$.
We still need to show that Condition $\mathrm{V}^\prime$ 
is satisfied for the new leaf $\widetilde{v}$ in $\widetilde{T}$ as well.
It is sufficient to show that the state $r$ of the configuration $c'$ 
is not the rejecting state $q_\mathrm{reject}$.
But this follows from
$(x,y) \models  \alpha^\mathrm{state}_r$ and
$(x,y) \models \neg \alpha^\mathrm{state}_{q_\mathrm{reject}}$
(this follows from $(r_1,r_2) \models K \Box \mathit{no\_reject}$).
We have shown that $\widetilde{T}$ is a partial tree of $M$ on input $w$.

Now we show that $\widetilde{\pi}$ is a morphism from $\widetilde{T}$ to $\mathit{Model}$.
The first condition in the definition of a ``morphism from $\widetilde{T}$ to $\mathit{Model}$''
is satisfied because $\widetilde{\pi}(\mathit{root}) = \pi(\mathit{root})=r_1$.
For the second condition let us consider $v,v' \in \widetilde{V}$ with $v\widetilde{E}v'$.
We have to show $\pi(v) R_\Diamond \pi(v')$.
There are two possible cases.
\begin{itemize}
\item
In the case $v,v' \in V$ we have $vEv'$ and, hence, $\pi(v)R_\Diamond \pi(v')$.
As $\widetilde{\pi}(v)=\pi(v)$ and $\widetilde{\pi}(v')=\pi(v')$ we obtain
$\widetilde{\pi}(v) R_\Diamond \widetilde{\pi}(v')$.
\item
The other possible case is $v=\widehat{v}$ and $v'=\widetilde{v}$.
But we know $\widetilde{\pi}(\widehat{v})=\pi(\widehat{v})$,
$\widetilde{\pi}(\widetilde{v}) = x$, and
$\pi(\widehat{v}) R_\Diamond x$.
\end{itemize}
Next, we verify that the third condition in the definition of a
``morphism from $\widetilde{T}$ to $\mathit{Model}$'' is satisfied.
For $v\in V\setminus\{\mathit{root}\}$ 
it is satisfied by assumption
(and by $\widetilde{\pi}(v)=\pi(v)$ and $\widetilde{c}(v) = c(v)$).
For $\widetilde{v}$ it is sufficient to show that 
\[ 
(x,y) \models \bigl( 
(\underline{\alpha}^\mathrm{prevpos} = \mathrm{bin}_{N+1}(\mathit{pos}(\mathit{pred}(\widetilde{v}))))
\wedge \alpha^\mathrm{written}_{\mathit{written}(\widetilde{v})} \bigr)
\]
(remember that $\widetilde{\pi}(\widetilde{v})=x$).
These are really two conditions. We prove them separately.
\begin{itemize}
\item
In the tree $\widetilde{T}$ we have $\mathit{pos}(\mathit{pred}(\widetilde{v}))=\mathit{pos}(\widehat{v})=j$,
and in $T$ we have $\mathit{pos}(\widehat{v})=j$ as well.
The assumption 
$(\pi(\widehat{v}),x') \models (\underline{\alpha}^\mathrm{pos}= \mathrm{bin}_{N+1}(j))$,
for some $x'\in W_2$,
implies $(\pi(\widehat{v}),y) \models (\underline{\alpha}^\mathrm{pos}= \mathrm{bin}_{N+1}(j))$,
and the conditions $(\pi(\widehat{v}),y) \models  (\underline{X}^\mathrm{prevpos}=\underline{\alpha}^\mathrm{pos})$
and $(x,y) \models (\underline{\alpha}^\mathrm{prevpos}=\underline{X}^\mathrm{prevpos})$
as well as the persistence of $\underline{X}^\mathrm{prevpos}$ imply
$(x,y) \models (\underline{\alpha}^\mathrm{prevpos}=\mathrm{bin}_{N+1}(j))$.
\item
In $\widetilde{T}$ we have $\mathit{written}(\widetilde{v}) = \theta$.
And we have $(x,y) \models \alpha^\mathrm{written}_\theta$,
hence, $(x,y) \models \alpha^\mathrm{written}_{\mathit{written}(\widetilde{v})}$.
\end{itemize}
We come to the fourth condition in the definition of a
``morphism from $\widetilde{T}$ to $\mathit{Model}$''.
It is satisfied for $v\in V$ by assumption
(and due to $\widetilde{\pi}(v)=\pi(v)$ and $\widetilde{c}(v) = c(v)$).
We still need to show that it is satisfied for $v=\widetilde{v}$.
Remember $\widetilde{\pi}(\widetilde{v})=x$.
It is sufficient to show
\[ 
(x,y) \models
(\underline{\alpha}^\mathrm{time} = \mathrm{bin}_N(\mathit{time}(\widetilde{v})))
  \wedge (\underline{\alpha}^\mathrm{pos} = \mathrm{bin}_{N+1}(\mathit{pos}(\widetilde{v}))) 
  \wedge \alpha^\mathrm{state}_{\mathit{state}(\widetilde{v})}
  \wedge \alpha^\mathrm{read}_{\mathit{read}(\widetilde{v})}.
\]
This assertion consists really of four assertions. We treat them one by one.
\begin{itemize}
\item
In the trees $T$ and $\widetilde{T}$ we have $\mathit{time}(\widehat{v})=i$, 
and in the tree $\widetilde{T}$ we have $\mathit{time}(\widetilde{v})=i+1$.
We have already seen that $(\pi(\widehat{v}),x') \models (\underline{\alpha}^\mathrm{time} = \mathrm{bin}_N(i))$
and that
$(\pi(\widehat{v}),x') \models \mathrm{rightmost\_zero}(\underline{\alpha}^\mathrm{time},k)$.
Of course, we get
\[ 
(\pi(\widehat{v}),y) \models ((\underline{\alpha}^\mathrm{time} = \mathrm{bin}_N(i))
\wedge \mathrm{rightmost\_zero}(\underline{\alpha}^\mathrm{time},k)) . 
\]
The conditions 
\begin{eqnarray*}
(\pi(\widehat{v}),y) &\models& (\underline{X}^\mathrm{prevtime}=\underline{\alpha}^\mathrm{time}), \\
(x,y) &\models&
       (\underline{\alpha}^\mathrm{time}=\underline{X}^\mathrm{prevtime},>k) 
                        \wedge  \mathrm{rightmost\_one}(\underline{\alpha}^\mathrm{time},k) 
\end{eqnarray*}
and the persistence of $\underline{X}^\mathrm{prevtime}$
imply $(x,y) \models (\underline{\alpha}^\mathrm{time} = \mathrm{bin}_N(i+1))$.
\item
In the trees $T$ and $\widetilde{T}$ we have $\mathit{pos}(\widehat{v})=j$, 
and in the tree $\widetilde{T}$ we have $\mathit{pos}(\widetilde{v})=j-1$.
We have already seen
$(\pi(\widehat{v}),x') \models (\underline{\alpha}^\mathrm{pos} = \mathrm{bin}_{N+1}(j))$,
and
$(\pi(\widehat{v}),x') \models \mathit{rightmost\_one}(\underline{\alpha}^\mathrm{pos},l)$.
Of course, we get
\[ 
(\pi(\widehat{v}),y) \models ((\underline{\alpha}^\mathrm{pos} = \mathrm{bin}_{N+1}(j))
\wedge \mathrm{rightmost\_one}(\underline{\alpha}^\mathrm{pos},l)) . 
\]
The conditions 
\begin{eqnarray*}
(\pi(\widehat{v}),y) &\models& (\underline{X}^\mathrm{prevpos}=\underline{\alpha}^\mathrm{pos}), \\
(x,y) &\models&
       (\underline{\alpha}^\mathrm{pos}=\underline{X}^\mathrm{prevpos},>l) 
                        \wedge  \mathrm{rightmost\_zero}(\underline{\alpha}^\mathrm{pos},l) 
\end{eqnarray*}
and the persistence of $\underline{X}^\mathrm{prevpos}$
imply $(x,y) \models (\underline{\alpha}^\mathrm{pos} = \mathrm{bin}_{N+1}(j-1))$.
\item
We have $\mathit{state}(\widetilde{v}) = r$. And we have $(x,y) \models \alpha^\mathit{state}_r$.
\item
Let $\gamma:=\mathit{read}(\widetilde{v})$ in $\widetilde{T}$.
We wish to show 
$(x,y) \models \alpha^\mathrm{read}_\gamma$.
We remark that the proof is a formal version of the informal explanation after the definition of the
formula $\mathit{read\_a\_symbol}$.
It is sufficient to show 
that there is some $z \in W_2$ with $(x,z) \models \alpha^\mathrm{read}_\gamma$.
The condition
$(x,y) \models \mathit{existence\_of\_a\_reading\_point}$ implies that there exists
some $z \in W_2$ such that
\[
(x,z) \models \bigl( ( \underline{X}^\mathrm{pos} = \underline{\alpha}^\mathrm{pos} )
                        \wedge (\underline{X}^\text{time-apv} \leq \underline{\alpha}^\mathrm{time} )
                        \wedge B^\mathrm{active}\bigr).
\]
Remember that the binary value of $\underline{\alpha}^\mathrm{time}$
in $(x,y)$ and, hence, also in $(x,z)$, is equal to $i+1$ and that the binary value of $\underline{\alpha}^\mathrm{pos}$
in $(x,y)$ and, hence, also in $(x,z)$, is equal to $j-1$.
Hence, the binary value of $\underline{X}^\mathrm{pos}$ in $(x,z)$ is equal to $j-1$.
Let $t$ be the unique number in $\{0,\ldots,i+1\}$ with
$(x,z) \models (\underline{X}^\text{time-apv} = \bin_N(t))$.
Let $v_t$ be the unique node in the computation path in $\widetilde{T}$
from $\mathit{root}$ to $\widetilde{v}$ with $\mathit{time}(v_t)=t$.

First, we claim that
$(\widetilde{\pi}(u),z) \models B^\mathrm{active}$
for all nodes $u$ on the path from $\mathit{root}$ to $\widetilde{v}$.
Any such $u$ satisfies
$uE^*\widetilde{v}$. We obtain
$\widetilde{\pi}(u) R_\Diamond \widetilde{\pi}(\widetilde{v})$, hence,
$\widetilde{\pi}(u) R_\Diamond x$.
If $(\widetilde{\pi}(u),z) \models \neg B^\mathrm{active}$
then due to 
$(\widetilde{\pi}(u),z) \models \mathit{staying\_inactive}$,
we would obtain $(x,z) \models \neg B^\mathrm{active}$.
But this is a contradiction to the condition
$(x,z) \models B^\mathrm{active}$ with which we started.
Thus, for all nodes on the path from $\mathit{root}$ to $\widetilde{v}$
we have $(\widetilde{\pi}(u),z) \models B^\mathrm{active}$.

Can there be a node $u\neq v_t$ in the path from $v_t$ to $\widetilde{v}$
with $\mathit{pos}(\mathit{pred}(u))=j-1=\mathit{pos}(\widetilde{v})$? 
We claim that this is not the case.
Indeed, if there were such a $u$ then for this node $u$ we would have
$(\widetilde{\pi}(u),z) \models ( \underline{X}^\mathrm{pos}=\underline{\alpha}^\mathrm{prevpos})
  \wedge (\underline{X}^\text{time-apv} < \underline{\alpha}^\mathrm{time} )$.
But then $(\widetilde{\pi}(u),z) \models \mathit{becoming\_inactive}$
would imply $(\widetilde{\pi}(u),z) \models \neg B^\mathrm{active}$ in contradiction
to what we have just shown.
Hence, we have shown that there is no node $u\neq v_t$ in the path from $v_t$ to $\widetilde{v}$
with $\mathit{pos}(\mathit{pred}(u))=j-1=\mathit{pos}(\widetilde{v})$.
Let us now distinguish the two cases $t=0$ and $t>0$.

First we treat the case $t=0$.
Then $v_t=\mathit{root}$. We have just seen that the cell $j-1$ has not been visited
before $\widetilde{v}$ on the path from $\mathit{root}$ to $\widetilde{v}$.
Hence, the initial symbol in the cell $j-1$ is still the symbol in this cell when the node 
$\widetilde{v}$ is reached. Thus $\gamma$ is the initial symbol in this cell.
Due to
$(x,z) \models \mathit{initial\_symbols}$ we obtain
$(x,z) \models X^\mathrm{read}_\gamma$.
Due to
$(x,z) \models \mathit{storing\_the\_read\_symbol}$
we obtain $(x,z) \models \alpha^\mathrm{read}_\gamma$.

Finally, we treat the case $t>0$.
Then $v_t \neq \mathit{root}$. And we have just seen that the cell $j-1$ has not been visited
before $\widetilde{v}$ on the path from $\mathit{v_t}$ to $\widetilde{v}$.
But we claim that it has been visited in the predecessor of $v_t$.
Indeed, we have
$(\widetilde{\pi}(v_t),z) \models 
\bigl(( \underline{X}^\text{time-apv} > \mathrm{bin}_N(0) )
                               \wedge (\underline{X}^\text{time-apv} = \underline{\alpha}^\mathrm{time})
                               \wedge B^\mathrm{active} \bigr)$.
Hence, due to
$(\widetilde{\pi}(v_t),z) \models \mathit{time\_of\_previous\_visit}$, we obtain
$(\widetilde{\pi}(v_t),z) \models ( \underline{X}^\mathrm{pos} = \underline{\alpha}^\mathrm{prevpos})$.
Above we have seen that the binary value of $\underline{X}^\mathrm{pos}$ in $(x,z)$ is $j-1$.
As $\underline{X}^\mathrm{pos}$ is persistent, the binary value of $\underline{X}^\mathrm{pos}$ in 
$(\widetilde{\pi}(v_t),z)$ is $j-1$ as well. Hence, the binary value of $\underline{\alpha}^\mathrm{prevpos}$ in
$(\widetilde{\pi}(v_t),z)$ is $j-1$ as well.
That implies $\mathit{pos}(\mathit{pred}(v_t))=j-1$.
We have shown that the predecessor of the node $v_t$ is the last node before
$\widetilde{v}$ in which the cell $j-1=\mathit{pos}(\widetilde{v})$ has been visited.
Hence, the symbol $\gamma$ that is read when the node $\widetilde{v}$ is reached,
has been written in the computation step from $\mathit{pred}(v_t)$ to $v_t$.
Hence, we have $(\widetilde{\pi}(v_t),z)\models \alpha^\mathrm{written}_\gamma$.
Due to 
$(\widetilde{\pi}(v_t),z) \models \mathit{written\_symbols}$ we obtain
$(\widetilde{\pi}(v_t),z) \models X^\mathrm{written}_\gamma$.
As above in the other case, due to
$(x,z) \models \mathit{storing\_the\_read\_symbol}$,
we finally obtain $(x,z) \models \alpha^\mathrm{read}_\gamma$.
\end{itemize}
Thus, $\widetilde{T}$ is not only a partial tree of $M$ on input $w$ but can
also be mapped to $\mathit{Model}$.
This ends the treatment of the case $q \in Q_\exists$.

Now we consider the other case, the case $q \in Q_\forall$.
We define $\eta:=\mathit{read}(\widehat{v})$.
Let 
\[ (r_1,\theta_1,\mathit{dir}_1),\ldots,(r_d,\theta_d,\mathit{dir}_d) \]
be the elements of
$\delta(q,\eta)$ where $d\geq 1$ and $\mathit{dir}_m \in \{\mathit{left},\mathit{right}\}$, for $m=1,\ldots,d$.
We claim that we can define the new tree $\widetilde{T}=(\widetilde{V},\widetilde{E},\widetilde{c})$
as follows:
\begin{itemize}
\item
$\widetilde{V} := V \cup\{\widetilde{v}_1,\ldots,\widetilde{v}_d\}$ where $\widetilde{v}_1,\ldots,\widetilde{v}_d$ 
are new (not in $V$) pairwise different elements,
\item
$\widetilde{E} := E \cup \{(\widehat{v},\widetilde{v}_1),\ldots,(\widehat{v},\widetilde{v}_d)\}$,
\item
$\widetilde{c}(x) := \begin{cases}
          c(x) & \text{for all } x \in V, \\
          c'_m & \text{for } x=\widetilde{v}_m, \text{ where $c'_m$ is the configuration that is reached from } c(\widehat{v})\\
         &  \text{in the computation step given by } ((q,\eta),(r_m,\theta_m,\mathit{dir}_m)) \in\delta.
             \end{cases}$
\end{itemize}
Before we show that $\widetilde{T}$
is a partial tree of $M$ on input $w$,
we define a function $\widetilde{\pi}:\widetilde{V}\to W_1$
that we will show to be a morphism from $\widetilde{T}$ to $\mathit{Model}$.
Since $T$ can be mapped to $\mathit{Model}$, we have
\[ (\pi(\widehat{v}),x') \models 
 (\underline{\alpha}^\mathrm{time} = \mathrm{bin}_N(\mathit{time}(\widehat{v})))
  \wedge (\underline{\alpha}^\mathrm{pos} = \mathrm{bin}_{N+1}(\mathit{pos}(\widehat{v}))) 
  \wedge \alpha^\mathrm{state}_q 
   \wedge \alpha^\mathrm{read}_\eta . \]
for some $x' \in W_2$,
hence, due to $(\pi(\widehat{v}),x') \models \mathit{computation}$,
\[ (\pi(\widehat{v}),x') \models 
 \bigwedge_{(r,\theta,\mathit{left}) \in \delta(q,\eta)}  \mathit{compstep}_{\mathrm{left}}(r,\theta)
   \wedge  \bigwedge_{(r,\theta,\mathit{right}) \in \delta(q,\eta)}  \mathit{compstep}_{\mathrm{right}}(r,\theta) . \]
As in the case $q \in Q_\exists$ one shows that the numbers
$i:=\mathit{time}(\widehat{v})$
and
 $j:=\mathit{pos}(\widehat{v})$
satisfy $0 \leq i < 2^N-1$
and
$0 < j \leq 2^{N+1}-3$, and one defines
$$\begin{array}{lll}
k	&:=	& \min(\{0,\ldots,N-1\} \setminus\mathrm{Ones}(i)),\\
l_\mathrm{left}	&:= &\min\mathrm{Ones}(j), \\ 
l_\mathrm{right}	&:= 	&\min(\{0,\ldots,N\} \setminus\mathrm{Ones}(j)).
\end{array}$$
As in the case $q \in Q_\exists$ one obtains
\begin{eqnarray*}
(\pi(\widehat{v}),x') &\models& \bigl( \mathrm{rightmost\_zero}(\underline{\alpha}^\mathrm{time},k) \\
   && \wedge \mathrm{rightmost\_one}(\underline{\alpha}^\mathrm{pos},l_\mathrm{left})
    \wedge \mathrm{rightmost\_zero}(\underline{\alpha}^\mathrm{pos},l_\mathrm{right}) \bigr).
\end{eqnarray*}
Let us consider some $m\in\{1\ldots,d\}$.
If $\mathit{dir}_m=\mathit{left}$ then,
due to $(\pi(\widehat{v}),x') \models  \mathit{compstep}_{\mathrm{left}}(r,\theta)$,
there exist an element $y_m \in W_2$ and an element $x_m \in W_1$ such that
$\pi(\widehat{v}) R_\Diamond x_m$
as well as
\[ (\pi(\widehat{v}),y_m) \models
   ((\underline{X}^\mathrm{prevtime}=\underline{\alpha}^\mathrm{time})
      \wedge  (\underline{X}^\mathrm{prevpos}=\underline{\alpha}^\mathrm{pos})) \]
and
\begin{eqnarray*}
(x_m,y_m) &\models&
\Bigl( (\underline{\alpha}^\mathrm{time}=\underline{X}^\mathrm{prevtime},>k) 
                        \wedge  \mathrm{rightmost\_one}(\underline{\alpha}^\mathrm{time},k) \\
&& \wedge (\underline{\alpha}^\mathrm{pos}=\underline{X}^\mathrm{prevpos},>l_\mathrm{left}) 
                        \wedge \mathrm{rightmost\_zero}(\underline{\alpha}^\mathrm{pos},l_\mathrm{left}) \\
&& \wedge (\underline{\alpha}^\mathrm{prevpos}=\underline{X}^\mathrm{prevpos}) 
                        \wedge \alpha^\mathrm{state}_{r} 
                        \wedge \alpha^\mathrm{written}_\theta  \Bigr) .
\end{eqnarray*}
Similarly, if $\mathit{dir}_m=\mathit{right}$ then,
due to $(\pi(\widehat{v}),x') \models  \mathit{compstep}_{\mathrm{right}}(r,\theta)$,
there exist an element $y_m \in W_2$ and an element $x_m \in W_1$ such that
$\pi(\widehat{v}) R_\Diamond x_m$
as well as
\[ (\pi(\widehat{v}),y_m) \models
   ((\underline{X}^\mathrm{prevtime}=\underline{\alpha}^\mathrm{time})
      \wedge  (\underline{X}^\mathrm{prevpos}=\underline{\alpha}^\mathrm{pos})) \]
and
\begin{eqnarray*}
(x_m,y_m) &\models&
\Bigl( (\underline{\alpha}^\mathrm{time}=\underline{X}^\mathrm{prevtime},>k) 
                        \wedge  \mathrm{rightmost\_one}(\underline{\alpha}^\mathrm{time},k) \\
&& \wedge (\underline{\alpha}^\mathrm{pos}=\underline{X}^\mathrm{prevpos},>l_\mathrm{right}) 
                        \wedge \mathrm{rightmost\_one}(\underline{\alpha}^\mathrm{pos},l_\mathrm{right}) \\
&& \wedge (\underline{\alpha}^\mathrm{prevpos}=\underline{X}^\mathrm{prevpos}) 
                        \wedge \alpha^\mathrm{state}_{r} 
                        \wedge \alpha^\mathrm{written}_\theta  \Bigr) .
\end{eqnarray*}
We claim that we can define the desired function $\widetilde{\pi}:\widetilde{V} \to W_1$ by
\[ \widetilde{\pi}(v) := \begin{cases}
       \pi(v) & \text{if } v \in V, \\
       x_m & \text{if } v=\widetilde{v}_m, \text{ for some } m \in \{1,\ldots,d\}.
       \end{cases} 
\]
Similarly as in the case $q \in Q_\exists$ one shows
that $\widetilde{T}$ is a partial tree of $M$ on input $w$.
Note that also Condition IV is satisfied for $\widehat{v}$.
Finally, similarly as in the case $q \in Q_\exists$ one shows
that $\widetilde{\pi}$ is a morphism from $\widetilde{T}$ to $\mathit{Model}$.
This ends the treatment of the case $q \in Q_\forall$.
We have proved Lemma~\ref{lemma:inductionS4S5}.
\end{proof}

\end{document}